\newcommand{\MCO}[3]{[#1]^{#2}_{#3}}
\newcommand{\sqO}[2]{[#1]_{#2}}
\newcommand{\floor}[1]{\lfloor #1 \rfloor}
\newcommand{\ceil}[1]{\lceil #1 \rceil}
\newcommand{\abs}[1]{| #1 |}
\newcommand{\rpicl}{\Pi}
\newcommand{\rpi}[1]{\rpicl(#1)}
\newcommand{\rv}[2]{R_{#2}(#1)}
\newcommand{\rz}{R_{\hat{z}}}
\newcommand{\rx}{R_{\hat{x}}}
\newcommand{\rzgate}[1]{\gate{\mathrm{\rz(#1)}}}
\newcommand{\rpigate}[1]{\gate{\mathrm{\rpi{#1}}}}
\newcommand{\rvgate}[2]{\gate{\mathrm{\rv{#1}{#2}}}}
\newcommand\eye{\ensurestackMath{\stackinset{c}{}{c}{-.3pt}%
  {\bullet}{\scriptstyle\bigcirc}}}
\newcommand{\controlt}{*!<0em,.0em>-=-<.2em>{\eye}}
\newcommand{\controlz}{*!<0.em,.0em>-=-<.2em>{\eyeright}}
\newcommand{\controlzl}{*!<0em,.0em>-=-<.2em>{\eyeleft}}
\newcommand{\ctrlzr}[1]{\controlz \qwx[#1] \qw}
\newcommand{\ctrlzl}[1]{\controlzl \qwx[#1] \qw}
\newcommand{\ctrlt}[1]{\controlt \qwx[#1] \qw}
\newtheorem{lemma}{Lemma}
\newtheorem{theorem}{Theorem}
\renewcommand{\sec}[1]{Section~\hyperref[sec:#1]{\ref*{sec:#1}}}
\newcommand{\ssec}[1]{Section~\hyperref[ssec:#1]{\ref*{ssec:#1}}}
\newcommand{\fig}[1]{Figure~\hyperref[fig:#1]{\ref*{fig:#1}}}
\newcommand{\tab}[1]{Table~\hyperref[tab:#1]{\ref*{tab:#1}}}
\newcommand{\lem}[1]{Lemma~\hyperref[lem:#1]{\ref*{lem:#1}}}
\newcommand{\prop}[1]{\hyperref[prop:#1]{Proposition~\ref*{prop:#1}}}
\newcommand{\thm}[1]{Theorem~\hyperref[thm:#1]{\ref*{thm:#1}}}
\newcommand{\apx}[1]{Appendix~\hyperref[apx:#1]{\ref*{apx:#1}}}
\newcommand{\qc}[1]{Circuit~(\hyperref[qc:#1]{\ref*{qc:#1}})}
\newcounter{qcnum}
\newcommand{\qcref}[1]{\refstepcounter{qcnum}\tag{\theqcnum}\label{qc:#1}}
\begin{document}
\title{Efficient Implementation of Multi-Controlled Quantum Gates}

\author{Ben~Zindorf}
\email{ben.zindorf.19@ucl.ac.uk}
\affiliation{Department of Physics and Astronomy, University College London, Gower Street, WC1E 6BT London, United Kingdom}

\author{Sougato~Bose}
\affiliation{Department of Physics and Astronomy, University College London, Gower Street, WC1E 6BT London, United Kingdom}

\date{\today}

\begin{abstract}
We present an implementation of multi-controlled quantum gates which provides significant reductions of cost compared to state-of-the-art methods. The operator applied on the target qubit is a unitary, special unitary, or the Pauli X operator (Multi-Controlled Toffoli), {and requires one clean ancilla, no ancilla, and one dirty ancilla, respectively.}
We generalize our methods for any number of target qubits, and provide further cost reductions if additional ancilla qubits are available.
For each type of multi-controlled gate, we provide implementations for unrestricted (all-to-all) connectivity and for linear-nearest-neighbor.
All of the methods use a linear cost of gates from the Clifford+T (fault-tolerant) set. In the context of linear-nearest-neighbor (LNN) architecture, the cost and depth of our circuits scale linearly irrespective of the position of the qubits on which the gate is applied. Our methods directly improve the compilation process of many quantum algorithms, providing optimized circuits.
{Given the scale of our improvements, for example, quadratic to linear CNOT count for LNN,
they will naturally result in a large reduction of errors.}
\end{abstract}
\maketitle

\section{Introduction}

Multi-controlled single/multiple target quantum gates are pivotal ingredients for quantum computation. For example, a constructive method for implementing arbitrary unitaries, which is, in fact, used for the proof of the sufficiency of fundamental gate sets for universal quantum computation, uses such gates \cite{nielsen_quantum_2010, shende_synthesis_2005}. Of course, that constructive method may not be the most efficient, but nonetheless emphasizes that such gates are at the heart of quantum computation. 

Such gates are also very useful in circuits for quantum chemistry, e.g., particle number conserving Hamiltonians \cite{arrazola_universal_2022}, for constructing a circuit implementing a quantum neuron \cite{de_carvalho_parametrized_2024}, {QAOA \cite{ni_progressive_2024}}, quantum circuits for isometries \cite{iten_quantum_2016, malvetti_quantum_2021}, { port-based teleportation \cite{grinko_efficient_2023}, distributed quantum computation \cite{tanasescu_distribution_2022},} constructing a quantum RAM \cite{park_circuit-based_2019}, in the circuit for Grover's quantum search algorithm \cite{grover_quantum_1997}, and generally, for quantum arithmetics \cite{ali_function_2018}, such as for the arithmetic part of Shor's algorithm \cite{shafaei_cofactor_2014}. 
{ These gates can be seen as the quantum counterparts of the conditional if-then-else statements \cite{lewis_matrix_2022}, which are frequently used in classical computation, making them a convenient logical building block for quantum algorithm design. Thus, one may expect these elementary gates \cite{barenco_elementary_1995} to appear in additional algorithms which will be developed in the future.

Unfortunately, while it may be possible to apply multi-qubit gates directly in some quantum architectures \cite{isenhower_multibit_2011,petrosyan_fast_2024,jandura_time-optimal_2022}, multi-controlled (MC) gates with more than one control qubit are not available for direct implementation in most existing quantum hardware and must be decomposed into smaller available gates. In most near-term NISQ (Noisy Intermediate-Scale Quantum) devices, tunable single-qubit gates (e.g. $R_z(\theta)$,$R_y(\theta)$) are used along with some fixed two-qubit gate (e.g. CNOT,CZ,ECR \cite{abughanem_ibm_2024,abughanem_toffoli_2025,xue_quantum_2022}) as the fundamental building blocks, such that the latter gates are usually characterized by a higher error rate and longer execution time, making them a more "expensive" resource than the former gates. As in most cases, these two-qubit gates are equivalent to a CNOT up to single-qubit rotations, it is common to design a circuit using CNOT as the two-qubit gate - a convenient choice due to its similarities to the classical XOR logic gate. As the errors introduced by each CNOT gate accumulate and propagate, when used to construct a large circuit, minimizing the gate count is an important task, as it directly improves the fidelity of the circuit. Then, a circuit designed using these gates can be directly converted to a circuit which uses another available two-qubit gate by only increasing the number of the cheaper single-qubit gates. 

The CNOT is also part of the Clifford+T gate set \cite{bravyi_universal_2005,nielsen_quantum_2002,jones_logic_2013} which is a prime candidate for the upcoming error-corrected fault-tolerant quantum computation, however, in most error correction codes, the single-qubit T gate is the most expensive resource, followed by the CNOT. A decomposition of multi-controlled gates which minimizes both CNOT and T gates will therefore be beneficial both in the present and in the future, directly improving the fidelity of any circuit which is built using these logical building blocks. Other error correction codes exist, in which the most expensive Clifford+T gate is the Hadamard (H) \cite{vasmer_three-dimensional_2019,butt_fault-tolerant_2024}, followed by the CNOT, and therefore a low count of H gates is also beneficial. 
The Clifford+T gate set is universal and therefore can be used to approximately implement arbitrary single-qubit gates, such that greater accuracy requires a higher number of T and H gates \cite{bravyi_universal_2005}. The use of arbitrary single-qubit gates should therefore be minimized in the fault-tolerant regime. 
 There are other universal sets which are widely used, such as the Toffoli+H and the NCV library ($X,CNOT,CV,CV^\dagger$) \cite{barenco_elementary_1995,cheng_mapping_2018}. As any of the mentioned gate sets can be used to implement the Toffoli, it may be a good strategy to start by decomposing an MC gate using Toffoli gates and then further decomposing these into smaller gates, thus producing a circuit that can be easily converted to the gate set of choice.

In addition to the limited available gates, another obstacle is the restricted connectivity between qubits, which is common to many architectures. While some allow for all-to-all connectivity \cite{schoenberger_shuttling_2024,bluvstein_logical_2024}, others are restricted to neighboring qubits in 2D or 1D arrays. Since there are many options for the arrangement of 2D arrays, such as the square lattice used by Google, and the heavy-hex used by IBM, it can be useful to find an efficient implementation for the highly restrictive 1D linear nearest-neighbor (LNN) connectivity, as it can be used directly in some technologies, and can also be mapped to many 2D arrays with no/low overhead. On the other extreme, while decompositions designed for all-to-all connectivity can be applied in some architectures, these are especially useful as a conceptual unrestricted playground, which can be a good first step before attempting to find decompositions in more restrictive connectivities. Although any all-to-all implementation can be mapped to LNN using SWAP gates, doing so can drastically increase the cost, e.g., increasing the CNOT count of a circuit from linear to quadratic scaling in the number of qubits \cite{pedram_layout_2016,saeedi_synthesis_2011,he_mapping_2019,khan_cost_2008,lukac_optimization_2021,ding_fast_2019,zhang_method_2022}. To this day, this was true for the implementations of MC gates as well - while all-to-all decompositions require a linear CNOT count using a limited number of ancilla qubits \cite{barenco_elementary_1995,vale_decomposition_2023,iten_quantum_2016,he_decompositions_2017,balauca_efficient_2022,maslov_reversible_2011,kole_improved_2017,sasanian_ncv_2011,biswal_improving_2016,ali_quantum_2015,niemann_t-depth_2019,leng_decomposing_2024,abdessaied_technology_2016}, the best known upper bound of CNOT gates in LNN connectivity is quadratic \cite{arsoski_implementing_2024,arsoski_multi-controlled_2025,cheng_mapping_2018,chakrabarti_nearest_2007,miller_elementary_2011,li_quantum_2023,tan_multi-strategy_2018} due to the large number of SWAP gates. Without ancilla, some MC (with a target operator $U\in U(2) \setminus SU(2)$) gate implementations in all-to-all connectivity require a quadratic CNOT count \cite{arsoski_implementing_2024,arsoski_multi-controlled_2025,silva_linear_2023,da_silva_linear-depth_2022,saeedi_linear-depth_2013,leng_decomposing_2024},while others allow for a linear count \cite{gidney_constructing_nodate}, such that in both cases the count of arbitrary single-qubit rotations scales at least linearly and these gates require exponential accuracy. The decomposition of multi-controlled gates with a small number of controls under various connectivity and ancillary constraints has been addressed as well \cite{cruz_shallow_2023,gwinner_benchmarking_2021,nakanishi_decompositions_2024,nakanishi_quantum-gate_2021,nemkov_efficient_2023,song_optimal_2002,zindorf_quantum_2024,heusen_measurement-free_2024}.

In addition to gate count, an important metric is the depth of the circuit, i.e. the number of time slots holding several gates which can be executed in parallel. Minimizing the depth results in a shorter run-time, which reduces decoherence-induced errors.
Some approaches aim to minimize the depth of multi-controlled gates \cite{claudon_polylogarithmic-depth_2024,nie_quantum_2024}, however, this comes with a trade-off, either increasing the gate count or the number of ancillae used. Low-depth implementations in all-to-all connectivity may relay on the ability to execute many two-qubit gates in parallel, however, the number of gates which can be efficiently applied in parallel may be limited in some all-to-all connected devices \cite{figgatt_parallel_2019}, and such implementations may increase the gate count overhead when mapped to more restricted connectivities. As the depth of each gate type is no larger than its count, low gate count implementations may provide the lowest depth as well, in case efficient parallel gate execution is not available or is restricted to a constant number of gates.

The use of logical building blocks when constructing an algorithm can be preferable to constructing the corresponding quantum circuit using the smallest available gates directly. This may be seen as analogous to developing a classical software using assembly - while this might provide more efficient implementations than the use of built-in functions in high-level programming languages like C++ or Python, most would choose the latter for complicated tasks. A compilation step is then required to convert these functions into assembly code, such that an improved compilation of a frequently used function can improve the efficiency of the entire program. On the other hand, one may wish to define the entire quantum program as one large unitary matrix, then apply a compilation step to decompose the entire operator to the smallest gates. Although various methods exist \cite{liu_qcontext_2023,meijer_exploiting_2024,giles_exact_2013}, the resulting gate count upper bound is exponential, due to the number of parameters in the unitary - making it implausible to achieve any improvement over a classical computer. 

In this paper, we focus on minimizing the gate count of CNOT,T and H (Clifford+T) required for the implementation of various types of multi controlled gates, which, as discussed above, is beneficial to both fault-tolerant as well as near-term applications.  Specifically we address the cases in which the target operator is in $SU(2)$, $U(2)$, or the Pauli X. We wish to make minimal assumptions regarding the hardware used and the available resources in order to cover as many use-cases as possible. For each type of gate, we use the minimal number of ancilla qubits allowing for a linear scaling of Clifford+T gates and for a constant number of single-qubit arbitrary rotations. The type of ancilla qubit is important as well, such that a "clean" ancilla which must be kept in a specific state is a more expensive resource than a borrowed "dirty" ancilla which can be any qubit that is unused at the time of the MC gate execution. We implement each gate in all-to-all and LNN connectivity. When the depth can be reduced without increasing the gate count, we apply these reductions as well. Finally, while requiring the minimal number of ancilla qubits, we provide gate count reductions, which can be achieved in case the MC gate is only applied to a subset of all available qubits, and the rest can be borrowed as ancillae.
In all-to-all connectivity we achieve a CNOT count reduction of up to $62.5\%$ compared to the state-of-the-art, and for the multi-controlled $X$ case we reduce the Toffoli count by $50\%$. In LNN connectivity, our methods provide a linear upper bound of the CNOT gate count for any qubit ordering, a {\em quadratic} reduction compared to state-of-the-art methods.
}
    \subsection{Notations}

In this paper, we use the notations which were defined in \cite{zindorf_quantum_2024}. We list some of these notations here for convenience.

Any {single-qubit} $SU(2)$ operator can be represented as a rotation by angle $\lambda$ about an axis $\hat{v}$ as $\rv{\lambda}{\hat{v}}$ \cite{nielsen_quantum_2010}. We use the notation $\rpi{\hat{v}}:=i\rv{\pi}{\hat{v}}$ to describe a Hermitian {gate applying a} $\pi$-rotation { about an arbitrary axis}.
The three-dimensional rotation matrix by angle $\theta$ about the vector $\hat{v}$ is marked as
$\hat{R}_{\hat{v}}(\theta)$.

These notations have been used to describe the decomposition of any $SU(2)$ operator in terms of two $\Pi$ gates in [\cite{zindorf_quantum_2024}\ {\em Lemma 1}] as follows.

\begin{lemma}\label{lem:2_rpi}
Any $\rv{\lambda}{\hat{v}}\in SU(2)$ operator can be implemented as $\rpi{\hat{v}_2}\rpi{\hat{v}_1}$ 
such that $\hat{v}_1$ can be chosen as any unit vector perpendicular to $\hat{v}$, and $\hat{v}_2 = \hat{R}_{\hat{v}}(\frac{\lambda}{2})\hat{v}_1$ with $\frac{\lambda}{2}\in (-\pi,\pi]$.
\end{lemma}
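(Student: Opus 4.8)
The plan is to reduce the operator identity to a short computation with Pauli matrices, in which the $\pi$-rotation $\rpi{\hat{v}}$ takes a particularly simple form. First I would substitute $\lambda=\pi$ into the standard $SU(2)$ rotation formula $\rv{\lambda}{\hat{v}}=\cos(\lambda/2)\,I-i\sin(\lambda/2)\,(\hat{v}\cdot\vec{\sigma})$, where $\vec{\sigma}=(\sigma_x,\sigma_y,\sigma_z)$ is the vector of Pauli matrices. Since $\rpi{\hat{v}}=i\,\rv{\pi}{\hat{v}}$, this collapses to the clean expression $\rpi{\hat{v}}=\hat{v}\cdot\vec{\sigma}$, which is manifestly Hermitian and squares to the identity, confirming that it is an involution as expected for a $\pi$-rotation gate.

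Next I would compute the product $\rpi{\hat{v}_2}\rpi{\hat{v}_1}=(\hat{v}_2\cdot\vec{\sigma})(\hat{v}_1\cdot\vec{\sigma})$ using the standard Pauli identity $(\vec{a}\cdot\vec{\sigma})(\vec{b}\cdot\vec{\sigma})=(\vec{a}\cdot\vec{b})\,I+i\,(\vec{a}\times\vec{b})\cdot\vec{\sigma}$. The problem then reduces to evaluating $\hat{v}_2\cdot\hat{v}_1$ and $\hat{v}_2\times\hat{v}_1$ under the geometric hypotheses $\hat{v}_1\perp\hat{v}$ and $\hat{v}_2=\hat{R}_{\hat{v}}(\lambda/2)\hat{v}_1$. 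Because a rotation about $\hat{v}$ preserves the plane orthogonal to $\hat{v}$, both $\hat{v}_1$ and $\hat{v}_2$ lie in that plane and subtend a signed angle of exactly $\lambda/2$. Hence $\hat{v}_2\cdot\hat{v}_1=\cos(\lambda/2)$, while the cross product is normal to the plane, i.e. parallel to $\hat{v}$, with $\hat{v}_2\times\hat{v}_1=-\sin(\lambda/2)\,\hat{v}$.

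Substituting these back yields $\rpi{\hat{v}_2}\rpi{\hat{v}_1}=\cos(\lambda/2)\,I-i\sin(\lambda/2)\,(\hat{v}\cdot\vec{\sigma})$, which is precisely $\rv{\lambda}{\hat{v}}$, completing the argument. The freedom to pick \emph{any} $\hat{v}_1\perp\hat{v}$ then follows at once: the dot and cross products depend only on the \emph{relative} orientation of $\hat{v}_1$ and $\hat{v}_2$ within the plane, never on the absolute starting direction, so every admissible $\hat{v}_1$ reproduces the identical operator.

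I expect the main obstacle to be pinning down the orientation, and hence the sign, of the cross product $\hat{v}_2\times\hat{v}_1$. The right-hand-rule convention must be matched consistently against the convention fixing $\rv{\lambda}{\hat{v}}$ as a counterclockwise rotation about $\hat{v}$; a sign slip here would instead produce $\rv{-\lambda}{\hat{v}}$. The restriction $\lambda/2\in(-\pi,\pi]$ is exactly what guarantees that the signed angle between $\hat{v}_1$ and $\hat{v}_2$ equals $\lambda/2$ unambiguously rather than differing by a multiple of $2\pi$, so I would invoke it precisely at the step where the dot and cross products are identified with $\cos(\lambda/2)$ and $-\sin(\lambda/2)\,\hat{v}$.
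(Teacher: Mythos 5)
Your proof is correct. Note that the paper itself does not prove this lemma in-text --- it is imported verbatim from the authors' earlier work (their Lemma~1) --- so there is no in-paper argument to compare against; your Pauli-algebra verification is the standard direct route and it checks out. The key identities $\rpi{\hat{v}}=\hat{v}\cdot\vec{\sigma}$ and $(\hat{v}_2\cdot\vec{\sigma})(\hat{v}_1\cdot\vec{\sigma})=(\hat{v}_2\cdot\hat{v}_1)I+i(\hat{v}_2\times\hat{v}_1)\cdot\vec{\sigma}$ are applied correctly, and your sign choice $\hat{v}_2\times\hat{v}_1=-\sin(\lambda/2)\,\hat{v}$ is consistent with the paper's own worked special case $\rpi{\hat{y}}\rpi{\hat{x}}=\rv{\pi}{\hat{z}}$, i.e.\ $YX=-iZ$. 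One minor quibble: the restriction $\frac{\lambda}{2}\in(-\pi,\pi]$ is only a normalization of the representative angle --- shifting the angle by $2\pi$ changes neither the dot nor the cross product (and shifting $\lambda$ by $4\pi$ changes neither side of the operator identity) --- so it is not actually load-bearing at the step where you invoke it.
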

This lemma can be seen as a generalisation of known identities of the Pauli matrices $X = \rpi{\hat{x}}$, $Y = \rpi{\hat{y}}$ and $Z = \rpi{\hat{z}}$. For example, since the unit vector $\hat{y}$ can be obtained by rotating $\hat{x}$ about $\hat{z}$ , as $\hat{y} = \hat{R}_{\hat{z}}(\frac{\pi}{2})\hat{x}$, and $\hat{x}\perp\hat{z}$, we get 
$\rpi{\hat{y}}\rpi{\hat{x}} = \rv{\pi}{\hat{z}}$, i.e. $YX=-iZ$. 

When referring to an operator $O$ applied on a set of qubits $\tau=\{t_1,t_2,..,t_m\}$, we use the notation $\sqO{O}{\tau}$. If the operator is controlled by a qubit set $C=\{c_1,c_2,..,c_n\}$ and applied on a qubit $t$, we write $\MCO{O}{\{c_1,c_2,..,c_n\}}{t}$ or $\MCO{O}{C}{t}$, and use one of the following gate notations:
\[
\Qcircuit @C=1.0em @R=0.0em @!R { 
	 	\nghost{{c}_{1} :  } & \lstick{{c}_{1} :  } & \ctrl{1} & \qw \\
	 	\nghost{{c}_{2} :  } & \lstick{{c}_{2} :  } & \ctrl{1} & \qw \\
        \nghost{{q}_{1} :  } &  & \ar @{.} [1,0]   \\
        \nghost{{q}_{1} :  } &  &  \\
	 	\nghost{{c}_{n} :  } & \lstick{{c}_{n} :  } & \ctrl{1}\ar @{-} [-1,0] & \qw \\
	 	\nghost{{t} :  } & \lstick{{t} :  } & \gate{\mathrm{O}} & \qw\\
   }
 \hspace{5mm}\raisebox{-15mm}{=}\hspace{0mm}
  \Qcircuit @C=1.0em @R=1.7em @!R {
            \nghost{{C} :  } \\
	 	\nghost{{C} :  } & \lstick{{C} :  } & \qw {/^n}& \ctrl{1} & \qw \\
	 	\nghost{{t} :  } & \lstick{{t} :  } & \qw  & \gate{\mathrm{O}} & \qw \\
 }
\]

Both [\cite{zindorf_quantum_2024}\ {\em Lemma 4} and {\em Lemma 13} ] provide methods which allow one to transform the rotation axis of multi-controlled gates. We place these here for convenience.

\begin{lemma}\label{lem:MC_transform} 
$
\MCO{e^{i\psi}\rv{\lambda}{\hat{v}}}{C}{t}
=
\sqO{\rpi{\hat{v}_{M}}}{t}
\MCO{e^{i\psi}\rv{\lambda}{\hat{v}'}}{C}{t}
\sqO{\rpi{\hat{v}_{M}}}{t}
$
for any angles $\lambda,\psi$, unit vectors $\hat{v},\hat{v}'$, and $\hat{v}_M \in M(\hat{v},\hat{v}')$.
\end{lemma}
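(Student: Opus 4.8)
The plan is to reduce the controlled statement to an uncontrolled conjugation identity on the target qubit, and then lift it to the multi-controlled setting by a projector decomposition of the controlled gate. The geometric content lives entirely in the single-qubit step; the control structure is handled by routine bookkeeping afterwards.

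First I would record the algebraic nature of the $\Pi$ gate. From $\rpi{\hat{v}_M}=i\rv{\pi}{\hat{v}_M}$ and $\rv{\pi}{\hat{v}_M}=-i(\hat{v}_M\cdot\vec{\sigma})$ one gets $\rpi{\hat{v}_M}=\hat{v}_M\cdot\vec{\sigma}$, which is Hermitian and an involution, $\rpi{\hat{v}_M}^2=I$. I would then establish the single-qubit identity
\[
\sqO{\rpi{\hat{v}_M}}{t}\,\bigl(e^{i\psi}\rv{\lambda}{\hat{v}'}\bigr)_t\,\sqO{\rpi{\hat{v}_M}}{t}=\bigl(e^{i\psi}\rv{\lambda}{\hat{v}}\bigr)_t .
\]
The scalar phase $e^{i\psi}$ commutes out, so it suffices to treat $\rpi{\hat{v}_M}\,\rv{\lambda}{\hat{v}'}\,\rpi{\hat{v}_M}$. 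Expanding $\rv{\lambda}{\hat{v}'}=\cos(\lambda/2)I-i\sin(\lambda/2)(\hat{v}'\cdot\vec{\sigma})$ and applying the Pauli sandwich identity $(\hat{v}_M\cdot\vec{\sigma})(\hat{v}'\cdot\vec{\sigma})(\hat{v}_M\cdot\vec{\sigma})=\bigl(2(\hat{v}_M\cdot\hat{v}')\hat{v}_M-\hat{v}'\bigr)\cdot\vec{\sigma}=\bigl(\hat{R}_{\hat{v}_M}(\pi)\hat{v}'\bigr)\cdot\vec{\sigma}$, the conjugate is again a rotation by $\lambda$, now about the axis $\hat{R}_{\hat{v}_M}(\pi)\hat{v}'$. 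By the defining property of the mirror set $M(\hat{v},\hat{v}')$ from \cite{zindorf_quantum_2024}, namely that a $\pi$-rotation about any $\hat{v}_M\in M(\hat{v},\hat{v}')$ carries $\hat{v}'$ to $\hat{v}$, this axis equals $\hat{v}$, which closes the single-qubit identity.

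To lift this to the controlled gate, I would write $\MCO{U}{C}{t}=P_0\otimes I_t+P_1\otimes U_t$, where $P_1$ projects $C$ onto the all-ones string and $P_0=I-P_1$, and note that the flanking $\rpi{\hat{v}_M}$ gates act on $t$ alone. Conjugation then distributes over the two branches: on $P_0$ the target operator is trivial, so the two flanking gates compose to $\rpi{\hat{v}_M}^2=I_t$, reproducing the identity branch of the left-hand side; on $P_1$ they conjugate the active operator $e^{i\psi}\rv{\lambda}{\hat{v}'}$ into $e^{i\psi}\rv{\lambda}{\hat{v}}$ by the single-qubit identity. Reassembling the two branches gives exactly $\MCO{e^{i\psi}\rv{\lambda}{\hat{v}}}{C}{t}$.

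The main obstacle is conceptual rather than computational: one must identify the correct geometric statement, namely that the two-sided action of the uncontrolled $\Pi$ gate on the target realizes precisely the $SO(3)$ $\pi$-rotation defining $M(\hat{v},\hat{v}')$, while on the inactive control subspace the same two gates square to the identity. This last point is why the flanking gates may be left uncontrolled (the cheaper option) without spoiling the identity when the controls are off, and it hinges on the involutivity $\rpi{\hat{v}_M}^2=I$. Once the Pauli sandwich identity and the defining property of $M$ are in hand, the remaining verification is direct.
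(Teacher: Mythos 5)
Your proof is correct, but there is nothing in this paper to compare it against: \lem{MC_transform} is imported verbatim from [\cite{zindorf_quantum_2024}, {\em Lemma 4}] ``for convenience,'' and no proof is given here. Taken on its own terms, your self-contained derivation is complete and uses exactly the right ingredients: the identification $\rpi{\hat{v}_M}=\hat{v}_M\cdot\vec{\sigma}$ (Hermitian, involutive), the Pauli sandwich computation showing that conjugation sends the rotation axis $\hat{v}'$ to $2(\hat{v}_M\cdot\hat{v}')\hat{v}_M-\hat{v}'=\hat{R}_{\hat{v}_M}(\pi)\hat{v}'$ while preserving the angle $\lambda$ and the scalar phase $e^{i\psi}$, and the projector decomposition $P_0\otimes I_t+P_1\otimes U_t$ together with $\rpi{\hat{v}_M}^2=I$ to justify leaving the flanking gates uncontrolled. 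One small point you should make explicit: the paper defines $\hat{v}_M\in M(\hat{v},\hat{v}')$ by $\hat{v}'=\hat{R}_{\hat{v}_M}(\pi)\hat{v}$, whereas your argument needs $\hat{v}=\hat{R}_{\hat{v}_M}(\pi)\hat{v}'$. These are equivalent because $\hat{R}_{\hat{v}_M}(\pi)$ is an involution in $SO(3)$, but you assert the ``carries $\hat{v}'$ to $\hat{v}$'' direction as the defining property without flagging that you are silently using this symmetry; a one-line remark would close that gap.
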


\begin{lemma}\label{lem:Cpi_zz_yy}
If $\hat{v}=\hat{R}_{\hat{\sigma}}(\phi)\hat{\tau}$ for any unit vectors $\hat{\tau},\hat{\sigma}$ satisfying $\hat{\tau}\perp \hat{\sigma}$, then $\MCO{\rpi{\hat{v}}}{C}{t}=\sqO{\rv{\phi}{\hat{\sigma}}}{t}\MCO{\rpi{\hat{\tau}}}{C}{t}\sqO{\rv{-\phi}{\hat{\sigma}}}{t}$.
\end{lemma}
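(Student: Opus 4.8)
The plan is to reduce the multi-controlled identity to a purely single-qubit one and then lift it back by splitting the control register into its firing and non-firing parts. First I would establish the uncontrolled version of the claim,
\[
\sqO{\rpi{\hat{v}}}{t} = \sqO{\rv{\phi}{\hat{\sigma}}}{t}\,\sqO{\rpi{\hat{\tau}}}{t}\,\sqO{\rv{-\phi}{\hat{\sigma}}}{t}.
\]
This rests on the standard fact that conjugating an $SU(2)$ rotation by another rotation simply rotates its axis: for any $W\in SU(2)$ one has $W\,\rv{\theta}{\hat{n}}\,W^{\dagger}=\rv{\theta}{\hat{R}_W\hat{n}}$, where $\hat{R}_W$ is the $SO(3)$ rotation associated with $W$ through the double cover. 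Setting $W=\rv{\phi}{\hat{\sigma}}$ (so that $\hat{R}_W=\hat{R}_{\hat{\sigma}}(\phi)$), $\hat{n}=\hat{\tau}$, and $\theta=\pi$, and invoking the hypothesis $\hat{v}=\hat{R}_{\hat{\sigma}}(\phi)\hat{\tau}$, gives
\[
\rv{\phi}{\hat{\sigma}}\,\rv{\pi}{\hat{\tau}}\,\rv{-\phi}{\hat{\sigma}} = \rv{\pi}{\hat{v}}.
\]
Multiplying through by the global phase $i$ turns each $\rv{\pi}{\cdot}$ into the corresponding $\rpicl$ gate and yields the displayed single-qubit identity; note that $\rv{-\phi}{\hat{\sigma}}=\rv{\phi}{\hat{\sigma}}^{\dagger}$, so this is genuine conjugation.

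To promote this to the controlled statement, I would write the multi-controlled gate in block-diagonal form with respect to the control register. If $P$ is the projector onto the subspace of $C$ on which the gate fires, then $\MCO{O}{C}{t}=(I-P)\otimes I + P\otimes O$ on the control--target space. The conjugating rotations act only on the target, i.e.\ as $I\otimes\rv{\pm\phi}{\hat{\sigma}}$, so they commute with both $P\otimes I$ and $(I-P)\otimes I$ and therefore distribute over the two blocks of $\MCO{\rpi{\hat{\tau}}}{C}{t}$. On the firing block the target factor is conjugated exactly as in the single-qubit identity, producing $\rpi{\hat{v}}$; on the non-firing block the central target factor is $I$, so the two uncontrolled rotations collapse through $\rv{\phi}{\hat{\sigma}}\,\rv{-\phi}{\hat{\sigma}}=I$. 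Recombining, $\sqO{\rv{\phi}{\hat{\sigma}}}{t}\MCO{\rpi{\hat{\tau}}}{C}{t}\sqO{\rv{-\phi}{\hat{\sigma}}}{t}=(I-P)\otimes I + P\otimes\rpi{\hat{v}}=\MCO{\rpi{\hat{v}}}{C}{t}$, which is the claim.

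I expect the only genuine subtlety to be this last cancellation on the non-firing block: because the conjugating rotations are applied unconditionally while the central gate is controlled, one must check that they exactly undo one another wherever the gate does not fire, which is precisely what $\rv{-\phi}{\hat{\sigma}}=\rv{\phi}{\hat{\sigma}}^{\dagger}$ supplies. Everything else is bookkeeping: the axis-rotation identity carries all the geometric content, and the hypothesis $\hat{v}=\hat{R}_{\hat{\sigma}}(\phi)\hat{\tau}$ is exactly what selects the output axis. The perpendicularity $\hat{\tau}\perp\hat{\sigma}$ is not needed for the conjugation to go through; it merely fixes the interpretation of $\phi$ as the angle sweeping $\hat{\tau}$ into $\hat{v}$ in the plane orthogonal to $\hat{\sigma}$, which is how the lemma is later applied. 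This argument also mirrors the structure of \lem{MC_transform}, the difference being that here the conjugating pair is $\rv{\pm\phi}{\hat{\sigma}}$ rather than a single self-inverse $\rpicl$ gate.
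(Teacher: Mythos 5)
Your proof is correct. Note that the paper does not actually prove this lemma here: it is imported verbatim (as \emph{Lemma 13}) from the companion work \cite{zindorf_quantum_2024}, so there is no in-paper argument to compare against. Your derivation is the natural self-contained one: the single-qubit identity $\rv{\phi}{\hat{\sigma}}\,\rv{\pi}{\hat{\tau}}\,\rv{-\phi}{\hat{\sigma}}=\rv{\pi}{\hat{v}}$ follows from the $SU(2)\to SO(3)$ covering property $W\rv{\theta}{\hat{n}}W^{\dagger}=\rv{\theta}{\hat{R}_W\hat{n}}$, the global phase $i$ commutes through the conjugation to give the $\rpicl$ form, and the lift to the controlled gate via the block decomposition $(I-P)\otimes I+P\otimes O$ is exactly right, with the non-firing block handled by $\rv{\phi}{\hat{\sigma}}\rv{-\phi}{\hat{\sigma}}=I$. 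Your side remark is also accurate: the hypothesis $\hat{\tau}\perp\hat{\sigma}$ is not needed for the identity itself, only for the geometric reading of $\phi$ that the paper relies on when it later instantiates the lemma (e.g.\ turning $\MCO{\rpi{\hat{v}_2}}{C_2}{t}$ into an MCZ conjugated by $\rv{\pm\lambda/4}{\hat{x}}$ in the proof of \lem{ccrv_macro}).
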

The set $M(\hat{v}_1,\hat{v}_2)$ is defined for any two unit vectors $\hat{v}_1,\hat{v}_2$   such that $\hat{v}_M \in M(\hat{v}_1,\hat{v}_2)$ iff $\hat{v}_2 = \hat{R}_{\hat{v}_M}(\pi)\hat{v}_1$. In other words, $\hat{v}_M$ can be chosen as a unit vector that is in the "middle" between $\hat{v}_1$ and $\hat{v}_2$.
A well-known special case of \lem{MC_transform} is achieved by noting that $H=\rpi{\hat{h}}$, with $\hat{h}\in M(\hat{x},\hat{z})$.

The phase gate is defined as:
\[
P(\lambda):= 
e^{i\frac{\lambda}{2}}\rv{\lambda}{\hat{z}} = 
\left( 
\begin{matrix} 
1 & 0 \\ 
0 & e^{i\lambda}
\end{matrix} 
\right).
\]
We use the standard notations for two specific phase gates: $S:=P(\frac{\pi}{2}) = \left( \begin{matrix} 1 & 0 \\ 0 & i\end{matrix}\right)$, and  $T:=P(\frac{\pi}{4})$.

The notation $C[a:b] := \{c_a,c_{a+1},..,c_b\}$ is used to define a subset and $C[a:a]=\{c_a\}$ is written as $C[a]$.
When concatenating two sets of qubits, we simply use $\{C,\tau\}:=\{c_1..c_n,t_1..t_m\}$.
For a controlled phase gate, the phase is applied iff all the control qubits {\em and} the target qubit are in the state $\ket{1}$, and therefore it is not required to specify the target. For example, the gate $\MCO{Z}{C}{t}$ can also be written as $\MCO{Z}{\{C,t\}}{}$.
We refer to any gate which can be decomposed using only multi-controlled (or single-qubit) phase gates as a relative phase gate, usually marked as $[\Delta]$.
Finally, a multi-controlled multi-target gate $\prod_{j=1}^{m} \MCO{O}{C}{t_j}$ is marked as follows.
\[
\resizebox{0.9\linewidth}{!}{
\Qcircuit @C=0.6em @R=0.2em @!R { 
	 	\nghost{{C} :  } & \lstick{{C} :  } & \qw {/^n} & \ctrl{1} & \qw \\
	 	\nghost{{t}_{1} :  } & \lstick{{t}_{1} :  } & \qw & \gate{\mathrm{O_1}}\qwx[1] & \qw \\
	 	\nghost{{t}_{2} :  } & \lstick{{t}_{2} :  } & \qw & \gate{\mathrm{O_2}}\qwx[1] & \qw \\
	 	\nghost{{t}_{3} :  } & \nghost{{t}_{3} :  } & &  \ar @{.} [1,0]  \\
        \nghost{{t}_{3} :  } & \nghost{{t}_{3} :  } & &    \\
	   \nghost{{t}_{m} :  } & \lstick{{t}_{m} :  } & \qw  & \gate{\mathrm{O_m}}\qwx[-1] & \qw \\
 }
 \hspace{5mm}\raisebox{-11.5mm}{=}\hspace{0mm}
\Qcircuit @C=0.6em @R=0.2em @!R { 
	 	\nghost{{C} :  } & \lstick{{C} :  } & \qw {/^n} & \ctrl{1} & \ctrl{2} & \qw & \ctrl{5} & \qw \\
	 	\nghost{{t}_{1} :  } & \lstick{{t}_{1} :  } & \qw & \gate{\mathrm{O_1}} & \qw & \qw & \qw & \qw \\
	 	\nghost{{t}_{2} :  } & \lstick{{t}_{2} :  } & \qw & \qw & \gate{\mathrm{O_2}} & \qw & \qw & \qw \\
	 	\nghost{{q}_{3} :  } &  &  &  &  \ar @{.} [1,1] &  \\
   \nghost{{q}_{3} :  } &  &  &  &  &  \\
	 	\nghost{{t}_{m} :  } & \lstick{{t}_{m} :  } & \qw & \qw & \qw 
        & \qw & \gate{\mathrm{O_m}} & \qw \\
 }
 }
\]

\section{All-to-all (ATA) connectivity}\label{sec:ATA_sec}

{We cover the decompositions of three types of MC gates in all-to-all connectivity, with the aim of minimizing the gate count of the Clifford+T gates CNOT,H and T. We start by providing a structure which can implement any multi-controlled $SU(2)$ gate with one or more targets and no ancilla, then using this to implement the multi-controlled Pauli $X$ gate with one dirty ancilla qubit. Finally, using a new trick, we show that multi-controlled $SU(2)$ gate with two targets can be used, along with one clean ancilla to implement any multi-controlled $U(2)$ gate. As all three gate types are based on the implementation of multi-controlled $SU(2)$, their gate counts are the same up to a small constant. The gate count provides an upper bound to the resulting circuit depth; however, we show that the depth can be lowered in case parallel gate execution is allowed.} 

\subsection{Multi-controlled SU(2)}

In this section, we focus on a cost-efficient implementation of multi-controlled $SU(2)$ gates (MCSU2) with a single or multiple targets and no ancilla qubit. { We start by decomposing the MCSU2 using four large many-qubit diagonal gates in \sec{MCSU2_macro_structure}. Then in \sec{decompose_1} we decompose these many-qubit gates using three-qubit gates and finally to single/two-qubit Clifford+T gates. We extend our methods to allow for an efficient multi-controlled multi-target $SU(2)$ decomposition in \apx{MCMTSU2}. For those who prefer a low T depth, we provide some alternative implementations in \apx{T_depth}, which allows one to reduce the depth of the T gate while only increasing the CNOT depth, or to further reduce the T depth by increasing the CNOT count as well. Although our methods do not require any ancilla qubits, we recognize that in some cases the MCSU2 gate may not apply on all available qubits, in which case idle qubits can be used as dirty ancilla to reduce the circuit cost. In \apx{extra_dirty} we address these cost reductions so that each additional ancilla contributes to the cost reduction, until reaching the lowest known gate count achieved for the multi-controlled X gate in \cite{maslov_advantages_2016} using $O(n)$ dirty ancilla, thus finding a bridge between no-ancilla and $O(n)$ dirty ancilla decompositions.}

\subsubsection{MCSU2 macro structure} \label{sec:MCSU2_macro_structure}

We provide a decomposition of any multi-controlled gate 
$\MCO{W}{C}{t}$ such that $W\in SU(2)$. Any such $W$ can be expressed as a rotation about an axis $\hat{v}$ by an angle $\lambda$ as $W=\rv{\lambda}
{\hat{v}}$. The decomposition requires eight single-qubit gates from $\{R_{\hat{x}},R_{\hat{z}}\}$, in addition to four large relative phase gates.
 
 As a first step, we provide the following decomposition.
\[
\resizebox{0.9\linewidth}{!}{
\Qcircuit @C=1.0em @R=0.3em @!R { 
	 	\nghost{{C_1} :  } & \lstick{{C_1} :  } & \qw {/^{n_1}} & \ctrl{1} & \qw \\
	 	\nghost{{C}_{2} :  } & \lstick{{C}_{2} :  } & \qw {/^{n_2}} & \ctrl{1} & \qw \\
	 	\nghost{{t} :  } & \lstick{{t} :  } & \qw & \gate{\mathrm{W}} & \qw \\
 }
\hspace{5mm}\raisebox{-5.5mm}{=}\hspace{0mm}
\Qcircuit @C=0.5em @R=0.0em @!R { 
	 	\nghost{{C}_{1} :  } & \lstick{{C}_{1} :  } & \qw {/^{n_1}} & \ctrl{2} & \qw & \ctrl{2} & \qw & \qw \\
	 	\nghost{{C}_{2} :  } & \lstick{{C}_{2} :  } & \qw {/^{n_2}} & \qw & \ctrl{1} & \qw & \ctrl{1} & \qw \\
	 	\nghost{{t} :  } & \lstick{{t} :  } & \qw & \rpigate{\hat{v}_1} & \rpigate{\hat{v}_2} & \rpigate{\hat{v}_1} & \rpigate{\hat{v}_2} & \qw \\
 }
 }
 \qcref{mcrpi_base_base}
\]
\begin{lemma}\label{lem:ccrv_reg}
Any $\MCO{\rv{\lambda}{\hat{v}}}{C}{t}$ gate with $\rv{\lambda}{\hat{v}}\in SU(2)$ can be implemented as $\MCO{\rpi{\hat{v}_2}}{C_2}{t}
\MCO{\rpi{\hat{v}_1}}{C_1}{t}
\MCO{\rpi{\hat{v}_2}}{C_2}{t}
\MCO{\rpi{\hat{v}_1}}{C_1}{t}$, such that $C_1\cup C_2 = C$. $\hat{v}_1$ can be chosen as any unit vector perpendicular to $\hat{v}$, and $\hat{v}_2 = \hat{R}_{\hat{v}}(\frac{\lambda}{4})\hat{v}_1$.
\end{lemma}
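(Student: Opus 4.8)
The plan is to verify \qc{mcrpi_base_base} by examining how the right-hand side acts on the target qubit $t$ for each computational basis state of the control register $C$. Every gate on the right-hand side is diagonal in the control register (each controlled-$\rpicl$ only conditions an operation on $t$), so the global operator is block-diagonal with respect to the control basis, and it suffices to check that the induced single-qubit operator on $t$ matches $\MCO{\rv{\lambda}{\hat v}}{C}{t}$ block by block. The one ingredient I would record first is that $\rpi{\hat v}=i\rv{\pi}{\hat v}$ is a Hermitian involution: from $\rv{\pi}{\hat v}=-i\,\hat v\cdot\vec\sigma$ we get $\rpi{\hat v}=\hat v\cdot\vec\sigma$, so $(\rpi{\hat v})^2=I$.

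For a fixed control state, let $b_1\in\{0,1\}$ indicate whether all qubits of $C_1$ are in $\ket 1$, and define $b_2$ analogously for $C_2$. Reading the circuit as the operator $\MCO{\rpi{\hat v_2}}{C_2}{t}\MCO{\rpi{\hat v_1}}{C_1}{t}\MCO{\rpi{\hat v_2}}{C_2}{t}\MCO{\rpi{\hat v_1}}{C_1}{t}$, the induced action on $t$ is $(\rpi{\hat v_2})^{b_2}(\rpi{\hat v_1})^{b_1}(\rpi{\hat v_2})^{b_2}(\rpi{\hat v_1})^{b_1}$. I would then split into cases. Whenever not all of $C=C_1\cup C_2$ is set, at least one of $b_1,b_2$ vanishes; because each of the two $\pi$-rotations appears exactly twice in the alternating sequence, switching one off leaves the other appearing twice in immediate succession, which collapses to the identity via the involution property. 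This disposes of the three ``off'' cases $(b_1,b_2)\in\{(0,0),(1,0),(0,1)\}$ and shows the right-hand side acts trivially there, as required.

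The remaining block is $b_1=b_2=1$, corresponding to all controls of $C$ being set, where the induced operator is $(\rpi{\hat v_2}\,\rpi{\hat v_1})^2$. Here I would invoke \lem{2_rpi}: since $\hat v_1\perp\hat v$ and $\hat v_2=\hat{R}_{\hat v}(\tfrac{\lambda}{4})\hat v_1$, that lemma applied with half-angle $\tfrac{\lambda}{4}$ gives $\rpi{\hat v_2}\,\rpi{\hat v_1}=\rv{\lambda/2}{\hat v}$. Squaring yields $(\rv{\lambda/2}{\hat v})^2=\rv{\lambda}{\hat v}=W$, exactly the desired target operator. Combining the four blocks then establishes the decomposition. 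Conceptually, this is just a doubling of the two-$\rpicl$ factorization of \lem{2_rpi}, with the two copies of each half redistributed across the two control groups so that the cross terms cancel.

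The main obstacle is bookkeeping rather than depth. I must make sure the case split over $b_1,b_2$ is exhaustive and, in particular, that ``not all of $C$ set'' always forces $b_1=0$ or $b_2=0$ — which holds because $C=C_1\cup C_2$, even when $C_1$ and $C_2$ overlap — while ``all of $C$ set'' forces $b_1=b_2=1$. The other point demanding care is the angle convention of \lem{2_rpi}: that lemma realizes a rotation from two $\rpicl$ gates whose axis separation is \emph{half} the total rotation angle, so the axis offset $\tfrac{\lambda}{4}$ here produces a $\tfrac{\lambda}{2}$ rotation, and only after the final squaring do we recover the full angle $\lambda$.
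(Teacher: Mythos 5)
Your proposal is correct and follows essentially the same route as the paper's proof: a case split on whether each of $C_1$, $C_2$ is fully set, using $(\rpi{\hat{v}_j})^2=I$ to kill the three off-cases and \lem{2_rpi} with the $\tfrac{\lambda}{4}$ axis offset to get $(\rv{\lambda/2}{\hat{v}})^2=\rv{\lambda}{\hat{v}}$ in the all-on case. The extra care you take with the involution property and the exhaustiveness of the case split is sound but not a different argument.
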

\begin{proof}
We define the pair $(g_1,g_2)$, where $g_{j\in\{1,2\}}=1$ denotes $C_j$ in the computational basis state $\ket{11..1}$, and $g_j=0$ otherwise. 
The operator applied on the target qubit for each option of the pair $(g_1,g_2)$:
\[
\begin{aligned}
&(00): I,
&(01): (\rpi{\hat{v}_2})^2 = I,\\
&(11): (\rpi{\hat{v}_2}\rpi{\hat{v}_1})^2,
&(10): (\rpi{\hat{v}_1})^2 = I.
\end{aligned}
\]
Since $\hat{v}_1 \perp\hat{v}$ and $\hat{v}_2 = \hat{R}_{\hat{v}}(\frac{\lambda}{4})\hat{v}_1$, according to \lem{2_rpi},
$\rpi{\hat{v}_2}\rpi{\hat{v}_1} = \rv{\frac{\lambda}{2}}{\hat{v}}$,
and the operator applied on the target qubit for option $(11)$ is $(\rv{\frac{\lambda}{2}}{\hat{v}})^2=\rv{\lambda}{\hat{v}}$.
\end{proof}
{We could use \lem{Cpi_zz_yy} to decompose the multi-controlled $\Pi$ gates in \qc{mcrpi_base_base} as four multi-controlled Toffoli (MCT, also known as MCX) gates and single-qubit rotations, achieving a structure similar to the one used in \cite{vale_decomposition_2023} to implement a subset of MCSU2, while ours can implement any MCSU2. Following their analysis, this step alone provides a reduction of CNOT count from $20n$ to $16n$. However, we find it useful to allow for an implementation of these MCX gates as multi-controlled Z gates with an added relative phase, as in \qc{mcrpi_base}, resulting in a reduced Toffoli count which provides a further reduction of CNOT count.}
To minimize the number of arbitrary single-qubit rotations from $\{\rx,\rz\}$, it is beneficial to use \lem{ccrv_reg} in order to implement a multi-controlled $SU(2)$ rotation about the $\hat{x}$ axis and then transform the axis using \lem{MC_transform}. Identity gates in the form of relative phase gates $\sqO{\Delta}{C}$, and their inverse $\sqO{\Delta^\dagger}{C}$ can be added to produce the following { decomposition of $\MCO{\rv{\lambda}{\hat{v}}}{\{C_1,C_2\}}{t}$}.
\[
\resizebox{0.9\linewidth}{!}{
\Qcircuit @C=0.5em @R=0.4em @!R { 
	 	\nghost{{C}_{1} :  } & \lstick{{C}_{1} :  } & \qw {/^{n_1}}  & \ctrl{2} & \multigate{1}{\mathrm{\Delta_1}} & \qw  & \qw & \multigate{1}{\mathrm{\Delta_2}} & \qw & \ctrl{2} & \multigate{1}{\mathrm{\Delta_1^\dagger}} & \qw & \qw & \multigate{1}{\mathrm{\Delta_2^\dagger}} & \qw & \qw \\
	 	\nghost{{C}_{2} :  } & \lstick{{C}_{2} :  } & \qw {/^{n_2}} & \qw & \ghost{\mathrm{\Delta_1}} & \qw  & \ctrl{1} & \ghost{\mathrm{\Delta_2}} & \qw & \qw & \ghost{\mathrm{\Delta_1^\dagger}} & \qw & \ctrl{1} & \ghost{\mathrm{\Delta_2^\dagger}} & \qw & \qw \\
	 	\nghost{{t} :  } & \lstick{{t} :  }  & \gate{\mathrm{A_4}} & \ctrl{0} & \qw & \gate{\mathrm{A_2}} & \ctrl{0} & \qw & \gate{\mathrm{A_3}} & \ctrl{0} & \qw & \gate{\mathrm{A_2}} & \ctrl{0} & \qw & \gate{\mathrm{A_1}} & \qw 
        \gategroup{1}{4}{3}{5}{0.3em}{--} 
        \gategroup{1}{7}{3}{8}{0.3em}{--} 
        \gategroup{1}{10}{3}{11}{0.3em}{--} 
        \gategroup{1}{13}{3}{14}{0.3em}{--} 
 }
 \qcref{mcrpi_base}
 }
\]
The additional $\sqO{\Delta}{C}$ gates can be set as any relative phase gates. Decomposing each multi-controlled Z (MCZ) gate, combined with its corresponding $\sqO{\Delta}{C}$ in terms of basic gates, allows for an efficient implementation, as we show in \sec{decompose_1}.
\begin{lemma}\label{lem:ccrv_macro}
Any $\MCO{\rv{\lambda}{\hat{v}}}{C}{t}$ gate with $\rv{\lambda}{\hat{v}}\in SU(2)$ can be implemented with eight gates from $\{\rx,\rz\}$, two $\MCO{Z}{C_1}{t}$ and two $\MCO{Z}{C_2}{t}$ gates as \qc{mcrpi_base} such that $A_2=\rv{-\frac{\lambda}{4}}{\hat{x}}$, $A_3=\rv{\frac{\lambda}{4}}{\hat{x}}$, $A_4=\rv{\theta_2}{\hat{z}}\rv{\theta_1}{\hat{x}}$, $A_1={A_4}^\dagger A_3$, and $C_1\cup C_2 = C$. The  $\sqO{\Delta}{C}$ gates apply a relative phase, and the angles $\theta_1,\theta_2$ are defined by the axis $\hat{v}$.

\end{lemma}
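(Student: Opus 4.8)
The plan is to first prove \qc{mcrpi_base} for the special axis $\hat{v}=\hat{x}$, where it collapses to a clean product, and then to reach an arbitrary $\hat{v}$ by conjugating the whole gate with a single unconditional rotation on the target $t$; the relative-phase gates are inserted last as cancelling identities.

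First I would feed $\MCO{\rv{\lambda}{\hat{x}}}{\{C_1,C_2\}}{t}$ into \lem{ccrv_reg} with the free perpendicular chosen as $\hat{v}_1=\hat{z}$ (allowed since $\hat{z}\perp\hat{x}$), so that $\hat{v}_2=\hat{R}_{\hat{x}}(\tfrac{\lambda}{4})\hat{z}$ lies in the $yz$-plane. Because $Z=\rpi{\hat{z}}$, the two $C_1$-controlled factors are already $\MCO{Z}{C_1}{t}$. I would then apply \lem{Cpi_zz_yy} to each $C_2$-controlled factor with $\hat{\tau}=\hat{z}$, $\hat{\sigma}=\hat{x}$, $\phi=\tfrac{\lambda}{4}$, rewriting $\MCO{\rpi{\hat{v}_2}}{C_2}{t}$ as $\sqO{\rv{\frac{\lambda}{4}}{\hat{x}}}{t}\MCO{Z}{C_2}{t}\sqO{\rv{-\frac{\lambda}{4}}{\hat{x}}}{t}$. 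This gives
\[
\MCO{\rv{\lambda}{\hat{x}}}{\{C_1,C_2\}}{t}=A_3\,\MCO{Z}{C_2}{t}\,A_2\,\MCO{Z}{C_1}{t}\,A_3\,\MCO{Z}{C_2}{t}\,A_2\,\MCO{Z}{C_1}{t},
\]
(each $A_i$ acting on $t$) with $A_2=\rv{-\frac{\lambda}{4}}{\hat{x}}$ and $A_3=\rv{\frac{\lambda}{4}}{\hat{x}}$ exactly as stated; this is \qc{mcrpi_base} with $A_4=I$ and $A_1=A_3$.

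To move to a general axis I would use that conjugating a controlled single-qubit operator by an unconditional single-qubit gate on $t$ rotates only the target operator: $A_4^{\dagger}\MCO{\rv{\lambda}{\hat{x}}}{C}{t}A_4=\MCO{A_4^{\dagger}\rv{\lambda}{\hat{x}}A_4}{C}{t}$. Through the homomorphism $SU(2)\to SO(3)$, one has $A_4^{\dagger}\rv{\lambda}{\hat{x}}A_4=\rv{\lambda}{\hat{v}}$ precisely when the rotation induced by $A_4$ carries $\hat{v}$ to $\hat{x}$; setting $A_4=\rv{\theta_2}{\hat{z}}\rv{\theta_1}{\hat{x}}$ this determines two angles $\theta_1,\theta_2$ from $\hat{v}$. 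This is the axis transformation of \lem{MC_transform}, here realised by a generic $SU(2)$ rotation rather than by a $\pi$-rotation. Left-multiplying the displayed identity by $A_4^{\dagger}$ and right-multiplying by $A_4$, the trailing $A_4$ becomes the circuit's leftmost gate while $A_4^{\dagger}$ absorbs the leading $A_3$ into $A_1=A_4^{\dagger}A_3$, reproducing \qc{mcrpi_base} with exactly the claimed operators. Counting $\{\rx,\rz\}$ gates --- two in $A_4$, one in $A_3$, two from the repeated $A_2$, and three in $A_1=\rv{-\theta_1}{\hat{x}}\rv{-\theta_2}{\hat{z}}\rv{\frac{\lambda}{4}}{\hat{x}}$ --- yields the promised eight, together with two $\MCO{Z}{C_1}{t}$ and two $\MCO{Z}{C_2}{t}$.

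Finally, since every $\sqO{\Delta}{C}$ is diagonal and supported on $C$, it commutes with each $A_i$ (supported on $t$) and with all the multi-controlled $Z$ gates, so inserting the pairs $\sqO{\Delta_j}{C}\sqO{\Delta_j^{\dagger}}{C}=I$ around the $\MCO{Z}{C_j}{t}$ factors leaves the operator untouched and recovers \qc{mcrpi_base} verbatim, with the $\Delta$'s free for the later decomposition. I expect the one genuinely delicate point to be the axis-transformation step: checking through the $SU(2)\to SO(3)$ map that a two-angle $\rz\rx$ product already suffices to send $\hat{x}$ to an arbitrary $\hat{v}$, and tracking the global phase so that the conjugator is a true $SU(2)$ element and not the determinant-$(-1)$ operator $\rpi{\hat{v}_M}$ that the literal statement of \lem{MC_transform} would produce. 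The remaining manipulations are routine.
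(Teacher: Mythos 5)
Your proof is correct and follows essentially the same route as the paper's: \lem{ccrv_reg} with $\hat{v}_1=\hat{z}$ plus \lem{Cpi_zz_yy} give the $\hat{x}$-axis case with $A_2,A_3$ as claimed, conjugation by $A_4$ on the target transforms the axis and absorbs the leading $A_3$ into $A_1=A_4^\dagger A_3$, and the diagonal $\sqO{\Delta}{C}$ pairs cancel by commutation. The one variation is in how $A_4$ is justified: the paper writes the Hermitian $\rpi{\hat{v}_M}$ from \lem{MC_transform} in XZX Euler form and cancels the outer $\rx$ factors through the multi-controlled $\rv{\lambda}{\hat{x}}$ (which settles for free that two angles suffice and that global phases drop out), whereas you conjugate directly by a two-angle $\rv{\theta_2}{\hat{z}}\rv{\theta_1}{\hat{x}}$ via the $SU(2)\to SO(3)$ adjoint action --- the surjectivity point you flag is real but routine, since $\hat{R}_{\hat{x}}(-\theta_1)\hat{R}_{\hat{z}}(-\theta_2)\hat{x}$ sweeps the whole unit sphere.
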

\begin{proof}
   Using \lem{MC_transform} with $\psi = 0$ and $\hat{v}_M\in M(\hat{v},\hat{x})$, we can write 
\[
\MCO{\rv{\lambda}{\hat{v}}}{C}{t}
=
\sqO{\rpi{\hat{v}_{M}}}{t}
\MCO{\rv{\lambda}{\hat{x}}}{C}{t}
\sqO{\rpi{\hat{v}_{M}}}{t}.
\]

The gate $\rpi{\hat{v}_{M}}$ can be decomposed up to a global phase using the $XZX$ Euler angles as
$\rv{\theta_3}{\hat{x}}\rv{\theta_2}{\hat{z}}\rv{\theta_1}{\hat{x}} = \rv{\theta_3}{\hat{x}}A_4$, and since $\rpi{\hat{v}_{M}}$ is hermitian, it can be implemented as $A_4^\dagger\rv{-\theta_3}{\hat{x}}$ as well. 
Substituting and applying cancellations provides
\[
\MCO{\rv{\lambda}{\hat{v}}}{C}{t}
=
\sqO{A_4^\dagger}{t}
\MCO{\rv{\lambda}{\hat{x}}}{C}{t}
\sqO{A_4}{t}.
\]
From \lem{ccrv_reg} with $\hat{v}_1=\hat{z}$ and $\hat{v}_2= \hat{R}_{\hat{x}}(\frac{\lambda}{4})\hat{z}$ we get
\[\MCO{\rv{\lambda}{\hat{x}}}{C}{t} = \MCO{\rpi{\hat{v}_2}}{C_2}{t}
\MCO{Z}{C_1}{t}
\MCO{\rpi{\hat{v}_2}}{C_2}{t}
\MCO{Z}{C_1}{t}.
\]
Finally, each $\MCO{\rpi{\hat{v}_2}}{C_2}{t}$ gate can be implemented as 
\[\MCO{\rpi{\hat{v}_2}}{C_2}{t}
=
\sqO{\rv{\tfrac{\lambda}{4}}{\hat{x}}}{t}
\MCO{Z}{C_2}{t}
\sqO{\rv{-\tfrac{\lambda}{4}}{\hat{x}}}{t}
= 
\sqO{A_3}{t}
\MCO{Z}{C_2}{t}
\sqO{A_2}{t}
\]
according to \lem{Cpi_zz_yy}.
 Since relative phase gates commute with each other and with $\MCO{Z}{C_1}{t},\MCO{Z}{C_2}{t}$, the effects of the additional $\sqO{\Delta}{C}$ gates cancel out.
\end{proof}
    
\subsubsection{MCSU2 decomposition}\label{sec:decompose_1}

In this section, we provide a decomposition of $\sqO{\Delta_1}{C}\MCO{Z}{C_1}{t}$, $\sqO{\Delta_2}{C}\MCO{Z}{C_2}{t}$ and their inverse {(the boxed gates in \qc{mcrpi_base})} in terms of Clifford+T gates, as required for the implementation described in \lem{ccrv_macro}. The following lemma will be used to develop our structure.

\begin{lemma}\label{lem:basic_struct}
    For two sets of qubits $A,B$ and a qubit $d\not\in A\cup B$, the following holds
    $\MCO{\rpi{\hat{v}_2}}{B}{d}\MCO{\rpi{\hat{v}_1}}{A}{d}\MCO{\rpi{\hat{v}_2}}{B}{d} = \MCO{Z}{\{A,B\}}{}\MCO{\rpi{\hat{v}_1}}{A}{d}$ if the unit vectors $\hat{v}_1$ and $\hat{v}_2$ are perpendicular. 
\end{lemma}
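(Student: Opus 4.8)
The plan is to reuse the control-state case analysis from the proof of \lem{ccrv_reg}. I would introduce two indicator variables $a,b\in\{0,1\}$, where $a=1$ exactly when every qubit of $A$ is in $\ket{1}$ and $b=1$ exactly when every qubit of $B$ is in $\ket{1}$, and then read off the operator induced on $d$ (together with any global phase) in each of the four branches $(a,b)$, for both sides of the claimed identity. Because the target $d$ lies outside $A\cup B$, the diagonal gate $\MCO{Z}{\{A,B\}}{}$ never touches the state of $d$; it merely multiplies the branch by $-1$ precisely when $a=b=1$.

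On the left-hand side, the three branches with $a\neq 1$ or $b\neq 1$ are immediate: for $(0,0)$ nothing fires; for $(1,0)$ only the central $\MCO{\rpi{\hat{v}_1}}{A}{d}$ fires, giving $\rpi{\hat{v}_1}$; and for $(0,1)$ only the two outer gates fire, giving $(\rpi{\hat{v}_2})^2=I$, using that a $\Pi$ gate is a Hermitian involution (the same fact used in \lem{ccrv_reg}). All of the content therefore sits in the branch $(1,1)$, where all three gates fire and the induced operator is $\rpi{\hat{v}_2}\rpi{\hat{v}_1}\rpi{\hat{v}_2}$. I would match this against the right-hand side, which in the branches $(0,0),(1,0),(0,1),(1,1)$ gives $I,\ \rpi{\hat{v}_1},\ I,\ -\rpi{\hat{v}_1}$ respectively, so that the whole statement reduces to the single-qubit identity
\[
\rpi{\hat{v}_2}\rpi{\hat{v}_1}\rpi{\hat{v}_2}=-\rpi{\hat{v}_1}
\qquad\text{for }\hat{v}_1\perp\hat{v}_2 .
\]

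To close the argument I would establish two elementary facts about $\Pi$ gates along perpendicular axes: that $(\rpi{\hat{v}_2})^2=I$, and that perpendicular $\Pi$ gates anticommute, $\rpi{\hat{v}_2}\rpi{\hat{v}_1}=-\rpi{\hat{v}_1}\rpi{\hat{v}_2}$. Both follow from the Pauli interpretation of these gates (generalising $\rpi{\hat{x}}=X$, $\rpi{\hat{y}}=Y$, $\rpi{\hat{z}}=Z$ noted after \lem{2_rpi}), or directly from \lem{2_rpi}: choosing an axis $\hat{v}$ orthogonal to both $\hat{v}_1$ and $\hat{v}_2$ and fixing its orientation gives $\rpi{\hat{v}_2}\rpi{\hat{v}_1}=\rv{\pi}{\hat{v}}$ while $\rpi{\hat{v}_1}\rpi{\hat{v}_2}=\rv{-\pi}{\hat{v}}=-\rv{\pi}{\hat{v}}$, which is exactly the anticommutation relation. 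Combining them, $\rpi{\hat{v}_2}\rpi{\hat{v}_1}\rpi{\hat{v}_2}=-\rpi{\hat{v}_1}(\rpi{\hat{v}_2})^2=-\rpi{\hat{v}_1}$, as required. The main obstacle is purely the bookkeeping of this sign: one must verify that the $-1$ produced by the anticommutation in branch $(1,1)$ is reproduced on the right-hand side by the controlled phase $\MCO{Z}{\{A,B\}}{}$, which fires in exactly that branch and nowhere else, while $\MCO{\rpi{\hat{v}_1}}{A}{d}$ supplies the correct $\rpi{\hat{v}_1}$ in both $a=1$ branches.
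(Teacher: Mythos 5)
Your proof is correct, but it takes a genuinely different route from the paper's. The paper never analyses the three-gate product directly: it right-multiplies both sides by the Hermitian involution $\MCO{\rpi{\hat{v}_1}}{A}{d}$ so that the left-hand side becomes the four-gate pattern $\MCO{\rpi{\hat{v}_2}}{B}{d}\MCO{\rpi{\hat{v}_1}}{A}{d}\MCO{\rpi{\hat{v}_2}}{B}{d}\MCO{\rpi{\hat{v}_1}}{A}{d}$, which is exactly \lem{ccrv_reg} with $\lambda=2\pi$ and axis $\hat{v}=\hat{v}_1\times\hat{v}_2$ (so that $\hat{v}_2=\hat{R}_{\hat{v}}(\tfrac{\pi}{2})\hat{v}_1$); that lemma yields $\MCO{\rv{2\pi}{\hat{v}}}{\{A,B\}}{d}$, the observation $\rv{2\pi}{\hat{v}}=-I$ turns this into $\MCO{Z}{\{A,B\}}{}$, and cancelling the appended gate finishes the argument. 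Your version instead inlines the $(a,b)$ branch bookkeeping (which is essentially the proof of \lem{ccrv_reg} redone for this configuration) and isolates the single-qubit identity $\rpi{\hat{v}_2}\rpi{\hat{v}_1}\rpi{\hat{v}_2}=-\rpi{\hat{v}_1}$, which you correctly derive from the involution property together with the anticommutation of perpendicular $\Pi$ gates (itself obtainable from \lem{2_rpi} or from the identification $\rpi{\hat{v}}=\hat{v}\cdot\vec{\sigma}$). The paper's route is shorter because it reuses \lem{ccrv_reg} as a black box and sources the $-1$ from $\rv{2\pi}{\hat{v}}=-I$; yours is more self-contained and makes the Pauli-type origin of that sign explicit. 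Both arguments are valid and complete.
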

\begin{proof}
For any choice of two perpendicular vectors $\hat{v}_1\perp \hat{v}_2$, a unit vector $\hat{v}$ perpendicular to both can be defined as $ \hat{v} = \hat{v}_1\times\hat{v}_2
$ such that $\hat{v}_2 = \hat{R}_{\hat{v}}(\frac{\pi}{2})\hat{v}_1$. Therefore, according to \lem{ccrv_reg},
    \[
\MCO{\rpi{\hat{v}_2}}{B}{d}
\MCO{\rpi{\hat{v}_1}}{A}{d}
\MCO{\rpi{\hat{v}_2}}{B}{d}
\MCO{\rpi{\hat{v}_1}}{A}{d}
=
\MCO{\rv{2\pi}{\hat{v}}}{\{A,B\}}{d}.
\]
We note that for any unit vector $\hat{v}$,  $ \MCO{\rv{2\pi}{\hat{v}}}{\{A,B\}}{d} =\MCO{-I}{\{A,B\}}{d}$. This operator merely applies a phase of $-1$ if all qubits in set $A\cup B$ are in the computational basis state $\ket{1}$, and does not change the state of the target qubit $d$. Therefore, it is equivalent to applying a multi-controlled $Z$ gate on set $A\cup B$, and we can write $\MCO{-I}{\{A,B\}}{d} = \MCO{Z}{\{A,B\}}{}$ as mentioned in \cite{he_decompositions_2017}. By substituting this into the equation above and applying the Hermitian gate $\MCO{\rpi{\hat{v}_1}}{A}{d}$ on the right-hand side, we get the required equation.
\end{proof}
For our construction, we use \lem{basic_struct} with $\hat{v}_1=\hat{z},\hat{v}_2=\hat{x}$, and set $B=\{c,d'\}$, which results in
$
\MCO{X}{\{c,d'\}}{d}
\MCO{Z}{A}{d}
\MCO{X}{\{c,d'\}}{d} = 
\MCO{Z}{\{A,c\}}{d'}\MCO{Z}{A}{d}
$ as follows.
\[
\Qcircuit @C=1.0em @R=0.2em @!R { 
	 	\nghost{{A} :  } & \lstick{{A} :  } & \qw {/^n}  & \ctrl{2} & \qw & \qw \\
	 	\nghost{{c} :  } & \lstick{{c} :  }  & \ctrl{1} & \qw & \ctrl{1} & \qw \\
	 	\nghost{{d} :  } & \lstick{{d} :  } & \targ & \control\qw & \targ & \qw \\
	 	\nghost{{d'} :  } & \lstick{{d'} :  }  & \ctrl{-1} & \qw & \ctrl{-1} & \qw \\
 }
 \hspace{5mm}\raisebox{-6mm}{=}\hspace{0mm}
 \Qcircuit @C=1.0em @R=0.2em @!R { 
	 	\nghost{{A} :  } & \lstick{{A} :  } & \qw {/^n} & \ctrl{1} & \ctrl{2} & \qw \\
	 	\nghost{{c} :  } & \lstick{{c} :  } & \qw & \ctrl{2} & \qw & \qw \\
	 	\nghost{{d} :  } & \lstick{{d} :  } & \qw & \qw & \control\qw & \qw \\
	 	\nghost{{d'} :  } & \lstick{{d'} :  } & \qw & \control\qw & \qw & \qw \\
 }
 \qcref{add_mcz}
\]
We introduce the following notation:
\[
\{ Z\}^m_{C,D} = \prod_{j=2}^m \MCO{Z}{C[1:j]}{d_{j-1}}
\]
where $C=\{c_1,c_2,..,c_n\}$ and $D=\{d_1,d_2,..,d_{n-1}\}$
are two qubit sets, and $m$ can be chosen such that $2\leq m\leq n$.
In addition, we use the $\MCO{X_\Delta}{\{q_1,q_2\}}{q_3}$ notation for an hermitian relative phase Toffoli gate which satisfies
\[
\resizebox{0.9\linewidth}{!}{
\Qcircuit @C=1.0em @R=0.8em @!R { 
	 	\nghost{{q}_{1} :  } & \lstick{{q}_{1} :  } & \ctrlt{1} & \qw \\
	 	\nghost{{q}_{3} :  } & \lstick{{q}_{3} :  } & \targ & \qw \\
	 	\nghost{{q}_{2} :  } & \lstick{{q}_{2} :  } & \ctrl{-1} & \qw \\
 }
 \hspace{5mm}\raisebox{-7mm}{=}\hspace{0mm}
\Qcircuit @C=1.0em @R=0.4em @!R { 
	 	\nghost{{q}_{1} :  } & \lstick{{q}_{1} :  } & \ctrl{1} & \qw & \ctrl{2} & \qw \\
	 	\nghost{{q}_{3} :  } & \lstick{{q}_{3} :  } & \targ & \control\qw & \qw & \qw \\
	 	\nghost{{q}_{2} :  } & \lstick{{q}_{2} :  } & \ctrl{-1} & \ctrl{-1} & \gate{\mathrm{S}} & \qw \\
 }
 \hspace{5mm}\raisebox{-7mm}{=}\hspace{0mm}
 \Qcircuit @C=1.0em @R=0.3em @!R { 
	 	\nghost{{q}_{1} :  } & \lstick{{q}_{1} :  } & \ctrl{2} & \qw & \ctrl{1} & \qw \\
	 	\nghost{{q}_{3} :  } & \lstick{{q}_{3} :  } & \qw & \control\qw & \targ & \qw \\
	 	\nghost{{q}_{2} :  } & \lstick{{q}_{2} :  } & \gate{\mathrm{S^\dagger}} & \ctrl{-1} & \ctrl{-1} & \qw \\
 }
 }
 \qcref{rf_toff_first}
\]
This gate notation allows us to distinguish between the control qubits. We use a symbol which combines the $\ket{0}$-controlled and $\ket{1}$-controlled symbols, noting that for each of these basis states of $q_1$, a different controlled operator is effectively applied on $q_2$ and $q_3$.

We now show that two $[X_\Delta]$ gates can be used to transform $\{ Z\}^{m-1}_{C,D}$ to $\{ Z\}^m_{C,D}$.
\begin{lemma} \label{lem:rec_struct}
$\{ Z\}^m_{C,D} = \MCO{X_\Delta}{\{c_m,d_{m-1}\}}{d_{m-2}}\{ Z\}^{m-1}_{C,D}\MCO{X_\Delta}{\{c_m,d_{m-1}\}}{d_{m-2}} $
    for $3\leq m \leq \abs{C}$, and $\abs{D}\geq m-1$.
\end{lemma}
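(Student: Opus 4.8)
The plan is to reduce the claim to the ordinary-Toffoli identity \qc{add_mcz} by first showing that, when conjugating a diagonal operator, the relative-phase Toffoli $\MCO{X_\Delta}{\{q_1,q_2\}}{q_3}$ behaves exactly like a genuine Toffoli $\MCO{X}{\{q_1,q_2\}}{q_3}$. The object $\{ Z\}^{m-1}_{C,D}$ is a product of multi-controlled $Z$ gates and is therefore diagonal in the computational basis, which is what makes this reduction available.

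\emph{Step 1 (the relative phase washes out).} From \qc{rf_toff_first} the relative-phase Toffoli permutes the computational basis states in precisely the same way as the ordinary Toffoli, differing from it only by phases on a few basis states; that is, $\MCO{X_\Delta}{\{q_1,q_2\}}{q_3} = \MCO{X}{\{q_1,q_2\}}{q_3}\,D$ for a diagonal (relative-phase) factor $D$. Since $\MCO{X_\Delta}{\{q_1,q_2\}}{q_3}$ is Hermitian (self-inverse), it equals $D^\dagger\MCO{X}{\{q_1,q_2\}}{q_3}$ as well. Hence for any diagonal $U$ the conjugation becomes $\MCO{X_\Delta}{\{q_1,q_2\}}{q_3}\,U\,\MCO{X_\Delta}{\{q_1,q_2\}}{q_3} = D^\dagger\big(\MCO{X}{\{q_1,q_2\}}{q_3}\,U\,\MCO{X}{\{q_1,q_2\}}{q_3}\big)D$. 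A Toffoli is a permutation matrix, so conjugating a diagonal operator by it merely permutes the diagonal entries and leaves the result diagonal; therefore $D^\dagger$ and $D$ commute through the bracket and cancel, giving $\MCO{X_\Delta}{\{q_1,q_2\}}{q_3}\,U\,\MCO{X_\Delta}{\{q_1,q_2\}}{q_3} = \MCO{X}{\{q_1,q_2\}}{q_3}\,U\,\MCO{X}{\{q_1,q_2\}}{q_3}$ for every diagonal $U$. I expect this to be the only delicate point of the proof—it is the statement that the cheaper relative-phase Toffoli can replace a true Toffoli for free in this context.

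\emph{Step 2 (apply \qc{add_mcz}).} Taking $U=\{ Z\}^{m-1}_{C,D}$, it remains to evaluate $\MCO{X}{\{c_m,d_{m-1}\}}{d_{m-2}}\{ Z\}^{m-1}_{C,D}\MCO{X}{\{c_m,d_{m-1}\}}{d_{m-2}}$. I would split $\{ Z\}^{m-1}_{C,D} = \{ Z\}^{m-2}_{C,D}\,\MCO{Z}{C[1:m-1]}{d_{m-2}}$ and note that the Toffoli is supported on $\{c_m,d_{m-1},d_{m-2}\}$, which is disjoint from the qubits of $\{ Z\}^{m-2}_{C,D}$ (only $c_1,\dots,c_{m-2}$ and $d_1,\dots,d_{m-3}$); so the Toffoli commutes past $\{ Z\}^{m-2}_{C,D}$ and only the $j=m-1$ factor is conjugated nontrivially. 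Applying \qc{add_mcz} with $A=C[1:m-1]$, $c=c_m$, $d'=d_{m-1}$, $d=d_{m-2}$ yields $\MCO{X}{\{c_m,d_{m-1}\}}{d_{m-2}}\MCO{Z}{C[1:m-1]}{d_{m-2}}\MCO{X}{\{c_m,d_{m-1}\}}{d_{m-2}} = \MCO{Z}{C[1:m]}{d_{m-1}}\MCO{Z}{C[1:m-1]}{d_{m-2}}$.

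\emph{Step 3 (recombine).} Collecting the pieces gives $\{ Z\}^{m-2}_{C,D}\,\MCO{Z}{C[1:m]}{d_{m-1}}\,\MCO{Z}{C[1:m-1]}{d_{m-2}}$; as all factors are diagonal and commute, I would move $\MCO{Z}{C[1:m]}{d_{m-1}}$ to the right to recover $\{ Z\}^{m-2}_{C,D}\,\MCO{Z}{C[1:m-1]}{d_{m-2}}\,\MCO{Z}{C[1:m]}{d_{m-1}} = \{ Z\}^{m-1}_{C,D}\,\MCO{Z}{C[1:m]}{d_{m-1}} = \{ Z\}^{m}_{C,D}$, which is the claim. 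The hypotheses $3\le m\le\abs{C}$ and $\abs{D}\ge m-1$ are exactly what guarantee that $c_m$, $d_{m-1}$, $d_{m-2}$ exist and that $\{ Z\}^{m-1}_{C,D}$ is well defined; beyond Step 1 the argument is routine bookkeeping with mutually commuting diagonal gates.
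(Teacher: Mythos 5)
Your proof is correct and follows essentially the same route as the paper's: the paper likewise reduces to genuine Toffolis by substituting the two forms of the $[X_\Delta]$ definition so that the diagonal relative-phase factors sit next to the diagonal operator, commute through, and cancel, and then runs the identical recursion (split off $\{ Z\}^{m-2}_{C,D}$, commute by disjoint support, apply the conjugation identity of \qc{add_mcz}/\lem{basic_struct}, recombine). Your Step 1 merely states the paper's one-line "commute and cancel" remark as an explicit general fact about conjugating diagonal operators, which is a fine and slightly cleaner packaging of the same idea.
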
 
\begin{proof}
We prove for the following equivalent equation, in which $[X_{\Delta}]$ are replaced with Toffoli gates :
\[
\{ Z\}^m_{C,D} = \MCO{X}{\{c_m,d_{m-1}\}}{d_{m-2}}\{ Z\}^{m-1}_{C,D}\MCO{X}{\{c_m,d_{m-1}\}}{d_{m-2}}
\]
where the equivalence can be easily verified by substituting each $[X_\Delta]$ gate with its definition which places all of the relative phase gates at the center, then these are commuted and cancelled out.\\  
For $m=3$:
\[
\begin{aligned}
\MCO{X}{\{c_3,d_{2}\}}{d_{1}}&\{ Z\}^{2}_{C,D}\MCO{X}{\{c_3,d_{2}\}}{d_{1}} 
= 
\\
&\MCO{X}{\{c_3,d_{2}\}}{d_{1}}\MCO{Z}{\{c_1,c_2\}}{d_1}\MCO{X}{\{c_3,d_{2}\}}{d_{1}}
\stackrel{*}{=}
\\
&\MCO{Z}{\{c_1,c_2\}}{d_1}\MCO{Z}{\{c_1,c_2,c_3\}}{d_2}
=
\{ Z\}^3_{C,D}
\end{aligned}
\]
where $*$ is according to \lem{basic_struct}.\\
For $m\geq 4$, by definition:
    \[
        \{ Z\}^{m-1}_{C,D} =
        \MCO{Z}{C[1:m-1]}{d_{m-2}}
        \{ Z\}^{m-2}_{C,D}.
    \]
    Since $c_m,d_{m-1},d_{m-2}$ are not accessed in $\{ Z\}^{m-2}_{C,D}$, this gate commutes with $\MCO{X}{\{c_m,d_{m-1}\}}{d_{m-2}}$. Therefore:
    \[
    \begin{aligned}
    &\MCO{X}{\{c_m,d_{m-1}\}}{d_{m-2}}
    \{ Z\}^{m-1}_{C,D}
    \MCO{X}{\{c_m,d_{m-1}\}}{d_{m-2}}
    =\\
    &\MCO{X}{\{c_m,d_{m-1}\}}{d_{m-2}}
    \MCO{Z}{C[1:m-1]}{d_{m-2}}
    \MCO{X}{\{c_m,d_{m-1}\}}{d_{m-2}}
    \{ Z\}^{m-2}_{C,D}.
    \end{aligned}
    \]
    According to \lem{basic_struct}:
    \[
    \begin{aligned}
    \MCO{X}{\{c_m,d_{m-1}\}}{d_{m-2}}
    \MCO{Z}{C[1:m-1]}{d_{m-2}}
    \MCO{X}{\{c_m,d_{m-1}\}}{d_{m-2}}
    &=\\
    &\MCO{Z}{C[1:m]}{d_{m-1}}
    \MCO{Z}{C[1:m-1]}{d_{m-2}}.
    \end{aligned}
    \]
    Since $C[1:m-1]\cup \{c_m\} = C[1:m]$.
 Finally, we can substitute and use 
    \[
    \MCO{Z}{C[1:m]}{d_{m-1}}
    \MCO{Z}{C[1:m-1]}{d_{m-2}}
    \{ Z\}^{m-2}_{C,D}
    =
    \{ Z\}^{m}_{C,D}.
    \]
\end{proof}

The following lemma is simply achieved by repeatedly applying the recursive formula presented in \lem{rec_struct}, and as a final step, applying $\MCO{S}{c_1}{c_2}$ to both sides. This added gate commutes with any other gate in the circuit and transforms $\MCO{Z}{\{c_1,c_2\}}{d_1}$ to $\MCO{iZ}{\{c_1,c_2\}}{d_1}$:

\begin{lemma}\label{lem:Vchain_fullstruct}
$\{ Z\}^{n}_{C,D}\MCO{S}{c_1}{c_2} = (\prod_{j=3}^{n} \MCO{X_\Delta}{\{c_j,d_{j-1}\}}{d_{j-2}})^\dagger
\MCO{iZ}{\{c_1,c_2\}}{d_1}
(\prod_{j=3}^{n} \MCO{X_\Delta}{\{c_j,d_{j-1}\}}{d_{j-2}})$
for $n\geq 3$, $\abs{C}= n$, and $\abs{D}= n-1$.
\end{lemma}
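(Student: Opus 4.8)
The plan is to derive the identity by telescoping the recursion of \lem{rec_struct} down to its base case and then folding in the extra $\MCO{S}{c_1}{c_2}$ gate. Writing $G_j$ for the gate $\MCO{X_\Delta}{\{c_j,d_{j-1}\}}{d_{j-2}}$, \lem{rec_struct} reads $\{Z\}^m_{C,D} = G_m\,\{Z\}^{m-1}_{C,D}\,G_m$ for $3\leq m\leq n$. The first thing I would record is that each $G_j$ is Hermitian, which is precisely the property built into the definition of $[X_\Delta]$ in \qc{rf_toff_first}; hence $G_j^\dagger = G_j$. This Hermiticity is what will let me reinterpret a one-sided product of the $G_j$ as the Hermitian conjugate of the oppositely ordered product on the other side.

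Next I would apply the recursion repeatedly, stripping one $G_j$ from each side at every step, $\{Z\}^n_{C,D} = G_n\,\{Z\}^{n-1}_{C,D}\,G_n = G_n G_{n-1}\,\{Z\}^{n-2}_{C,D}\,G_{n-1} G_n = \cdots$, which is legitimate down to $m=3$ and terminates at the base case $\{Z\}^2_{C,D} = \MCO{Z}{\{c_1,c_2\}}{d_1}$, read off directly from the definition of $\{Z\}^m_{C,D}$ at $m=2$. This produces the nested conjugation $\{Z\}^n_{C,D} = (G_n\cdots G_3)\,\MCO{Z}{\{c_1,c_2\}}{d_1}\,(G_3\cdots G_n)$. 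Here the right factor $G_3\cdots G_n$ is exactly $\prod_{j=3}^n G_j$, and by $G_j^\dagger=G_j$ the left factor $G_n\cdots G_3$ equals $(\prod_{j=3}^n G_j)^\dagger$, matching the conjugation shape of the claimed right-hand side.

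Finally I would right-multiply both sides by $\MCO{S}{c_1}{c_2}$. The key observation is that this gate is supported only on $c_1,c_2$, whereas every $G_j$ with $j\geq 3$ is supported on $\{c_j,d_{j-1},d_{j-2}\}$, so the supports are disjoint and $\MCO{S}{c_1}{c_2}$ commutes through the entire right-hand product $\prod_{j=3}^n G_j$ to sit directly against the central $\MCO{Z}{\{c_1,c_2\}}{d_1}$. It then remains to check the single diagonal identity $\MCO{S}{c_1}{c_2}\,\MCO{Z}{\{c_1,c_2\}}{d_1} = \MCO{iZ}{\{c_1,c_2\}}{d_1}$: conditioned on $c_1=c_2=1$ the controlled-$S$ contributes an overall phase $i$ on the target line while the controlled-$Z$ contributes the $\mathrm{diag}(1,-1)$ action, and the product is $iZ=\mathrm{diag}(i,-i)$. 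Merging these two central gates yields the stated lemma.

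I expect the genuine difficulty to be bookkeeping rather than any substantive argument: keeping the $d_j$ index ranges consistent while the ladder is peeled, and being careful that the reversed ordering generated by the telescoping really coincides with $(\prod_{j=3}^n G_j)^\dagger$ under the product convention in use. The commutation of $\MCO{S}{c_1}{c_2}$ past the $G_j$ and the final $S\cdot Z \to iZ$ phase merge are both immediate once disjointness of supports is noted, so the whole proof is a short, essentially mechanical assembly of \lem{rec_struct}.
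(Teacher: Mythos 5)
Your proposal is correct and follows essentially the same route as the paper, which likewise obtains the identity by iterating \lem{rec_struct} down to the base case $\{Z\}^2_{C,D}=\MCO{Z}{\{c_1,c_2\}}{d_1}$ and then appending $\MCO{S}{c_1}{c_2}$, which commutes through the $[X_\Delta]$ ladder and merges the central CCZ into $[iZ]$. Your explicit checks of the Hermiticity of each $G_j$, the disjointness of supports, and the $S\cdot Z \to iZ$ phase merge are details the paper leaves implicit, but they match its argument exactly.
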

The following circuits describes the decomposition of $\{ Z\}^5_{C,D}\MCO{S}{c_1}{c_2}$ in terms of $\MCO{X_\Delta}{\{q_1,q_2\}}{q_3}$ and $\MCO{iZ}{\{q_1,q_2\}}{q_3}$ gates, or $[X_\Delta]$ and $[iZ]$ for short:
\[
\resizebox{0.9\linewidth}{!}{
\Qcircuit @C=1.0em @R=0.38em @!R {
	 	\nghost{{c}_{1} :  } & \lstick{{c}_{1} :  } & \ctrl{1}  & \multigate{7}{\mathrm{\Delta}}  & \qw \\
	 	\nghost{{c}_{2} :  } & \lstick{{c}_{2} :  } & \ctrl{1} & \ghost{\mathrm{\Delta}}  & \qw \\
	 	\nghost{{c}_{3} :  } & \lstick{{c}_{3} :  } & \ctrl{1} & \ghost{\mathrm{\Delta}}  & \qw \\
	 	\nghost{{c}_{4} :  } & \lstick{{c}_{4} :  } & \ctrl{1} & \ghost{\mathrm{\Delta}}  & \qw \\
	 	\nghost{{c}_{5} :  } & \lstick{{c}_{5} :  } & \ctrl{4} & \ghost{\mathrm{\Delta}} & \qw \\
	 	\nghost{{d}_{1} :  } & \lstick{{d}_{1} :  } & \qw  & \ghost{\mathrm{\Delta}} & \qw \\
	 	\nghost{{d}_{2} :  } & \lstick{{d}_{2} :  } & \qw & \ghost{\mathrm{\Delta}}  & \qw \\
	 	\nghost{{d}_{3} :  } & \lstick{{d}_{3} :  } & \qw & \ghost{\mathrm{\Delta}} & \qw \\
	 	\nghost{{d}_{4} :  } & \lstick{{d}_{4} :  } & \control\qw  & \qw & \qw  \\
}
   \hspace{5mm}\raisebox{-18mm}{=}\hspace{0mm}
\Qcircuit @C=1.0em @R=0.em @!R {
	 	\nghost{{c}_{1} :  } & \lstick{{c}_{1} :  } & \ctrl{1} \barrier[0em]{8} & \qw & \ctrl{1} & \ctrl{1} & \ctrl{1} & \ctrl{1} & \qw \\
	 	\nghost{{c}_{2} :  } & \lstick{{c}_{2} :  } & \ctrl{1} & \qw & \ctrl{1} & \ctrl{1} & \ctrl{4} &\gate{\mathrm{S}} & \qw \\
	 	\nghost{{c}_{3} :  } & \lstick{{c}_{3} :  } & \ctrl{1} & \qw & \ctrl{1} & \ctrl{4} & \qw & \qw & \qw \\
	 	\nghost{{c}_{4} :  } & \lstick{{c}_{4} :  } & \ctrl{1} & \qw & \ctrl{4} & \qw & \qw & \qw & \qw\\
	 	\nghost{{c}_{5} :  } & \lstick{{c}_{5} :  } & \ctrl{4} & \qw & \qw & \qw & \qw & \qw & \qw\\
	 	\nghost{{d}_{1} :  } & \lstick{{d}_{1} :  } & \qw & \qw & \qw & \qw & \control\qw & \qw & \qw\\
	 	\nghost{{d}_{2} :  } & \lstick{{d}_{2} :  } & \qw & \qw & \qw & \control\qw & \qw & \qw & \qw\\
	 	\nghost{{d}_{3} :  } & \lstick{{d}_{3} :  } & \qw & \qw & \control\qw & \qw & \qw & \qw & \qw\\
	 	\nghost{{d}_{4} :  } & \lstick{{d}_{4} :  } & \control\qw & \qw & \qw & \qw & \qw & \qw & \qw\\
}
\hspace{5mm}\raisebox{-18mm}{=}\hspace{0mm}
   \Qcircuit @C=0.7em @R=0.em @!R { 
	 	\nghost{{c}_{1} :  } & \lstick{{c}_{1} :  } & \qw & \qw & \qw & \ctrl{1} & \qw & \qw & \qw & \qw \\
	 	\nghost{{c}_{2} :  } & \lstick{{c}_{2} :  } & \qw & \qw & \qw & \ctrl{4} & \qw & \qw & \qw & \qw \\
	 	\nghost{{c}_{3} :  } & \lstick{{c}_{3} :  } & \qw & \qw & \ctrlt{3} & \qw & \ctrlt{3} & \qw & \qw & \qw \\
	 	\nghost{{c}_{4} :  } & \lstick{{c}_{4} :  } & \qw & \ctrlt{3} & \qw & \qw & \qw & \ctrlt{3} & \qw & \qw \\
	 	\nghost{{c}_{5} :  } & \lstick{{c}_{5} :  } & \ctrlt{3} & \qw & \qw & \qw & \qw & \qw & \ctrlt{3} & \qw \\
	 	\nghost{{d}_{1} :  } & \lstick{{d}_{1} :  } & \qw & \qw & \targ & \gate{\mathrm{iZ}} & \targ & \qw & \qw & \qw \\
	 	\nghost{{d}_{2} :  } & \lstick{{d}_{2} :  } & \qw & \targ & \ctrl{-1} & \qw & \ctrl{-1} & \targ & \qw & \qw \\
	 	\nghost{{d}_{3} :  } & \lstick{{d}_{3} :  } & \targ & \ctrl{-1} & \qw & \qw & \qw & \ctrl{-1} & \targ & \qw \\
	 	\nghost{{d}_{4} :  } & \lstick{{d}_{4} :  } & \ctrl{-1} & \qw & \qw & \qw & \qw & \qw & \ctrl{-1} & \qw \\
 }
 }
 \qcref{v_chain_z_rel_phase}
\]
This structure requires $2n-4$ $[X_\Delta]$ gates and one $[iZ]$ gate. { It can be applied using the Toffoli+H gate set by replacing $[X_\Delta]\rightarrow [X]$ and $[iZ]\rightarrow [Z]$, providing a MCSU2 implementation using $4n+O(1)$ Toffoli gates when applied in \qc{mcrpi_base}, in addition to $O(1)$ arbitrary single qubit gates. Compared to the implementation of multi-controlled $R_{x/y/z}\in SU(2)$ gates in \cite{vale_decomposition_2023}, this construction uses half the number of Toffoli gates while implementing any MCSU2. It can be noted that the structure in \qc{v_chain_z_rel_phase} resembles in shape to the first half of the structure used in \cite{vale_decomposition_2023} for 
 a similar purpose in the MCSU2 construction. This Toffoli-based structure may be applied directly in some architectures\cite{jandura_time-optimal_2022}, or further decomposed in terms of single/two-qubit gates from gate sets such as NCV. As we focus on Clifford+T implementations we will decompose these gates using $T,T^\dagger,H$ and CNOT gates.}
We use the following decompositions for $\MCO{X_\Delta}{\{q_1,q_2\}}{q_3}$ \cite{maslov_advantages_2016}, and $\MCO{iZ}{\{q_1,q_2\}}{q_3}$.

\[
\resizebox{0.9\linewidth}{!}{
\Qcircuit @C=1.0em @R=0.8em @!R { 
	 	\nghost{{q}_{1} :  } & \lstick{{q}_{1} :  } & \ctrlt{1} & \qw \\
	 	\nghost{{q_3} :  } & \lstick{{q_3} :  } & \targ & \qw \\
	 	\nghost{{q}_{2} :  } & \lstick{{q}_{2} :  } & \ctrl{-1} & \qw \\
 }
 \hspace{5mm}\raisebox{-6mm}{=}\hspace{0mm}
 \Qcircuit @C=1.0em @R=0.2em @!R { 
	 	\nghost{{q}_{1} :  } & \lstick{{q}_{1} :  } & \qw & \qw & \ctrl{1}  & \qw\barrier[0em]{2}  & \qw & \qw & \qw & \ctrl{1} & \qw & \qw & \qw \\
	 	\nghost{{q_3} :  } & \lstick{{q_3} :  } & \gate{\mathrm{H}} & \gate{\mathrm{T^\dagger}} & \targ  & \gate{\mathrm{T}} & \qw & \targ & \gate{\mathrm{T^\dagger}} & \targ & \gate{\mathrm{T}} & \gate{\mathrm{H}} & \qw \\
	 	\nghost{{q}_{2} :  } & \lstick{{q}_{2} :  } & \qw & \qw & \qw & \qw & \qw & \ctrl{-1}  & \qw & \qw & \qw & \qw & \qw \\
 }
 }
 \qcref{rf_toffoli}
\]
\[
\resizebox{0.9\linewidth}{!}{
\Qcircuit @C=1.0em @R=0.1em @!R { 
	 	\nghost{{q}_{1} :  } & \lstick{{q}_{1} :  } & \ctrl{1} & \qw \\
	 	\nghost{{q}_{2} :  } & \lstick{{q}_{2} :  } & \ctrl{1} & \qw \\
	 	\nghost{{q_3} :  } & \lstick{{q_3} :  } & \gate{\mathrm{iZ}} & \qw \\
 }
\hspace{5mm}\raisebox{-5mm}{=}\hspace{0mm}
\Qcircuit @C=1.0em @R=0.em @!R { 
	 	\nghost{{q}_{1} :  } & \lstick{{q}_{1} :  } & \qw & \ctrl{2} & \qw & \qw & \qw & \ctrl{2} & \qw & \qw & \qw \\
	 	\nghost{{q}_{2} :  } & \lstick{{q}_{2} :  } & \qw & \qw & \qw & \ctrl{1} & \qw & \qw & \qw & \ctrl{1} & \qw \\
	 	\nghost{{q_3} :  } & \lstick{{q_3} :  } & \gate{\mathrm{T^\dagger}} & \targ & \gate{\mathrm{T}} & \targ & \gate{\mathrm{T^\dagger}} & \targ & \gate{\mathrm{T}} & \targ & \qw \\
 }
 }
 \qcref{iz_cliff_T}
\]
Each $[X_\Delta]$ requires two H, three CNOT and four T gates.
It can be seen that, when applied as part of \lem{Vchain_fullstruct}, the gates to the left of the barrier in \qc{rf_toffoli} can commute with any gate that is executed before. This, along with additional similar commutations, allows for depth reductions, as shown in \apx{deph_reduce}. A simple summation provides the following lemma.
\begin{lemma}\label{lem:vchainz_cost_depth}
A $\{ Z\}^{n}_{C,D}\MCO{S}{c_1}{c_2}$ gate with $n\geq 3$, $\abs{C}= n$, and $\abs{D}= n-1$ can be implemented with \\
CNOT cost $6n-8$ ~and depth $4n-2$,\\
T ~~~~~ cost $8n-12$ and depth $4n$,\\
H ~~~~~ cost $4n-8$ ~and depth $2n-2$.
\end{lemma}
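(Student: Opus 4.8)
The plan is to treat the two assertions — the gate counts and the gate depths — separately, since the counts follow by direct bookkeeping while the depths require a critical-path argument.

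For the counts I would start from \lem{Vchain_fullstruct}, which writes $\{Z\}^n_{C,D}\MCO{S}{c_1}{c_2}$ as a product of $2(n-2)=2n-4$ gates of the form $\MCO{X_\Delta}{\{c_j,d_{j-1}\}}{d_{j-2}}$ together with a single $\MCO{iZ}{\{c_1,c_2\}}{d_1}$. Reading off \qc{rf_toffoli}, each $[X_\Delta]$ costs exactly $3$ CNOT, $4$ T-type and $2$ H gates; reading off \qc{iz_cliff_T}, the single $[iZ]$ costs $4$ CNOT, $4$ T-type and $0$ H gates (the $\MCO{S}{c_1}{c_2}$ having already been absorbed into it). A simple summation then gives CNOT cost $3(2n-4)+4 = 6n-8$, T cost $4(2n-4)+4 = 8n-12$, and H cost $2(2n-4)+0 = 4n-8$, which are the three count statements.

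For the depths I would model the circuit as a dependency DAG and, for each gate type, locate the longest chain of same-type gates connected by shared qubits in time order; this chain length both lower-bounds the depth and, once I exhibit the schedule induced by the commutations noted after \qc{rf_toffoli} and detailed in \apx{deph_reduce}, is achieved. The governing structural fact is that the circuit is a V-chain: a descending ladder of $n-2$ copies of $[X_\Delta]$ (the target of each becoming a control of the next), then the central $[iZ]$ on $d_1$, then the mirror-image ascending ladder. The point of the form \qc{rf_toffoli} is that the block of gates to the left of the barrier acts only on the target $d_{j-2}$ together with its $c_j$ control, which are disjoint from the active qubits of the neighbouring ladder gate; I would prove that this block therefore commutes backwards and overlaps with the previous gate, so that each subsequent ladder gate adds only two CNOTs, two T-type gates, and one H gate to the critical path rather than its full count.

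Carrying out this accounting, I would show that each ladder contributes $2n-3$ to the CNOT path (the first gate contributing all three CNOTs and each subsequent gate two), $2n-2$ to the T path (the first gate contributing all four $T$'s and each subsequent two), and $n-1$ to the H path (the first gate contributing both $H$'s and each subsequent one); that the central $[iZ]$, all of whose Clifford+T gates lie on $d_1$, contributes $4$, $4$ and $0$; and that the ascending ladder matches the descending one by symmetry. Summing the three segments yields CNOT depth $(2n-3)+4+(2n-3)=4n-2$, T depth $(2n-2)+4+(2n-2)=4n$, and H depth $(n-1)+0+(n-1)=2n-2$, as claimed. The main obstacle is the depth part: I must verify that the claimed overlaps are genuinely legal — that each pre-barrier block commutes past exactly the gates I need and that the proposed schedule respects every dependency induced by the shared $d$-qubits — and that no longer chain exists, so that $4n-2$, $4n$ and $2n-2$ are the true depths rather than mere upper bounds. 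This is precisely the bookkeeping deferred to \apx{deph_reduce}.
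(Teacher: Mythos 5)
Your proposal is correct and follows essentially the same route as the paper: the counts come from summing $2n-4$ copies of $[X_\Delta]$ (3 CNOT, 4 T, 2 H each) with one $[iZ]$ (4 CNOT, 4 T), and the depths come from commuting the pre-barrier block of each $[X_\Delta]$ backwards so that only the first ladder gate contributes its full depth while each subsequent one contributes 2 CNOT, 2 T, 1 H — exactly the $\Xi$/$\Lambda$ splitting the paper carries out in \apx{deph_reduce}. Your per-gate critical-path accounting is equivalent to the paper's layer-by-layer version, and your extra worry about matching lower bounds is unnecessary since the lemma only asserts an achievable depth.
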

We wish to use this to decompose the structure described in \lem{ccrv_macro} which implements any $\MCO{\rv{\lambda}{\hat{v}}}{C}{t}$ (MCSU2) gate. 
We choose $\sqO{\Delta_1}{C}=\MCO{S}{C_1[1:2]}{}\{Z\}^{n_1-1}_{C_1,C_2'}$ and $\sqO{\Delta_2}{C}=\MCO{S}{C_2[1:2]}{}\{Z\}^{n_2-1}_{C_2,C_1'}$ as relative phase gates applied on the set $C$ of size $n$.
For this choice, we get
\[
\begin{aligned}
&\sqO{\Delta_1}{C}\MCO{Z}{C_1}{t}=\MCO{S}{C_1[1:2]}{}\{Z\}^{n_1}_{C_1,\{C_2',t\}}\text{, and   
 }\\
&\sqO{\Delta_2}{C}\MCO{Z}{C_2}{t}=\MCO{S}{C_2[1:2]}{}\{Z\}^{n_2}_{C_2,\{C_1',t\}}.
\end{aligned}
\]
The inverted versions of these gates are implemented in a similar way.

The subsets $C_1,C_2,C_1'$ and $C_2'$ are of size $n_1,n_2, n'_1$ and $n'_2$ respectively, and satisfy $C_1\cup C_2=C$, $C_1'\in C_1$, $C_2'\in C_2$, $n'_1=n_2-2$, $n'_2=n_1-2$ and $n_1+n_2=n$. We can set $n_1=\floor{\frac{n}{2}}$ and $n_2=n-n_1=\ceil{\frac{n}{2}}$.

 These structures can be used to decompose \qc{mcrpi_base}.
Each cost and depth from \lem{vchainz_cost_depth} can be written as $an+b$, and therefore the counterparts for MCSU2 equal to $2(an_1+b+an_2+b)=2an+4b$. As shown in \apx{deph_reduce}, the sets $C_1',C_2'$ can be chosen such that additional depth reductions are achieved. The following theorem presents the total costs and depths.
\begin{theorem}\label{thm:MCSU2_costs}
    Any multi-controlled $SU(2)$ gate with $n\geq 6$ controls can be implemented without ancilla using eight gates from $\{\rx,\rz\}$ in addition to\\
CNOT cost $12n-32$ and depth $8n-8$,\\
T ~~~~~ cost $16n-48$ and depth $8n-6$ ($8n-3$ for odd $n$),\\
H ~~~~~ cost $8n-32$ ~and depth $4n-11$.
\end{theorem}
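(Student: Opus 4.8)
The plan is to assemble the circuit from the macro structure of \lem{ccrv_macro} and then obtain each Clifford+T count by a linear summation of the per-block tallies in \lem{vchainz_cost_depth}. First I would recall that \qc{mcrpi_base} realizes $\MCO{\rv{\lambda}{\hat{v}}}{C}{t}$ using eight single-qubit gates from $\{\rx,\rz\}$ together with two $\MCO{Z}{C_1}{t}$ and two $\MCO{Z}{C_2}{t}$ gates, each adjoined to a freely chosen relative phase gate $\sqO{\Delta}{C}$. Taking $\sqO{\Delta_1}{C}=\MCO{S}{C_1[1:2]}{}\{Z\}^{n_1-1}_{C_1,C_2'}$ and $\sqO{\Delta_2}{C}=\MCO{S}{C_2[1:2]}{}\{Z\}^{n_2-1}_{C_2,C_1'}$, as in the construction preceding the theorem, collapses each boxed gate of \qc{mcrpi_base} into a block of exactly the form bounded in \lem{vchainz_cost_depth}; concretely $\sqO{\Delta_1}{C}\MCO{Z}{C_1}{t}=\MCO{S}{C_1[1:2]}{}\{Z\}^{n_1}_{C_1,\{C_2',t\}}$, and the $C_2$ box together with the two inverse boxes are handled identically. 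This leaves four blocks in total: two with parameter $n_1$ and two with parameter $n_2$.

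For the gate counts I would then apply \lem{vchainz_cost_depth} to each block. Since every cost there is of the linear form $an+b$, the four blocks contribute $2(an_1+b)+2(an_2+b)=2a(n_1+n_2)+4b=2an+4b$, where the final step uses $n_1+n_2=n$. Inserting $(a,b)=(6,-8),(8,-12),(4,-8)$ for CNOT, T, and H respectively gives precisely $12n-32$, $16n-48$, and $8n-32$. The hypothesis $n\geq 6$ enters exactly here: \lem{vchainz_cost_depth} demands $n\geq 3$ per block, so the balanced split $n_1=\floor{\frac{n}{2}}$, $n_2=\ceil{\frac{n}{2}}$ is admissible only when $n\geq 6$. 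Note that the eight $\{\rx,\rz\}$ rotations on the target never contribute to the Clifford+T tallies, so the sum is over the four blocks alone.

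The depths are where I expect the main difficulty. Summing the per-block depths naively, again of the form $an+b$, yields $2an+4b$; for the CNOT depth this is $8n-8$ with $(a,b)=(4,-2)$, which already coincides with the claimed value, so no sharing is needed there. For T and H, however, the naive sum gives $8n$ and $4n-8$, strictly above the theorem's $8n-6$ (even $n$), $8n-3$ (odd $n$), and $4n-11$. The extra savings must come from parallelizing gates across neighbouring blocks: as observed after \qc{rf_toffoli}, the gates to the left of the barrier in each $[X_\Delta]$ commute past everything preceding them, and a sequence of such commutations lets one overlap the boundary layers of consecutive $\{Z\}$ blocks. Carrying this through --- and in particular choosing the borrowed sets $C_1',C_2'$ so that this overlap is maximized --- is the depth-reduction argument deferred to \apx{deph_reduce}, with the parity-dependent constant in the T depth arising from the $n_1\neq n_2$ imbalance at odd $n$. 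Thus the technical core of the theorem lies in this depth bookkeeping; the gate counts themselves follow immediately from the linear sum.
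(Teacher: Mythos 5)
Your proposal is correct and follows essentially the same route as the paper: instantiate the macro structure of \lem{ccrv_macro} with the relative phase gates $\sqO{\Delta_1}{C}$, $\sqO{\Delta_2}{C}$ so that each boxed gate becomes a $\{Z\}^{n_i}$ block, sum the linear per-block tallies of \lem{vchainz_cost_depth} as $2an+4b$, and defer the T/H depth savings (6 or 3 for T depending on parity, 3 for H) to the cross-block commutation argument of \apx{deph_reduce}. Your accounting of exactly which quantities follow from the naive sum (all costs and the CNOT depth) versus which require the appendix, and your attribution of the parity-dependent T-depth constant to the $n_1\neq n_2$ imbalance, match the paper's own reasoning.
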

We note that if $2 \leq n \leq 5$, the costs and depths can be easily found using \lem{ccrv_macro}, such that \lem{Vchain_fullstruct} is used for a subset of controls of size $3$, \qc{iz_cliff_T} is used for size $2$, and a CZ (as H-CNOT-H) gate is used for size $1$. The number of single-qubit gates can be reduced in some of these cases.

Moreover, we show in \apx{T_depth} that a trade-off between the depth of CNOT and T gates can be achieved.
With no change to any cost, it is possible to reduce the depth of T gates to $5n+O(1)$, while increasing the depth of CNOT gates to $11n+O(1)$.
The T depth can be further reduced to $4n$, however, this results in a CNOT cost and depth of $12.5n+O(1)$ and $12n+O(1)$ respectively.

In \apx{extra_dirty} we show that if a set $\chi$ of $n_{\chi}\leq \floor{\frac{n-6}{2}}$ dirty ancilla qubits is available, {\em each} ancilla can be used to reduce the cost of Clifford+T gates, as presented in \fig{fig_ancil_CNOT_cost}. If $n_{\chi}= \floor{\frac{n-6}{2}}$, the cost of both the T and CNOT gates is reduced to $8n+O(1)$.

{ Moreover, \apx{MCMTSU2} provides a method that extends our structure and allows multiple target qubits (MCMTSU2).}

\subsection{Multi-controlled X}

In this section, we use the MCSU2 structure in order to efficiently implement the Multi-controlled $X$ gate with a single or multiple targets, and with a single dirty ancilla qubit. { In this case, the dirty ancilla is required for an exact implementation of any MCX gate with $n\geq 3$ controls using only Clifford+T gates, from the determinant argument stated in \cite{maslov_advantages_2016}.}

In order to implement a $\MCO{X}{C}{t}$ gate with one dirty ancilla qubit $a$, we use the {MCSU2} implementation of $\MCO{\rv{\lambda}{\hat{v}}}{C'}{t'}$ with $\lambda=2\pi$, $C'=\{C,t\}$, and $t'=a$.
As previously shown, $\MCO{\rv{2\pi}{\hat{v}}}{\{C,t\}}{a}=\MCO{-I}{\{C,t\}}{a} =\MCO{Z}{C}{t}$ for any choice of $\hat{v}$, and simply using the well-known special case of \lem{MC_transform}, we get $\MCO{X}{C}{t} = \sqO{H}{t}\MCO{\rv{2\pi}{\hat{v}}}{\{C,t\}}{a}\sqO{H}{t}$.
Setting $\hat{v}=\hat{y}$ and implementing $\MCO{\rv{2\pi}{\hat{y}}}{\{C,t\}}{a}$ according to \lem{ccrv_reg}, with $\hat{v}_1=\hat{z}$, and $\hat{v}_2=\hat{x}$ is a convenient choice. The relative phase $[\Delta]$ gates are added similarly to \lem{ccrv_macro}. This provides the following circuit.
\[
\resizebox{0.9\linewidth}{!}{$
\begin{aligned}
&\Qcircuit @C=1.0em @R=0.8em @!R { 
  \nghost{{C}_{1} :  } & \lstick{{C}_{1} :  } & \qw {/^{n_1}} & \ctrl{1} & \qw \\
	 	\nghost{{C}_{2} :  } & \lstick{{C}_{2} :  } & \qw {/^{n_2}} & \ctrl{1} & \qw \\
	 	\nghost{{t} :  } & \lstick{{t} :  } & \qw & \targ & \qw \\
	 	\nghost{{a} :  } & \lstick{{a} :  } & \qw & \qw & \qw \\
 }
 \hspace{5mm}\raisebox{-9mm}{=}\hspace{0mm}
 \Qcircuit @C=0.5em @R=0.1em @!R { 
  \nghost{{C}_{1} :  } & \lstick{{C}_{1} :  } & \qw {/^{n_1}} & \ctrl{1} & \qw & \qw \\
	 	\nghost{{C}_{2} :  } & \lstick{{C}_{2} :  } & \qw {/^{n_2}}  & \ctrl{1} & \qw & \qw \\
	 	\nghost{{t} :  } & \lstick{{t} :  }  & \gate{\mathrm{H}} & \ctrl{1} & \gate{\mathrm{H}} & \qw \\
	 	\nghost{{a} :  } & \lstick{{a} :  }  & \qw & \rvgate{2\pi}{\hat{y}} & \qw & \qw \\
 }
  \hspace{5mm}\raisebox{-9mm}{=}\hspace{0mm}\\
&\Qcircuit @C=0.5em @R=0.5em @!R { 
  \nghost{{C}_{1} :  } & \lstick{{C}_{1} :  } & \qw {/^{n_1}} & \ctrl{3} & \multigate{2}{\mathrm{\Delta_1}} & \qw & \multigate{2}{\mathrm{\Delta_2}} & \ctrl{3} & \multigate{2}{\mathrm{\Delta_1^\dagger}} & \qw & \multigate{2}{\mathrm{\Delta_2^\dagger}} & \qw & \qw \\
	 	\nghost{{C}_{2} :  } & \lstick{{C}_{2} :  } & \qw {/^{n_2}} & \qw & \ghost{\mathrm{\Delta_1}} & \ctrl{1} & \ghost{\mathrm{\Delta_1}} & \qw & \ghost{\mathrm{\Delta_1}} & \ctrl{1} & \ghost{\mathrm{\Delta_1}} & \qw & \qw \\
	 	\nghost{{t} :  } & \lstick{{t} :  }  & \gate{\mathrm{H}} & \qw & \ghost{\mathrm{\Delta_1}} & \ctrl{1} & \ghost{\mathrm{\Delta_1}} & \qw & \ghost{\mathrm{\Delta_1}} & \ctrl{1} & \ghost{\mathrm{\Delta_1}} & \gate{\mathrm{H}} & \qw \\
	 	\nghost{{a} :  } & \lstick{{a} :  } & \qw & \control\qw & \gate{\mathrm{H}} & \control\qw & \gate{\mathrm{H}} & \control\qw & \gate{\mathrm{H}} & \control\qw & \gate{\mathrm{H}} & \qw & \qw \\
 }
 \end{aligned}$
 }
 \qcref{mcx_ata_macro}
\]
The cost and depth of this implementation is similar to the multi-controlled $SU(2)$ gate with $n+1$ controls. Arbitrary $\{\rx,\rz\}$ gates are removed and six H gates are added instead.
Notice that the control set $C'$ can be ordered such that two H gates are cancelled out in the full Clifford+T decomposition.
\thm{MCX_costs_short} simply follows.

\begin{theorem}\label{thm:MCX_costs_short}
    Any multi-controlled $X$ gate with $n\geq 5$ controls and a single target can be implemented with one dirty ancilla qubit at the same costs and depths in \thm{MCSU2_costs}, with $n$ replaced by $n+1$ and four additional H gates. Arbitrary $\{\rz,\rx\}$ gates are not needed.
\end{theorem}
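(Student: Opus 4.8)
The plan is to reduce the single-target multi-controlled $X$ to the MCSU2 construction of \thm{MCSU2_costs} by exploiting the dirty ancilla $a$ together with the fact that $\rv{2\pi}{\hat{v}}=-I$ for every axis $\hat{v}$. First I would record the two identities that drive the reduction. Since $\MCO{\rv{2\pi}{\hat{v}}}{\{C,t\}}{a}$ only imprints a phase of $-1$ when every qubit of $\{C,t\}$ is in $\ket{1}$ and leaves $a$ unchanged, it equals $\MCO{Z}{C}{t}$; and because $HZH=X$, conjugating by a Hadamard on $t$ gives $\MCO{X}{C}{t}=\sqO{H}{t}\MCO{\rv{2\pi}{\hat{v}}}{\{C,t\}}{a}\sqO{H}{t}$, exactly as in \qc{mcx_ata_macro}. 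The inner operator is an MCSU2 gate with the enlarged control set $C'=\{C,t\}$ of size $n+1$ acting on target $a$, which is precisely why the statement requires $n+1\ge 6$, i.e. $n\ge 5$.

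Next I would instantiate the macro structure of \lem{ccrv_reg} with the convenient data $\hat{v}=\hat{y}$, $\hat{v}_1=\hat{z}$ and $\hat{v}_2=\hat{R}_{\hat{y}}(\tfrac{\pi}{2})\hat{z}=\hat{x}$. With these choices $\rpi{\hat{v}_1}=Z$ and $\rpi{\hat{v}_2}=X$, so the four controlled-$\Pi$ gates collapse to $\MCO{X}{C_2'}{a}\MCO{Z}{C_1'}{a}\MCO{X}{C_2'}{a}\MCO{Z}{C_1'}{a}$ with $C_1'\cup C_2'=C'$. Rewriting each $\MCO{X}{C_2'}{a}=\sqO{H}{a}\MCO{Z}{C_2'}{a}\sqO{H}{a}$ turns the whole operator into two $\MCO{Z}{C_1'}{a}$ and two $\MCO{Z}{C_2'}{a}$ gates separated only by Hadamards on $a$. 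The key observation is that these four multi-controlled-$Z$ subcircuits are precisely the objects decomposed in \sec{decompose_1}, and they carry the entire CNOT and T budget; the eight arbitrary $\{\rx,\rz\}$ rotations of the general \lem{ccrv_macro} (the axis-transforming factors together with the $\rv{\pm\frac{\lambda}{4}}{\hat{x}}$ conjugations) are simply absent here, having been replaced by Clifford Hadamards. Consequently the CNOT and T costs and depths are, term for term, those of \thm{MCSU2_costs} evaluated at $n+1$.

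It then remains to handle the Hadamard bookkeeping. The reduction introduces six $H$ gates: two on $t$ from $X=HZH$ and four on $a$ from rewriting the two $\MCO{X}{C_2'}{a}$ gates. I would argue that, once the multi-controlled-$Z$ gates are expanded into Clifford+T via \lem{Vchain_fullstruct}, ordering the controls $C'$ so that $a$ sits at the appropriate end of the chain lets two of these Hadamards meet an $H$ already produced at the boundary of an adjacent relative-phase Toffoli $\MCO{X_\Delta}{\{q_1,q_2\}}{q_3}$ (\qc{rf_toffoli}) and cancel via $HH=I$. This leaves exactly four extra $H$ gates on top of the H-count of \thm{MCSU2_costs} at $n+1$, and since the surviving Hadamards occupy slots that overlap the existing structure, the depths are inherited unchanged.

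The main obstacle is precisely this last cancellation: one must track the boundary Hadamards through the relative-phase Toffoli of \qc{rf_toffoli} and the chain identity of \lem{Vchain_fullstruct} and verify that a suitable ordering of $C'$ annihilates exactly two Hadamards — no more and no fewer — while leaving every CNOT and T gate undisturbed. The remaining ingredients (the $\rv{2\pi}{\hat{v}}=-I$ identity, the choice of axes, and the inheritance of the CNOT/T counts and depths) are routine specializations of the MCSU2 result.
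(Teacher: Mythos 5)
Your proposal is correct and follows essentially the same route as the paper: reduce $\MCO{X}{C}{t}$ to $\sqO{H}{t}\MCO{\rv{2\pi}{\hat{y}}}{\{C,t\}}{a}\sqO{H}{t}$ on the dirty ancilla, instantiate \lem{ccrv_reg} with $\hat{v}_1=\hat{z}$, $\hat{v}_2=\hat{x}$ so the arbitrary rotations become Hadamards, inherit the CNOT/T counts of \thm{MCSU2_costs} at $n+1$, and order $C'$ so that two of the six introduced $H$ gates cancel. The paper likewise states the two-Hadamard cancellation only as a remark on the ordering of $C'$, so your flagged "main obstacle" is no more detailed in the original than in your sketch.
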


{As mentioned above, the MCSU2 gate can also be implemented using the Toffoli+H gate set with only $4n+O(1)$ Toffoli gates, and the same holds for the MCX with one dirty ancilla.  {We thus provide a Toffoli count reduction of roughly $50\%$ for the first time compared to the best known count of $8n+O(1)$ which was introduced in 1995 by Barenco et al. \cite{barenco_elementary_1995}.} Furthermore, our Toffoli count scales similarly to the best known method which requires $O(n)$ dirty ancilla \cite{maslov_advantages_2016}.
Although the number of Toffoli gates is the same, each of our Toffoli gates requires three CNOT and four T gates, while $n+O(1)$ dirty ancilla allows each Toffoli to be implemented at a reduced cost using two CNOT and two T gates \cite{maslov_advantages_2016}. In \apx{extra_dirty} we show that, starting from our one dirty ancilla implementation, each added ancilla effectively allows several Toffoli gates to be applied at this reduced cost, and once $n+O(1)$ ancillae are added, we reach the same cost scaling for CNOT and T gates as \cite{maslov_advantages_2016}. Then, three-controlled Toffoli gates are introduced to double the reduction allowed by each ancilla, reaching the same gate count of $8n+O(1)$ CNOT and T gates using only $\tfrac{n}{2}+O(1)$ dirty ancillae, as shown in \cite{maslov_advantages_2016} as well. Although we were unable to reduce the cost compared to $O(n)$ ancilla methods, this provides a bridge between $1$ and $\tfrac{n}{2}+O(1)$ dirty ancilla methods, such that any number of ancilla between those can also be used to reduce the cost (\fig{fig_ancil_CNOT_cost}), thus harnessing all available resources.
In \apx{ata_MCMTX} we extend the multi-controlled $X$ structure to a multi-control multi-target $X$ gate (MCMTX) such that if the number of target qubits is larger than one, an ancilla qubit is not required. Moreover, additional target qubits can also act as dirty ancilla and provide cost reductions due to the structure of \qc{mcmtX_a2a_targets}.}
  
    \subsection{Multi-controlled U(2)}

In this section, we use the MCMTSU2 structures in order to efficiently implement the multi-controlled $U(2)$ gate (MCU2) using one clean ancilla, with a single or multiple targets. { The clean ancilla, while being an expensive resource, seems necessary here since there are no known ways to achieve a linear count of Clifford+T gates without it. Known no-ancilla implementations require either a quadratic ($O(n^2)$) count of Clifford+T gates \cite{pedram_layout_2016,saeedi_synthesis_2011,he_mapping_2019,khan_cost_2008,lukac_optimization_2021,ding_fast_2019,zhang_method_2022}, or a linear count ($O(n)$) with a linear number of arbitrary single-qubit gates which require exponential precision \cite{gidney_constructing_nodate} since the angles of rotation scale as $O(2^{-n})$, making them hard to implement using finite gate sets.}

In order to implement any $\MCO{U}{C}{t}$ gate, with $U\in U(2)$ { we present a new trick which allows to use our MCMTSU2 decomposition with two targets and a clean ancilla to apply any MCU2 gate.} We recall that $U$ can be expressed as $U=e^{i\psi}\rv{\lambda}{\hat{v}}$, and therefore $\MCO{U}{C}{t} = \MCO{\rv{\lambda}{\hat{v}}}{C}{t} \MCO{P(\psi)}{C}{}$. We now show that $\MCO{U}{C}{t} = \MCO{\rv{\lambda}{\hat{v}}}{C}{t} \MCO{\rv{-2\psi}{\hat{z}}}{C}{a_{\ket{0}}}$ with $a_{\ket{0}}$ a clean ancilla qubit in state $\ket{0}$. { This follows directly from \lem{P_a_0}, which shows that any multi-controlled phase gate can be implemented as one MCSU2 using one clean ancilla.}
\begin{lemma}\label{lem:P_a_0}
$\MCO{P(\psi)}{C}{} = \MCO{\rz(-2\psi)}{C}{a_{\ket{0}}}$ for any angle $\psi$ with $C=\{c_1,c_2,..,c_n\}$, and $a_{\ket{0}}$ is a clean ancilla qubit in state $\ket{0}$.
\end{lemma}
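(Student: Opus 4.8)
The plan is to verify the identity by a direct computation in the computational basis, exploiting the fact that $\rz(-2\psi)$ is diagonal. First I would write out the single-qubit matrix
\[
\rz(-2\psi)=\begin{pmatrix} e^{i\psi} & 0 \\ 0 & e^{-i\psi}\end{pmatrix},
\]
so that $\rz(-2\psi)\ket{0}=e^{i\psi}\ket{0}$ and $\rz(-2\psi)\ket{1}=e^{-i\psi}\ket{1}$. The crucial observation is that this gate is diagonal, and hence never flips the ancilla: whatever computational-basis state the ancilla occupies, it stays there up to a phase.

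Next I would act with the right-hand side on an arbitrary input $\ket{c}\otimes\ket{0}$, where $\ket{c}$ ranges over the $2^n$ basis states of the control register $C$ and the ancilla is fixed in $\ket{0}$. By the definition of the multi-controlled gate, $\rz(-2\psi)$ is applied to the ancilla iff $\ket{c}=\ket{1\cdots1}$, and otherwise the identity acts. Using the matrix above, the ancilla therefore maps as $\ket{0}\mapsto e^{i\psi}\ket{0}$ when $c=1\cdots1$ and $\ket{0}\mapsto\ket{0}$ otherwise. In either case the ancilla returns to $\ket{0}$, so it factors out as an untouched tensor component, and the induced operator on $C$ multiplies the phase $e^{i\psi}$ exactly on the state $\ket{1\cdots1}$. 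This is precisely the action of $\MCO{P(\psi)}{C}{}$, under the convention stated in the Notations that $\MCO{P(\psi)}{C}{}$ applies $e^{i\psi}$ iff every qubit of $C$ is in state $\ket{1}$.

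The one point needing care — and really the only thing that elevates this above phase bookkeeping — is that the equality must hold as an operator identity on the control register, not merely as a statement about a single amplitude. This is guaranteed precisely because $\rz(-2\psi)$ is diagonal: the ancilla is never excited out of $\ket{0}$, so no residual entanglement between $C$ and the ancilla is left behind and the ancilla is correctly restored for reuse. Had the target rotation carried any off-diagonal component, the ancilla would generically leave $\ket{0}$ and the decoupling would fail; it is exactly the choice of a $\hat{z}$-rotation acting on the $\hat{z}$-eigenstate $\ket{0}$ that makes the construction exact, so I expect no further obstacle.
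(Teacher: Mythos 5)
Your proof is correct and is essentially the paper's argument in basis-state form: the paper factors $\rz(-2\psi)=e^{i\psi}\,\mathrm{diag}(1,e^{-2i\psi})$ to write $\MCO{\rz(-2\psi)}{C}{a_{\ket{0}}}=\MCO{P(-2\psi)}{C}{a_{\ket{0}}}\MCO{P(\psi)}{C}{}$ and then drops the first factor because the ancilla is in $\ket{0}$, which is exactly your observation that the diagonal rotation leaves $\ket{0}$ fixed up to the phase $e^{i\psi}$ and never triggers the $e^{-2i\psi}$ entry. Your explicit remark that diagonality prevents residual entanglement with the ancilla is a fair articulation of why the operator identity holds on the relevant subspace, but it does not constitute a different route.
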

\begin{proof}
The gate $\MCO{\rz(-2\psi)}{C}{a_{\ket{0}}}$ applies the following operator on the ancilla qubit iff set $C$ is in state $\ket{11..1}$:
\[
\rz(-2\psi)=\left(
\begin{matrix}
e^{i\psi} & 0\\
0 & e^{-i\psi}
\end{matrix}
\right) =e^{i\psi}\left(
\begin{matrix}
1 & 0\\
0 & e^{i(-2\psi)}
\end{matrix}
\right)
\]
Therefore, it can be written as $\MCO{\rz(-2\psi)}{C}{a_{\ket{0}}} = \MCO{\mathrm{P(-2\psi)}}{C}{a_{\ket{0}}}\MCO{\mathrm{P(\psi)}}{C}{}$. Since $a_{\ket{0}}$ is known to be in the state $\ket{0}$, the controlled gate $\MCO{\mathrm{P(-2\psi)}}{C}{a_{\ket{0}}}$ has no effect and can be removed.
\end{proof}

The gate $\MCO{\rv{2\psi}{\hat{z}}}{C}{a_{\ket{0}}}$ can be implemented using \lem{ccrv_macro}, with $A_4$ transforming the $\hat{z}$ axis to $\hat{x}$ and thus can be replaced with $H$. The gate $\MCO{\rv{-2\psi}{\hat{z}}}{C}{a_{\ket{0}}}$ can therefore be implemented as the inverse of this structure as:
\[
\resizebox{0.9\linewidth}{!}{
\Qcircuit @C=1.0em @R=0.2em @!R { 
	 	\nghost{{C_1} :  } & \lstick{{C_1} :  } & \qw {/^{n_1}} & \ctrl{1} & \qw \\
	 	\nghost{{C}_{2} :  } & \lstick{{C}_{2} :  } & \qw {/^{n_2}} & \ctrl{1} & \qw \\
	 	\nghost{{a_{\ket{0}}} :  } & \lstick{{a_{\ket{0}}} :  } & \qw & \rvgate{-2\psi}{\hat{z}} & \qw \\
 }
\hspace{5mm}\raisebox{-6mm}{=}\hspace{0mm}
\Qcircuit @C=1.0em @R=0.4em @!R { 
	 	\nghost{{C}_{1} :  } & \lstick{{C}_{1} :  } & \qw {/^{n_1}} & \qw  & \ctrl{2} & \multigate{1}{\mathrm{\Delta_1}}  & \qw & \multigate{1}{\mathrm{\Delta_2}} & \ctrl{2} & \multigate{1}{\mathrm{\Delta_1^\dagger}} & \qw & \multigate{1}{\mathrm{\Delta_2^\dagger}} & \qw \\
	 	\nghost{{C}_{2} :  } & \lstick{{C}_{2} :  } & \qw {/^{n_2}} & \qw & \qw & \ghost{\mathrm{\Delta_1}}  & \ctrl{1} & \ghost{\mathrm{\Delta_2}} & \qw & \ghost{\mathrm{\Delta_1^\dagger}} & \ctrl{1} & \ghost{\mathrm{\Delta_2^\dagger}} & \qw \\
	 	\nghost{{a_{\ket{0}}} :  } & \lstick{{a_{\ket{0}}} :  }  & \gate{\mathrm{H}} & \gate{\mathrm{B_2}} & \ctrl{0} & \gate{\mathrm{B_1}} & \ctrl{0} & \gate{\mathrm{B_2}} & \ctrl{0} & \gate{\mathrm{B_1}} & \ctrl{0} & \gate{\mathrm{H}} & \qw \\
 }
 }
 \qcref{mcp_base}
\]
with $B_1=\rv{\frac{\psi}{2}}{\hat{x}}, B_2=\rv{-\frac{\psi}{2}}{\hat{x}}$. The first $B_2$ gate can commute with the first H gate, and transform to an $\rz$ gate operating on the state $\ket{0}$. This gate therefore only applies a global phase and can be removed. Since $\MCO{\rv{\lambda}{\hat{v}}}{C}{t} \MCO{\rv{-2\psi}{\hat{z}}}{C}{a_{\ket{0}}}$ is an MCMTSU2 gate with $m=2$ targets, it can be implemented up to a global phase using 
\lem{mcmtsu2_struct} and \lem{mcmtz_struct_log} { from \apx{MCMTSU2}} as follows.
\[
\resizebox{0.85\linewidth}{!}{$
\begin{aligned}
&\Qcircuit @C=1.0em @R=0.55em @!R { 
	 	\nghost{{C_1} :  } & \lstick{{C_1} :  } & \qw {/^{n_1}} & \ctrl{1} & \qw \\
	 	\nghost{{C}_{2} :  } & \lstick{{C}_{2} :  } & \qw {/^{n_2}} & \ctrl{1} & \qw \\
	 	\nghost{{t} :  } & \lstick{{t} :  } & \qw & \gate{\mathrm{U}}  & \qw \\
   \nghost{{a}_{\ket{0}} :  } & \lstick{{a}_{\ket{0}} :  } & \qw & \qw & \qw \\
 }
 \hspace{5mm}\raisebox{-10mm}{=}\hspace{0mm}
 \Qcircuit @C=1.0em @R=0.2em @!R { 
	 	\nghost{{C_1} :  } & \lstick{{C_1} :  } & \qw {/^{n_1}} & \ctrl{1} & \qw \\
	 \nghost{{C_2} :  } & \lstick{{C_2} :  } & \qw {/^{n_2}} & \ctrl{1} & \qw \\	\nghost{{t} :  } & \lstick{{t} :  } & \qw & \rvgate{\lambda}{\hat{v}}\qwx[1] & \qw \\
	 	\nghost{{a}_{\ket{0}} :  } & \lstick{{a}_{\ket{0}} :  } & \qw  & \rzgate{-2\psi} & \qw \\
 }
\hspace{5mm}\raisebox{-10mm}{=}\hspace{0mm}\\
&\Qcircuit @C=0.5em @R=0.4em @!R { 
	 	\nghost{{C}_{1} :  } & \lstick{{C}_{1} :  } & \qw {/^{n_1}}  & \qw  & \ctrl{2} & \qw & \multigate{1}{\mathrm{\Delta_1}}  & \qw & \qw  & \qw & \multigate{1}{\mathrm{\Delta_2}} & \qw  & \ctrl{2} & \qw & \multigate{1}{\mathrm{\Delta_1^\dagger}} & \qw & \qw & \qw & \multigate{1}{\mathrm{\Delta_2^\dagger}} & \qw \\
	 	\nghost{{C}_{2} :  } & \lstick{{C}_{2} :  } & \qw {/^{n_2}}  & \qw & \qw & \qw & \ghost{\mathrm{\Delta_1}}   & \qw  & \ctrl{1} & \qw& \ghost{\mathrm{\Delta_1}} & \qw & \qw  & \qw & \ghost{\mathrm{\Delta_1}} & \qw  & \ctrl{1} & \qw & \ghost{\mathrm{\Delta_1}} & \qw \\
	 	\nghost{{t} :  } & \lstick{{t} :  }  & \gate{\mathrm{A_4}} & \targ & \ctrl{0} & \targ & \gate{\mathrm{A_2}} & \targ & \ctrl{0} & \targ & \gate{\mathrm{A_3}} & \targ & \ctrl{0} & \targ & \gate{\mathrm{A_2}} & \targ & \ctrl{0} & \targ & \gate{\mathrm{A_1}} & \qw \\
\nghost{{a}_{\ket{0}} :  } & \lstick{{a}_{\ket{0}} :  }  & \gate{\mathrm{H}} & \ctrl{-1} & \qw & \ctrl{-1} & \gate{\mathrm{B_1}} & \ctrl{-1} & \qw & \ctrl{-1} & \gate{\mathrm{B_2}} & \ctrl{-1} & \qw & \ctrl{-1} & \gate{\mathrm{B_1}} & \ctrl{-1} & \qw & \ctrl{-1} & \gate{\mathrm{H}} & \qw \\   
 }
 \end{aligned}$
 }\qcref{mcU_decomposed}
\]
Moreover, any MCMTU2 gate $\prod_{j=1}^{m}\MCO{U_j}{C}{t_j}$ with $m\geq 1$ targets, such that $U_j=e^{i\psi_j}\rv{\lambda_j}
{\hat{v}_j}\in U(2)$, can be implemented as the MCMTSU2 gate
$\MCO{\rv{-2\psi}
{\hat{z}}}{C}{a_{\ket{0}}} \prod_{j=1}^{m}\MCO{\rv{\lambda_j}
{\hat{v}_j}}{C}{t_j}$ such that $\rv{-2\psi}
{\hat{z}},\rv{\lambda_j}
{\hat{v}_j}\in SU(2)$ and $\psi = \sum_{j=1}^{m} \psi_j$. 
\thm{MCU2_costs} follows.

\begin{theorem}\label{thm:MCU2_costs}
     Any multi-controlled multi-target $U(2)$ gate with $n\geq 6$ controls, and $m\geq 1$ targets can be implemented with one ancilla qubit in state $\ket{0}$ in the same costs and depths as an MCMTSU2 gate implemented using \thm{MCMTSU2_costs} with $m+1$ targets and five $\{\rz/\rx\}$ gates replaced with two H gates.
\end{theorem}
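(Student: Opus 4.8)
The plan is to reduce the cost analysis of the MCMTU2 gate to that of a single MCMTSU2 gate with one extra target, and then read off the counts and depths from \thm{MCMTSU2_costs}, paying close attention only to the single-qubit rotations on the ancilla line.

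First I would invoke the reduction established immediately above the theorem: by \lem{P_a_0}, the product $\prod_{j=1}^{m}\MCO{U_j}{C}{t_j}$ with $U_j = e^{i\psi_j}\rv{\lambda_j}{\hat{v}_j}$ equals the gate $\MCO{\rv{-2\psi}{\hat{z}}}{C}{a_{\ket{0}}}\prod_{j=1}^{m}\MCO{\rv{\lambda_j}{\hat{v}_j}}{C}{t_j}$ with $\psi = \sum_{j}\psi_j$, acting on the $m$ original targets together with the clean ancilla as an $(m+1)$-th target. Since all $m+1$ single-qubit operators lie in $SU(2)$ and share the control set $C$, this is by definition an MCMTSU2 gate with $m+1$ targets, so the bulk of its Clifford+T cost and depth — the multi-controlled $Z$/Toffoli backbone — is given directly by \thm{MCMTSU2_costs} evaluated at $m+1$ targets.

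Next I would account for the single-qubit rotations specific to the ancilla line, using its two special features. First, the ancilla rotation axis is $\hat{z}$, so in the \lem{ccrv_macro} decomposition the axis-aligning gate $A_4$ (and the closing $A_1 = A_4^\dagger A_3$) collapses from a generic two- or three-gate $\{\rz,\rx\}$ product to a single $H$, because $H$ already maps $\hat{z}$ to $\hat{x}$. Second, since $a_{\ket{0}}$ starts in $\ket{0}$, the leading $\hat{x}$-rotation $B_2$ commutes through the opening $H$ into a diagonal $\hat{z}$-rotation acting on $\ket{0}$, i.e. a removable global phase. Collecting these, the five arbitrary $\{\rz,\rx\}$ gates that a generic $SU(2)$ target would contribute at these slots are replaced by exactly two $H$ gates, while the three interior $B_1,B_2,B_1$ rotations survive, precisely as displayed in \qc{mcU_decomposed}. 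I would then conclude that the total cost and depth match \thm{MCMTSU2_costs} at $m+1$ targets up to this single-qubit substitution, and note that $n\geq 6$ places us in the regime where \thm{MCMTSU2_costs} applies.

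The main obstacle I expect is the single-qubit bookkeeping in this middle step: one must verify precisely which ancilla-line rotations fold into $H$ or vanish as a global phase, and confirm that commuting $B_2$ through the opening $H$ and absorbing $A_4^\dagger A_3$ into $H$ leave the multi-controlled backbone — which dominates the CNOT, T, and H counts — entirely untouched. Once that is checked, the remainder of the argument is a direct appeal to the already-established reduction and to \thm{MCMTSU2_costs}.
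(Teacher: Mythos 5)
Your proposal is correct and follows essentially the same route as the paper: reduce via \lem{P_a_0} to an MCMTSU2 with the clean ancilla as an $(m+1)$-th target carrying an $\rz(-2\psi)$ rotation, apply \thm{MCMTSU2_costs}, and then observe that on the ancilla line $A_4$ becomes $H$ and the leading $B_2$ commutes through $H$ into a removable global phase on $\ket{0}$, yielding the stated replacement of five $\{\rz,\rx\}$ gates by two $H$ gates. Your bookkeeping of which rotations survive matches \qc{mcU_decomposed} and the paper's argument.
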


Our results for the discussed multi-control single-target gates are summarized in \tab{costs_table_ATA}. We compare our { results to sate of the art methods which achieved linear Clifford+T gate count and a constant number of arbitrary rotations}. We list the leading terms for the cost and depth of the CNOT and T gates.
As can be seen, we provide improvements in each category. Our cost reductions for the CNOT and T gates are $25\%-62.5\%$, and up to $50\%$, respectively, and the depth reductions are between $50\%-75\%$.

\begin{table}[H]
    \centering
    \resizebox{1.0\linewidth}{!}{
    \begin{tabular}{|p{1.2cm}|p{1.5cm}|p{1.2cm}|p{1cm}|p{1cm}|p{1cm}|p{1cm}|p{1cm}|p{1cm}|p{0.5cm}|p{0.5cm}|p{0.5cm}|p{0.5cm}|}
 \hline
 \multicolumn{3}{|c|}{Gate} & \multicolumn{2}{c|}{CNOT} & \multicolumn{2}{c|}{T} \\
  \hline
   Type & Ancilla & Source & Cost & Depth & Cost & Depth \\
 \hline
   MCSU2     &  None    & \cellcolor{lightgray!60} \cite{barenco_elementary_1995}       & \cellcolor{lightgray!60} $48n$    & \cellcolor{lightgray!60} $48n$  & \cellcolor{lightgray!60} $48n$    & \cellcolor{lightgray!60} $48n$ \\
     &     & \cellcolor{lightgray!60} \cite{maslov_advantages_2016}       & \cellcolor{lightgray!60} $32n$    & \cellcolor{lightgray!60} $32n$  & \cellcolor{lightgray!60} $32n$    & \cellcolor{lightgray!60} $32n$ \\
  &     & \cellcolor{lightgray!60} \cite{iten_quantum_2016}       & \cellcolor{lightgray!60} $28n$    & \cellcolor{lightgray!60} $28n$  & \cellcolor{lightgray!60} $28n$    & \cellcolor{lightgray!60} $28n$ \\
           &   & \cellcolor{lightgray!60} \cite{vale_decomposition_2023}       & \cellcolor{lightgray!60} $20n$    & \cellcolor{lightgray!60} $20n$  & \cellcolor{lightgray!60} $20n$    & \cellcolor{lightgray!60} $20n$ \\
   &  & Ours   & $12n$    &  $8n$   &   $16n$   &  $8n$ \\
 \hline
    MCX    &   One dirty     & \cellcolor{lightgray!60} \cite{barenco_elementary_1995}       & \cellcolor{lightgray!60} $24n$    & \cellcolor{lightgray!60}  $24n$  & \cellcolor{lightgray!60}   $24n$   & \cellcolor{lightgray!60}  $24n$ \\ 
            &        & \cellcolor{lightgray!60} \cite{maslov_advantages_2016,iten_quantum_2016}       & \cellcolor{lightgray!60} $16n$    & \cellcolor{lightgray!60}  $16n$  & \cellcolor{lightgray!60}   $16n$   & \cellcolor{lightgray!60}  $16n$ \\
     &   & Ours   & $12n$    &  $8n$   &   $16n$   &  $8n$ \\
 \hline
    MCU2    &  One clean     & \cellcolor{lightgray!60} \cite{barenco_elementary_1995}       & \cellcolor{lightgray!60} $48n$   & \cellcolor{lightgray!60}  $48n$  & \cellcolor{lightgray!60}  $48n$   & \cellcolor{lightgray!60} $48n$ \\
            &       & \cellcolor{lightgray!60} \cite{maslov_advantages_2016,iten_quantum_2016}       & \cellcolor{lightgray!60} $32n$   & \cellcolor{lightgray!60}  $32n$  & \cellcolor{lightgray!60}  $32n$   & \cellcolor{lightgray!60} $32n$ \\
    &  & Ours   & $12n$    &  $8n$   &   $16n$   &  $8n$ \\
       
 \hline
    \end{tabular}
    }
    \caption{A summary of our ATA multi-controlled single-target results compared to previous methods.}
    \label{tab:costs_table_ATA}
\end{table}

\tab{tcosts_table_ATA} presents our results for an alternative method that favors the depth of T gates, as discussed in \apx{T_depth}.

 We discuss cost reductions which can be applied, in case additional dirty ancilla qubits are available, in \apx{extra_dirty}. These reductions can also be applied to MCMTX gates with $m>2$ targets without ancilla, as noted in \apx{ata_MCMTX}. The results for our multi-controlled single-target gate implementations with additional ancilla are presented in \fig{fig_ancil_CNOT_cost}. 
    
\section{Linear-nearest-neighbor (LNN) connectivity}

{We focus on the same gate types that we covered for all-to-all connectivity, this time under the restrictions of LNN connectivity. One main difference between these two challenges is that in the LNN case, when a MC gate with $n$ controls is implemented over a set of $k>n$ qubits, we need to account for any possible choice of the control and target qubits. 
The approach taken in known methods is to first assume a specific choice for these qubits, and then applying SWAP gates in order to move these qubits to their place \cite{cheng_mapping_2018,chakrabarti_nearest_2007,miller_elementary_2011,li_quantum_2023,tan_multi-strategy_2018}. However, this approach results in an overhead of the CNOT gate count with a quadratic upper bound ($O(kn)$).
As we will show, an alternative approach allows for a linear ($O(n+k)$) scaling of the gate count, by only assuming the location of one or two qubits instead of all of them. As moving one qubit to its place only requires a linear ($O(k)$) number of SWAP gates, this overhead maintains the linear scaling. As before, our MCX and MCU2 structures are built upon the MCSU2 and the MCMTSU2 structures, and thus their gate counts scale similarly, only differing by the linear overhead defined by the number (one or two) and type (target or ancilla) of qubits which should be moved into place. }
\subsection{Multi-controlled SU(2)}
{We start by constructing the MCSU2 structure, first showing how to map our macro structure from all-to-all to LNN connectivity, while only assuming the location of the target qubit. Then in \sec{MCSU2_decomposition} we show how to decompose the structure using a linear count of three- and two-qubit gates and finally decomposing these into the Clifford+T gate set. We continue to generalize our gate and allow for multiple targets in \apx{MCMTSU2_lnn}, achieving a linear count as well. This specific task may demonstrate some of the challenges in implementing efficient LNN circuits, compared to their all-to-all versions.}
\subsubsection{MCSU2 macro structure}\label{sec:MCSU2_macro_lnn}
In this section, we provide an implementation of any LNN-restricted multi-controlled $SU(2)$ gate with $n$ control qubits $C=\{c_1,c_2,..,c_n\}$ and a single target qubit $t$. The gate is implemented in a circuit over $k'\geq n+1$ qubits $Q'=\{q'_1,q'_2,..,q'_{k'}\}$ such that the qubits $q'_l$ and $q'_{l+1}$ are nearest neighbors for $1\leq l<k'$ (LNN qubits). 

We define the set $Q=\{q_1,q_2,..,q_{k}\}$ of size $k $, such that $Q\in Q'$, as the {\em smallest} set of LNN qubits that satisfies $\{C,t\}\in Q$. Each control qubit in the set $C$ corresponds to an arbitrary qubit in the set $Q \setminus t$, and it is assumed that the target qubit is located at the bottom, i.e. $t=q_{k}$, which implies $q_1\in C$. We will discuss the additional cost (communication overhead) required in case $t\not = q_{k}$ in \sec{MCSU2_decomposition}.

The macro structure described in \lem{ccrv_macro} can be used directly.
In order to allow for an efficient LNN implementation, we require an additional constraint on the subsets $C_1$ and $C_2$, such that in each subset {\em only} the first two qubits may be nearest neighbors. { The importance of this constraint will be demonstrated in \sec{MCSU2_decomposition}.}
For any given sets $C,Q$, the subsets can be defined such that this constraint is met and $C_1\cup C_2 = C$. We use the following procedure to define the sets.

Starting with $C_1=\{q_1\}$, $C_2=\emptyset$, and the iteration index $l=2$. At each iteration, $q_l$ is added to $C_1$ if
\[
(q_l\in C)\land \left(l=2 \lor (q_{l-1}\not\in C_1 \land q_{l-1} \not = C_2[1])\right).
\]
If this condition is not met and $q_l\in C$, the qubit $q_l$ is added to $C_2$. Finally, $l$ increases by 1 for the next iteration, stopping when $C_1 \cup C_2$ holds all the control qubits in $C$, i.e. $n_1+n_2=n$.

We define two subsets $Q_1,Q_2\in Q$ of sizes $k_1$ and $k_2$ as the {\em smallest} sets of LNN qubits which satisfy $\{C_1,t\}\in Q_1$ and $\{C_2,t\}\in Q_2$. The relative phase gates $[\Delta_1]$ and $[\Delta_2]$ from \lem{ccrv_macro} operate on $Q_1\setminus t$ and  $Q_2\setminus t$.

For a specific example of a given set $Q=\{q_1,q_2,..,q_{19}\}$, such that $t=q_{19}$, we use a binary string to describe the control sets $C,C_{1},C_{2}$ such that the string holds "1" for index $l$ iff $q_l$ is in the control set. The following provides an arbitrarily chosen set $C$, and the corresponding subsets $C_1,C_2$ achieved using the procedure above.
\[
\begin{aligned}
&C_{\text{\, }} = [1,0,1,0,0,1,0,1,1,1,1,0,1,1,1,0,1,1,0]\\
&C_1 = [1,0,1,0,0,1,0,1,0,0,1,0,1,0,1,0,1,0,0]\\
&C_2 = [0,0,0,0,0,0,0,0,1,1,0,0,0,1,0,0,0,1,0]
\end{aligned}
\]
As can be seen, for each set $C_1$,$C_2$ it is true that there are no neighboring controls, excluding the first two. In this example, $Q_1=Q$, and $Q_2=Q[9:19]$ are of size $k_1=19$, and $k_2=11$.

The following circuit describes the structure achieved for this example.
\[
\resizebox{0.85\linewidth}{!}{
\Qcircuit @C=1.0em @R=0.em @!R { 
	 	\nghost{{q}_{1} :  } & \lstick{{q}_{1} :  } & \ctrl{2} & \qw \\
	 	\nghost{{q}_{2} :  } & \lstick{{q}_{2} :  } & \qw & \qw \\
	 	\nghost{{q}_{3} :  } & \lstick{{q}_{3} :  } & \ctrl{3} & \qw \\
	 	\nghost{{q}_{4} :  } & \lstick{{q}_{4} :  } & \qw & \qw \\
	 	\nghost{{q}_{5} :  } & \lstick{{q}_{5} :  } & \qw & \qw \\
	 	\nghost{{q}_{6} :  } & \lstick{{q}_{6} :  } & \ctrl{2} & \qw \\
	 	\nghost{{q}_{7} :  } & \lstick{{q}_{7} :  } & \qw & \qw \\
	 	\nghost{{q}_{8} :  } & \lstick{{q}_{8} :  } & \ctrl{1} & \qw \\
	 	\nghost{{q}_{9} :  } & \lstick{{q}_{9} :  } & \ctrl{1} & \qw \\
	 	\nghost{{q}_{10} :  } & \lstick{{q}_{10} :  } & \ctrl{1} & \qw \\
	 	\nghost{{q}_{11} :  } & \lstick{{q}_{11} :  } & \ctrl{2} & \qw \\
	 	\nghost{{q}_{12} :  } & \lstick{{q}_{12} :  } & \qw & \qw \\
	 	\nghost{{q}_{13} :  } & \lstick{{q}_{13} :  } & \ctrl{1} & \qw \\
	 	\nghost{{q}_{14} :  } & \lstick{{q}_{14} :  } & \ctrl{1} & \qw \\
	 	\nghost{{q}_{15} :  } & \lstick{{q}_{15} :  } & \ctrl{2} & \qw \\
	 	\nghost{{q}_{16} :  } & \lstick{{q}_{16} :  } & \qw & \qw \\
	 	\nghost{{q}_{17} :  } & \lstick{{q}_{17} :  } & \ctrl{1} & \qw \\
	 	\nghost{{q}_{18} :  } & \lstick{{q}_{18} :  } & \ctrl{1} & \qw \\
	 	\nghost{{q}_{19} :  } & \lstick{{q}_{19} :  } & \rvgate{\lambda}{\hat{v}} & \qw \\
 }
 \hspace{5mm}\raisebox{-50mm}{=}\hspace{0mm}
 \Qcircuit @C=1.0em @R=0.18em @!R { 
	 	\nghost{{q}_{1} :  } & \lstick{{q}_{1} :  } & \qw & \ctrl{2} & \multigate{17}{\mathrm{\Delta_1}} & \qw & \qw & \ctrl{2} & \multigate{17}{\mathrm{\Delta_1^\dagger}} & \qw & \qw & \qw & \qw\\
	 	\nghost{{q}_{2} :  } & \lstick{{q}_{2} :  } & \qw & \qw & \ghost{\mathrm{\Delta_1}} & \qw & \qw & \qw & \ghost{\mathrm{\Delta_1}} & \qw & \qw & \qw & \qw\\
	 	\nghost{{q}_{3} :  } & \lstick{{q}_{3} :  } & \qw & \ctrl{3} & \ghost{\mathrm{\Delta_1}} & \qw & \qw & \ctrl{3} & \ghost{\mathrm{\Delta_1}} & \qw & \qw & \qw & \qw\\
	 	\nghost{{q}_{4} :  } & \lstick{{q}_{4} :  } & \qw & \qw & \ghost{\mathrm{\Delta_1}} & \qw & \qw & \qw & \ghost{\mathrm{\Delta_1}} & \qw & \qw & \qw & \qw\\
	 	\nghost{{q}_{5} :  } & \lstick{{q}_{5} :  } & \qw & \qw & \ghost{\mathrm{\Delta_1}} & \qw & \qw & \qw & \ghost{\mathrm{\Delta_1}} & \qw & \qw & \qw & \qw\\
	 	\nghost{{q}_{6} :  } & \lstick{{q}_{6} :  } & \qw & \ctrl{2} & \ghost{\mathrm{\Delta_1}} & \qw & \qw & \ctrl{2} & \ghost{\mathrm{\Delta_1}} & \qw & \qw & \qw & \qw\\
	 	\nghost{{q}_{7} :  } & \lstick{{q}_{7} :  } & \qw & \qw & \ghost{\mathrm{\Delta_1}} & \qw & \qw & \qw & \ghost{\mathrm{\Delta_1}} & \qw & \qw & \qw & \qw\\
	 	\nghost{{q}_{8} :  } & \lstick{{q}_{8} :  } & \qw & \ctrl{3} & \ghost{\mathrm{\Delta_1}} & \qw & \qw & \ctrl{3} & \ghost{\mathrm{\Delta_1}} & \qw & \qw & \qw & \qw\\
	 	\nghost{{q}_{9} :  } & \lstick{{q}_{9} :  } & \qw & \qw & \ghost{\mathrm{\Delta_1}} & \ctrl{1} & \multigate{9}{\mathrm{\Delta_2}} & \qw & \ghost{\mathrm{\Delta_1}} & \ctrl{1} & \multigate{9}{\mathrm{\Delta_2^\dagger}} & \qw & \qw\\
	 	\nghost{{q}_{10} :  } & \lstick{{q}_{10} :  } & \qw & \qw & \ghost{\mathrm{\Delta_1}} & \ctrl{4} & \ghost{\mathrm{\Delta_1}} & \qw & \ghost{\mathrm{\Delta_1}} & \ctrl{4} & \ghost{\mathrm{\Delta_1}} & \qw & \qw\\
	 	\nghost{{q}_{11} :  } & \lstick{{q}_{11} :  } & \qw & \ctrl{2} & \ghost{\mathrm{\Delta_1}} & \qw & \ghost{\mathrm{\Delta_1}} & \ctrl{2} & \ghost{\mathrm{\Delta_1}} & \qw & \ghost{\mathrm{\Delta_1}} & \qw & \qw\\
	 	\nghost{{q}_{12} :  } & \lstick{{q}_{12} :  } & \qw & \qw & \ghost{\mathrm{\Delta_1}} & \qw & \ghost{\mathrm{\Delta_1}} & \qw & \ghost{\mathrm{\Delta_1}} & \qw & \ghost{\mathrm{\Delta_1}} & \qw & \qw\\
	 	\nghost{{q}_{13} :  } & \lstick{{q}_{13} :  } & \qw & \ctrl{2} & \ghost{\mathrm{\Delta_1}} & \qw & \ghost{\mathrm{\Delta_1}} & \ctrl{2} & \ghost{\mathrm{\Delta_1}} & \qw & \ghost{\mathrm{\Delta_1}} & \qw & \qw\\
	 	\nghost{{q}_{14} :  } & \lstick{{q}_{14} :  } & \qw & \qw & \ghost{\mathrm{\Delta_1}} & \ctrl{4} & \ghost{\mathrm{\Delta_1}} & \qw & \ghost{\mathrm{\Delta_1}} & \ctrl{4} & \ghost{\mathrm{\Delta_1}} & \qw & \qw\\
	 	\nghost{{q}_{15} :  } & \lstick{{q}_{15} :  } & \qw & \ctrl{2} & \ghost{\mathrm{\Delta_1}} & \qw & \ghost{\mathrm{\Delta_1}} & \ctrl{2} & \ghost{\mathrm{\Delta_1}} & \qw & \ghost{\mathrm{\Delta_1}} & \qw & \qw\\
	 	\nghost{{q}_{16} :  } & \lstick{{q}_{16} :  } & \qw & \qw & \ghost{\mathrm{\Delta_1}} & \qw & \ghost{\mathrm{\Delta_1}} & \qw & \ghost{\mathrm{\Delta_1}} & \qw & \ghost{\mathrm{\Delta_1}} & \qw & \qw\\
	 	\nghost{{q}_{17} :  } & \lstick{{q}_{17} :  } & \qw & \ctrl{2} & \ghost{\mathrm{\Delta_1}} & \qw & \ghost{\mathrm{\Delta_1}} & \ctrl{2} & \ghost{\mathrm{\Delta_1}} & \qw & \ghost{\mathrm{\Delta_1}} & \qw & \qw\\
	 	\nghost{{q}_{18} :  } & \lstick{{q}_{18} :  } & \qw & \qw & \ghost{\mathrm{\Delta_1}} & \ctrl{1} & \ghost{\mathrm{\Delta_1}} & \qw & \ghost{\mathrm{\Delta_1}} & \ctrl{1} & \ghost{\mathrm{\Delta_1}} & \qw & \qw\\
	 	\nghost{{q}_{19} :  } & \lstick{{q}_{19} :  } & \gate{\mathrm{A_4}} & \control\qw & \gate{\mathrm{A_2}} & \control\qw & \gate{\mathrm{A_3}} & \control\qw & \gate{\mathrm{A_2}} & \control\qw & \gate{\mathrm{A_1}} & \qw & \qw\\
 }
 }
 \qcref{mcsu2_struct_lnn}
\]
We note that in case $C_2= \emptyset$, each structure controlled by $C_2$ is implemented as a single Z gate which can be cancelled out. This allows to reduce the number of $\{\rx,\rz\}$ to five, similarly to the "ABC" structure described in \cite{barenco_elementary_1995}.

\subsubsection{MCSU2 decomposition}\label{sec:MCSU2_decomposition}

In this section, we provide a decomposition of $\sqO{\Delta_1}{\{Q_1\setminus t\}}\MCO{Z}{C_1}{t}$, $\sqO{\Delta_2}{\{Q_2\setminus t\}}\MCO{Z}{C_2}{t}$ and their inverse gates in terms of Clifford+T gates, as required for the structure described in \sec{MCSU2_macro_lnn}. As mentioned, we can assume that only the first two control bits in each of the subsets $C_1$ and $C_2$ may be neighboring, and the target qubit $t$ is located at the bottom.
First, we define the following notation. 
\[
\{\overline{ Z }\}^{l}_{C,Q} = \prod_{j=l_0}^{l}\MCO{\overline{Z}}{C^{j}}{q_j}
\text{ , }
\MCO{\overline{Z}}{C^{j}}{q_j} =
\begin{cases}
    I & q_j\in C\\
    \MCO{Z}{C^{j}}{q_j} & q_j\not\in C
\end{cases}
\]
with $Q=\{q_1,q_2,..,q_{k}\}$, $C=\{c_1,c_2,..,c_n\}\in Q\setminus q_k$ and  $l_0\leq l\leq k$.
The subset $C^{j}:=Q[1:j-1]\cap C$ with $2\leq j\leq k$ holds all control qubits above $q_j$. 
In addition, $l_0$ is defined as the index of the first non-control qubit in $Q$, i.e., the smallest index which satisfies $q_{l_0}\not \in C \land C^{l_0}\not = \emptyset$ and $2 \leq l_0 \leq k$.
{ \lem{rec_struct_lnn} provides a way to transform $\{\overline{ Z }\}^{l-1}_{C,Q}$ to $\{\overline{ Z }\}^{l}_{C,Q}$ using two MCX gates. As we will show, if none of the control qubits in $C$ are neighboring (or only the first two), each of the used MCX gates is applied on neighboring qubits and only has one or two controls, i.e. LNN CNOT or LNN Toffoli. This provides the LNN version of \lem{rec_struct}, with the Toffoli gates serving a similar purpose of "adding a control" to the $\{\overline{ Z }\}^{l}_{C,Q}$ gate, while the CNOT gates are added in order to "skip" non-control qubits (as we exemplify in \qc{C_1_LNN_decompose} and \qc{C_2_LNN_decompose}). This ability to skip non-control qubits allows us to apply our structure for any choice of the control qubits under the mentioned constraints. This in turn removes the need to assume a specific choice and then apply many qubit swaps which result in a quadratic overhead of CNOT gates, a strategy that is used in state-of-the-art implementations \cite{cheng_mapping_2018,chakrabarti_nearest_2007,miller_elementary_2011,li_quantum_2023,tan_multi-strategy_2018}.}

\begin{lemma} \label{lem:rec_struct_lnn} If $l_0<k$, and $q_l\not\in C$
with $l_0< l\leq k$, then $\{\overline{ Z }\}^{l}_{C,Q} = \MCO{X}{Q[f_l+1:l]}{q_{f_l}}\{\overline{ Z }\}^{l-1}_{C,Q} \MCO{X}{Q[f_l+1:l]}{q_{f_l}}$, with $f_l$ as the smallest index that satisfies
$
\{\overline{ Z }\}^{l-1}_{C,Q} = \{\overline{ Z }\}^{f_l}_{C,Q}
$ and $l_0 \leq f_l<l$.
\end{lemma}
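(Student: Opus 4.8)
The plan is to establish the identity by a direct conjugation argument, reducing it to the single-target building block in \lem{basic_struct}. Since $q_l\not\in C$, the top factor of the product is nontrivial, so $\{\overline{ Z }\}^{l}_{C,Q}=\MCO{Z}{C^{l}}{q_l}\,\{\overline{ Z }\}^{l-1}_{C,Q}$, and it suffices to show that conjugating $\{\overline{ Z }\}^{l-1}_{C,Q}$ by $G:=\MCO{X}{Q[f_l+1:l]}{q_{f_l}}$ exactly inserts the missing factor $\MCO{Z}{C^{l}}{q_l}$. First I would unfold the definition of $f_l$. Because $f_l$ is the \emph{smallest} index with $\{\overline{ Z }\}^{l-1}_{C,Q}=\{\overline{ Z }\}^{f_l}_{C,Q}$, every factor indexed by $f_l<j\le l-1$ must be the identity, i.e. $q_{f_l+1},\dots,q_{l-1}\in C$, while $q_{f_l}\not\in C$; hence $q_{f_l}$ is the non-control target immediately preceding $q_l$, and $\{\overline{ Z }\}^{l-1}_{C,Q}=\MCO{Z}{C^{f_l}}{q_{f_l}}\,\{\overline{ Z }\}^{f_l-1}_{C,Q}$. (The qubit $q_{l_0}\not\in C$ with $l_0\le l-1$ guarantees such an $f_l$ exists and $f_l\ge l_0$.)

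The key structural observation is a support/commutation argument. The gate $G$ acts only on the qubits $Q[f_l:l]$, whereas every factor of $\{\overline{ Z }\}^{f_l-1}_{C,Q}$ is supported on $Q[1:f_l-1]$; these index ranges are disjoint, so $G$ commutes through $\{\overline{ Z }\}^{f_l-1}_{C,Q}$, and since $G$ is self-inverse the conjugation acts only on the single remaining factor $\MCO{Z}{C^{f_l}}{q_{f_l}}$. I then apply \lem{basic_struct} with $\hat{v}_1=\hat{z}$, $\hat{v}_2=\hat{x}$, target $d=q_{f_l}$, $A=C^{f_l}$, and $B=Q[f_l+1:l]$ (equivalently the special case displayed in \qc{add_mcz}), noting $q_{f_l}\notin A\cup B$ as required. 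This gives $G\,\MCO{Z}{C^{f_l}}{q_{f_l}}\,G=\MCO{Z}{A\cup B}{}\,\MCO{Z}{C^{f_l}}{q_{f_l}}$.

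The final and crucial step is to simplify the new control set $A\cup B=C^{f_l}\cup\{q_{f_l+1},\dots,q_l\}$, and this is where the choice of $f_l$ pays off: the intermediate qubits $q_{f_l+1},\dots,q_{l-1}$ are exactly the controls of $C$ lying between $q_{f_l}$ and $q_l$, so $C^{f_l}\cup\{q_{f_l+1},\dots,q_{l-1}\}=Q[1:l-1]\cap C=C^{l}$, and adjoining $q_l$ yields $A\cup B=C^{l}\cup\{q_l\}$, i.e. $\MCO{Z}{A\cup B}{}=\MCO{Z}{C^{l}}{q_l}$. Substituting back and using diagonality of the controlled-$Z$ factors, the conjugation produces $\MCO{Z}{C^{l}}{q_l}\,\MCO{Z}{C^{f_l}}{q_{f_l}}\,\{\overline{ Z }\}^{f_l-1}_{C,Q}=\MCO{Z}{C^{l}}{q_l}\,\{\overline{ Z }\}^{l-1}_{C,Q}=\{\overline{ Z }\}^{l}_{C,Q}$, which is the claim. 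I expect the main obstacle to be bookkeeping this set-collapse correctly—verifying that $f_l$ is precisely the previous non-control target and that a \emph{single} multi-controlled $X$ can ``skip over'' the control qubits between $q_{f_l}$ and $q_l$ to reproduce the extended controlled-$Z$; the commutation step and the invocation of \lem{basic_struct} are otherwise routine and mirror the all-to-all recursion of \lem{rec_struct}.
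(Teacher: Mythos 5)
Your proposal is correct and follows essentially the same route as the paper's own proof: split off the last nontrivial factor $\MCO{Z}{C^{f_l}}{q_{f_l}}$ using the minimality of $f_l$, commute the conjugating gate past $\{\overline{Z}\}^{f_l-1}_{C,Q}$ by disjointness of supports, invoke \lem{basic_struct} with $A=C^{f_l}$, $B=Q[f_l+1:l]$, and collapse $C^{f_l}\cup Q[f_l+1:l]$ to $\{C^l,q_l\}$. The only differences are cosmetic (ordering of the diagonal factors and framing the conjugation as "inserting the missing factor").
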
 
\begin{proof}
The definition of $f_l$ implies $q_{f_l}\not \in C$, and $Q[f_l+1:l]\setminus q_l \in C$. In other words, $q_{f_l}$ is the nearest non-control qubit above $q_{l}$.
Therefore, 
\[
\{\overline{ Z }\}^{l-1}_{C,Q}
=
\{\overline{ Z }\}^{f_l}_{C,Q}
=
\{\overline{ Z }\}^{f_l-1}_{C,Q}\MCO{Z}{C^{f_l}}{q_{f_l}}.
\]
    Since the qubits $Q[f_l:l]$ are not used in $\{\overline{ Z }\}^{f_l-1}_{C,Q}$, this gate commutes with $\MCO{X}{Q[f_l+1:l]}{q_{f_l}}$. Therefore:
    \[
    \begin{aligned}
    &\MCO{X}{Q[f_l+1:l]}{q_{f_l}}\{\overline{ Z }\}^{l-1}_{C,Q} \MCO{X}{Q[f_l+1:l]}{q_{f_l}}
    =\\
    &\{\overline{ Z }\}^{f_l-1}_{C,Q}
    \MCO{X}{Q[f_l+1:l]}{q_{f_l}}
    \MCO{Z}{C^{f_l}}{q_{f_l}}
    \MCO{X}{Q[f_l+1:l]}{q_{f_l}}.
    \end{aligned}
    \]
    According to \lem{basic_struct}:
    \[
    \MCO{X}{Q[f_l+1:l]}{q_{f_l}}
    \MCO{Z}{C^{f_l}}{q_{f_l}}
    \MCO{X}{Q[f_l+1:l]}{q_{f_l}}
    =
    \MCO{Z}{C^{f_l}}{q_{f_l}}
    \MCO{Z}{C^{l}}{q_{l}}
    \]
    where we used $\MCO{Z}{C^{l}}{q_{l}}= \MCO{Z}{\{C^{f_l},Q[f_l+1:l]\}}{}$, since $Q[f_l+1:l]\setminus 
    q_l \in C$ and $q_{f_l}\not \in C$ due to the definition of $f_l$.
     Finally, we can substitute and use 
    \[
    \{\overline{ Z }\}^{f_l-1}_{C,Q}
    \MCO{Z}{C^{f_l}}{q_{f_l}}
    \MCO{Z}{C^{l}}{q_{l}}
    =
    \{\overline{ Z }\}^{l-1}_{C,Q}
    \MCO{Z}{C^{l}}{q_{l}}
    =
    \{\overline{ Z }\}^{l}_{C,Q}.
    \]
\end{proof}

Since there are no two neighboring control qubits for $l_0<l \leq k$, the only possible values of $f_l$ are 
\[
f_l =
\begin{cases}
    l-1 & q_l,q_{l-1}\not\in C\\
    l-2 & q_l,q_{l-2}\not\in C\land q_{l-1}\in C
\end{cases}
\]
by the definition of $f_l$.
Therefore, the $\MCO{X}{Q[f_l+1:l]}{q_{f}}$ gates in \lem{rec_struct_lnn} are restricted to $\MCO{X}{q_l}{q_{l-1}}$ or $\MCO{X}{\{q_{l-1},q_l\}}{q_{l-2}}$, i.e., either LNN CNOT or LNN Toffoli gates, which can be implemented up to a relative phase.

Using the following notation
\[
\MCO{\overline{X}}{}{q_l} =
\begin{cases}
    I & q_l\in C\\
    \MCO{X}{q_l}{q_{l-1}} & q_l,q_{l-1}\not\in C\\
    \MCO{X_{\Delta}}{\{q_l,q_{l-1}\}}{q_{l-2}} & q_l,q_{l-2}\not\in C\land q_{l-1}\in C
\end{cases}
\]
and repeatedly applying the recursive formula presented in \lem{rec_struct_lnn}, we achieve the following lemma:

\begin{lemma}\label{lem:Vchain_fullstruct_lnn}
If $l_0<k$, $q_k\not\in C$
and $q_l \not\in C\lor q_{l+1}\not\in C$ for any $l_0<l<k$, then\\
$\{\overline{ Z }\}^{k}_{C,Q}=
\left(\prod_{l=l_0+1}^{k}\MCO{\overline{X}}{}{q_l}\right)^{\dagger}
\MCO{Z}{C^{l_0}}{q_{l_0}}
\left(\prod_{l=l_0+1}^{k}\MCO{\overline{X}}{}{q_l}\right)$
for $n=\abs{C}\geq 1$ and $k=\abs{Q}>n$.
\end{lemma}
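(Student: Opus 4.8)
The plan is to prove the claim by induction on the upper index $l$, running from $l=l_0$ up to $l=k$, at each step invoking the recursive identity of \lem{rec_struct_lnn} to promote $\{\overline{ Z }\}^{l-1}_{C,Q}$ to $\{\overline{ Z }\}^{l}_{C,Q}$. For the base case $l=l_0$, the product $\{\overline{ Z }\}^{l_0}_{C,Q}$ collapses to the single factor $\MCO{\overline{Z}}{C^{l_0}}{q_{l_0}}$, and since $q_{l_0}\not\in C$ by the definition of $l_0$, this equals the central term $\MCO{Z}{C^{l_0}}{q_{l_0}}$; meanwhile the conjugating product $\prod_{l=l_0+1}^{l_0}\MCO{\overline{X}}{}{q_l}$ is empty and hence the identity, so the claimed formula holds trivially at $l=l_0$.

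For the inductive step I would assume the formula at $l-1$ and split on whether $q_l$ is a control. If $q_l\in C$, then $\MCO{\overline{Z}}{C^{l}}{q_{l}}=I$, so $\{\overline{ Z }\}^{l}_{C,Q}=\{\overline{ Z }\}^{l-1}_{C,Q}$, while simultaneously $\MCO{\overline{X}}{}{q_l}=I$ leaves the conjugating product unchanged; both sides advance in lockstep and nothing is to prove. If $q_l\not\in C$, I apply \lem{rec_struct_lnn} to write $\{\overline{ Z }\}^{l}_{C,Q}=\MCO{X}{Q[f_l+1:l]}{q_{f_l}}\{\overline{ Z }\}^{l-1}_{C,Q}\MCO{X}{Q[f_l+1:l]}{q_{f_l}}$ and then identify the conjugating gate. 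Here the no-neighbouring-controls hypothesis $q_l\not\in C\lor q_{l+1}\not\in C$ is precisely what forces $f_l\in\{l-1,l-2\}$: either $q_{l-1}\not\in C$, giving $f_l=l-1$ so that $\MCO{X}{Q[f_l+1:l]}{q_{f_l}}=\MCO{X}{q_l}{q_{l-1}}$ is an LNN CNOT, or $q_{l-1}\in C$, which by the constraint forces $q_{l-2}\not\in C$, giving $f_l=l-2$ so that $\MCO{X}{Q[f_l+1:l]}{q_{f_l}}=\MCO{X}{\{q_{l-1},q_l\}}{q_{l-2}}$ is an LNN Toffoli. In each case this matches the corresponding branch of the definition of $\MCO{\overline{X}}{}{q_l}$, so the two new outer gates attach cleanly to the ends of the inductive product.

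The one genuine subtlety — and the step I expect to require the most care — is that \lem{rec_struct_lnn} is phrased with exact $\MCO{X}{}{}$ gates, whereas the Toffoli branch of $\MCO{\overline{X}}{}{q_l}$ is the relative-phase gate $\MCO{X_\Delta}{\{q_l,q_{l-1}\}}{q_{l-2}}$. To close this gap I would reuse the mechanism from the proof of \lem{rec_struct}: substitute each $\MCO{X_\Delta}{}{}$ by its definition from \qc{rf_toff_first}, which collects all of the relative-phase factors into the interior of the fully expanded product $\left(\prod_{l}\MCO{\overline{X}}{}{q_l}\right)^{\dagger}\MCO{Z}{C^{l_0}}{q_{l_0}}\left(\prod_{l}\MCO{\overline{X}}{}{q_l}\right)$; since these diagonal relative-phase gates commute with each other and with the central $\MCO{Z}{C^{l_0}}{q_{l_0}}$, each one meets its inverse coming from the daggered product and cancels, leaving exactly the bare-$X$ conjugation structure to which \lem{rec_struct_lnn} applies. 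Iterating the inductive step up to $l=k$, and using the hypothesis $q_k\not\in C$ so that the topmost factor is a genuine $\MCO{\overline{X}}{}{q_k}$ rather than an identity, then assembles the full telescoping product and yields the stated identity.
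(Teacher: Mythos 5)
Your proposal is correct and follows essentially the same route as the paper, which obtains this lemma by "repeatedly applying the recursive formula presented in Lemma~\ref{lem:rec_struct_lnn}"; your induction on $l$ with the case split on $q_l\in C$ is just a careful formalization of that iteration, and your identification of $f_l\in\{l-1,l-2\}$ with the CNOT/Toffoli branches of $\MCO{\overline{X}}{}{q_l}$ matches the paper's discussion following Lemma~\ref{lem:rec_struct_lnn}. Your handling of the exact-Toffoli versus relative-phase-Toffoli discrepancy — expanding each $[X_\Delta]$ via \qc{rf_toff_first} so the diagonal phase factors meet the diagonal conjugated operator and cancel in pairs — is precisely the mechanism the paper invokes in the proof of Lemma~\ref{lem:rec_struct}, so nothing is missing.
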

Since the first two controls might be neighboring, we get
$|C^{l_0}|\in \{1,2\}$.
Therefore, the gate $\MCO{Z}{C^{l_0}}{q_{l_0}}$ is restricted to  $\MCO{Z}{q_{l_0-1}}{q_{l_0}}$ or $\MCO{Z}{\{q_{l_0-2},q_{l_0-1}\}}{q_{l_0}}$, i.e. either LNN CZ or LNN CCZ.

The total number of $\MCO{\overline{X}}{}{q_l}$ gates used to implement $\{\overline{ Z }\}^{k}_{C,Q}$ according to \lem{Vchain_fullstruct_lnn} is $2(k-l_0)$.
The number of LNN $[X_\Delta]$ gates equals $2|Q[l_0:k]\cap C|=2(n-|C^{l_0}|)$, i.e. twice the number of control qubits located below $q_{l_0}$. The $I$ gates are not required to be implemented and their number is the same as the number of $[X_\Delta]$ gates. Therefore, the number of LNN CNOT gates is $2(k-l_0)-4(n-|C^{l_0}|)$.

 Since $Q[1]\in C$, we get $l_{0}=|C^{l_0}|+1$ with $l_0\in\{2,3\}$, and we can define $l'_0=l_0-2$ such that $l'_0=0$ indicates the use of a CZ gate and $l'_0=1$ indicates the use of a CCZ gate. The number of Toffoli gates can be written as $2(n-l'_0-1)$, and the number of CNOT gates as $2k-4n+2l'_0$.\\
{\qc{C_1_LNN_decompose} and \qc{C_2_LNN_decompose}} describe the implementations of
\[
\begin{aligned}
&\sqO{\Delta_1}{\{Q_1\setminus t\}}\MCO{Z}{C_1}{t}= \{\overline{ Z }\}^{k_1-1}_{C_1,Q_1}\MCO{Z}{C_1}{t}=\{\overline{ Z }\}^{k_1}_{C_1,Q_1}
\\ &\text{ and   
 }\\
&\sqO{\Delta_2}{\{Q_2\setminus t\}}\MCO{Z}{C_2}{t}= \MCO{S}{C_2[1:2]}{}\{\overline{ Z }\}^{k_2-1}_{C_2,Q_2}\MCO{Z}{C_2}{t}=\\
&\MCO{S}{C_2[1:2]}{}\{\overline{ Z }\}^{k_2}_{C_2,Q_2}
\end{aligned}
\]

according to \lem{Vchain_fullstruct_lnn}, for $C_1,C_2,Q_1,Q_2$ as the example defined in \sec{MCSU2_macro_lnn}.
\[
\resizebox{0.85\linewidth}{!}{
\Qcircuit @C=1.0em @R=0.2em @!R { 
	 	\nghost{{q}_{1} :  } & \lstick{{q}_{1} :  } & \ctrl{2} & \multigate{17}{\mathrm{\Delta_1}} & \qw\\
	 	\nghost{{q}_{2} :  } & \lstick{{q}_{2} :  } & \qw & \ghost{\mathrm{\Delta_1}} & \qw\\
	 	\nghost{{q}_{3} :  } & \lstick{{q}_{3} :  } & \ctrl{3} & \ghost{\mathrm{\Delta_1}} & \qw\\
	 	\nghost{{q}_{4} :  } & \lstick{{q}_{4} :  } & \qw & \ghost{\mathrm{\Delta_1}} & \qw\\
	 	\nghost{{q}_{5} :  } & \lstick{{q}_{5} :  } & \qw & \ghost{\mathrm{\Delta_1}} & \qw\\
	 	\nghost{{q}_{6} :  } & \lstick{{q}_{6} :  } & \ctrl{2} & \ghost{\mathrm{\Delta_1}} & \qw\\
	 	\nghost{{q}_{7} :  } & \lstick{{q}_{7} :  } & \qw & \ghost{\mathrm{\Delta_1}} & \qw\\
	 	\nghost{{q}_{8} :  } & \lstick{{q}_{8} :  } & \ctrl{3} & \ghost{\mathrm{\Delta_1}} & \qw\\
	 	\nghost{{q}_{9} :  } & \lstick{{q}_{9} :  } & \qw & \ghost{\mathrm{\Delta_1}} & \qw\\
	 	\nghost{{q}_{10} :  } & \lstick{{q}_{10} :  } & \qw & \ghost{\mathrm{\Delta_1}} & \qw\\
	 	\nghost{{q}_{11} :  } & \lstick{{q}_{11} :  } & \ctrl{2} & \ghost{\mathrm{\Delta_1}} & \qw\\
	 	\nghost{{q}_{12} :  } & \lstick{{q}_{12} :  } & \qw & \ghost{\mathrm{\Delta_1}} & \qw\\
	 	\nghost{{q}_{13} :  } & \lstick{{q}_{13} :  } & \ctrl{2} & \ghost{\mathrm{\Delta_1}} & \qw\\
	 	\nghost{{q}_{14} :  } & \lstick{{q}_{14} :  } & \qw & \ghost{\mathrm{\Delta_1}} & \qw\\
	 	\nghost{{q}_{15} :  } & \lstick{{q}_{15} :  } & \ctrl{2} & \ghost{\mathrm{\Delta_1}} & \qw\\
	 	\nghost{{q}_{16} :  } & \lstick{{q}_{16} :  } & \qw & \ghost{\mathrm{\Delta_1}} & \qw\\
	 	\nghost{{q}_{17} :  } & \lstick{{q}_{17} :  } & \ctrl{2} & \ghost{\mathrm{\Delta_1}} & \qw\\
	 	\nghost{{q}_{18} :  } & \lstick{{q}_{18} :  } & \qw & \ghost{\mathrm{\Delta_1}} & \qw\\
	 	\nghost{{q}_{19} :  } & \lstick{{q}_{19} :  } & \control\qw & \qw & \qw\\
 }
  \hspace{5mm}\raisebox{-35mm}{=}\hspace{0mm}
\Qcircuit @C=0.5em @R=0.2em @!R { 
	 	\nghost{{q}_{1} :  } & \lstick{{q}_{1} :  } & \ctrl{2} & \ctrl{2} & \ctrl{2} & \ctrl{2} & \ctrl{2} & \ctrl{2} & \ctrl{2} & \ctrl{2} & \ctrl{2} & \ctrl{2} & \ctrl{1} & \qw & \qw\\
	 	\nghost{{q}_{2} :  } & \lstick{{q}_{2} :  } & \qw & \qw & \qw & \qw & \qw & \qw & \qw & \qw & \qw & \qw & \control\qw & \qw & \qw\\
	 	\nghost{{q}_{3} :  } & \lstick{{q}_{3} :  } & \ctrl{3} & \ctrl{3} & \ctrl{3} & \ctrl{3} & \ctrl{3} & \ctrl{3} & \ctrl{3} & \ctrl{3} & \ctrl{2} & \ctrl{1} & \qw & \qw & \qw\\
	 	\nghost{{q}_{4} :  } & \lstick{{q}_{4} :  } & \qw & \qw & \qw & \qw & \qw & \qw & \qw & \qw & \qw & \control\qw & \qw & \qw & \qw\\
	 	\nghost{{q}_{5} :  } & \lstick{{q}_{5} :  } & \qw & \qw & \qw & \qw & \qw & \qw & \qw & \qw & \control\qw & \qw & \qw & \qw & \qw\\
	 	\nghost{{q}_{6} :  } & \lstick{{q}_{6} :  } & \ctrl{2} & \ctrl{2} & \ctrl{2} & \ctrl{2} & \ctrl{2} & \ctrl{2} & \ctrl{2} & \ctrl{1} & \qw & \qw & \qw & \qw & \qw\\
	 	\nghost{{q}_{7} :  } & \lstick{{q}_{7} :  } & \qw & \qw & \qw & \qw & \qw & \qw & \qw & \control\qw & \qw & \qw & \qw & \qw & \qw\\
	 	\nghost{{q}_{8} :  } & \lstick{{q}_{8} :  } & \ctrl{3} & \ctrl{3} & \ctrl{3} & \ctrl{3} & \ctrl{3} & \ctrl{2} & \ctrl{1} & \qw & \qw & \qw & \qw & \qw & \qw\\
	 	\nghost{{q}_{9} :  } & \lstick{{q}_{9} :  } & \qw & \qw & \qw & \qw & \qw & \qw & \control\qw & \qw & \qw & \qw & \qw & \qw & \qw\\
	 	\nghost{{q}_{10} :  } & \lstick{{q}_{10} :  } & \qw & \qw & \qw & \qw & \qw & \control\qw & \qw & \qw & \qw & \qw & \qw & \qw & \qw\\
	 	\nghost{{q}_{11} :  } & \lstick{{q}_{11} :  } & \ctrl{2} & \ctrl{2} & \ctrl{2} & \ctrl{2} & \ctrl{1} & \qw & \qw & \qw & \qw & \qw & \qw & \qw & \qw\\
	 	\nghost{{q}_{12} :  } & \lstick{{q}_{12} :  } & \qw & \qw & \qw & \qw & \control\qw & \qw & \qw & \qw & \qw & \qw & \qw & \qw & \qw\\
	 	\nghost{{q}_{13} :  } & \lstick{{q}_{13} :  } & \ctrl{2} & \ctrl{2} & \ctrl{2} & \ctrl{1} & \qw & \qw & \qw & \qw & \qw & \qw & \qw & \qw & \qw\\
	 	\nghost{{q}_{14} :  } & \lstick{{q}_{14} :  } & \qw & \qw & \qw & \control\qw & \qw & \qw & \qw & \qw & \qw & \qw & \qw & \qw & \qw\\
	 	\nghost{{q}_{15} :  } & \lstick{{q}_{15} :  } & \ctrl{2} & \ctrl{2} & \ctrl{1} & \qw & \qw & \qw & \qw & \qw & \qw & \qw & \qw & \qw & \qw\\
	 	\nghost{{q}_{16} :  } & \lstick{{q}_{16} :  } & \qw & \qw & \control\qw & \qw & \qw & \qw & \qw & \qw & \qw & \qw & \qw & \qw & \qw\\
	 	\nghost{{q}_{17} :  } & \lstick{{q}_{17} :  } & \ctrl{2} & \ctrl{1} & \qw & \qw & \qw & \qw & \qw & \qw & \qw & \qw & \qw & \qw & \qw\\
	 	\nghost{{q}_{18} :  } & \lstick{{q}_{18} :  } & \qw & \control\qw & \qw & \qw & \qw & \qw & \qw & \qw & \qw & \qw & \qw & \qw & \qw\\
	 	\nghost{{q}_{19} :  } & \lstick{{q}_{19} :  } & \control\qw & \qw & \qw & \qw & \qw & \qw & \qw & \qw & \qw & \qw & \qw & \qw & \qw\\
 }
  \hspace{5mm}\raisebox{-35mm}{=}\hspace{0mm}
 \Qcircuit @C=0.2em @R=0.2em @!R { 
	 	\nghost{{q}_{1} :  } & \lstick{{q}_{1} :  } & \qw & \qw & \qw & \qw & \qw & \qw & \qw & \qw & \qw & \qw & \ctrl{1} & \qw & \qw & \qw & \qw & \qw & \qw & \qw & \qw & \qw & \qw & \qw & \qw\\
	 	\nghost{{q}_{2} :  } & \lstick{{q}_{2} :  } & \qw & \qw & \qw & \qw & \qw & \qw & \qw & \qw & \qw & \targ & \control\qw & \targ & \qw & \qw & \qw & \qw & \qw & \qw & \qw & \qw & \qw & \qw & \qw\\
	 	\nghost{{q}_{3} :  } & \lstick{{q}_{3} :  } & \qw & \qw & \qw & \qw & \qw & \qw & \qw & \qw & \qw & \ctrlt{-1} & \qw & \ctrlt{-1} & \qw & \qw & \qw & \qw & \qw & \qw & \qw & \qw & \qw & \qw & \qw\\
	 	\nghost{{q}_{4} :  } & \lstick{{q}_{4} :  } & \qw & \qw & \qw & \qw & \qw & \qw & \qw & \qw & \targ & \ctrl{-1} & \qw & \ctrl{-1} & \targ & \qw & \qw & \qw & \qw & \qw & \qw & \qw & \qw & \qw & \qw\\
	 	\nghost{{q}_{5} :  } & \lstick{{q}_{5} :  } & \qw & \qw & \qw & \qw & \qw & \qw & \qw & \targ & \ctrl{-1} & \qw & \qw & \qw & \ctrl{-1} & \targ & \qw & \qw & \qw & \qw & \qw & \qw & \qw & \qw & \qw\\
	 	\nghost{{q}_{6} :  } & \lstick{{q}_{6} :  } & \qw & \qw & \qw & \qw & \qw & \qw & \qw & \ctrlt{-1} & \qw & \qw & \qw & \qw & \qw & \ctrlt{-1} & \qw & \qw & \qw & \qw & \qw & \qw & \qw & \qw & \qw\\
	 	\nghost{{q}_{7} :  } & \lstick{{q}_{7} :  } & \qw & \qw & \qw & \qw & \qw & \qw & \targ & \ctrl{-1} & \qw & \qw & \qw & \qw & \qw & \ctrl{-1} & \targ & \qw & \qw & \qw & \qw & \qw & \qw & \qw & \qw\\
	 	\nghost{{q}_{8} :  } & \lstick{{q}_{8} :  } & \qw & \qw & \qw & \qw & \qw & \qw & \ctrlt{-1} & \qw & \qw & \qw & \qw & \qw & \qw & \qw & \ctrlt{-1} & \qw & \qw & \qw & \qw & \qw & \qw & \qw & \qw\\
	 	\nghost{{q}_{9} :  } & \lstick{{q}_{9} :  } & \qw & \qw & \qw & \qw & \qw & \targ & \ctrl{-1} & \qw & \qw & \qw & \qw & \qw & \qw & \qw & \ctrl{-1} & \targ & \qw & \qw & \qw & \qw & \qw & \qw & \qw\\
	 	\nghost{{q}_{10} :  } & \lstick{{q}_{10} :  } & \qw & \qw & \qw & \qw & \targ & \ctrl{-1} & \qw & \qw & \qw & \qw & \qw & \qw & \qw & \qw & \qw & \ctrl{-1} & \targ & \qw & \qw & \qw & \qw & \qw & \qw\\
	 	\nghost{{q}_{11} :  } & \lstick{{q}_{11} :  } & \qw & \qw & \qw & \qw & \ctrlt{-1} & \qw & \qw & \qw & \qw & \qw & \qw & \qw & \qw & \qw & \qw & \qw & \ctrlt{-1} & \qw & \qw & \qw & \qw & \qw & \qw\\
	 	\nghost{{q}_{12} :  } & \lstick{{q}_{12} :  } & \qw & \qw & \qw & \targ & \ctrl{-1} & \qw & \qw & \qw & \qw & \qw & \qw & \qw & \qw & \qw & \qw & \qw & \ctrl{-1} & \targ & \qw & \qw & \qw & \qw & \qw\\
	 	\nghost{{q}_{13} :  } & \lstick{{q}_{13} :  } & \qw & \qw & \qw & \ctrlt{-1} & \qw & \qw & \qw & \qw & \qw & \qw & \qw & \qw & \qw & \qw & \qw & \qw & \qw & \ctrlt{-1} & \qw & \qw & \qw & \qw & \qw\\
	 	\nghost{{q}_{14} :  } & \lstick{{q}_{14} :  } & \qw & \qw & \targ & \ctrl{-1} & \qw & \qw & \qw & \qw & \qw & \qw & \qw & \qw & \qw & \qw & \qw & \qw & \qw & \ctrl{-1} & \targ & \qw & \qw & \qw & \qw\\
	 	\nghost{{q}_{15} :  } & \lstick{{q}_{15} :  } & \qw & \qw & \ctrlt{-1} & \qw & \qw & \qw & \qw & \qw & \qw & \qw & \qw & \qw & \qw & \qw & \qw & \qw & \qw & \qw & \ctrlt{-1} & \qw & \qw & \qw & \qw\\
	 	\nghost{{q}_{16} :  } & \lstick{{q}_{16} :  } & \qw & \targ & \ctrl{-1} & \qw & \qw & \qw & \qw & \qw & \qw & \qw & \qw & \qw & \qw & \qw & \qw & \qw & \qw & \qw & \ctrl{-1} & \targ & \qw & \qw & \qw\\
	 	\nghost{{q}_{17} :  } & \lstick{{q}_{17} :  } & \qw & \ctrlt{-1} & \qw & \qw & \qw & \qw & \qw & \qw & \qw & \qw & \qw & \qw & \qw & \qw & \qw & \qw & \qw & \qw & \qw & \ctrlt{-1} & \qw & \qw & \qw\\
	 	\nghost{{q}_{18} :  } & \lstick{{q}_{18} :  } & \targ & \ctrl{-1} & \qw & \qw & \qw & \qw & \qw & \qw & \qw & \qw & \qw & \qw & \qw & \qw & \qw & \qw & \qw & \qw & \qw & \ctrl{-1} & \targ & \qw & \qw\\
	 	\nghost{{q}_{19} :  } & \lstick{{q}_{19} :  } & \ctrl{-1} & \qw & \qw & \qw & \qw & \qw & \qw & \qw & \qw & \qw & \qw & \qw & \qw & \qw & \qw & \qw & \qw & \qw & \qw & \qw & \ctrl{-1} & \qw & \qw\\
 }
}
\qcref{C_1_LNN_decompose}
\]

\[
\resizebox{0.85\linewidth}{!}{
\Qcircuit @C=1.0em @R=0.38em @!R { 
	 	\nghost{{q}_{9} :  } & \lstick{{q}_{9} :  } & \ctrl{1} & \multigate{9}{\mathrm{\Delta_2}} & \qw\\
	 	\nghost{{q}_{10} :  } & \lstick{{q}_{10} :  } & \ctrl{4} & \ghost{\mathrm{\Delta_1}} & \qw\\
	 	\nghost{{q}_{11} :  } & \lstick{{q}_{11} :  } & \qw & \ghost{\mathrm{\Delta_1}} & \qw\\
	 	\nghost{{q}_{12} :  } & \lstick{{q}_{12} :  } & \qw & \ghost{\mathrm{\Delta_1}} & \qw\\
	 	\nghost{{q}_{13} :  } & \lstick{{q}_{13} :  } & \qw & \ghost{\mathrm{\Delta_1}} & \qw\\
	 	\nghost{{q}_{14} :  } & \lstick{{q}_{14} :  } & \ctrl{4} & \ghost{\mathrm{\Delta_1}} & \qw\\
	 	\nghost{{q}_{15} :  } & \lstick{{q}_{15} :  } & \qw & \ghost{\mathrm{\Delta_1}}& \qw\\
	 	\nghost{{q}_{16} :  } & \lstick{{q}_{16} :  } & \qw & \ghost{\mathrm{\Delta_1}} & \qw\\
	 	\nghost{{q}_{17} :  } & \lstick{{q}_{17} :  } & \qw & \ghost{\mathrm{\Delta_1}} & \qw\\
	 	\nghost{{q}_{18} :  } & \lstick{{q}_{18} :  } & \ctrl{1} & \ghost{\mathrm{\Delta_1}} & \qw\\
	 	\nghost{{q}_{19} :  } & \lstick{{q}_{19} :  } & \control\qw & \qw & \qw\\
 }
   \hspace{5mm}\raisebox{-22mm}{=}\hspace{0mm}
\Qcircuit @C=0.5em @R=0.em @!R { 
	 	\nghost{{q}_{9} :  } & \lstick{{q}_{9} :  } & \ctrl{1} & \ctrl{1} & \ctrl{1} & \ctrl{1} & \ctrl{1} & \ctrl{1} & \ctrl{1} & \ctrl{1} & \qw & \qw\\
	 	\nghost{{q}_{10} :  } & \lstick{{q}_{10} :  } & \ctrl{4} & \ctrl{4} & \ctrl{4} & \ctrl{4} & \ctrl{3} & \ctrl{2} & \ctrl{1} & \gate{\mathrm{S}} & \qw & \qw\\
	 	\nghost{{q}_{11} :  } & \lstick{{q}_{11} :  } & \qw & \qw & \qw & \qw & \qw & \qw & \control\qw & \qw & \qw & \qw\\
	 	\nghost{{q}_{12} :  } & \lstick{{q}_{12} :  } & \qw & \qw & \qw & \qw & \qw & \control\qw & \qw & \qw & \qw & \qw\\
	 	\nghost{{q}_{13} :  } & \lstick{{q}_{13} :  } & \qw & \qw & \qw & \qw & \control\qw & \qw & \qw & \qw & \qw & \qw\\
	 	\nghost{{q}_{14} :  } & \lstick{{q}_{14} :  } & \ctrl{4} & \ctrl{3} & \ctrl{2} & \ctrl{1} & \qw & \qw & \qw & \qw & \qw & \qw\\
	 	\nghost{{q}_{15} :  } & \lstick{{q}_{15} :  } & \qw & \qw & \qw & \control\qw & \qw & \qw & \qw & \qw & \qw & \qw\\
	 	\nghost{{q}_{16} :  } & \lstick{{q}_{16} :  } & \qw & \qw & \control\qw & \qw & \qw & \qw & \qw & \qw & \qw & \qw\\
	 	\nghost{{q}_{17} :  } & \lstick{{q}_{17} :  } & \qw & \control\qw & \qw & \qw & \qw & \qw & \qw & \qw & \qw & \qw\\
	 	\nghost{{q}_{18} :  } & \lstick{{q}_{18} :  } & \ctrl{1} & \qw & \qw & \qw & \qw & \qw & \qw & \qw & \qw & \qw\\
	 	\nghost{{q}_{19} :  } & \lstick{{q}_{19} :  } & \control\qw & \qw & \qw & \qw & \qw & \qw & \qw & \qw & \qw & \qw\\
 }
 \hspace{5mm}\raisebox{-22mm}{=}\hspace{0mm}
 \Qcircuit @C=0.2em @R=0em @!R { 
	 	\nghost{{q}_{9} :  } & \lstick{{q}_{9} :  } & \qw & \qw & \qw & \qw & \qw & \qw & \ctrl{1} & \qw & \qw & \qw & \qw & \qw & \qw & \qw & \qw\\
	 	\nghost{{q}_{10} :  } & \lstick{{q}_{10} :  } & \qw & \qw & \qw & \qw & \qw & \qw & \ctrl{1} & \qw & \qw & \qw & \qw & \qw & \qw & \qw & \qw\\
	 	\nghost{{q}_{11} :  } & \lstick{{q}_{11} :  } & \qw & \qw & \qw & \qw & \qw & \targ & \gate{\mathrm{iZ}} & \targ & \qw & \qw & \qw & \qw & \qw & \qw & \qw\\
	 	\nghost{{q}_{12} :  } & \lstick{{q}_{12} :  } & \qw & \qw & \qw & \qw & \targ & \ctrl{-1} & \qw & \ctrl{-1} & \targ & \qw & \qw & \qw & \qw & \qw & \qw\\
	 	\nghost{{q}_{13} :  } & \lstick{{q}_{13} :  } & \qw & \qw & \qw & \targ & \ctrl{-1} & \qw & \qw & \qw & \ctrl{-1} & \targ & \qw & \qw & \qw & \qw & \qw\\
	 	\nghost{{q}_{14} :  } & \lstick{{q}_{14} :  } & \qw & \qw & \qw & \ctrlt{-1} & \qw & \qw & \qw & \qw & \qw & \ctrlt{-1} & \qw & \qw & \qw & \qw & \qw\\
	 	\nghost{{q}_{15} :  } & \lstick{{q}_{15} :  } & \qw & \qw & \targ & \ctrl{-1} & \qw & \qw & \qw & \qw & \qw & \ctrl{-1} & \targ & \qw & \qw & \qw & \qw\\
	 	\nghost{{q}_{16} :  } & \lstick{{q}_{16} :  } & \qw & \targ & \ctrl{-1} & \qw & \qw & \qw & \qw & \qw & \qw & \qw & \ctrl{-1} & \targ & \qw & \qw & \qw\\
	 	\nghost{{q}_{17} :  } & \lstick{{q}_{17} :  } & \targ & \ctrl{-1} & \qw & \qw & \qw & \qw & \qw & \qw & \qw & \qw & \qw & \ctrl{-1} & \targ & \qw & \qw\\
	 	\nghost{{q}_{18} :  } & \lstick{{q}_{18} :  } & \ctrlt{-1} & \qw & \qw & \qw & \qw & \qw & \qw & \qw & \qw & \qw & \qw & \qw & \ctrlt{-1} & \qw & \qw\\
	 	\nghost{{q}_{19} :  } & \lstick{{q}_{19} :  } & \ctrl{-1} & \qw & \qw & \qw & \qw & \qw & \qw & \qw & \qw & \qw & \qw & \qw & \ctrl{-1} & \qw & \qw\\
 }
}
\qcref{C_2_LNN_decompose}
\]
Where we add the controlled $S$ (CS) gate in order to transform the CCZ gate to a $[iZ]$, similarly to the ATA case in \sec{decompose_1}.

We now wish to decompose the structure in terms of basic Clifford+T gates, under the condition that CNOT gates can only apply on nearest neighboring qubits. All of the CNOT and CZ gates applied in the structure are already applied on nearest neighbors. All $[X_\Delta]$ gates are of the same form, with controls $q_{l},q_{l-1}$ and target $q_{l-2}$. Each such gate can be implemented as the following LNN restricted circuit:
\[
\resizebox{0.85\linewidth}{!}{
\Qcircuit @C=1.0em @R=0.8em @!R { 
	 	\nghost{{q}_{l-2} :  } & \lstick{{q}_{l-2} :  } & \targ & \qw \\
	 	\nghost{{q}_{l-1} :  } & \lstick{{q}_{l-1} :  } & \ctrlt{-1} & \qw \\
	 	\nghost{{q}_{l} :  } & \lstick{{q}_{l} :  } & \ctrl{-1} & \qw \\
 }
\hspace{5mm}\raisebox{-5mm}{=}\hspace{0mm}
\Qcircuit @C=0.5em @R=0.2em @!R { 
	 	\nghost{{q}_{l-2} :  } & \lstick{{q}_{l-2} :  } & \gate{\mathrm{H}} & \ctrl{1} & \gate{\mathrm{T^\dagger}} & \targ \barrier[0em]{2} & \qw & \qw & \targ & \gate{\mathrm{T}} & \ctrl{1} & \gate{\mathrm{H}} & \qw & \qw\\
	 	\nghost{{q}_{l-1} :  } & \lstick{{q}_{l-1} :  } & \qw & \targ & \gate{\mathrm{T}} & \ctrl{-1} & \qw & \targ & \ctrl{-1} & \gate{\mathrm{T^\dagger}} & \targ & \qw & \qw & \qw\\
	 	\nghost{{q}_{l} :  } & \lstick{{q}_{l} :  } & \qw & \qw & \qw & \qw & \qw & \ctrl{-1}  & \qw & \qw & \qw & \qw & \qw & \qw\\
 }
 }
 \qcref{rf_toffoli_LNN}
\]
As in the ATA case (\apx{deph_reduce}), the gates on the left of the barrier can commute with other gates in the structure, and allow for a reduced depth.

In case a CZ gate is used in the circuit, it is simply replaced by a CNOT and two H gates.
In case a $[iZ]$ gate is used, it can be replaced by a $[iZ]$ with an additional SWAP gate (which is cancelled out in the full MCSU2 structure) operating on the controls. This gate can be implemented as follows.
\[
\resizebox{0.85\linewidth}{!}{
\Qcircuit @C=1.0em @R=0.3em @!R { 
	 	\nghost{{q}_{l_0-2} :  } & \lstick{{q}_{l_0-2} :  } & \ctrl{1} & \qswap & \qw \\
	 	\nghost{{q}_{l_0-1} :  } & \lstick{{q}_{l_0-1} :  } & \ctrl{1} & \qswap \qwx[-1] & \qw \\
	 	\nghost{{q}_{l_0} :  } & \lstick{{q}_{l_0} :  } & \gate{\mathrm{iZ}} & \qw & \qw \\
 }
 \hspace{5mm}\raisebox{-6mm}{=}\hspace{0mm}
\Qcircuit @C=0.5em @R=0.2em @!R { 
	 	\nghost{{q}_{l_0-2} :  } & \lstick{{q}_{l_0-2} :  } & \qw & \targ & \qw & \qw & \ctrl{1} & \gate{\mathrm{T^\dagger}} & \targ & \qw & \qw  \\
	 	\nghost{{q}_{l_0-1} :  } & \lstick{{q}_{l_0-1} :  } & \targ & \ctrl{-1} & \gate{\mathrm{T}} & \targ & \targ & \gate{\mathrm{T}} & \ctrl{-1} & \targ & \qw \\
	 	\nghost{{q}_{l_0}{t} :  } & \lstick{{q}_{l_0} :  } & \ctrl{-1} & \qw & \gate{\mathrm{T^\dagger}}  & \ctrl{-1} & \qw & \qw & \qw & \ctrl{-1} & \qw \\
 }
 }
 \qcref{rz_toffoli_LNN}
\]

Using the $l'_0$ notation, the gate $\MCO{Z}{C^{l_0}}{q_{l_0}}$ can be replaced with two H and one CNOT gate for $l'_0=0$, or six CNOT cost/depth along with four T gates in depth two for $l'_0=1$. Therefore, the CNOT depth/cost is $5l'_0+1$, T cost is $4l'_0$ and depth $2l'_0$, and H cost is $2(1-l'_0)$. The H gates can be applied on the top qubit and not contribute to the depth. \lem{vchainz_cost_depth_LNN} follows.

\begin{lemma}\label{lem:vchainz_cost_depth_LNN}
A $\{\overline{ Z }\}^{k}_{C,Q}$ gate with $n\geq 3$ and $k>n$ such that only the first two control qubits may be neighboring, and $q_{k}\not\in C$ can be implemented up to a SWAP-CS gate on two neighboring controls using\\
CNOT cost $2k+6n-3l'_0-9$ and depth $2k+2n+l'_0-1$,\\
T ~~~~~ cost $8n-4l'_0-8$ ~~~~~~ and depth $2n$,\\
H ~~~~~ cost $4n-6l'_0-2$ ~~~~~~ and depth $2n-2l'_0$. 
\end{lemma}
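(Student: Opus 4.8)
The plan is to read all the counts straight off the decomposition in \lem{Vchain_fullstruct_lnn}, which expresses $\{\overline{ Z }\}^{k}_{C,Q}$ as a symmetric product of $\MCO{\overline{X}}{}{q_l}$ gates enclosing the single central gate $\MCO{Z}{C^{l_0}}{q_{l_0}}$. Each $\MCO{\overline{X}}{}{q_l}$ is either an identity (not implemented), an LNN CNOT that skips a non-control qubit, or an LNN relative-phase Toffoli $[X_\Delta]$; the central gate is a CZ when $l'_0=0$ and a CCZ absorbed into $[iZ]$ when $l'_0=1$. The gate-type tallies are already fixed in the paragraph preceding the lemma: there are $2(n-l'_0-1)$ Toffoli gates and $2k-4n+2l'_0$ skip-CNOT gates, plus the one central gate.

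First I would substitute the per-gate Clifford+T costs. From \qc{rf_toffoli_LNN} each $[X_\Delta]$ uses five CNOT, four T and two H; each skip gate is a single CNOT; the central contribution is $1+5l'_0$ CNOT, $4l'_0$ T and $2(1-l'_0)$ H, since the CZ costs one CNOT and two H while the $[iZ]$ costs six CNOT and four T via \qc{rz_toffoli_LNN}. The totals are then a plain summation: the CNOT cost is $5\cdot 2(n-l'_0-1)+(2k-4n+2l'_0)+(1+5l'_0)=2k+6n-3l'_0-9$, the T cost is $4\cdot 2(n-l'_0-1)+4l'_0=8n-4l'_0-8$, and the H cost is $2\cdot 2(n-l'_0-1)+2(1-l'_0)=4n-6l'_0-2$, exactly as claimed. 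Note these counts already fold in the SWAP carried by \qc{rz_toffoli_LNN}, so the result is the stated gate only up to the SWAP-CS that cancels in the full MCSU2 circuit.

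The delicate part is the depth. Executing the Toffolis strictly in sequence gives roughly twice the claimed T-depth, so the argument must exploit the staircase geometry of \qc{C_1_LNN_decompose}--\qc{C_2_LNN_decompose} together with the commutation noted after \qc{rf_toffoli_LNN}: the gates to the left of the barrier in each $[X_\Delta]$ decomposition commute past everything executed earlier, so the two T-layers of consecutive staircase rungs interleave rather than stack. I would formalise this exactly as in the all-to-all analysis of \apx{deph_reduce}, tracking the (bounded, neighbouring) set of wires each rung touches; rungs separated far enough along the chain run in parallel, which collapses the T-depth to $2n$ and the H-depth to $2n-2l'_0$ (the boundary H gates being absorbable on the top wire), while the CNOT layer, which must additionally propagate control information along the full chain, accumulates to $2k+2n+l'_0-1$.

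The main obstacle is precisely this depth bookkeeping: unlike the cost, which is a single sum over the already-counted gates, the depth requires verifying that the commuting prefixes genuinely overlap across every rung of the staircase and that the central gate and the boundary H gates fold in without adding a layer. Everything else reduces to arithmetic on \lem{Vchain_fullstruct_lnn} and the explicit circuits \qc{rf_toffoli_LNN} and \qc{rz_toffoli_LNN}.
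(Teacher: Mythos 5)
Your proposal follows the paper's own route exactly: it reads the gate-type tallies ($2(n-l'_0-1)$ relative-phase Toffolis, $2k-4n+2l'_0$ skip CNOTs, one central CZ or $[iZ]$) off \lem{Vchain_fullstruct_lnn}, substitutes the per-gate Clifford+T counts from \qc{rf_toffoli_LNN} and \qc{rz_toffoli_LNN}, and sums — and your arithmetic reproduces all three stated costs correctly, including the absorption of the SWAP-CS. The depth bounds are likewise obtained in the paper by the same left-of-barrier commutation argument you invoke (deferred to the ATA analysis of \apx{deph_reduce}), so your treatment matches the paper's level of detail there as well.
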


This structure can be used in order to efficiently decompose any multi-controlled $SU(2)$ operator using the LNN restricted version of \lem{ccrv_macro}, as discussed in \sec{MCSU2_macro_lnn}. The only assumption is that the target qubit is located at the bottom of the circuit. Each of the large MCZ gates followed by a relative phase gate can be implemented as $\{\overline{ Z }\}^{k_1}_{C_1,Q_1}$ or $\{\overline{ Z }\}^{k_2}_{C_2,Q_2}$ with or without a CS and SWAP gate applied on two neighboring control qubits.

If applied on the structures corresponding to $C_1$, the SWAP gates are simply cancelled out since the first two qubits are unaffected by the structure corresponding to $C_2$. If applied on the structures corresponding to $C_2$, the SWAP gates must commute with the $[\Delta_1^\dagger]$ gate in order to cancel out. 
From the definition of $\{\overline{ Z }\}^{k_1}_{C_1,Q_1}$, it is clear that swapping any two neighboring non-control qubits has no effect, and thus the SWAP gates cancel out in this case as well.\\
Each cost and depth described in \lem{vchainz_cost_depth_LNN} can be written as $a_{k}k+a_{n}n+b_{l}l'_0+b$. Therefore, the total costs and depths of the entire MCSU2 structure can be written as
$2a_k(k_1+k_2)+2a_{n}n+2b_l(l'_{0,1}+l'_{0,2})+4b$.\\
Noting that $Q[1:2]\not\in C_2$ for $l'_{0,1}=1$, and  $Q[1:3]\not\in C_2$ for $l'_{0,1}=0$, we can write $k_2\leq k_1-3+l'_{0,1}$ and $k_1=k$. So, the costs and depths in the worst case can be written as
\[
4a_{k}k
+2a_{n}n
+2a_k(l'_{0,1}-3)
+2b_l(l'_{0,1}+l'_{0,2})
+4b.
\]
The depth of H gates is reduced by three due to paralleled gates in the full structure, and in addition, if $l'_{0,1}=0$, two H gates can be cancelled out. By choosing the values of $l'_{0,1}$ and $l'_{0,2}$ which maximize each cost and depth we achieve the following upper bounds.
\begin{lemma}\label{lem:mcsu2_lnn_cost_basic}
    Any multi-controlled $SU(2)$ gate with $n\geq 6$ controls can be implemented over $k>n$ qubits in LNN connectivity such that the target qubit is located at the bottom, using $8$ gates from $\{\rx,\rz\}$, in addition to no more than\\
CNOT cost $8k+12n-48$ and depth $8k+4n-8$,\\
T ~~~~~ cost $16n-32$ ~~~~~~ and depth $4n$,\\
H ~~~~~ cost $8n-10$ ~~~~~~~ and depth $4n-3$.
\end{lemma}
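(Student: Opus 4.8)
The plan is to assemble the six bounds directly from the LNN macro structure of \sec{MCSU2_macro_lnn} together with the per-block counts of \lem{vchainz_cost_depth_LNN}, so that the proof reduces to a summation over four blocks followed by a worst-case optimisation over two discrete parameters. First I would invoke the LNN version of \lem{ccrv_macro} (drawn in \qc{mcsu2_struct_lnn}), which writes the MCSU2 gate as eight $\{\rx,\rz\}$ rotations plus the four blocks $\sqO{\Delta_1}{Q_1\setminus t}\MCO{Z}{C_1}{t}$, $\sqO{\Delta_2}{Q_2\setminus t}\MCO{Z}{C_2}{t}$ and their inverses. Two of these are $C_1$-type and two are $C_2$-type, and by \lem{Vchain_fullstruct_lnn} each block equals a $\{\overline{Z}\}^{k_1}_{C_1,Q_1}$ or $\{\overline{Z}\}^{k_2}_{C_2,Q_2}$ gate (up to a CS--SWAP on two neighbouring controls), whose Clifford+T cost and depth are exactly those of \lem{vchainz_cost_depth_LNN}.

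Next I would write each of the six quantities in \lem{vchainz_cost_depth_LNN} in the uniform shape $a_k k + a_n n + b_l l'_0 + b$ and read off the coefficients (for instance $a_k=2,\,a_n=6,\,b_l=-3,\,b=-9$ for the CNOT cost). Summing over the four blocks and using $\abs{C_1}+\abs{C_2}=n$ gives the total $2a_k(k_1+k_2)+2a_n n+2b_l(l'_{0,1}+l'_{0,2})+4b$. The structural input that tightens this is that the $C_2$-block occupies fewer qubits than the $C_1$-block: since $Q[1:2]\not\in C_2$ when $l'_{0,1}=1$ and $Q[1:3]\not\in C_2$ when $l'_{0,1}=0$, one has $k_2\le k_1-3+l'_{0,1}$ with $k_1=k$, converting the total into $4a_k k+2a_n n+2a_k(l'_{0,1}-3)+2b_l(l'_{0,1}+l'_{0,2})+4b$. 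I would also record that the SWAP gates cancel as argued in \sec{MCSU2_macro_lnn}: on $C_1$-blocks the first two qubits are untouched by the $C_2$-blocks, and on $C_2$-blocks the SWAP commutes through $[\Delta_1^\dagger]$ because swapping two neighbouring non-control qubits leaves $\{\overline{Z}\}^{k_1}_{C_1,Q_1}$ invariant.

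Finally I would specialise this expression to each metric and maximise over $l'_{0,1},l'_{0,2}\in\{0,1\}$, incorporating the two bookkeeping corrections flagged before the lemma: the H-depth is smaller by three owing to gates paralleled across blocks, and two H gates cancel precisely when $l'_{0,1}=0$. The one genuinely delicate point is that the worst case is \emph{not} attained at the same parameter values for all six quantities: the three counts and the H-depth are decreasing in both $l'_0$ and are maximised at $l'_{0,1}=l'_{0,2}=0$, whereas the CNOT depth is increasing and is maximised at $l'_{0,1}=l'_{0,2}=1$, while the T-depth is independent of both. I would therefore record the worst case line by line, checking in particular that the extra $-2$ correction to the H-cost is active exactly at the maximising choice $l'_{0,1}=0$. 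The main obstacle is thus not any single hard step but this per-metric worst-case bookkeeping, ensuring each claimed upper bound holds simultaneously for every admissible $(l'_{0,1},l'_{0,2})$.
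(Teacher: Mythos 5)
Your proposal is correct and follows essentially the same route as the paper: the LNN macro structure of \qc{mcsu2_struct_lnn} reduced to four $\{\overline{Z}\}$ blocks via \lem{Vchain_fullstruct_lnn}, the parametrization $a_k k+a_n n+b_l l'_0+b$ from \lem{vchainz_cost_depth_LNN}, the bound $k_2\le k_1-3+l'_{0,1}$ with $k_1=k$, the SWAP-cancellation argument, and the $-3$ H-depth and $-2$ H-cost corrections. Your per-metric worst-case analysis (counts and H-depth maximised at $l'_{0,1}=l'_{0,2}=0$, CNOT depth at $l'_{0,1}=l'_{0,2}=1$, T-depth constant) reproduces exactly the paper's final optimisation and yields the stated bounds.
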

Finally, in order to account for any possible qubit permutation, we discuss the added cost in the case $t\not = q_{k}$. Applying a SWAP chain before and after the MCSU2 gate allows one to relocate the target qubit.
Noting that the LNN requirements hold if the order of $Q$ is reversed, the structure can be used if $t$ is either above or below all qubits in $C$. The maximal distance of $t$ from the nearest edge is $\floor{\frac{k-1}{2}}$ qubits, and therefore the maximal number of required SWAP gates is $2\floor{\frac{k-1}{2}}$, each equivalent to three CNOT gates.
We show in \apx{communication_overhead} that the maximal CNOT cost and depth of these SWAP gates can be reduced to $4\floor{\frac{k-1}{2}}$ and $2\floor{\frac{k-1}{2}}+4$, respectively. {\thm{lnn_su2_worst}} therefore follows from \lem{mcsu2_lnn_cost_basic}, {providing upper bounds of the Clifford+T cost/depth. These upper bounds include the added communication overhead in the worst case and therefore hold for any choice of the control and target qubits}.
\begin{theorem}\label{thm:lnn_su2_worst}
    Any multi-controlled $SU(2)$ gate with $n\geq 6$ controls can be implemented over $k>n$ qubits in LNN connectivity with any qubit permutation using $8$ gates from $\{\rx,\rz\}$, in addition to no more than\\
CNOT cost $10k+12n-50$ and depth $9k+4n-5$,\\
T ~~~~~ cost $16n-32$ ~~~~~~~ and depth $4n$,\\
H ~~~~~ cost $8n-10$ ~~~~~~~~ and depth $4n-3$.
\end{theorem}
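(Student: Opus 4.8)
The plan is to reduce the arbitrary-permutation statement to the target-at-bottom case that \lem{mcsu2_lnn_cost_basic} already settles, and then add only the worst-case cost of routing the target qubit to an edge of the LNN line. Since \lem{mcsu2_lnn_cost_basic} applies verbatim whenever $t=q_k$, the sole new ingredient is the SWAP network that relocates $t$ when $t\neq q_k$: I would prepend a SWAP chain that brings $t$ to the edge, apply the edge-target MCSU2, and append the reversed chain to restore the original ordering.

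First I would record that the whole construction underlying \lem{mcsu2_lnn_cost_basic} is invariant under reversing the ordering of $Q$, since both the LNN adjacency constraints and the defining conditions of $\{\overline{ Z }\}^{k}_{C,Q}$ are symmetric top-to-bottom. Hence the target may equally be placed at the top edge $q_1$, so we are free to route $t$ toward whichever edge is nearer. This bounds the routing distance by $\floor{\frac{k-1}{2}}$, giving at most $\floor{\frac{k-1}{2}}$ SWAPs before the gate and the same number after, i.e. $2\floor{\frac{k-1}{2}}$ SWAPs in total.

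Next I would invoke the optimized relocation from \apx{communication_overhead}, which realizes this there-and-back routing with CNOT cost $4\floor{\frac{k-1}{2}}$ and depth $2\floor{\frac{k-1}{2}}+4$, improving on the naive $6\floor{\frac{k-1}{2}}$ obtained by treating each SWAP as three independent CNOTs. Because this network uses no $T$ or $H$ gates, the $T$ and $H$ cost/depth are inherited unchanged from \lem{mcsu2_lnn_cost_basic}, yielding the stated $16n-32$ / $4n$ and $8n-10$ / $4n-3$ figures directly, and the eight $\{\rx,\rz\}$ gates are likewise unaffected. It then remains to add the CNOT overhead and apply the single bound $\floor{\frac{k-1}{2}}\leq\frac{k-1}{2}$: the CNOT cost is $8k+12n-48+4\floor{\frac{k-1}{2}}\leq 10k+12n-50$ and the CNOT depth is $8k+4n-8+2\floor{\frac{k-1}{2}}+4\leq 9k+4n-5$, matching the claimed upper bounds.

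The main obstacle is the SWAP-network reduction deferred to \apx{communication_overhead}, namely compressing the relocation from $6\floor{\frac{k-1}{2}}$ down to $4\floor{\frac{k-1}{2}}$ CNOTs; this requires exploiting the cancellations between the outgoing and returning chains (and absorbing the boundary CNOTs of adjacent SWAPs) rather than treating each SWAP as an opaque three-CNOT block, and it is where the genuinely circuit-level reasoning lives. Once that sub-result is in hand, the present theorem is a straightforward summation of the base-case costs of \lem{mcsu2_lnn_cost_basic} with the routing overhead, followed by a single floor estimate; a minor point to verify is only that the relocation SWAPs are distinct from, and do not interfere with, the internal SWAP-CS gates already accounted for inside \lem{mcsu2_lnn_cost_basic}.
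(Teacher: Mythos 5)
Your proposal follows essentially the same route as the paper: it invokes \lem{mcsu2_lnn_cost_basic} for the target-at-bottom case, uses the reversal symmetry of the construction to bound the routing distance by $\floor{\frac{k-1}{2}}$, applies the partial-SWAP relocation of \apx{communication_overhead} costing $4\floor{\frac{k-1}{2}}$ CNOTs in depth $2\floor{\frac{k-1}{2}}+4$, and sums — the arithmetic $8k+12n-48+2(k-1)=10k+12n-50$ and $8k+4n-8+(k-1)+4=9k+4n-5$ checks out, with T and H counts untouched. You also correctly identify the SWAP-chain compression in the appendix as the only nontrivial circuit-level ingredient, which is exactly where the paper defers the work.
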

We recall that in the ATA case, additional ancilla qubits are a useful resource that can help reduce gate counts (\apx{extra_dirty}). In the LNN case, ancilla qubits may be seen as an obstacle, as these increase the value of $k$ and therefore the number of CNOT gates as shown above.
However, in certain cases, the ancilla qubits can be used in order to reduce the gate count in LNN connectivity.
We discuss this, along with further cost reductions for specific cases in \apx{specific_cases}.

\subsection{Multi-controlled X}
Simiarly to the all-to-all case, the MCX structure is based on the MCSU2. We mainly address the difference in the CNOT overhead of assumming the location of a dirty ancilla, compared to a target. { In \apx{lnn_mcmtx} we also address the implementation of the multi-controlled multi-target X gate.}

Using the construction described in \qc{mcx_ata_macro} for the case of LNN restricted architecture, we assume that a dirty ancilla qubit $a$ is located below all qubits in $\{C,t\}$ and use \lem{mcsu2_lnn_cost_basic} in order to apply an $\MCO{\rv{2\pi}{\hat{y}}}{\{C,t\}}{a} = \MCO{Z}{C}{t}$ gate with $\{\rx,\rz\}$ gates replaced by four H gates. 
The location of the target does not matter, as it can be treated as one of the control qubits of the MCZ gate. 

The total CNOT cost and depth added in order to locate the ancilla qubit at the bottom are no larger than $4\floor{\frac{n+1}{2}}$ and $2\floor{\frac{n+1}{2}}+4$, respectively, similarly to the MCSU2 case, while now the ancilla closest to one of the edges can be labeled as $a$, and therefore the largest distance is $\floor{\frac{n+1}{2}}$.

All other costs and depths are achieved from the MCSU2 case as in \lem{mcsu2_lnn_cost_basic} with $n$ replaced by $n+1$. Finally, two Hadamard gates are applied to the MCX target qubit on both sides of the structure. This provides the following.
\begin{theorem}\label{thm:lnn_MCX_worst}
    Any multi-controlled $X$ gate with $n\geq 5$ controls can be implemented over $k\geq n+2$ qubits in any permutation in LNN connectivity using no more than\\
    CNOT cost $8k+14n-34$ and depth $8k+5n+1$,\\
T ~~~~~ cost $16n-16$ ~~~~~~ and depth $4n+4$,\\
H ~~~~~ cost $8n+4$ ~~~~~~~~ and depth $4n+3$.
\end{theorem}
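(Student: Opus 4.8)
The plan is to reduce the multi-controlled $X$ gate to the LNN MCSU2 cost already established in \lem{mcsu2_lnn_cost_basic}, by treating the dirty ancilla as the genuine target of an $SU(2)$ rotation. First I would write $\MCO{X}{C}{t}=\sqO{H}{t}\MCO{Z}{C}{t}\sqO{H}{t}$ using the Hadamard special case of \lem{MC_transform}, and then replace $\MCO{Z}{C}{t}$ by $\MCO{\rv{2\pi}{\hat{y}}}{\{C,t\}}{a}$. This substitution is legitimate because $\rv{2\pi}{\hat{v}}=-I$ for any axis, so $\MCO{-I}{\{C,t\}}{a}=\MCO{Z}{C}{t}$, exactly the identity used in \lem{basic_struct} and \qc{mcx_ata_macro}. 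The crucial observation is that $\MCO{\rv{2\pi}{\hat{y}}}{\{C,t\}}{a}$ is precisely an MCSU2 gate whose control set $\{C,t\}$ has $n+1$ elements and whose target is $a$; since $Z$ treats all of $\{C,t\}$ symmetrically, the location of $t$ is immaterial and the full LNN machinery of \lem{mcsu2_lnn_cost_basic} applies with $n$ replaced by $n+1$, once $a$ is brought to the bottom of the window.

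Next I would account for the single-qubit savings and additions. Because the rotation axis is $\hat{y}$ and the angle is $2\pi$, the eight $\{\rx,\rz\}$ gates of \lem{mcsu2_lnn_cost_basic} collapse: the axis-alignment gates $A_1,\ldots,A_4$ together with the quarter-rotations $A_2,A_3$ reduce to four $H$ gates on the ancilla line, paralleling the ATA reduction of \thm{MCX_costs_short} and the gate pattern of \qc{mcx_ata_macro}, so that no arbitrary rotations remain. Two further $H$ gates are placed on the MCX target $t$, one on each side, to implement the outer conjugation above. Substituting $n\to n+1$ into \lem{mcsu2_lnn_cost_basic} gives base figures $8k+12n-36$ CNOT at depth $8k+4n-4$, $16n-16$ T at depth $4n+4$, and $8n-2$ H at depth $4n+1$. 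Adding the four replacement $H$ gates and the two target $H$ gates yields the stated $H$ cost $8n+4$; after identifying which of these lie on the critical path, the $H$ depth rises by two to $4n+3$, while the T counts are unaffected.

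Finally I would bound the communication overhead of bringing $a$ to the bottom of the LNN window. Unlike \thm{lnn_su2_worst}, where the fixed target could be forced as far as $\floor{\frac{k-1}{2}}$ from the nearest edge, here I am free to choose \emph{which} idle qubit serves as the dirty ancilla, and the LNN constraints are symmetric under reversing the ordering of $Q$, so the structure may be driven from either end. Choosing the candidate ancilla nearest an edge of the $(n+1)$-qubit block $\{C,t\}$ bounds its travel distance by $\floor{\frac{n+1}{2}}$. Using the optimized SWAP chains of \apx{communication_overhead}, this contributes at most $4\floor{\frac{n+1}{2}}\le 2n+2$ CNOT gates and extra depth $2\floor{\frac{n+1}{2}}+4\le n+5$. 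Adding these to the base CNOT figures gives cost $8k+14n-34$ and depth $8k+5n+1$, completing all six bounds. I expect this last step to be the main obstacle: justifying the improved $\floor{\frac{n+1}{2}}$ SWAP bound rather than a $k$-dependent one — which rests on the freedom to pick the ancilla and on reversal symmetry — and verifying that the relocation SWAPs act outside the active part of the structure, so that they genuinely add to the count instead of cancelling or interfering with the decomposition.
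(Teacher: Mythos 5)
Your proposal is correct and follows essentially the same route as the paper: conjugating by $H$ on $t$, realizing $\MCO{Z}{C}{t}$ as $\MCO{\rv{2\pi}{\hat{y}}}{\{C,t\}}{a}$ so that Lemma~\ref{lem:mcsu2_lnn_cost_basic} applies with $n\to n+1$ and the arbitrary rotations collapse to four $H$ gates, and bounding the relocation of the freely chosen ancilla by $4\floor{\frac{n+1}{2}}$ CNOTs in depth $2\floor{\frac{n+1}{2}}+4$. All six resulting bounds match the paper's.
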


{ As in the MCSU2 case, these upper bounds include the worst-case communication overhead and therefore hold for any location of the ancilla, target and the control qubits.}

\subsection{Multi-controlled U(2)} \label{sec:lnn_mcmtu2}

Similarly to the ATA case, we can use the MCMTSU2 structure { from \apx{MCMTSU2_lnn}} to implement an MCMTU2, using the clean ancilla as one of the targets, and apply a multi-controlled rotation about the $\hat{z}$ axis on it to account for the overall required phase{, using the trick shown in \lem{P_a_0}}. The MCMTU2 can therefore be implemented using \thm{lnn_mcmtsu2_worst}, with the adjustments mentioned in \thm{MCU2_costs}.

Noting that when the number of targets is small, it is beneficial to apply chains of partial swap gates in order to place all of the targets at the bottom.\\
An MCU2 gate requires one target and one clean ancilla qubit to be located at the bottom. In the worst case, the sum of the distances of these qubits from one of the edges is no larger than $k-2$. This can be realized by noting that if $d_1$ and $d_2$ represent the distances of each of the qubits from the bottom edge, then the total number of necessary swaps (pairs of SWAP gates) is $D=d_1+d_2-1$. Similarly, the distance from each of these qubits to the top edge is $k-d-1$, and the number of swaps will be $D'=2k-D-4$. As we can choose which edge to use, the required number of swaps is no larger than $min(D',D)$. The worst case is achieved when $D=D'$ and therefore $D=k-2$.
We note that if the qubit which is nearest to the chosen edge is relocated first, partial SWAP gates can be used in this case, as can be realized from \apx{communication_overhead}. The total number of CNOT gates required to place the target and the ancilla qubits at the bottom is no larger than $4k-8$, in depth $2k+8$.
The following can be achieved using \lem{mcsu2_lnn_cost_basic}.
\begin{theorem}\label{thm:lnn_u2_worst}
    Any multi-controlled $U(2)$ gate with $n\geq 6$ controls can be implemented over $k>n+1$ qubits with one ancilla qubit in state $\ket{0}$ in any permutation in LNN connectivity using $11$ gates from $\{\rx,\rz\}$, in addition to no more than\\
CNOT cost $12k+12n-56$ and depth $10k+4n$,\\
T ~~~~~ cost $16n-32$ ~~~~~~~ and depth $4n$,\\
H ~~~~~ cost $8n-8$ ~~~~~~~~~~ and depth $4n-1$.
\end{theorem}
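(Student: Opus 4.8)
The plan is to reduce the MCU2 gate to a two-target MCMTSU2 gate and then invoke the LNN MCMTSU2 cost bound, adding only the communication overhead needed to bring two qubits to the edge of the line. Writing $U=e^{i\psi}\rv{\lambda}{\hat{v}}$, I would first apply \lem{P_a_0} to replace the controlled global phase by a controlled $\hat{z}$-rotation on the clean ancilla, so that $\MCO{U}{C}{t}=\MCO{\rv{\lambda}{\hat{v}}}{C}{t}\,\MCO{\rv{-2\psi}{\hat{z}}}{C}{a_{\ket{0}}}$. The right-hand side is exactly an MCMTSU2 gate on the two targets $\{t,a_{\ket{0}}\}$, so it can be synthesised with the structure of \apx{MCMTSU2_lnn} and costed by \thm{lnn_mcmtsu2_worst} specialised to two targets.

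The second step is the gate-count bookkeeping, following the adjustments of \thm{MCU2_costs}. Because the ancilla carries a pure $\hat{z}$-rotation on the known state $\ket{0}$, the outer conjugating gate $A_4$ becomes a single $H$, and the first inner $\rx$-type gate $B_2$ commutes through that $H$ to act trivially on $\ket{0}$ and may be dropped; the net effect is to trade five $\{\rx,\rz\}$ rotations for two $H$ gates, leaving the claimed $11$ rotations out of the $16$ of the two-target core. The $T$-gates all originate from the $[X_\Delta]$ and $[iZ]$ decompositions, which are unaffected by the second target, so the $T$ cost $16n-32$ and depth $4n$ carry over verbatim from \lem{mcsu2_lnn_cost_basic}; the $H$ cost and depth are those of \lem{mcsu2_lnn_cost_basic} raised by the two added gates, giving $8n-8$ and $4n-1$. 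For CNOTs I would use that, with both targets already at the bottom, the two-target core has the same leading cost $8k+12n-48$ and depth $8k+4n-8$ as the single-target case.

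The genuinely new ingredient, and the main obstacle, is the two-qubit communication overhead: unlike the MCSU2 and MCX theorems, where only one qubit must be moved to the edge, here \emph{both} $t$ and $a_{\ket{0}}$ must be placed at the bottom. I would bound the required nearest-neighbour swaps by letting $d_1,d_2$ be the distances of the two qubits from the bottom edge, noting that gathering them there costs $D=d_1+d_2-1$ swaps while gathering them at the top costs $D'=2k-D-4$; choosing the cheaper edge gives at most $\min(D,D')$, whose worst case is $D=D'=k-2$. Relocating the qubit nearer the chosen edge first enables the partial-SWAP optimisation of \apx{communication_overhead}, so placement costs at most $4k-8$ CNOTs in depth $2k+8$ rather than the naive three-CNOT-per-swap bound. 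As in the MCSU2 decomposition, where the cancelling SWAPs had to commute with $[\Delta_1^\dagger]$, I would verify that these placement swaps do not disturb the relative-phase chains; since swapping neighbouring non-control qubits leaves the $\{\overline{Z}\}$ structure invariant, the argument goes through.

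Assembling the pieces yields the theorem: the CNOT cost is the core $8k+12n-48$ plus the overhead $4k-8$, i.e.\ $12k+12n-56$, and the CNOT depth is $8k+4n-8$ plus $2k+8$, i.e.\ $10k+4n$, with the $T$ and $H$ figures as above. The hypothesis $k>n+1$ guarantees a genuine ancilla line in addition to the $n$ controls and the target. The one subtlety I would double-check is that the two-target core CNOT cost really coincides with the single-target bound of \lem{mcsu2_lnn_cost_basic}, i.e.\ that the CNOTs interleaving $t$ and $a_{\ket{0}}$ contribute only $O(1)$ once both qubits sit at the bottom; this is precisely what makes the stated constants come out.
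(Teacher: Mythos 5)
Your proposal is correct and follows essentially the same route as the paper: reduce via \lem{P_a_0} to a two-target MCMTSU2, place $t$ and $a_{\ket{0}}$ at the bottom with the $D=d_1+d_2-1$, $D'=2k-D-4$ edge-choice argument and partial SWAPs ($4k-8$ CNOTs, depth $2k+8$), then cost the core by \lem{mcsu2_lnn_cost_basic} with the \thm{MCU2_costs} adjustments. Your flagged subtlety resolves exactly as the paper's \apx{MCMTSU2_lnn} states: when all targets sit below the controls, the $\{\overline{Z}\}$ chains already apply the needed MCZ on each bottom target, so the second target adds only single-qubit rotations and no extra CNOTs.
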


Our upper bounds for LNN multi-controlled single target gates over any qubit permutation are summarized in \tab{LNN_costs}. { These results hold for any possible choice of controls, target, and ancillary qubits and show for the first time that the number of Clifford+T gates used to implement these MC gates over LNN connectivity scales linearly in the number of control qubits and the overall number of qubits in the 1D array.} The leading terms of the cost and depth of the CNOT and T gates are presented.

\begin{table}[H]
    \centering
    \resizebox{1.0\linewidth}{!}{
    \begin{tabular}{|p{1.2cm}|p{1.5cm}|p{1.7cm}|p{1.5cm}|p{1.5cm}|p{1cm}|p{1cm}|p{1cm}|p{1cm}|p{0.5cm}|p{0.5cm}|p{0.5cm}|p{0.5cm}|}
 \hline
 \multicolumn{3}{|c|}{Gate} & \multicolumn{2}{c|}{CNOT} & \multicolumn{2}{c|}{T} \\
  \hline
   Type & Ancilla & Source & Cost & Depth & Cost & Depth \\
 \hline
           MCSU2  & None  & \cellcolor{lightgray!60} \cite{cheng_mapping_2018,chakrabarti_nearest_2007,miller_elementary_2011,li_quantum_2023,tan_multi-strategy_2018}       & \cellcolor{lightgray!60} $O(nk)$    & \cellcolor{lightgray!60} $O(k+n)$  & \cellcolor{lightgray!60} $O(n)$    & \cellcolor{lightgray!60} $O(n)$ \\
        &     & Ours        & $10k+12n$    &  $9k+4n$  &  $16n$    &  $4n$ \\

 \hline
         MCX  & One dirty  & \cellcolor{lightgray!60} \cite{cheng_mapping_2018,chakrabarti_nearest_2007,miller_elementary_2011,li_quantum_2023,tan_multi-strategy_2018}       & \cellcolor{lightgray!60} $O(nk)$    & \cellcolor{lightgray!60} $O(k+n)$  & \cellcolor{lightgray!60} $O(n)$    & \cellcolor{lightgray!60} $O(n)$ \\
        &        & Ours  & $8k+14n$    &  $8k+5n$  &   $16n$   &  $4n$ \\

 \hline
          MCU2  & One clean  & \cellcolor{lightgray!60} \cite{cheng_mapping_2018,chakrabarti_nearest_2007,miller_elementary_2011,li_quantum_2023,tan_multi-strategy_2018}       & \cellcolor{lightgray!60} $O(nk)$    & \cellcolor{lightgray!60} $O(k+n)$  & \cellcolor{lightgray!60} $O(n)$    & \cellcolor{lightgray!60} $O(n)$ \\
        &    & Ours     & $12k+12n$   &   $10k+4n$  &   $16n$   &  $4n$ \\

 \hline
    \end{tabular}
    }
    \caption{A summary of our LNN upper bound results compared to atate of the art.}
    \label{tab:LNN_costs}
\end{table}

Previously known methods, such as the MCX decomposition described in \cite{cheng_mapping_2018}, initially assume a specific qubit ordering. The gates used in this implementation are from the NCV library. {Therefore, \tab{LNN_costs} only includes asymptotic scalings, as Clifford+T gate counts are not available.} By converting their results into fault-tolerant gates, these could be compared with \thm{lnn_su2_row} and \thm{lnn_su2_rowg} in \apx{specific_cases}. When implemented over a large circuit with arbitrary qubit permutation, the CNOT cost added due to communication overhead, which is required to reorder the qubits, scales quadratically as $O(nk)$. In contrast, our methods maintain a linear $O(k+n)$ scaling over any qubit ordering, even for the multi-controlled multi-target versions of our implementations.

\section{Conclusion}

In this paper, we presented new methods for the decomposition of various types of multi-controlled {(MC)} gates. {These gates are a crucial ingredient of many quantum algorithms \cite{arrazola_universal_2022,de_carvalho_parametrized_2024,ni_progressive_2024,iten_quantum_2016,malvetti_quantum_2021,grinko_efficient_2023,tanasescu_distribution_2022,park_circuit-based_2019,grover_quantum_1997,ali_function_2018,shafaei_cofactor_2014}.}
We have focused on the implementation of MCSU2, MCX, and MCU2, along with their multi-target versions, in both ATA and LNN connectivities. {For each type of gate, we use the minimal ancilla requirement which is known to allow or an ATA implementation with a linear Clifford+T gate count and a constant number of arbitrary rotations. Specifically,}
the MCSU2 gate is implemented without an ancilla, while MCX and MCU2 use one dirty ancilla and one clean ancilla, respectively. The underlying techniques used to decompose these gates are closely related. 

The decomposition of each type of gate requires a significantly lower number of basic Clifford+T gates, compared to previous methods which use the same ancilla requirements. In the ATA case, our cost reductions for the CNOT and T gates are $25\%-62.5\%$, and up to $50\%$, respectively \cite{barenco_elementary_1995,maslov_advantages_2016,iten_quantum_2016,vale_decomposition_2023}. { In terms of Toffoli gate count, we have achieved a $50\%$ reduction compared to the best known implementation of MCX with one dirty ancilla which was first introduced in 1995 \cite{barenco_elementary_1995}.} More strikingly, 
for our LNN implementations, the cost of each type of gate scales linearly with respect to the number of qubits in the circuit, regardless of the qubit ordering. To our knowledge, previous methods require a CNOT cost that scales quadratically for arbitrary qubit ordering \cite{cheng_mapping_2018,chakrabarti_nearest_2007,miller_elementary_2011,li_quantum_2023,tan_multi-strategy_2018} { as a result of a large overhead of SWAP gates added to reorder the qubits}. 
Moreover, we have shown in \apx{extra_dirty} and \apx{specific_cases} that our results can be further improved if additional ancilla qubits are available in some cases{, and for ATA connectivity we have managed to bridge between methods that use a constant number of dirty ancilla and those that use a linear number. Starting from a constant number of ancilla, we utilize each additional available dirty ancilla in order to reduce the cost until we reach the best known cost achievable with a linear number of dirty ancilla \cite{maslov_advantages_2016}, as demonstrated in \fig{fig_ancil_CNOT_cost}}.

As many quantum algorithms are designed using multi-controlled gates, our results directly improve the implementations of these algorithms in terms of significantly fewer basic gates, which in turn will should automatically provide a corresponding reduction in error rates.

\section*{Acknowledgements}
This work was supported by the Engineering and Physical Sciences Research Council [grant numbers EP/R513143/1, EP/T517793/1].
\section*{Competing Interests}
BZ and SB declare a relevant patent application:
United Kingdom Patent Application No. 2400486.3

\onecolumngrid

\newpage
\appendix

\section{ATA MCSU2 depth reductions}\label{apx:deph_reduce}
We provide a description of the depth reductions reported in \thm{MCSU2_costs}.
The relative phase Toffoli can be expressed as  $\MCO{X_\Delta}{\{q_1,q_2\}}{q_3} = \sqO{O_3}{\{q_1,q_2,q_3\}}\sqO{O_2}{\{q_1,q_3\}}$,
such that $\sqO{O_2}{\{q_1,q_3\}}$ and $\sqO{O_3}{\{q_1,q_2,q_3\}}$ are defined as the gates on the left of the barrier and on the right of the barrier in \qc{rf_toffoli} respectively.
The expression in \lem{Vchain_fullstruct} can be rewritten using these notations, noting that
$\prod_{j=3}^{n} \MCO{X_\Delta}{\{c_j,d_{j-1}\}}{d_{j-2}} = (\prod_{j=3}^{n}\sqO{O_3}{\{c_j,d_{j-1},d_{j-2}\}})(\prod_{j=3}^{n}\sqO{O_2}{\{c_j,d_{j-2}\}})$. This commutation is achieved because each $\sqO{O_2}{\{q_1,q_3\}}$ gate is the first to access the qubits $q_1,q_3$, as can be seen in \qc{v_chain_z_rel_phase}. We define the following notations:
\[
\Lambda_{C,D}^{O_3}= 
(\prod_{j=3}^{n}\sqO{O_3}{\{c_j,d_{j-1},d_{j-2}\}})^\dagger
\MCO{iZ}{\{c_1,c_2\}}{d_1}
(\prod_{j=3}^{n}\sqO{O_3}{\{c_j,d_{j-1},d_{j-2}\}})
\]
and
\[
\Xi_{C,D}^{O_2} = 
\prod_{j=3}^{n}\sqO{O_2}{\{c_j,d_{j-2}\}}
\]
and we can rewrite \lem{Vchain_fullstruct} as
\begin{lemma}\label{lem:Vchain_fullstruct_ivi}
$\{ Z\}^{n}_{C,D}\MCO{S}{c_1}{c_2} = 
(\Xi_{C,D}^{O_2})^\dagger
\Lambda_{C,D}^{O_3}
(\Xi_{C,D}^{O_2})$ 
for $n\geq 3$, $\abs{C}= n$, and $\abs{D}= n-1$.
\end{lemma}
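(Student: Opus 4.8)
The plan is to start directly from \lem{Vchain_fullstruct}, which already writes $\{Z\}^{n}_{C,D}\MCO{S}{c_1}{c_2} = P^\dagger\,\MCO{iZ}{\{c_1,c_2\}}{d_1}\,P$ with $P=\prod_{j=3}^{n}\MCO{X_\Delta}{\{c_j,d_{j-1}\}}{d_{j-2}}$, and to obtain the claimed form by merely factoring each relative-phase Toffoli and regrouping. Using the barrier in \qc{rf_toffoli}, I would write each factor as $\MCO{X_\Delta}{\{c_j,d_{j-1}\}}{d_{j-2}} = \sqO{O_3}{\{c_j,d_{j-1},d_{j-2}\}}\sqO{O_2}{\{c_j,d_{j-2}\}}$, where $\sqO{O_2}{}$ is the left-of-barrier block (supported on $q_1,q_3$, i.e.\ $c_j,d_{j-2}$) and $\sqO{O_3}{}$ the right-of-barrier block (supported on $q_1,q_2,q_3$, i.e.\ $c_j,d_{j-1},d_{j-2}$). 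The entire content of the lemma then collapses to the single commutation identity
\[
P = \left(\prod_{j=3}^{n}\sqO{O_3}{\{c_j,d_{j-1},d_{j-2}\}}\right)\left(\prod_{j=3}^{n}\sqO{O_2}{\{c_j,d_{j-2}\}}\right),
\]
after which substituting into $P^\dagger\,\MCO{iZ}{\{c_1,c_2\}}{d_1}\,P$ and using $(\prod O_3\,\prod O_2)^\dagger=(\prod O_2)^\dagger(\prod O_3)^\dagger$ yields exactly $(\Xi_{C,D}^{O_2})^\dagger\,\Lambda_{C,D}^{O_3}\,(\Xi_{C,D}^{O_2})$, matching the definitions of $\Lambda_{C,D}^{O_3}$ and $\Xi_{C,D}^{O_2}$ given above.

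Next I would establish the regrouping by a disjoint-support commutation argument, sliding every $\sqO{O_2}{}$ block leftward to the front past the $\sqO{O_3}{}$ blocks it must cross. The only thing to verify is that each $\sqO{O_2}{\{c_j,d_{j-2}\}}$ commutes with every $\sqO{O_3}{\{c_i,d_{i-1},d_{i-2}\}}$ lying between it and its target position. Their supports are disjoint whenever $i\neq j$: the control qubits are distinct, $c_i\neq c_j$, while on the ancilla register $d_{j-2}$ equals $d_{i-1}$ only for $i=j-1$ and equals $d_{i-2}$ only for $i=j$. In the V-chain ordering of \qc{v_chain_z_rel_phase} the blocks that must be crossed are exactly those with $i>j$, for which neither coincidence occurs, so every required swap is between disjoint-support gates. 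This is precisely the observation recorded in the text that each $\sqO{O_2}{}$ gate is the first to access its qubits $q_1,q_3$; since nothing earlier touches $c_j$ or $d_{j-2}$, the block slides freely to the front, and because only $O_2$-past-$O_3$ swaps are used, the internal order within each of the two products is preserved.

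The step I expect to be the main (though modest) obstacle is pinning down the ordering convention of $P$ so that the index bookkeeping is airtight: one must confirm which $\sqO{O_3}{}$ blocks actually lie between a given $\sqO{O_2}{}$ block and the front of the product, and check that all of them satisfy $i\neq j-1,\,j$ so the disjointness applies uniformly. Once the V-chain ordering of \qc{v_chain_z_rel_phase} is fixed this is a direct index check with no analytic content, and the remainder is the routine operator manipulation turning $P^\dagger\,\MCO{iZ}{\{c_1,c_2\}}{d_1}\,P$ into $(\Xi_{C,D}^{O_2})^\dagger\,\Lambda_{C,D}^{O_3}\,(\Xi_{C,D}^{O_2})$.
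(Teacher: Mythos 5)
Your proposal is correct and follows essentially the same route as the paper: factor each $[X_\Delta]$ across the barrier of \qc{rf_toffoli} into $\sqO{O_3}{}\sqO{O_2}{}$, commute all the $\sqO{O_2}{}$ blocks to the front of the product (justified exactly as the paper does, by noting each $\sqO{O_2}{\{c_j,d_{j-2}\}}$ is the first gate in the V-chain to touch $c_j$ and $d_{j-2}$), and substitute into \lem{Vchain_fullstruct}. Your index check correctly identifies that disjointness requires $i\neq j$ \emph{and} $i\neq j-1$, and that only blocks with $i>j$ need to be crossed, which merely makes explicit what the paper leaves implicit.
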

Therefore, any $\{ Z\}^{n}_{C,D}\MCO{S}{c_1}{c_2}$ gate can be decomposed using one $[iZ]$ gate, $2(n-2)$ $\sqO{O_3}{\{q_1,q_2,q_3\}}$ gates, and $2(n-2)$ $\sqO{O_2}{\{q_1,q_3\}}$ gates. Since $\sqO{O_2}{\{q_1,q_3\}}$ gates can be applied simultaneously in each $\Xi$ operator, the depth is reduced to one $[iZ]$ gate, $2(n-2)$ $\sqO{O_3}{\{q_1,q_2,q_3\}}$ gates and two $\sqO{O_2}{\{q_1,q_3\}}$ gates, as presented in \lem{vchainz_cost_depth}.

We define the following control subsets:
\begin{table}[H]
     \centering
    \scalebox{0.9}{
    \begin{tabular}{ p{6cm} p{4cm}  }
 $C_1 = C[1:n_1]$ & $C_2 = C[n_1+1:n_1+n_2]$\\
 \\
  $C'_1 =
\begin{cases}
    C_1[3:n_2] & n_2=n_1\\
    \{C_1[3:n_1],C_1[2]\} & n_2\not =n_1\\
\end{cases}$  & $C'_2=C_2[3:n_1]$ \\
    \end{tabular}
}
\end{table}
Such that $n_1=\floor{\frac{n}{2}}$, $n_2=n-n_1=\ceil{\frac{n}{2}}$, $\abs{C'_1}=n_2-2$, and $\abs{C'_2}=n_1-2$.
We choose $\sqO{\Delta_1}{C}=\MCO{S}{C_1[1:2]}{}\{Z\}^{n_1-1}_{C_1,C_2'}$ and $\sqO{\Delta_2}{C}=\MCO{S}{C_2[1:2]}{}\{Z\}^{n_2-1}_{C_2,C_1'}$ as relative phase gates applied on the set $C$. For this choice we get
\[
\sqO{\Delta_1}{C}\MCO{Z}{C_1}{t}=\MCO{S}{C_1[1:2]}{}\{Z\}^{n_1}_{C_1,\{C_2',t\}}
\]
and
\[
\sqO{\Delta_2}{C}\MCO{Z}{C_2}{t}=\MCO{S}{C_2[1:2]}{}\{Z\}^{n_2}_{C_2,\{C_1',t\}}
\]
The inverted versions of these gates are implemented in the same way. These structures can be used in order to decompose \qc{mcrpi_base}.
Our choice of the sets $C_1,C_2,C_1',C_2'$ allows for a slight reduction of T depth,
considering the full structure.
Since the $\Xi$ operators commute with single qubit gates applied on the target qubit $t$, there are three copies of  $(\Xi_{C_1,C'_2}^{O_2})(\Xi_{C_2,C'_1}^{O_2})^\dagger$ or its inverse formed in layers between consecutive $\Lambda$ operators. We note that $C'_2[j-2]=C_2[j]$, and $C'_1[j-2]=C_1[j]$ for $3\leq j\leq n_1$.
Therefore,
\[
(\Xi_{C_1,C'_2}^{O_2})(\Xi_{C_2,C'_1}^{O_2})^\dagger =
\begin{cases}
    \prod_{j=3}^{n_1}(\sqO{O_2}{\{C_1[j],C_2[j]\}}\sqO{O_2^\dagger}{\{C_2[j],C_1[j]\}}) & n_2=n_1\\
    \prod_{j=3}^{n_1}(\sqO{O_2}{\{C_1[j],C_2[j]\}}\sqO{O_2^\dagger}{\{C_2[j],C_1[j]\}})\sqO{O_2^\dagger}{\{C_2[n_2],C_1[2]\}} & n_2\not =n_1\\
\end{cases}
\]
In both cases there are $n_1-2$ parallel copies of the following circuit: 
\[
\scalebox{1.0}{
\Qcircuit @C=1.0em @R=0.2em @!R { 
	 	\nghost{} & \lstick{} & \gate{\mathrm{T^\dagger}} & \targ & \gate{\mathrm{T}} & \gate{\mathrm{H}} \barrier[0em]{1} & \qw & \qw & \qw & \ctrl{1} & \qw & \qw & \qw\\
	 	\nghost{} & \lstick{} & \qw & \ctrl{-1} & \qw & \qw & \qw & \gate{\mathrm{H}} & \gate{\mathrm{T^\dagger}} & \targ & \gate{\mathrm{T}} & \qw & \qw\\
 }
  \hspace{3mm}\raisebox{-3mm}{=}\hspace{3mm}
\Qcircuit @C=1.0em @R=0.2em @!R { 
	 	 \lstick{} & \ctrl{1} & \gate{\mathrm{T^\dagger}} & \gate{\mathrm{H}} & \gate{\mathrm{T^\dagger}} & \targ & \qw & \qw\\
	 	 \lstick{} & \targ & \gate{\mathrm{T}} & \gate{\mathrm{H}} & \gate{\mathrm{T}} & \ctrl{-1} & \qw & \qw\\
 }
 }
 \qcref{two_cont_down}
\]
This allows for a reduction of H depth by 3. In addition, the T depth is reduced by 6 if $n$ is even and by 3 if $n$ is odd.
\section{ATA MCSU2 alternative}\label{apx:T_depth}

We provide an alternative implementation of the multi-controlled $SU(2)$ with reduced depth of T gates. This is achieved by using the following decompositions of $[X_\Delta]$ and $[iZ]$.

\[
\Qcircuit @C=1.0em @R=0.8em @!R { 
	 	\nghost{{q}_{1} :  } & \lstick{{q}_{1} :  } & \ctrlt{1} & \qw \\
	 	\nghost{{q_3} :  } & \lstick{{q_3} :  } & \targ & \qw \\
	 	\nghost{{q}_{2} :  } & \lstick{{q}_{2} :  } & \ctrl{-1} & \qw \\
 }
\hspace{5mm}\raisebox{-6mm}{=}\hspace{0mm}
\Qcircuit @C=1.0em @R=0.2em @!R { 
	 	\nghost{{q}_{1} :  } & \lstick{{q}_{1} :  } & \qw & \targ & \gate{\mathrm{T}} \barrier[0em]{2} & \qw & \qw & \targ & \gate{\mathrm{T^\dagger}} & \targ & \qw & \qw \\
	 	\nghost{{q_3} :  } & \lstick{{q_3} :  } & \gate{\mathrm{H}} & \ctrl{-1} & \gate{\mathrm{T^\dagger}} & \qw & \targ & \qw  & \gate{\mathrm{T}} & \ctrl{-1} & \gate{\mathrm{H}} & \qw \\
	 	\nghost{{q}_{2} :  } & \lstick{{q}_{2} :  } & \qw & \qw & \qw & \qw & \ctrl{-1} & \ctrl{-2}  & \qw & \qw & \qw & \qw \\
 }
 \qcref{toff_deco_clifft_t_depth}
\]
and

\[
\Qcircuit @C=1.0em @R=0.1em @!R { 
	 	\nghost{{q}_{1} :  } & \lstick{{q}_{1} :  } & \ctrl{1} & \qw \\
	 	\nghost{{q}_{2} :  } & \lstick{{q}_{2} :  } & \ctrl{1} & \qw \\
	 	\nghost{{q_3} :  } & \lstick{{q_3} :  } & \gate{\mathrm{iZ}} & \qw \\
 }
\hspace{5mm}\raisebox{-6mm}{=}\hspace{0mm}
\Qcircuit @C=1.0em @R=0.2em @!R { 
	 	\nghost{{q}_{1} :  } & \lstick{{q}_{1} :  } & \targ & \gate{\mathrm{T}} & \targ & \qw & \gate{\mathrm{T^\dagger}} & \targ & \qw & \qw \\
	 	\nghost{{q}_{2} :  } & \lstick{{q}_{2} :  } & \qw & \qw & \ctrl{-1} & \ctrl{1}  & \qw & \qw & \ctrl{1} & \qw \\
	 	\nghost{{q_3} :  } & \lstick{{q_3} :  } & \ctrl{-2} & \gate{\mathrm{T^\dagger}} & \qw & \targ &  \gate{\mathrm{T}} & \ctrl{-2} & \targ & \qw \\
 }
 \qcref{ccz_deco_clifft_t_depth}
\]

In this case, $\sqO{O_2}{\{q_1,q_3\}}$ and $\sqO{O_3}{\{q_1,q_2,q_3\}}$ are defined as the gates on the left of the barrier and on the right of the barrier, respectively, in \qc{toff_deco_clifft_t_depth}. Using these gates in \lem{Vchain_fullstruct_ivi} provides the following.
\begin{lemma}\label{lem:vchainz_cost_depth_T}
A $\{ Z\}^{n}_{C,D}\MCO{S}{c_1}{c_2}$ gate with $n\geq 3$, $\abs{C}= n$, and $\abs{D}= n-1$ can be implemented using\\
CNOT cost $8n-11$ and depth $6n-5$,\\
T ~~~~~ cost $8n-12$ and depth $2n$,\\
H ~~~~~ cost $4n-8$~ and depth $2n-2$.
\end{lemma}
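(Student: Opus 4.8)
The plan is to build directly on \lem{Vchain_fullstruct_ivi}, which already expresses the gate as $\{ Z\}^{n}_{C,D}\MCO{S}{c_1}{c_2} = (\Xi_{C,D}^{O_2})^\dagger \Lambda_{C,D}^{O_3} (\Xi_{C,D}^{O_2})$, and to substitute the alternative single-/two-qubit decompositions for the building blocks: the parts to the left and right of the barrier in \qc{toff_deco_clifft_t_depth} play the roles of $\sqO{O_2}{\{q_1,q_3\}}$ and $\sqO{O_3}{\{q_1,q_2,q_3\}}$, while \qc{ccz_deco_clifft_t_depth} supplies the $[iZ]$ gate. Since the macro structure is unchanged, the gate \emph{counts} are a direct summation over the $2(n-2)$ copies of $O_2$, the $2(n-2)$ copies of $O_3$, and the single $[iZ]$ occurring in $(\Xi^\dagger)\Lambda(\Xi)$. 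Reading off the circuits, each $O_2$ carries one $H$, one CNOT and two $T$ gates; each $O_3$ carries one $H$, three CNOT and two $T$ gates; and $[iZ]$ carries no $H$, five CNOT and four $T$ gates. Summing yields CNOT cost $2(n-2)\cdot 1 + 2(n-2)\cdot 3 + 5 = 8n-11$, $T$ cost $2(n-2)\cdot 2 + 2(n-2)\cdot 2 + 4 = 8n-12$, and $H$ cost $2(n-2)\cdot 1 + 2(n-2)\cdot 1 = 4n-8$, matching the claim.

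For the depths I would reuse the two commutation facts already established in \apx{deph_reduce}. First, the factors $\sqO{O_2}{\{c_j,d_{j-2}\}}$ composing $\Xi_{C,D}^{O_2}$ act on the pairwise-disjoint qubit pairs $\{c_j,d_{j-2}\}$, so each $\Xi$ operator is a single parallel layer of its $O_2$ blocks; hence each $\Xi$ side contributes CNOT depth $1$, $H$ depth $1$, and $T$ depth $1$ rather than its full count. Second---and this is the entire point of the alternative decompositions---in \qc{toff_deco_clifft_t_depth} and \qc{ccz_deco_clifft_t_depth} the two $T$ gates inside each block lie on two \emph{distinct} wires (roles $q_1$ and $q_3$), so they execute in one $T$ layer, whereas in \qc{rf_toffoli} and \qc{iz_cliff_T} both $T$ gates shared a single wire. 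Thus each $O_2$ and each $O_3$ has $T$ depth $1$ and $[iZ]$ has $T$ depth $2$.

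Given these facts the depth accounting is mechanical. Stacking the opening $\Xi^\dagger$ (one $T$ layer), the chain of $n-2$ copies of $O_3$ in $\Lambda$, the central $[iZ]$ (two $T$ layers), the reverse chain of $n-2$ copies of $O_3$, and the closing $\Xi$ (one $T$ layer) gives $T$ depth $1 + (n-2) + 2 + (n-2) + 1 = 2n$. For the CNOT depth I would note that $\Lambda_{C,D}^{O_3}$ is chained through the shared ancilla wires $d_{j-1}$, so it can be scheduled with CNOT depth equal to its CNOT count $6(n-2)+5 = 6n-7$, to which the two parallel $\Xi$ layers add $2$, giving $6n-5$; the $H$ depth follows identically, the forward and reverse $O_3$ chains contributing $2(n-2)=2n-4$ sequential $H$ layers and the two $\Xi$ layers adding $2$, for $2n-2$. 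The step I expect to require the most care is not the counting but confirming that the $T$-depth reduction genuinely survives interleaving across the chain: one must check that the single-layer scheduling of each $\Xi$ operator is compatible with the neighbouring $\Lambda$ factor, and that the $T$ layers of adjacent $O_3$ blocks---whose $T$ gates sit on the disjoint wires $\{c_j,d_{j-2}\}$ and $\{c_{j+1},d_{j-1}\}$---never collide. Both follow from the same disjointness and commutation observations used for the original construction in \apx{deph_reduce}, so no new mechanism is needed.
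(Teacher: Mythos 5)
Your proposal is correct and follows the same route the paper takes (which it leaves implicit in the single sentence "Using these gates in \lem{Vchain_fullstruct_ivi} provides the following"): substitute the left/right-of-barrier pieces of \qc{toff_deco_clifft_t_depth} for $O_2$ and $O_3$ and \qc{ccz_deco_clifft_t_depth} for $[iZ]$ in the $(\Xi)^\dagger\Lambda(\Xi)$ structure, then sum counts and stack layers, using the disjointness of the $\{c_j,d_{j-2}\}$ pairs to collapse each $\Xi$ to one layer and the placement of the two $T$ gates on distinct wires to get $T$ depth $1$ per block ($2$ for $[iZ]$). Your per-block tallies and the resulting totals $8n-11$, $8n-12$, $4n-8$ and depths $6n-5$, $2n$, $2n-2$ all check out against the circuits.
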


When applied as part of \lem{ccrv_macro}, using the sets $C_1,C_2,C_1',C_2'$ as defined in \apx{deph_reduce}, some reductions in CNOT gate {\em count} can be achieved as $(\Xi_{C_1,C'_2}^{O_2})(\Xi_{C_2,C'_1}^{O_2})^\dagger$ are now comprised of $n_1-2$ parallel copies of the following circuit:
\[
\Qcircuit @C=1.0em @R=0.2em @!R { 
	 	\nghost{{}} & \lstick{{}} & \gate{\mathrm{T}} & \ctrl{1} & \gate{\mathrm{H}}\barrier[0em]{1} & \qw  & \qw & \targ & \gate{\mathrm{T}} & \qw \\
	 	\nghost{{}} & \lstick{{}} & \gate{\mathrm{T^\dagger}} & \targ & \qw  & \qw & \gate{\mathrm{H}} & \ctrl{-1} & \gate{\mathrm{T^\dagger}} & \qw \\
 }
 \hspace{3mm}\raisebox{-3mm}{=}\hspace{3mm}
 \Qcircuit @C=1.0em @R=0.2em @!R { 
	 	 \lstick{{}} & \gate{\mathrm{T}} & \gate{\mathrm{H}} & \gate{\mathrm{T}} & \qw\\
	 	 \lstick{{}} & \gate{\mathrm{T^\dagger}} & \gate{\mathrm{H}} & \gate{\mathrm{T^\dagger}} & \qw \\
 }
 \qcref{mid_layer_reduction}
\]
These cancellations can be applied in three layers and reduce the CNOT count by
$6(n_1-2)$. In addition, the H depth is reduced by $3$, and the CNOT depth is reduced by $6$ if $n$ is even, or by $3$ if $n$ is odd.\\

We now show that the CNOT cost can be further reduced by applying cancellations for the first $n_1-2$ [$X_\Delta$] gates of the full MCSU2 structure. The following gate is controlled by the state $\ket{0}$, and therefore commutes with the MCSU2 gate if its control qubit is in the set $C$.

\[
\scalebox{1}{
\Qcircuit @C=0.5em @R=0.em @!R { 
	 	\nghost{q_1} & \lstick{q_1} & \ctrlo{1} & \qw \\
	 	\nghost{q_2} & \lstick{q_2} & \rvgate{\frac{\pi}{2}}{\hat{x}} & \qw \\
 }
   \hspace{5mm}\raisebox{-4mm}{=}\hspace{0mm}
 \Qcircuit @C=0.5em @R=0.4em @!R { 
	 	\nghost{q_1} & \lstick{q_1} & \qw & \targ & \gate{\mathrm{T}} & \targ & \qw & \qw\\
	 	\nghost{q_2} & \lstick{q_2} & \gate{\mathrm{H}} & \ctrl{-1} & \gate{\mathrm{T}} & \ctrl{-1} & \gate{\mathrm{H}} & \qw \\
 }
 }
 \qcref{zero_controlled_rx}
\]
This gate and its inverse can be applied on both sides of the MCSU2 structure with $q_1=C_1[j]$ and $q_2=C_2[j]$ for each value $3 \leq j\leq n_1$. 
The two following circuits describe the gates which can replace the first $n_1-2$ [$X_\Delta$] gates, and their counterparts at the end of the MCSU2 structure.
Setting $q_3=\{C'_2,t\}[j-1]$ and $q_4=\{C'_1,t\}[j-1]$ for $3 \leq j\leq n_1$, we get the following circuits.
\[
\Qcircuit @C=1.0em @R=0em @!R { 
	 	\nghost{{q}_{1} :  } & \lstick{{q}_{1} :  } & \ctrlo{1} & \ctrlt{1} & \qw \\
	 	\nghost{{q_2} :  } & \lstick{{q_2} :  } & \gate{\mathrm{R_{\hat{x}}(\frac{\pi}{2})^\dagger}} & \targ & \qw \\
	 	\nghost{{q}_{3} :  } & \lstick{{q}_{3} :  } & \qw & \ctrl{-1} & \qw \\
 }
\hspace{5mm}\raisebox{-6mm}{=}\hspace{0mm}
\Qcircuit @C=0.5em @R=0.2em @!R { 
	 	\nghost{{q}_{1} :  } & \lstick{{q}_{1} :  } & \qw  & \qw \barrier[0em]{2} & \qw & \qw & \targ & \gate{\mathrm{T^\dagger}} & \targ & \qw & \qw \\
	 	\nghost{{q_2} :  } & \lstick{{q_2} :  } & \gate{\mathrm{H}} & \gate{\mathrm{S^\dagger}} & \qw & \targ & \ctrl{-1} & \gate{\mathrm{T}} & \ctrl{-1} & \gate{\mathrm{H}} & \qw \\
	 	\nghost{{q}_{3} :  } & \lstick{{q}_{3} :  } & \qw  & \qw & \qw & \ctrl{-1} & \qw & \qw & \qw & \qw & \qw  \\
 }
 \qcref{toff_deco_clifft_t_depth_left_cancel}
\]

\[
\Qcircuit @C=1.0em @R=0em @!R { 
\nghost{{q_1} :  } & \lstick{{q_1} :  } & \targ & \ctrlo{1} & \qw \\
	 	\nghost{{q}_{2} :  } & \lstick{{q}_{2} :  } & \ctrlt{-1} & \rvgate{\frac{\pi}{2}}{\hat{x}} & \qw \\ 	
	 	\nghost{{q}_{4} :  } & \lstick{{q}_{4} :  } & \ctrl{-1} & \qw & \qw \\
 }
\hspace{5mm}\raisebox{-6mm}{=}\hspace{0mm}
\Qcircuit @C=0.5em @R=0.2em @!R { 
\nghost{{q_1} :  } & \lstick{{q_1} :  } & \gate{\mathrm{H}} & \ctrl{1} & \gate{\mathrm{T^\dagger}} & \targ & \qw\barrier[0em]{2} & \qw & \gate{\mathrm{T}}  & \gate{\mathrm{H}} & \gate{\mathrm{T}} & \targ & \qw & \qw \\
	 	\nghost{{q}_{2} :  } & \lstick{{q}_{2} :  } & \qw & \targ & \gate{\mathrm{T}} & \qw & \targ & \qw & \gate{\mathrm{T^\dagger}} & \gate{\mathrm{H}} & \gate{\mathrm{T}} &\ctrl{-1} & \gate{\mathrm{H}} & \qw \\	 	
	 	\nghost{{q}_{4} :  } & \lstick{{q}_{4} :  } & \qw & \qw & \qw & \ctrl{-2} & \ctrl{-1} & \qw & \qw  & \qw & \qw & \qw & \qw & \qw\\
 }
 \qcref{toff_deco_clifft_t_depth_right_cancel}
\]

Each of these circuits is used to replace $n_1-2$ copies of \qc{toff_deco_clifft_t_depth}.
This allows for a reduction of $n_1-2$ CNOT gates in depth $1$, and an increase of $2(n_1-2)$ H gates in depth $1$, along with the introduction of $n_1-2$ S gates in depth $1$. 

The total reduction of CNOT gates is $7(n_1-2)$, which is equivalent to $3.5n-14$ if $n$ is even, and $3.5n-17.5$ if $n$ is odd.
The following theorem is achieved from \lem{ccrv_macro}, \lem{vchainz_cost_depth_T} and the above CNOT cancellations.
\begin{theorem}\label{thm:MCSU2_costs_tdepth}
    Any $SU(2)$ operator controlled by $n\geq 6$ qubits can be implemented without ancilla using eight gates from $\{\rx,\rz\}$ in addition to\\
CNOT cost $12.5n-30$ ($12.5n-26.5$ for odd $n$) and depth $12n-27$ ($12n-24$ for odd $n$),\\
T ~~~~~ cost $16n-48$ ~~~~~~~~~~~~~~~~~~~~~~~~~~~~~~~~ and depth $4n$,\\
H ~~~~~ cost $9n-36$ ($9n-37$ for odd $n$) ~~~~~~~~ and depth $4n-10$,\\
S ~~~~~ cost $0.5n-2$ ($0.5n-2.5$ for odd $n$) ~~~~~ and depth $1$.
\end{theorem}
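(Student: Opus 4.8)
The plan is to prove the stated resource counts by a direct accounting argument, assembling the circuit from the macro structure of \lem{ccrv_macro} and then decomposing its four large multi-controlled-$Z$ blocks with the low-$T$-depth gadgets \qc{toff_deco_clifft_t_depth} and \qc{ccz_deco_clifft_t_depth}. First I would instantiate \lem{ccrv_macro} with $n_1=\floor{\frac{n}{2}}$, $n_2=\ceil{\frac{n}{2}}$ and the relative-phase choices from \apx{deph_reduce}, so that each boxed block becomes $\MCO{S}{C_i[1:2]}{}\{Z\}^{n_i}_{C_i,\cdot}$. By \lem{vchainz_cost_depth_T} every such block costs $8n_i-11$ CNOT, $8n_i-12$ $T$, $4n_i-8$ $H$, with $T$-depth $2n_i$; since each count is affine in $n_i$, summing over the two $C_1$-blocks and the two $C_2$-blocks and using $n_1+n_2=n$ gives the pre-cancellation totals CNOT $16n-44$, $T$ $16n-48$, $H$ $8n-32$, together with $T$-depth $4n$ (the four blocks share the target $t$ and hence lie on a common $T$ critical path). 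These already match the claimed $T$ cost and $T$-depth, which are untouched by everything that follows.

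Next I would apply the two rounds of cancellation described after \lem{vchainz_cost_depth_T}. Writing each block via \lem{Vchain_fullstruct_ivi} as $(\Xi^{O_2})^\dagger\Lambda^{O_3}(\Xi^{O_2})$, the inner $\Xi$-layers of adjacent $C_1$- and $C_2$-blocks combine, by the choice $C'_2[j-2]=C_2[j]$ and $C'_1[j-2]=C_1[j]$, into $n_1-2$ parallel copies of \qc{mid_layer_reduction}; collapsing these removes $6(n_1-2)$ CNOTs and trims the CNOT depth by $6$ (even $n$) or $3$ (odd $n$) and the $H$ depth by $3$. Then I would push the $\ket{0}$-controlled gate \qc{zero_controlled_rx} through the MCSU2 (it commutes since its control is one of the $C$-qubits) so that the leading and trailing $n_1-2$ relative-phase Toffolis can be replaced by \qc{toff_deco_clifft_t_depth_left_cancel} and \qc{toff_deco_clifft_t_depth_right_cancel}; each replacement drops one further CNOT in depth $1$ while introducing two $H$ and one $S$ gate in depth $1$. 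Collecting terms gives total CNOT reduction $7(n_1-2)$, i.e. $3.5n-14$ (even) or $3.5n-17.5$ (odd), hence CNOT cost $12.5n-30$ / $12.5n-26.5$; $H$ cost $8n-32+2(n_1-2)=9n-36$ / $9n-37$; and $S$ cost $n_1-2=0.5n-2$ / $0.5n-2.5$, all $S$ gates acting on distinct control qubits and therefore in depth $1$.

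The hard part will be the depth bookkeeping, in particular pinning down the CNOT depth $12n-27$ (even) / $12n-24$ (odd) and the $H$ depth $4n-10$. For these I would first compute the critical path of the full four-block structure before cancellation: each $\Xi$-layer contributes only constant CNOT depth, while the interleaved $\Lambda^{O_3}$ chains dominate, so I must verify how the V-chain of $O_3$ gadgets for $C_1$ overlaps in time with that for $C_2$ on the shared target and ancilla-role qubits. The parity split is the delicate point: for even $n$ the final rungs of the two V-chains align and permit an extra layer of parallelism, whereas for odd $n$ the $\abs{C'_1}\neq\abs{C'_2}$ mismatch leaves the stray $\sqO{O_2^\dagger}{\{C_2[n_2],C_1[2]\}}$ factor, shifting each parity-dependent constant by $3$. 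Once the pre-cancellation CNOT and $H$ depths are fixed, subtracting the depth reductions from the two cancellation rounds (together with the extra $-1$ in CNOT depth and $+1$ in $H$ depth from the boundary replacements) yields the stated depths. I expect the counts themselves to be routine affine arithmetic; essentially all of the difficulty lies in the parity-aware tracking of the critical path through the overlapping $C_1$ and $C_2$ sub-circuits.
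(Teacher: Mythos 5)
Your proposal is correct and follows essentially the same route as the paper: instantiate \lem{ccrv_macro} with the \apx{deph_reduce} subsets, apply \lem{vchainz_cost_depth_T} to each of the four blocks (giving the pre-cancellation totals $16n-44$ CNOT, $16n-48$ T, $8n-32$ H and T-depth $4n$), then subtract the $6(n_1-2)$ CNOTs from the mid-layer $\Xi$ cancellations of \qc{mid_layer_reduction} and the further $n_1-2$ CNOTs from the boundary replacements via \qc{zero_controlled_rx}, with the accompanying $+2(n_1-2)$ H and $+(n_1-2)$ S gates. Your arithmetic, parity split, and depth adjustments ($-6/-3$ and $-1$ for CNOT depth, $-3$ and $+1$ for H depth) all match the paper's accounting.
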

We note that it is possible to use the structure which produces \thm{MCSU2_costs}, and use \qc{toff_deco_clifft_t_depth} to replace only the $6(n_1-2)$ $[X_\Delta]$ gates which allow CNOT cancellations according to \qc{mid_layer_reduction}. This approach maintains all of the costs described in \thm{MCSU2_costs}, while reducing the T depth by $3n+O(1)$, and increasing the CNOT depth by the same. This provides a T depth of $5n+O(1)$, with no increase to the CNOT cost.

The results mentioned in \thm{MCSU2_costs_tdepth} can be used, along with \thm{MCX_costs_short} and \thm{MCU2_costs} in order to obtain the reduced T depth implementations of MCX and MCU2 gates.

We summarize the results for multi-controlled single target gates with reduced T depth in \tab{tcosts_table_ATA}. We use the same comparisons as used in \tab{costs_table_ATA}.
The depth reduction for T gates is now between $75\%-87.5\%$, while maintaining low costs.

\begin{table}[H]
    \centering
    \begin{tabular}{|p{1.2cm}|p{1.5cm}|p{1.2cm}|p{1cm}|p{1cm}|p{1cm}|p{1cm}|p{1cm}|p{1cm}|p{0.5cm}|p{0.5cm}|p{0.5cm}|p{0.5cm}|}
 \hline
 \multicolumn{3}{|c|}{Gate} & \multicolumn{2}{c|}{CNOT} & \multicolumn{2}{c|}{T} \\
  \hline
   Type & Ancilla & Source & Cost & Depth & Cost & Depth \\
 \hline
    MCSU2    &  None    & \cellcolor{lightgray!60} \cite{vale_decomposition_2023}       & \cellcolor{lightgray!60} $20n$    & \cellcolor{lightgray!60} $20n$  & \cellcolor{lightgray!60} $20n$    & \cellcolor{lightgray!60} $20n$ \\
        &      & Ours & $12.5n$  &  $12n$  &   $16n$   &  $4n$ \\
 \hline
     MCX   &   One dirty     & \cellcolor{lightgray!60} \cite{iten_quantum_2016}       & \cellcolor{lightgray!60} $16n$    & \cellcolor{lightgray!60} $16n$  &  \cellcolor{lightgray!60} $16n$   & \cellcolor{lightgray!60} $16n$ \\
        &        & Ours & $12.5n$  &  $12n$  &   $16n$   &  $4n$ \\
 \hline
    MCU2    &   One clean    & \cellcolor{lightgray!60} \cite{barenco_elementary_1995},\cite{iten_quantum_2016}       & \cellcolor{lightgray!60} $32n$   &  \cellcolor{lightgray!60} $32n$  & \cellcolor{lightgray!60}  $32n$   & \cellcolor{lightgray!60} $32n$ \\
        &       & Ours & $12.5n$  &  $12n$  &   $16n$   &  $4n$  \\
 \hline
    \end{tabular}
    \caption{A summary of our results compared to previous methods.}
    \label{tab:tcosts_table_ATA}
\end{table}

{ \section{ATA Multi-controlled multi-target SU(2)} \label{apx:MCMTSU2}

We present a method to extend the multi-controlled $SU(2)$ structure to implement a multi-control multi-target $SU(2)$ gate (MCMTSU2), in which any $SU(2)$ operator can be applied to each target. We use the following structure.
\[
\scalebox{0.9}{
\Qcircuit @C=1.0em @R=0.4em @!R { 
	 	\nghost{{C_1} :  } & \lstick{{C_1} :  } & \qw {/^{n_1}} & \ctrl{1} & \qw \\
	 	\nghost{{C}_{2} :  } & \lstick{{C}_{2} :  } & \qw {/^{n_2}} & \ctrl{1} & \qw \\
\nghost{{t}_{1} :  } & \lstick{{t}_{1} :  } & \qw & \gate{\mathrm{W_1}}\qwx[1] & \qw \\
	 	\nghost{{t}_{2} :  } & \lstick{{t}_{2} :  } & \qw & \gate{\mathrm{W_2}}\qwx[1] & \qw \\
	 	\nghost{{t}_{3} :  } & \nghost{{t}_{3} :  } & &  \ar @{.} [1,0]  \\
        \nghost{{t}_{3} :  } & \nghost{{t}_{3} :  } & &    \\
	   \nghost{{t}_{m} :  } & \lstick{{t}_{m} :  } & \qw  & \gate{\mathrm{W_m}}\qwx[-1] & \qw \\
 }
\hspace{5mm}\raisebox{-19.5mm}{=}\hspace{0mm}
\Qcircuit @C=1.0em @R=0.2em @!R { 
	 	\nghost{{C}_{1} :  } & \lstick{{C}_{1} :  } & \qw {/^{n_1}}  & \ctrl{2} & \multigate{1}{\mathrm{\Delta_1}}  & \qw & \multigate{1}{\mathrm{\Delta_2}} & \ctrl{2} & \multigate{1}{\mathrm{\Delta_1^\dagger}} & \qw & \multigate{1}{\mathrm{\Delta_2^\dagger}} & \qw \\
	 	\nghost{{C}_{2} :  } & \lstick{{C}_{2} :  } & \qw {/^{n_2}} & \qw & \ghost{\mathrm{\Delta_1}}  & \ctrl{1} & \ghost{\mathrm{\Delta_2}} & \qw & \ghost{\mathrm{\Delta_1^\dagger}} & \ctrl{1} & \ghost{\mathrm{\Delta_2^\dagger}} & \qw \\
	 	\nghost{{t_1} :  } & \lstick{{t_1} :  }  & \gate{\mathrm{A^1_4}} & \gate{\mathrm{Z}}\ar @{-} [1,0] & \gate{\mathrm{A_2^1}} & \gate{\mathrm{Z}}\ar @{-} [1,0] & \gate{\mathrm{A_3^1}} & \gate{\mathrm{Z}}\ar @{-} [1,0] & \gate{\mathrm{A_2^1}} & \gate{\mathrm{Z}}\ar @{-} [1,0] & \gate{\mathrm{A_1^1}} & \qw \\
   \nghost{{t_2} :  } & \lstick{{t_2} :  }  & \gate{\mathrm{A_4^2}} & \gate{\mathrm{Z}}\ar @{-} [1,0] & \gate{\mathrm{A_2^2}} & \gate{\mathrm{Z}}\ar @{-} [1,0] & \gate{\mathrm{A_3^2}} & \gate{\mathrm{Z}}\ar @{-} [1,0] & \gate{\mathrm{A_2^2}} & \gate{\mathrm{Z}}\ar @{-} [1,0] & \gate{\mathrm{A_1^2}} & \qw \\
   \nghost{{t}_{3} :  } & \nghost{{t}_{3} :  } & &  \ar @{.} [1,0] & &  \ar @{.} [1,0] & &  \ar @{.} [1,0] & &  \ar @{.} [1,0] & &  \\
        \nghost{{t}_{3} :  } & \nghost{{t}_{3} :  } & & & & & & & & & &   \\
 \nghost{{t}_{m} :  } & \lstick{{t}_{m} :  }  & \gate{\mathrm{A_4^m}} & \gate{\mathrm{Z}}\ar @{-} [-1,0] & \gate{\mathrm{A_2^m}} & \gate{\mathrm{Z}}\ar @{-} [-1,0] & \gate{\mathrm{A_3^m}} & \gate{\mathrm{Z}}\ar @{-} [-1,0] & \gate{\mathrm{A_2^m}} & \gate{\mathrm{Z}}\ar @{-} [-1,0] & \gate{\mathrm{A_1^m}} & \qw \\
 }
 }
 \qcref{mcmtsu2_base}
\]

\begin{lemma}\label{lem:mcmtsu2_struct}
    Any $\prod_{j=1}^{m}\MCO{\rv{\lambda_j}
{\hat{v}_j}}{C}{t_j}$ gate with $\rv{\lambda_j}
{\hat{v}_j}\in SU(2)$ can be implemented using $8m$ gates from $\{\rx,\rz\}$, two $\prod_{j=1}^{m}\MCO{Z}{C_1}{t_j}$ and two $\prod_{j=1}^{m}\MCO{Z}{C_2}{t_j}$ gates as \qc{mcmtsu2_base} such that $A_2^j=\rv{-\frac{\lambda_j}{4}}{\hat{x}}$, $A_3^j=\rv{\frac{\lambda_j}{4}}{\hat{x}}$, $A_4^j=\rv{\theta_2^j}{\hat{z}}\rv{\theta_1^j}{\hat{x}}$, $A_1^j={A_4^j}^\dagger A_3^j$, and $C_1\cup C_2 = C$. The  $\sqO{\Delta}{C}$ gates apply a relative phase.
\end{lemma}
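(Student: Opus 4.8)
The plan is to reduce the multi-target claim to the single-target result of \lem{ccrv_macro} by a decoupling argument, exploiting the fact that \qc{mcmtsu2_base} is obtained from the single-target template of \qc{mcrpi_base} by applying the same five-slot rotation pattern on each target $t_j$, replacing the four large multi-controlled $Z$ gates with their multi-target versions $\prod_{j=1}^{m}\MCO{Z}{C_i}{t_j}$ (for $i\in\{1,2\}$), and keeping a single shared copy of each relative phase gate $\sqO{\Delta}{C}$.

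First I would dispose of the relative phase gates exactly as in the proof of \lem{ccrv_macro}. Each $\sqO{\Delta}{C}$ is diagonal on $C$, so it commutes with every multi-target $Z$-column (all being diagonal), with the remaining $\sqO{\Delta}{C}$ gates, and---being supported only on the control qubits---with every single-qubit rotation $A_i^j$ acting on a target. Hence $\Delta_1$ cancels against $\Delta_1^\dagger$ and $\Delta_2$ against $\Delta_2^\dagger$, leaving a ``core'' circuit containing only the rotations $A_i^j$ and the four $Z$-columns. That there is a single shared copy of each $\Delta$ rather than one per target is immaterial, since the cancellation rests purely on commutation.

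Next I would factorize the core circuit over the targets. The key observation is that any two core gates attached to distinct targets commute: a column factor $\MCO{Z}{C_i}{t_j}$ is supported on $C_i\cup\{t_j\}$, hence acts on qubits disjoint from any rotation $A_i^{j'}$ with $j'\neq j$ and commutes with it, while all $Z$-type gates commute with one another by diagonality. Splitting each column $\prod_{j}\MCO{Z}{C_i}{t_j}$ into its commuting single-target factors and regrouping by target, the core circuit becomes $\prod_{j=1}^{m}$ of the per-target product $\sqO{A_1^j}{t_j}\MCO{Z}{C_2}{t_j}\sqO{A_2^j}{t_j}\MCO{Z}{C_1}{t_j}\sqO{A_3^j}{t_j}\MCO{Z}{C_2}{t_j}\sqO{A_2^j}{t_j}\MCO{Z}{C_1}{t_j}\sqO{A_4^j}{t_j}$. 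This is exactly the core of \qc{mcrpi_base} for target $t_j$, so \lem{ccrv_macro} gives $\MCO{\rv{\lambda_j}{\hat{v}_j}}{C}{t_j}$ under the stated choices $A_2^j=\rv{-\tfrac{\lambda_j}{4}}{\hat{x}}$, $A_3^j=\rv{\tfrac{\lambda_j}{4}}{\hat{x}}$, $A_4^j=\rv{\theta_2^j}{\hat{z}}\rv{\theta_1^j}{\hat{x}}$ and $A_1^j=(A_4^j)^{\dagger}A_3^j$, with $\theta_1^j,\theta_2^j$ fixed by $\hat{v}_j$. Since each target contributes eight $\{\rx,\rz\}$ gates, the total is $8m$.

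The main obstacle is making the regrouping rigorous: I must verify that splitting the shared $Z$-columns and commuting per-target factors past rotations on other targets never reorders the gates acting on any fixed $t_j$, so that each regrouped block is genuinely the single-target template in the correct sequence. Once this order-preservation is checked, the conclusion follows immediately from \lem{ccrv_macro}, and no fresh computation of the angles $\theta_1^j,\theta_2^j$ is required beyond that already performed in the single-target case.
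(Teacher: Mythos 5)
Your proposal is correct and follows essentially the same route as the paper: the paper's proof applies \lem{ccrv_macro} to each target separately and then merges the $m$ single-target circuits into \qc{mcmtsu2_base} using exactly the commutation facts you cite (MCZ gates commute with each other, and single-qubit gates on distinct qubits commute), whereas you run the same argument in reverse by splitting the shared $Z$-columns and regrouping per target. The two directions are trivially equivalent, and your extra care about the single shared copy of each $\sqO{\Delta}{C}$ is consistent with how the paper handles the cancellation.
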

\begin{proof}
 This can be realised by applying the structure described in \lem{ccrv_macro} for each $\MCO{\rv{\lambda_j}
{\hat{v}_j}}{C}{t_j}$ gate separately. The gates can be reordered to achieve \qc{mcmtsu2_base} since MCZ gates commute with each other, and single-qubit gates operating on different qubits commute as well.
\end{proof}
The following identity can be derived from
\lem{basic_struct}.
\[
\Qcircuit @C=1.0em @R=0.7em @!R { 
\nghost{{C_1} :  } & \lstick{{C_1} :  } & \qw {/^{n_1}}  & \ctrl{2} & \multigate{1}{\mathrm{\Delta}} & \qw  \\	 	\nghost{{C_2} :  } & \lstick{{C_2} :  } & \qw {/^{n_2}}  & \qw & \ghost{\mathrm{\Delta}} & \qw   \\
	 	\nghost{{t}_{1} :  } & \lstick{{t}_{1} :  }  & \targ & \control\qw & \targ & \qw  \\
	 	\nghost{{t}_{2} :  } & \lstick{{t}_{2} :  }  & \ctrl{-1} & \qw & \ctrl{-1} & \qw  \\
 }
\hspace{5mm}\raisebox{-8mm}{=}\hspace{0mm}
\Qcircuit @C=1.0em @R=0.7em @!R { 
\nghost{{C_1} :  } & \lstick{{C_1} :  } & \qw {/^{n_1}}  & \ctrl{2} & \ctrl{3} & \multigate{1}{\mathrm{\Delta}} & \qw  \\	 	\nghost{{C_2} :  } & \lstick{{C_2} :  } & \qw {/^{n_2}}  & \qw & \qw & \ghost{\mathrm{\Delta}} & \qw   \\
	 	\nghost{{t}_{1} :  } & \lstick{{t}_{1} :  }  & \qw & \ctrl{0} & \qw & \qw & \qw  \\
	 	\nghost{{t}_{2} :  } & \lstick{{t}_{2} :  }  & \qw & \qw & \ctrl{0} & \qw & \qw  \\
 }
\hspace{5mm}\raisebox{-8mm}{=}\hspace{0mm}
\Qcircuit @C=1.0em @R=0.3em @!R { 
\nghost{{C_1} :  } & \lstick{{C_1} :  } & \qw {/^{n_1}}  & \ctrl{2} & \multigate{1}{\mathrm{\Delta}} & \qw  \\	 	\nghost{{C_2} :  } & \lstick{{C_2} :  } & \qw {/^{n_2}}  & \qw & \ghost{\mathrm{\Delta}} & \qw   \\
	 	\nghost{{t}_{1} :  } & \lstick{{t}_{1} :  }  & \qw & \gate{\mathrm{Z}}\ar @{-} [1,0] & \qw & \qw  \\
	 	\nghost{{t}_{2} :  } & \lstick{{t}_{2} :  }  & \qw & \gate{\mathrm{Z}} & \qw & \qw  \\
 } \qcref{add_mcz_new_target}
\]
\lem{mcmtz_struct_log} describes the structure used to extend any MCZ gates to a MCMTZ gate with $m$ targets, using $2(m-1)$ CNOT gates in depth $2\ceil{log_2(m)}$.
\begin{lemma}\label{lem:mcmtz_struct_log}
$\sqO{\Delta'}{C}\prod_{j=1}^{m}\MCO{Z}{C'}{t_j} =
(\prod_{k=1}^{\ceil{log_{2}(m)}}\prod_{l=1}^{f'}\MCO{X}{t_{l+f}}{t_{l}})\sqO{\Delta'}{C}
\MCO{Z}{C'}{t_1}(\prod_{k=1}^{\ceil{log_{2}(m)}}\prod_{l=1}^{f'}\MCO{X}{t_{l+f}}{t_{l}})^\dagger$ with $C'\in C$, and $t_j\not\in C$, such that $f=2^{k-1}$ and $f'=min(f,m-f)$. 
\end{lemma}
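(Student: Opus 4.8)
The plan is to prove this by a fan-out induction built on a single ``copy'' identity, with the relative phase $\sqO{\Delta'}{C}$ carried along as an inert spectator. First I would isolate the elementary move by specializing \lem{basic_struct} to $\hat{v}_1=\hat{z}$, $\hat{v}_2=\hat{x}$ (the special case already displayed in \qc{add_mcz} and \qc{add_mcz_new_target}). Using $\MCO{Z}{C'}{t_a}=\MCO{Z}{\{C',t_a\}}{}$ and $\MCO{Z}{\{C',t_b\}}{}=\MCO{Z}{C'}{t_b}$, this gives, for a ``loaded'' target $t_a$ and a ``fresh'' target $t_b$,
\[
\MCO{X}{t_b}{t_a}\,\MCO{Z}{C'}{t_a}\,\MCO{X}{t_b}{t_a}=\MCO{Z}{C'}{t_a}\,\MCO{Z}{C'}{t_b},
\]
so conjugating a single multi-controlled $Z$ on $t_a$ by the \cnotgate{} with control $t_b$ and target $t_a$ duplicates it onto $t_b$. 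I would then record the commutations that let this act inside a product: since $t_a,t_b\notin C$, each such \cnotgate{} commutes with $\sqO{\Delta'}{C}$ and with every factor $\MCO{Z}{C'}{t_j}$ with $j\neq a,b$; and within a fixed round the pairs $\{t_l,t_{l+f}\}$ are mutually disjoint (as $l,l'\le f=2^{k-1}$ forces $|l-l'|<f$ while $l+f>f\ge l'$), so a whole round's \cnotgate{}s commute and act independently.

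The core is an induction on the number of rounds applied to the centre $\sqO{\Delta'}{C}\MCO{Z}{C'}{t_1}$. Writing $R_k=\prod_{l=1}^{f'}\MCO{X}{t_{l+f}}{t_l}$ with $f=2^{k-1}$ and $f'=\min(f,m-f)$, I would show that conjugating the centre by $R_1,\dots,R_r$ (round $1$ innermost) yields $\sqO{\Delta'}{C}\prod_{j=1}^{\min(2^r,m)}\MCO{Z}{C'}{t_j}$. The base case $r=0$ is the centre itself. For the step, the inductive hypothesis supplies multi-controlled $Z$ on exactly $t_1,\dots,t_{p}$ with $p=\min(2^{r-1},m)$; since $2^{r-1}\le 2^{K-1}<m$ for every $r\le K:=\ceil{log_2(m)}$, in fact $p=f=2^{r-1}$. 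In round $r$ each control $t_{l+f}$ (with $l\le f'$, hence $l+f\le m$) is still fresh and each target $t_l$ (with $l\le f=p$) is already loaded, so by the copy identity and the commutations above every \cnotgate{} of the round duplicates its factor onto a new target and leaves all others fixed. The newly loaded indices are $t_{f+1},\dots,t_{f+f'}$, which by the definition of $f'$ are exactly $t_{2^{r-1}+1},\dots,t_{\min(2^r,m)}$, completing the step.

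Taking $r=K$ gives $\min(2^K,m)=m$, so the centre becomes $\sqO{\Delta'}{C}\prod_{j=1}^{m}\MCO{Z}{C'}{t_j}$, the left-hand side. Collecting the conjugating layers into the product $\prod_{k=1}^{K}R_k$ (read, as elsewhere in the paper, with round $1$ innermost) reproduces the stated $\left(\prod_{k=1}^{K}\prod_{l=1}^{f'}\MCO{X}{t_{l+f}}{t_l}\right)\sqO{\Delta'}{C}\MCO{Z}{C'}{t_1}\left(\cdots\right)^{\dagger}$ form, the right conjugator being the Hermitian conjugate of the left.

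I expect the only real obstacle to be the index bookkeeping around the truncation $f'=\min(f,m-f)$: one must verify that when $m$ is not a power of two the ``missing'' copies in the upper half never appear. Concretely, each omitted \cnotgate{} (index $l>f'$) would have had both its control $t_{l+f}$ and target $t_l$ unloaded, so it would have commuted freely and cancelled against its partner, contributing nothing; this guarantees that no spurious $\MCO{Z}{C'}{t_j}$ with $j>m$ is ever created and that the doubling terminates cleanly at $t_m$.
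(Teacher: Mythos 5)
Your proposal is correct and follows essentially the same route as the paper's proof: the copy identity obtained from \lem{basic_struct} (the paper's \qc{add_mcz_new_target}) is used to double the number of loaded targets per round, with the final truncated round handled by $f'=\min(f,m-f)$. Your version merely makes the induction, the disjointness/commutation bookkeeping, and the truncation argument explicit where the paper states them informally.
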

\begin{proof}
    As demonstrated in \qc{add_mcz_new_target}, it follows from \lem{basic_struct} that
    \[
    \sqO{\Delta'}{C}\MCO{Z}{C'}{t_j}\MCO{Z}{C'}{t_j+f}
    =
    \MCO{X}{t_{j+f}}{t_{j}}\sqO{\Delta'}{C}\MCO{Z}{C'}{t_j}\MCO{X}{t_{j+f}}{t_{j}}
    \]
The number of targets can therefor be doubled by applying a CNOT transformation in depth 2, as follows:
\[
   \sqO{\Delta'}{C}\prod_{j=1}^{2f}\MCO{Z}{C'}{t_j}=(\prod_{l=1}^{f}\MCO{X}{t_{l+f}}{t_{l}})\sqO{\Delta'}{C}\prod_{j=1}^{f}\MCO{Z}{C'}{t_j}(\prod_{l=1}^{f}\MCO{X}{t_{l+f}}{t_{l}})
\]
This can be repeatedly applied until $log_{2}(f)=\floor{log_{2}(m)}$. The remaining $m-f$ targets can be added similarly by applying $\prod_{l=1}^{m-f}\MCO{X}{t_{l+f}}{t_{l}}$ at each side. 

\end{proof}
\qc{mcmtsu2_a2a_targets} describes the structure for m=8.
\[
\Qcircuit @C=0.4em @R=0.59em @!R { 
	 	\nghost{{C_1} :  } & \lstick{{C_1} :  } & \qw {/^{n_1}}  & \qw & \qw & \ctrl{2} & \multigate{1}{\mathrm{\Delta}} & \qw & \qw & \qw \\
   \nghost{{C_2} :  } & \lstick{{C_2} :  } & \qw {/^{n_2}}  & \qw & \qw & \qw & \ghost{\mathrm{\Delta}}  & \qw & \qw & \qw \\
	 	\nghost{{t}_{1} :  } & \lstick{{t}_{1} :  }  & \targ & \targ & \targ & \control\qw & \targ & \targ & \targ & \qw \\
	 	\nghost{{t}_{5} :  } & \lstick{{t}_{5} :  }  & \ctrl{-1} & \qw & \qw & \qw & \qw & \qw & \ctrl{-1}  & \qw\\
	 	\nghost{{t}_{3} :  } & \lstick{{t}_{3} :  }  & \targ & \ctrl{-2} & \qw & \qw & \qw & \ctrl{-2} & \targ  & \qw\\
	 	\nghost{{t}_{7} :  } & \lstick{{t}_{7} :  }  & \ctrl{-1} & \qw & \qw & \qw & \qw & \qw & \ctrl{-1}  & \qw\\
	 	\nghost{{t}_{2} :  } & \lstick{{t}_{2} :  }  & \targ & \targ & \ctrl{-4} & \qw & \ctrl{-4} & \targ & \targ & \qw \\
	 	\nghost{{t}_{6} :  } & \lstick{{t}_{6} :  }  & \ctrl{-1} & \qw & \qw & \qw & \qw & \qw & \ctrl{-1}  & \qw\\
	 	\nghost{{t}_{4} :  } & \lstick{{t}_{4} :  }  & \targ & \ctrl{-2} & \qw & \qw & \qw & \ctrl{-2} & \targ  & \qw\\
	 	\nghost{{t}_{8} :  } & \lstick{{t}_{8} :  }  & \ctrl{-1} & \qw & \qw & \qw & \qw & \qw & \ctrl{-1} & \qw \\
 }
\hspace{5mm}\raisebox{-23.5mm}{=}\hspace{0mm}
\Qcircuit @C=0.5em @R=0.2em @!R { 
\nghost{{C_1} :  } & \lstick{{C_1} :  } & \qw {/^{n_1}}  & \ctrl{2} & \multigate{1}{\mathrm{\Delta}} & \qw  \\	 	\nghost{{C_2} :  } & \lstick{{C_2} :  } & \qw {/^{n_2}}  & \qw & \ghost{\mathrm{\Delta}} & \qw   \\
	 	\nghost{{t}_{1} :  } & \lstick{{t}_{1} :  }  & \qw & \gate{\mathrm{Z}}\ar @{-} [1,0] & \qw & \qw  \\
	 	\nghost{{t}_{5} :  } & \lstick{{t}_{5} :  }  & \qw & \gate{\mathrm{Z}}\ar @{-} [1,0] & \qw & \qw  \\
   \nghost{{t}_{3} :  } & \lstick{{t}_{3} :  }  & \qw & \gate{\mathrm{Z}}\ar @{-} [1,0] & \qw & \qw  \\
   \nghost{{t}_{7} :  } & \lstick{{t}_{7} :  }  & \qw & \gate{\mathrm{Z}}\ar @{-} [1,0] & \qw & \qw  \\
   \nghost{{t}_{2} :  } & \lstick{{t}_{2} :  }  & \qw & \gate{\mathrm{Z}}\ar @{-} [1,0] & \qw & \qw  \\
   \nghost{{t}_{6} :  } & \lstick{{t}_{6} :  }  & \qw & \gate{\mathrm{Z}}\ar @{-} [1,0] & \qw & \qw  \\
   \nghost{{t}_{4} :  } & \lstick{{t}_{4} :  }  & \qw & \gate{\mathrm{Z}}\ar @{-} [1,0] & \qw & \qw  \\
   \nghost{{t}_{8} :  } & \lstick{{t}_{8} :  }  & \qw & \gate{\mathrm{Z}} & \qw & \qw  \\
 }
 \qcref{mcmtsu2_a2a_targets}
\]
The following can be obtained using  
\lem{mcmtsu2_struct} and \lem{mcmtz_struct_log}.
\begin{theorem}\label{thm:MCMTSU2_costs}
    Any multi-controlled multi-target $SU(2)$ with $n\geq 6$ controls, and $m\geq 1$ targets can be implemented without ancilla in the same costs and depths of an MCSU2 gate implemented using \lem{ccrv_macro}, with an increase of $8(m-1)$ to both $\{\rx/\rz\}$ and CNOT count, and $8\ceil{log_{2}(m)}$ to CNOT depth.
\end{theorem}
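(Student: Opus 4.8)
The plan is to reduce the entire cost accounting for the MCMTSU2 gate to the single-target MCSU2 count of \lem{ccrv_macro} (and hence \thm{MCSU2_costs}), together with a clean count of the overhead contributed by the extra targets. First I would invoke \lem{mcmtsu2_struct} to write $\prod_{j=1}^m \MCO{\rv{\lambda_j}{\hat v_j}}{C}{t_j}$ in the form of \qc{mcmtsu2_base}: eight single-qubit $\{\rx,\rz\}$ gates on \emph{each} target (so $8m$ in total), interleaved with four multi-controlled multi-target $Z$ blocks carrying the relative phases $\Delta_1,\Delta_2,\Delta_1^\dagger,\Delta_2^\dagger$ (two controlled by $C_1$ and two by $C_2$).

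Next I would apply \lem{mcmtz_struct_log} to each of the four MCMTZ blocks, replacing every $\sqO{\Delta'}{C}\prod_{j=1}^m\MCO{Z}{C'}{t_j}$ by a single-target gate $\sqO{\Delta'}{C}\MCO{Z}{C'}{t_1}$ conjugated by the logarithmic-depth CNOT fan-out on the target register. By the discussion following that lemma, each such replacement costs $2(m-1)$ CNOT gates in CNOT-depth $2\ceil{log_2(m)}$ and introduces no $T$ or $H$ gates. The key structural point to verify is that, once all four fan-outs are pulled outward, what remains on the control register and the single target $t_1$ is \emph{exactly} the single-target circuit of \qc{mcrpi_base} with the same relative phase gates: this is guaranteed because the commutation identity \qc{add_mcz_new_target} (itself a consequence of \lem{basic_struct}) preserves each $\Delta'$ while doubling the number of $Z$ targets, so the reduction does not disturb the $\Delta_1,\Delta_2$ bookkeeping used in the proof of \lem{ccrv_macro}.

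With this reduction in hand the cost count is a straightforward summation. The surviving single-target core is precisely the MCSU2 structure of \lem{ccrv_macro}, so it contributes the CNOT, $T$, and $H$ costs and depths tabulated in \thm{MCSU2_costs} unchanged. The overhead is then: $8(m-1)$ additional $\{\rx,\rz\}$ gates (the eight $A_i^j$ rotations on the $m-1$ targets other than $t_1$); $4\cdot 2(m-1)=8(m-1)$ additional CNOT gates (one fan-out pair per MCMTZ block); and $4\cdot 2\ceil{log_2(m)}=8\ceil{log_2(m)}$ additional CNOT depth, since the four fan-outs are separated by the single-qubit $A_i^j$ layers on the target register and are therefore counted independently. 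Because the fan-outs and the extra rotations contain neither $T$ nor $H$ gates, the $T$ and $H$ costs and depths are identical to the MCSU2 case, which is exactly the claim.

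The main obstacle I anticipate is the bookkeeping of the relative phases: one must check that \lem{mcmtz_struct_log} can be applied to each block \emph{with its attached} $\Delta'$, so that the four single-target MCZ cores reassemble into the validated MCSU2 circuit of \lem{ccrv_macro} rather than into a gate differing by an uncontrolled relative phase. This is precisely what \qc{add_mcz_new_target} supplies, since it commutes the $\Delta'$ gate through the CNOT doubling unchanged; once that is established, the depth claim reduces to the observation that the intervening $A_i^j$ single-qubit layers prevent adjacent fan-outs from merging, so their CNOT depths add rather than overlap.
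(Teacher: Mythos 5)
Your proposal is correct and follows essentially the same route as the paper, which simply asserts that the theorem "can be obtained using" \lem{mcmtsu2_struct} and \lem{mcmtz_struct_log}: decompose via \qc{mcmtsu2_base}, conjugate each of the four MCMTZ blocks by the logarithmic fan-out of \lem{mcmtz_struct_log} at a cost of $2(m-1)$ CNOTs and depth $2\ceil{\log_2(m)}$ each, and sum. Your additional care about the relative-phase bookkeeping (that \qc{add_mcz_new_target} preserves $\Delta'$ so the single-target core reassembles into the circuit of \lem{ccrv_macro}) and about the intervening $A_i^j$ layers preventing fan-out cancellation fills in details the paper leaves implicit, but introduces nothing different in substance.
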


The final costs and depths of our implementation can therefore be achieved by applying these adjustments to the costs and depths listed in \thm{MCSU2_costs}.
}

{  \section{ATA MCMTX}\label{apx:ata_MCMTX}
We extend the multi-controlled $X$ structure to a multi-control multi-target $X$ gate (MCMTX).
First we demonstrate that if the number of target qubits is larger than one, an ancilla qubit is not required.
The following can be easily deduced from \lem{basic_struct} (and has been shown in \cite{maslov_toffoli_2005}). 
\[
\Qcircuit @C=1.0em @R=0.8em @!R { 
  \nghost{{C}_{1} :  } & \lstick{{C}_{1} :  } & \qw {/^{n_1}} & \ctrl{1} & \qw \\
	 	\nghost{{C}_{2} :  } & \lstick{{C}_{2} :  } & \qw {/^{n_2}} & \ctrl{1} & \qw \\
	 	\nghost{{t_1} :  } & \lstick{{t_1} :  } & \qw & \targ\ar @{-} [1,0] & \qw \\
	 	\nghost{{t_2} :  } & \lstick{{t_2} :  } & \qw & \targ & \qw \\
 }
 \hspace{5mm}\raisebox{-9mm}{=}\hspace{0mm}
\Qcircuit @C=1.0em @R=0.8em @!R { 
  \nghost{{C}_{1} :  } & \lstick{{C}_{1} :  } & \qw {/^{n_1}} & \ctrl{1} & \qw & \qw \\
	 	\nghost{{C}_{2} :  } & \lstick{{C}_{2} :  } & \qw {/^{n_2}} & \ctrl{1} & \qw & \qw \\
	 	\nghost{{t_1} :  } & \lstick{{t_1} :  } & \ctrl{1} & \targ & \ctrl{1} & \qw \\
	 	\nghost{{t_2} :  } & \lstick{{t_2} :  } & \targ & \qw & \targ & \qw \\
 }
 \qcref{add_targ_mcx}
\]

Since $t_2$ is not used while the MCX gate is applied, it can be utilized as the required dirty ancilla previously refered to as qubit $a$. Similarly, \lem{mcmtz_struct_log}  can be used in order to add any number of targets as follows.

\[
\scalebox{0.7}{
\Qcircuit @C=0.5em @R=0.6em @!R { 
\nghost{{C_1} :  } & \lstick{{C_1} :  } & \qw {/^{n_1}}  & \ctrl{1} & \qw & \qw  \\	 	\nghost{{C_2} :  } & \lstick{{C_2} :  } & \qw {/^{n_2}}  & \ctrl{1} & \qw & \qw   \\
	 	\nghost{{t}_{1} :  } & \lstick{{t}_{1} :  }  & \qw & \targ\ar @{-} [1,0] & \qw & \qw  \\
	 	\nghost{{t}_{5} :  } & \lstick{{t}_{5} :  }  & \qw & \targ\ar @{-} [1,0] & \qw & \qw  \\
   \nghost{{t}_{3} :  } & \lstick{{t}_{3} :  }  & \qw & \targ\ar @{-} [1,0] & \qw & \qw  \\
   \nghost{{t}_{7} :  } & \lstick{{t}_{7} :  }  & \qw & \targ\ar @{-} [1,0] & \qw & \qw  \\
   \nghost{{t}_{2} :  } & \lstick{{t}_{2} :  }  & \qw & \targ\ar @{-} [1,0] & \qw & \qw  \\
   \nghost{{t}_{6} :  } & \lstick{{t}_{6} :  }  & \qw & \targ\ar @{-} [1,0] & \qw & \qw  \\
   \nghost{{t}_{4} :  } & \lstick{{t}_{4} :  }  & \qw & \targ\ar @{-} [1,0] & \qw & \qw  \\
   \nghost{{t}_{8} :  } & \lstick{{t}_{8} :  }  & \qw & \targ & \qw & \qw  \\
 }
\hspace{5mm}\raisebox{-23.5mm}{=}\hspace{0mm}
\Qcircuit @C=0.5em @R=0.2em @!R { 
\nghost{{C_1} :  } & \lstick{{C_1} :  } & \qw {/^{n_1}}  & \ctrl{1} & \qw & \qw  \\	 	\nghost{{C_2} :  } & \lstick{{C_2} :  } & \qw {/^{n_2}}  & \ctrl{1} & \qw & \qw   \\
	 	\nghost{{t}_{1} :  } & \lstick{{t}_{1} :  }  & \gate{\mathrm{H}} & \gate{\mathrm{Z}}\ar @{-} [1,0] & \gate{\mathrm{H}}  & \qw  \\
	 	\nghost{{t}_{5} :  } & \lstick{{t}_{5} :  }  & \gate{\mathrm{H}} & \gate{\mathrm{Z}}\ar @{-} [1,0] & \gate{\mathrm{H}}  & \qw  \\
   \nghost{{t}_{3} :  } & \lstick{{t}_{3} :  }  & \gate{\mathrm{H}} & \gate{\mathrm{Z}}\ar @{-} [1,0] & \gate{\mathrm{H}}  & \qw  \\
   \nghost{{t}_{7} :  } & \lstick{{t}_{7} :  }  & \gate{\mathrm{H}} & \gate{\mathrm{Z}}\ar @{-} [1,0] & \gate{\mathrm{H}}  & \qw  \\
   \nghost{{t}_{2} :  } & \lstick{{t}_{2} :  }  & \gate{\mathrm{H}} & \gate{\mathrm{Z}}\ar @{-} [1,0] & \gate{\mathrm{H}}  & \qw  \\
   \nghost{{t}_{6} :  } & \lstick{{t}_{6} :  }  & \gate{\mathrm{H}} & \gate{\mathrm{Z}}\ar @{-} [1,0] & \gate{\mathrm{H}}  & \qw  \\
   \nghost{{t}_{4} :  } & \lstick{{t}_{4} :  }  & \gate{\mathrm{H}} & \gate{\mathrm{Z}}\ar @{-} [1,0] & \gate{\mathrm{H}}  & \qw  \\
   \nghost{{t}_{8} :  } & \lstick{{t}_{8} :  }  & \gate{\mathrm{H}} & \gate{\mathrm{Z}} & \gate{\mathrm{H}}  & \qw  \\
 }
\hspace{5mm}\raisebox{-23.5mm}{=}\hspace{0mm}
 \Qcircuit @C=0.4em @R=0.2em @!R { 
	 	\nghost{{C_1} :  } & \lstick{{C_1} :  } & \qw {/^{n_1}}  & \qw & \qw & \qw & \ctrl{1} & \qw & \qw & \qw & \qw & \qw  \\
   \nghost{{C_2} :  } & \lstick{{C_2} :  } & \qw {/^{n_2}} & \qw  & \qw & \qw & \ctrl{1} & \qw  & \qw & \qw & \qw & \qw  \\
	 	\nghost{{t}_{1} :  } & \lstick{{t}_{1} :  } & \gate{\mathrm{H}}  & \targ & \targ & \targ & \control\qw & \targ & \targ & \targ & \gate{\mathrm{H}} & \qw \\
	 	\nghost{{t}_{5} :  } & \lstick{{t}_{5} :  } & \gate{\mathrm{H}} & \ctrl{-1} & \qw & \qw & \qw & \qw & \qw & \ctrl{-1} & \gate{\mathrm{H}}  & \qw\\
	 	\nghost{{t}_{3} :  } & \lstick{{t}_{3} :  } & \gate{\mathrm{H}} & \targ & \ctrl{-2} & \qw & \qw & \qw & \ctrl{-2} & \targ & \gate{\mathrm{H}}  & \qw\\
	 	\nghost{{t}_{7} :  } & \lstick{{t}_{7} :  } & \gate{\mathrm{H}} & \ctrl{-1} & \qw & \qw & \qw & \qw & \qw & \ctrl{-1} & \gate{\mathrm{H}}  & \qw\\
	 	\nghost{{t}_{2} :  } & \lstick{{t}_{2} :  } & \gate{\mathrm{H}} & \targ & \targ & \ctrl{-4} & \qw & \ctrl{-4} & \targ & \targ & \gate{\mathrm{H}} & \qw \\
	 	\nghost{{t}_{6} :  } & \lstick{{t}_{6} :  } & \gate{\mathrm{H}} & \ctrl{-1} & \qw & \qw & \qw & \qw & \qw & \ctrl{-1} & \gate{\mathrm{H}}  & \qw\\
	 	\nghost{{t}_{4} :  } & \lstick{{t}_{4} :  } & \gate{\mathrm{H}} & \targ & \ctrl{-2} & \qw & \qw & \qw & \ctrl{-2} & \targ & \gate{\mathrm{H}}  & \qw\\
	 	\nghost{{t}_{8} :  } & \lstick{{t}_{8} :  } & \gate{\mathrm{H}} & \ctrl{-1} & \qw & \qw & \qw & \qw & \qw & \ctrl{-1} & \gate{\mathrm{H}} & \qw \\
 }
 \hspace{5mm}\raisebox{-23.5mm}{=}\hspace{0mm}
\Qcircuit @C=0.4em @R=0.59em @!R { 
	 	\nghost{{C_1} :  } & \lstick{{C_1} :  } & \qw {/^{n_1}}  & \qw & \qw & \ctrl{1} & \qw & \qw & \qw & \qw \\
   \nghost{{C_2} :  } & \lstick{{C_2} :  } & \qw {/^{n_2}}  & \qw & \qw & \ctrl{1} & \qw  & \qw & \qw & \qw \\
	 	\nghost{{t}_{1} :  } & \lstick{{t}_{1} :  }  & \control\qw & \control\qw & \control\qw & \targ & \control\qw & \control\qw & \control\qw & \qw \\
	 	\nghost{{t}_{5} :  } & \lstick{{t}_{5} :  }  & \targ\ar @{-} [-1,0] & \qw & \qw & \qw & \qw & \qw & \targ\ar @{-} [-1,0]  & \qw\\
	 	\nghost{{t}_{3} :  } & \lstick{{t}_{3} :  }  & \ctrl{0} & \targ\ar @{-}[-2,0] & \qw & \qw & \qw & \targ\ar @{-}[-2,0] & \ctrl{0} & \qw\\
	 	\nghost{{t}_{7} :  } & \lstick{{t}_{7} :  }  & \targ\ar @{-}[-1,0] & \qw & \qw & \qw & \qw & \qw & \targ\ar @{-}[-1,0]  & \qw\\
	 	\nghost{{t}_{2} :  } & \lstick{{t}_{2} :  }  & \ctrl{0} & \ctrl{0} & \targ\ar @{-}[-4,0] & \qw & \targ\ar @{-}[-4,0] & \ctrl{0} & \ctrl{0} & \qw \\
	 	\nghost{{t}_{6} :  } & \lstick{{t}_{6} :  }  & \targ\ar @{-}[-1,0] & \qw & \qw & \qw & \qw & \qw & \targ\ar @{-}[-1,0]  & \qw\\
	 	\nghost{{t}_{4} :  } & \lstick{{t}_{4} :  }  & \ctrl{0} & \targ\ar @{-}[-2,0] & \qw & \qw & \qw & \targ\ar @{-}[-2,0] & \ctrl{0}  & \qw\\
	 	\nghost{{t}_{8} :  } & \lstick{{t}_{8} :  }  & \targ\ar @{-}[-1,0] & \qw & \qw & \qw & \qw & \qw & \targ\ar @{-}[-1,0] & \qw \\
 } 
 }
 \qcref{mcmtX_a2a_targets}
\]
The last step is achieved by commuting and cancelling out the H gates, which results in transforming controls into targets and vice versa, as been mentioned in [\cite{maslov_depth_2022}, {\em Proposition 1}].
\thm{MCMTX_costs} simply follows.
\begin{theorem}\label{thm:MCMTX_costs}
    Any multi-controlled multi-target $X$ gate with $n\geq 5$ controls, and $m\geq 2$ targets can be implemented without ancilla in the same costs and depths of an MCX gate, with an increase of $2(m-1)$  to CNOT count, and $2\ceil{log_{2}(m)}$ to CNOT depth.
\end{theorem}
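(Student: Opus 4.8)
The plan is to reduce the multi-target gate to a single multi-controlled $X$, exactly as laid out in \qc{mcmtX_a2a_targets}, and then account for the extra CNOT overhead. First I would rewrite each single-target factor using the identity $\MCO{X}{C}{t_j}=\sqO{H}{t_j}\MCO{Z}{C}{t_j}\sqO{H}{t_j}$. Since the gates $\MCO{Z}{C}{t_j}$ act on distinct target qubits and commute with one another and with the Hadamards on the other targets, the whole MCMTX can be written as $(\prod_{j=1}^{m}\sqO{H}{t_j})\,(\prod_{j=1}^{m}\MCO{Z}{C}{t_j})\,(\prod_{j=1}^{m}\sqO{H}{t_j})$, i.e.\ a multi-target multi-controlled $Z$ conjugated by a layer of Hadamards on every target.

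Next I would apply \lem{mcmtz_struct_log} with $C'=C$ and trivial relative phase to collapse $\prod_{j=1}^{m}\MCO{Z}{C}{t_j}$ onto the single gate $\MCO{Z}{C}{t_1}$, conjugated by the fan-out/fan-in CNOT tree among the target qubits. This tree contributes exactly $m-1$ CNOT gates on each side, for a total of $2(m-1)$, arranged in $\ceil{\log_2 m}$ parallel rounds per side, giving CNOT depth $2\ceil{\log_2 m}$, precisely the claimed overhead. Combining the two Hadamards on $t_1$ with $\MCO{Z}{C}{t_1}$ yields a single $\MCO{X}{C}{t_1}$, which by \thm{MCX_costs_short} is implementable at MCX cost using one dirty ancilla; since $m\geq 2$, one of the remaining targets $t_j$ with $j\geq 2$ plays that role, being untouched between the fan-out and fan-in layers, just as $t_2$ serves as the ancilla in \qc{add_targ_mcx}.

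The final step, which I expect to be the main bookkeeping obstacle, is to show that the Hadamards on the targets $t_2,\dots,t_m$ do not survive in the circuit. Commuting each such Hadamard through the CNOT tree converts the adjacent CNOTs into their control/target-swapped form, and the two Hadamards flanking each of these qubits cancel; this is the control-to-target conversion recorded in [\cite{maslov_depth_2022}, {\em Proposition 1}]. After all cancellations, the only surviving Hadamards are those belonging to the single MCX on $t_1$, so the H, T, and single-qubit-rotation counts match the MCX exactly, while the CNOT count and depth exceed it by $2(m-1)$ and $2\ceil{\log_2 m}$ respectively, reproducing the rightmost circuit of \qc{mcmtX_a2a_targets}. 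The one subtlety requiring care is verifying that the CNOT tree acts solely on the target register and never on a control or on the chosen dirty ancilla during the MCX block, so that the ancilla is genuinely free and the MCX cost is unaffected; this follows from the disjointness of the fan-out layer from the single-MCX region.
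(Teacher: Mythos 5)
Your proposal is correct and follows essentially the same route as the paper: conjugating by Hadamards to reduce MCMTX to a multi-target MCZ, collapsing the targets with the logarithmic-depth CNOT fan-out of \lem{mcmtz_struct_log}, using one of the idle targets as the dirty ancilla for the resulting single MCX, and cancelling the remaining Hadamards through the CNOT tree via the control--target conversion of [\cite{maslov_depth_2022}, \emph{Proposition 1}] --- exactly the sequence of transformations shown in \qc{mcmtX_a2a_targets}. The bookkeeping points you flag (the fan-out acting only on the target register, and the ancilla being idle between the fan-out and fan-in layers) are precisely the observations the paper relies on, via \qc{add_targ_mcx}.
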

The costs and depths of our MCMTX structure can be achieved by merging \thm{MCMTX_costs} and \thm{MCX_costs_short}.
 As can be seen in \qc{mcmtX_a2a_targets}, $m-1$ targets can be used as dirty ancilla qubits for the implementation of the MCX gate, while one acts as the required ancilla qubit $a$, and the rest $m-2$ can be used in order to reduce the cost of the MCX gate. We discuss these cost reductions in \apx{extra_dirty}.

 The results of this section can be used to implement any multi-controlled multi-target $\Pi$ gate as well, simply by using \lem{MC_transform} in order to change the axes of rotations. This requires no more than four arbitrary $\{\rx,\rz\}$ gates per target.
}

\section{ATA MCMTSU2 with dirty ancillae}\label{apx:extra_dirty}

We provide a method which reduces the cost of each type of Clifford+T gate used in our decompositions if an additional set $\chi$ of $n_\chi \leq \floor{\frac{n-6}{2}}$ dirty ancilla qubits is available. Since any unused qubit in the circuit can be utilized as a dirty ancilla, this reduction can be applied in any scenario in which the multi controlled gate does not operate on all available qubits.

We note that the $[O_2]$ gates which commute out of the structure corresponding to $C_1$ to form $\Xi$ layers, as mentioned in \apx{deph_reduce}, can be cancelled out if those commute with the structure corresponding with $C_2$. Such commutations can be achieved by using available dirty ancilla qubits as part of $C_2'$ as previously defined. Each available dirty ancilla can be used in that way to reduce the cost by four $[O_2]$ gates, until the number of ancilla reaches $n+O(1)$, and all of these gates have been cancelled out.
Here we present a method which allows to double the number of gates which can be cancelled out due to each added ancilla.\\

The following structure can be used to implement any MCSU2 gate according to \lem{ccrv_macro}, with $C_1^\chi \cup C_1^2 = C_1$.

\[
\scalebox{1.0}{
\Qcircuit @C=1.0em @R=0.2em @!R { 
	 \nghost{{\chi} :  } & \lstick{{\chi} :  } & \qw {/^{n_\chi}} & \qw & \qw \\
     	 	\nghost{{C_1^\chi} :  } & \lstick{{C_1^\chi} :  } & \qw {/^{2n_\chi}} & \ctrl{1} & \qw \\
  \nghost{{C_1^2} :  } & \lstick{{C_1^2} :  } & \qw {/^{n_1'}} & \ctrl{1} & \qw \\
	 	\nghost{{C}_{2} :  } & \lstick{{C}_{2} :  } & \qw {/^{n_2}} & \ctrl{1} & \qw \\
	 	\nghost{{t} :  } & \lstick{{t} :  } & \qw & \rvgate{\lambda}{\hat{v}} & \qw \\
 }
\hspace{5mm}\raisebox{-15mm}{=}\hspace{0mm}
\Qcircuit @C=1.0em @R=0.4em @!R { 
	 \nghost{{\chi} :  } & \lstick{{\chi} :  } & \qw {/^{n_\chi}} & \qw & \multigate{3}{\mathrm{\Delta_1}} & \qw & \qw & \qw & \multigate{3}{\mathrm{\Delta_1^\dagger}} & \qw & \qw & \qw\\
  	 \nghost{{C_1^\chi} :  } & \lstick{{C_1^\chi} :  } & \qw {/^{2n_\chi}} & \ctrl{1} & \ghost{\mathrm{\Delta_1}} & \qw & \qw & \ctrl{1} & \ghost{\mathrm{\Delta_1}} & \qw & \qw & \qw\\
  \nghost{{C}_{1}^2 :  } & \lstick{{C}_{1}^2 :  } & \qw {/^{n_1'}}  & \ctrl{2} & \ghost{\mathrm{\Delta_1}}  & \qw & \multigate{1}{\mathrm{\Delta_2}} & \ctrl{2} & \ghost{\mathrm{\Delta_1 }} & \qw & \multigate{1}{\mathrm{\Delta_2^\dagger}} & \qw \\
	 	\nghost{{C}_{2} :  } & \lstick{{C}_{2} :  } & \qw {/^{n_2}} & \qw & \ghost{\mathrm{\Delta_1}}  & \ctrl{1} & \ghost{\mathrm{\Delta_2}} & \qw & \ghost{\mathrm{\Delta_1^\dagger}} & \ctrl{1} & \ghost{\mathrm{\Delta_2^\dagger}} & \qw \\
	 	\nghost{{t} :  } & \lstick{{t} :  }  & \gate{\mathrm{A_4}} & \ctrl{0} & \gate{\mathrm{A_2}} & \ctrl{0} & \gate{\mathrm{A_3}} & \ctrl{0} & \gate{\mathrm{A_2}} & \ctrl{0} & \gate{\mathrm{A_1}} & \qw \\
 }
 }
 \qcref{mcrpi_base_extra_ancil}
\]
The subsets are defined as follows:
\begin{table}[H]
     \centering
    \scalebox{0.8}{
    \begin{tabular}{ p{6cm} p{6cm} p{6cm} }
 $C_1^\chi=C[1:2n_\chi]$ & $C_1^2=C[2n_\chi+1:2n_\chi+n_1']$ &
 $C_2=C[2n_\chi+n_1'+1:n]$\\
 \\
  $C'_1 =
\begin{cases}
    C_1^2[3:n_2] & n_2=n_1'\\
    \{C_1^2[3:n_1'],C_1^2[2]\} & n_2\not =n_1'\\
\end{cases}$  & $C_2'=\{\chi,C_2[3:n_1']\}$ & $C_1=\{C_1^2[1:2],C_1^\chi,C_1^2[3:n_1']\}$\\
    \end{tabular}
}
\end{table}

The set marked as $C_1^\chi$ holds the first $2n_\chi$ qubits of the control set $C$. The remaining $n-2n_\chi$ control qubits are divided between the sets $C_1^2$ and $C_2$ of size $n_1'=\floor{\frac{n-2n_\chi}{2}}$ and $n_2=\ceil{\frac{n-2n_\chi}{2}}$ respectively. The set $C_1$ holds all $n_1=2n_\chi+\floor{\frac{n-2n_\chi}{2}}$ qubits of sets $C_1^\chi$ and $C_1^2$.

The gate $\sqO{\Delta_2}{\{C_2,C_1^2\}}\MCO{Z}{C_2}{t}=\MCO{S}{C_2[1:2]}{}\{Z\}^{n_2}_{C_2,\{C_1',t\}}$ is decomposed using \lem{Vchain_fullstruct_ivi} as shown in the previous sections. The gate $\sqO{\Delta_1}{\{C,\chi\}}\MCO{Z}{C_1}{t}$ is also decomposed using \lem{Vchain_fullstruct_ivi}, while the qubits in $C_1^\chi$ are arranged as $n_\chi$ pairs, and each pair is treated as a single control bit. This adjustment is simply achieved by adding a control bit to each $[X_\Delta]$ gate which is controlled by such a pair. We use the following notations for a three-controlled relative phase Toffoli and its inverse.

\[
\scalebox{1.0}{
\Qcircuit @C=1.0em @R=0.6em @!R { 
	 	\lstick{} & \ctrlzr{1}  & \qw\\
	 	\lstick{} & \ctrlt{1}  & \qw\\
	 	\lstick{} & \targ & \qw\\
	 	\lstick{} & \ctrl{-1}  & \qw\\
 }
 \hspace{5mm}\raisebox{-8mm}{=}\hspace{3mm}
\Qcircuit @C=1.0em @R=0.14em @!R { 
	 	\lstick{} & \ctrl{1} \barrier[0em]{3} & \qw & \qw & \ctrl{1} & \qw & \qw \\
	 	\lstick{} & \ctrl{1} & \qw & \ctrl{1} & \ctrl{2} & \ctrl{2} & \qw \\
	 	\lstick{} & \targ & \qw & \control\qw & \qw & \qw & \qw \\
	 	\lstick{} & \ctrl{-1} & \qw & \ctrl{-1} & \gate{\mathrm{S}} & \gate{\mathrm{S}} & \qw \\
 }
 \hspace{5mm}\raisebox{-15mm}{,}\hspace{3mm}
 \Qcircuit @C=1.0em @R=0.6em @!R { 
	 	\lstick{} & \ctrlzl{1}  & \qw\\
	 	\lstick{} & \ctrlt{1}  & \qw\\
	 	\lstick{} & \targ & \qw\\
	 	\lstick{} & \ctrl{-1}  & \qw\\
 }
\hspace{5mm}\raisebox{-8mm}{=}\hspace{3mm}
 \Qcircuit @C=1.0em @R=0. em @!R { 
	 	\lstick{} & \qw & \ctrl{1} & \qw \barrier[0em]{3} & \qw & \ctrl{1} & \qw & \qw\\
	 	\lstick{} & \ctrl{2} & \ctrl{2} & \ctrl{1} & \qw & \ctrl{1} & \qw & \qw\\
	 	\lstick{} & \qw & \qw & \control\qw & \qw & \targ & \qw & \qw\\
	 	\lstick{} & \gate{\mathrm{S^\dagger}} & \gate{\mathrm{S^\dagger}} & \ctrl{-1} & \qw & \ctrl{-1} & \qw & \qw\\
 }
 }
 \qcref{three_con_rf_toff_first}
\]

Using \lem{basic_struct}, we get the following version of \qc{add_mcz}.

\[
\Qcircuit @C=1.0em @R=0.2em @!R { 
	 	\nghost{} & \lstick{} & \qw {/^n}  & \ctrl{3} & \qw & \qw \\
	 	\nghost{} & \lstick{}  & \ctrlzr{1} & \qw & \ctrlzl{1} & \qw \\
   	 	\nghost{} & \lstick{}  & \ctrlt{1} & \qw & \ctrlt{1} & \qw \\
	 	\nghost{} & \lstick{} & \targ & \control\qw & \targ & \qw \\
	 	\nghost{} & \lstick{}  & \ctrl{-1} & \qw & \ctrl{-1} & \qw \\
 }
\hspace{5mm}\raisebox{-8mm}{=}\hspace{5mm}
 \Qcircuit @C=1.0em @R=0.3em @!R { 
	 	\lstick{} & \qw {/^n}  & \ctrl{3} & \qw & \qw \\
	 	 \lstick{}  & \ctrl{1} & \qw & \ctrl{1} & \qw \\
   	 	 \lstick{}  & \ctrl{1} & \qw & \ctrl{1} & \qw \\
	 	\lstick{} & \targ & \control\qw & \targ & \qw \\
	 	 \lstick{}  & \ctrl{-1} & \qw & \ctrl{-1} & \qw \\
 }
 \hspace{5mm}\raisebox{-8mm}{=}\hspace{5mm}
 \Qcircuit @C=1.0em @R=0.87em @!R { 
	 	 \lstick{} & \qw {/^n} & \ctrl{1} & \ctrl{3} & \qw \\
	 	 \lstick{} & \qw & \ctrl{1} & \qw & \qw \\
   	 	 \lstick{} & \qw & \ctrl{2} & \qw & \qw \\
	 	 \lstick{} & \qw & \qw & \control\qw & \qw \\
	 	 \lstick{} & \qw & \control\qw & \qw & \qw \\
 }
 \qcref{add_mcz_two}
\]

The following circuit presents the structure used to decompose $\sqO{\Delta_1}{\{C,\chi\}}\MCO{Z}{C_1}{t}=\MCO{S}{C_1[1:2]}{}\{Z\}^{n_1}_{C_1,\{C_2',t\}}$ according to \lem{Vchain_fullstruct} while pairs of qubits from $C_1^\chi$ are treated as a single qubit.

\[
\scalebox{0.7}{
\Qcircuit @C=1.0em @R=0.55em @!R { 
	 	 \lstick{} & \qw & \multigate{16}{\mathrm{\Delta}} & \qw\\
	 	\lstick{} & \qw & \ghost{{\mathrm{\Delta}}} & \qw\\
	 	\lstick{} & \qw & \ghost{{\mathrm{\Delta}}} & \qw\\
	 	\lstick{} & \ctrl{1} & \ghost{{\mathrm{\Delta}}} & \qw\\
	 	\lstick{} & \ctrl{1} & \ghost{{\mathrm{\Delta}}} & \qw\\
	 	\lstick{} & \ctrl{1} & \ghost{{\mathrm{\Delta}}} & \qw\\
	 	\lstick{} & \ctrl{1} & \ghost{{\mathrm{\Delta}}} & \qw\\
	 	\lstick{} & \ctrl{1} & \ghost{{\mathrm{\Delta}}} & \qw\\
	 	\lstick{} & \ctrl{1} & \ghost{{\mathrm{\Delta}}} & \qw\\
	 	\lstick{} & \ctrl{1} & \ghost{{\mathrm{\Delta}}} & \qw\\
	 	\lstick{} & \ctrl{1} & \ghost{{\mathrm{\Delta}}} & \qw\\
	 	\lstick{} & \ctrl{1} & \ghost{{\mathrm{\Delta}}} & \qw\\
	 	\lstick{} & \ctrl{1} & \ghost{{\mathrm{\Delta}}} & \qw\\
	 	\lstick{} & \ctrl{4} & \ghost{{\mathrm{\Delta}}} & \qw\\
	 	\lstick{} & \qw & \ghost{{\mathrm{\Delta}}} & \qw\\
	 	\lstick{} & \qw & \ghost{{\mathrm{\Delta}}} & \qw\\
	 	\lstick{} & \qw & \ghost{{\mathrm{\Delta}}} & \qw\\
	 	\lstick{t} & \control\qw & \qw & \qw \inputgroupv{1}{3}{1.3em}{1.5em}{\chi} \inputgroupv{4}{5}{1.3em}{0.8em}{C_1^2} \inputgroupv{6}{11}{1.3em}{3.5em}{C_1^\chi}
        \inputgroupv{12}{14}{1.3em}{1.5em}{C_1^2}
        \inputgroupv{15}{17}{1.3em}{1.5em}{C_2'}\\
 }
 \hspace{5mm}\raisebox{-40mm}{=}\hspace{10mm}
\Qcircuit @C=1.0em @R=0.17em @!R { 
	 	\lstick{} & \qw \barrier[0em]{17} & \qw & \qw & \qw & \qw & \qw & \qw & \control\qw & \qw & \qw & \qw\\
	 	\lstick{} & \qw & \qw & \qw & \qw & \qw & \qw & \ctrl{2} & \qw & \qw & \qw & \qw\\
	 	\lstick{} & \qw & \qw & \qw & \qw & \qw & \ctrl{1} & \qw & \qw & \qw & \qw & \qw\\
	 	\lstick{} & \ctrl{1} & \qw & \ctrl{1} & \ctrl{1} & \ctrl{1} & \ctrl{1} & \ctrl{1} & \ctrl{-3} & \ctrl{1} & \qw & \qw\\
	 	\lstick{} & \ctrl{1} & \qw & \ctrl{1} & \ctrl{1} & \ctrl{1} & \ctrl{1} & \ctrl{1} & \ctrl{-1} & \gate{\mathrm{S}} & \qw & \qw\\
	 	\lstick{} & \ctrl{1} & \qw & \ctrl{1} & \ctrl{1} & \ctrl{1} & \ctrl{1} & \ctrl{1} & \qw & \qw & \qw & \qw\\
	 	\lstick{} & \ctrl{1} & \qw & \ctrl{1} & \ctrl{1} & \ctrl{1} & \ctrl{1} & \control\qw & \qw & \qw & \qw & \qw\\
	 	\lstick{} & \ctrl{1} & \qw & \ctrl{1} & \ctrl{1} & \ctrl{1} & \ctrl{1} & \qw & \qw & \qw & \qw & \qw\\
	 	\lstick{} & \ctrl{1} & \qw & \ctrl{1} & \ctrl{1} & \ctrl{1} & \control\qw & \qw & \qw & \qw & \qw & \qw\\
	 	\lstick{} & \ctrl{1} & \qw & \ctrl{1} & \ctrl{1} & \ctrl{1} & \qw & \qw & \qw & \qw & \qw & \qw\\
	 	\lstick{} & \ctrl{1} & \qw & \ctrl{1} & \ctrl{1} & \ctrl{4} & \qw & \qw & \qw & \qw & \qw & \qw\\
	 	\lstick{} & \ctrl{1} & \qw & \ctrl{1} & \ctrl{4} & \qw & \qw & \qw & \qw & \qw & \qw & \qw\\
	 	\lstick{} & \ctrl{1} & \qw & \ctrl{4} & \qw & \qw & \qw & \qw & \qw & \qw & \qw & \qw\\
	 	\lstick{} & \ctrl{4} & \qw & \qw & \qw & \qw & \qw & \qw & \qw & \qw & \qw & \qw\\
	 	\lstick{} & \qw & \qw & \qw & \qw & \control\qw & \qw & \qw & \qw & \qw & \qw & \qw\\
	 	\lstick{} & \qw & \qw & \qw & \control\qw & \qw & \qw & \qw & \qw & \qw & \qw & \qw\\
	 	\lstick{} & \qw & \qw & \control\qw & \qw & \qw & \qw & \qw & \qw & \qw & \qw & \qw\\
	 	\lstick{t} & \control\qw & \qw & \qw & \qw & \qw & \qw & \qw & \qw & \qw & \qw & \qw \inputgroupv{1}{3}{1.3em}{1.5em}{\chi} \inputgroupv{4}{5}{1.3em}{0.8em}{C_1^2} \inputgroupv{6}{11}{1.3em}{3.5em}{C_1^\chi}
        \inputgroupv{12}{14}{1.3em}{1.5em}{C_1^2}
        \inputgroupv{15}{17}{1.3em}{1.5em}{C_2'}\\
 }
  \hspace{5mm}\raisebox{-40mm}{=}\hspace{10mm}
\Qcircuit @C=0.5em @R=0.17em @!R { 
	 	\lstick{} & \qw & \qw & \qw & \qw & \qw & \targ & \gate{\mathrm{iZ}} & \targ & \qw & \qw & \qw & \qw & \qw & \qw & \qw\\
	 	\lstick{} & \qw & \qw & \qw & \qw & \targ & \ctrl{-1} & \qw & \ctrl{-1} & \targ & \qw & \qw & \qw & \qw & \qw & \qw\\
	 	\lstick{} & \qw & \qw & \qw & \targ & \ctrl{-1} & \qw & \qw & \qw & \ctrl{-1} & \targ & \qw & \qw & \qw & \qw & \qw\\
	 	\lstick{} & \qw & \qw & \qw & \qw & \qw & \qw & \ctrl{-3} & \qw & \qw & \qw & \qw & \qw & \qw & \qw & \qw\\
	 	\lstick{} & \qw & \qw & \qw & \qw & \qw & \qw & \ctrl{-1} & \qw & \qw & \qw & \qw & \qw & \qw & \qw & \qw\\
	 	\lstick{} & \qw & \qw & \qw & \qw & \qw & \ctrlzr{-4} & \qw & \ctrlzl{-4} & \qw & \qw & \qw & \qw & \qw & \qw & \qw\\
	 	\lstick{} & \qw & \qw & \qw & \qw & \qw & \ctrlt{-1} & \qw & \ctrlt{-1} & \qw & \qw & \qw & \qw & \qw & \qw & \qw\\
	 	\lstick{} & \qw & \qw & \qw & \qw & \ctrlzr{-5} & \qw & \qw & \qw & \ctrlzl{-5} & \qw & \qw & \qw & \qw & \qw & \qw\\
	 	\lstick{} & \qw & \qw & \qw & \qw & \ctrlt{-1} & \qw & \qw & \qw & \ctrlt{-1} & \qw & \qw & \qw & \qw & \qw & \qw\\
	 	\lstick{} & \qw & \qw & \qw & \ctrlzr{-7} & \qw & \qw & \qw & \qw & \qw & \ctrlzl{-7} & \qw & \qw & \qw & \qw & \qw\\
	 	\lstick{} & \qw & \qw & \qw & \ctrlt{-1} & \qw & \qw & \qw & \qw & \qw & \ctrlt{-1} & \qw & \qw & \qw & \qw & \qw\\
	 	\lstick{} & \qw & \qw & \ctrlt{3} & \qw & \qw & \qw & \qw & \qw & \qw & \qw & \ctrlt{3} & \qw & \qw & \qw & \qw\\
	 	\lstick{} & \qw & \ctrlt{3} & \qw & \qw & \qw & \qw & \qw & \qw & \qw & \qw & \qw & \ctrlt{3} & \qw & \qw & \qw\\
	 	\lstick{} & \ctrlt{3} & \qw & \qw & \qw & \qw & \qw & \qw & \qw & \qw & \qw & \qw & \qw & \ctrlt{3} & \qw & \qw\\
	 	\lstick{} & \qw & \qw & \targ & \ctrl{-4} & \qw & \qw & \qw & \qw & \qw & \ctrl{-4} & \targ & \qw & \qw & \qw & \qw\\
	 	\lstick{} & \qw & \targ & \ctrl{-1} & \qw & \qw & \qw & \qw & \qw & \qw & \qw & \ctrl{-1} & \targ & \qw & \qw & \qw\\
	 	\lstick{} & \targ & \ctrl{-1} & \qw & \qw & \qw & \qw & \qw & \qw & \qw & \qw & \qw & \ctrl{-1} & \targ & \qw & \qw\\
	 	\lstick{t} & \ctrl{-1} & \qw & \qw & \qw & \qw & \qw & \qw & \qw & \qw & \qw & \qw & \qw & \ctrl{-1} & \qw & \qw \inputgroupv{1}{3}{1.3em}{1.5em}{\chi} \inputgroupv{4}{5}{1.3em}{0.8em}{C_1^2} \inputgroupv{6}{11}{1.3em}{3.5em}{C_1^\chi}
        \inputgroupv{12}{14}{1.3em}{1.5em}{C_1^2}
        \inputgroupv{15}{17}{1.3em}{1.5em}{C_2'}\\
 }}
 \qcref{spaceship_struct}
\]
In order to decompose this structure in terms of Clifford+T gates, we use the following implementation, as provided in \cite{maslov_advantages_2016}.

\[
\scalebox{1.0}{
\Qcircuit @C=1.0em @R=0.6em @!R { 
	 	\lstick{} & \ctrlzr{1}  & \qw\\
	 	\lstick{} & \ctrlt{1}  & \qw\\
	 	\lstick{} & \targ & \qw\\
	 	\lstick{} & \ctrl{-1}  & \qw\\
 }
 \hspace{5mm}\raisebox{-8mm}{=}\hspace{5mm}
 \Qcircuit @C=0.5em @R=0.em @!R { 
	 	\lstick{} & \qw & \qw & \ctrl{2} & \qw & \qw & \qw & \qw & \qw \barrier[0em]{3} & \qw & \qw & \qw & \qw & \qw & \qw & \qw & \qw & \ctrl{2} & \qw & \qw & \qw & \qw\\
	 	\lstick{} & \qw & \qw & \qw & \qw & \qw & \qw & \ctrl{1} & \qw & \qw & \qw & \qw & \ctrl{1} & \qw & \qw & \qw & \qw & \qw & \qw & \qw & \qw & \qw\\
	 	\lstick{} & \gate{\mathrm{H}} & \gate{\mathrm{T^\dagger}} & \targ & \gate{\mathrm{T}} & \gate{\mathrm{H}} & \gate{\mathrm{T^\dagger}} & \targ & \gate{\mathrm{T}} & \qw & \targ & \gate{\mathrm{T^\dagger}} & \targ & \gate{\mathrm{T}} & \targ & \gate{\mathrm{H}} & \gate{\mathrm{T^\dagger}} & \targ & \gate{\mathrm{T}} & \gate{\mathrm{H}} & \qw & \qw\\
	 	\lstick{} & \qw & \qw & \qw & \qw & \qw & \qw & \qw & \qw & \qw & \ctrl{-1} & \qw & \qw & \qw & \ctrl{-1} & \qw & \qw & \qw & \qw & \qw & \qw & \qw\\
 }
 }
 \qcref{cccxd_cnot}
\]

As can be seen, when this circuit is used in order to decompose \qc{spaceship_struct}, all gates on the left to the barrier can commute with any gate which is applied on the left hand side, and for the inverse gates, a similar set of gate can commute to the right hand side. Moreover, as can be seen from \qc{mcrpi_base_extra_ancil}, these gates commute with the structures controlled by $C_2$, and thus gates which commute from the right hand side of the first structure controlled by $C_1$ cancel out with their counterparts from the left of the second structure controlled by $C_1$. 

The rest of these commuted gates, which commute to the right and to the left of the full MCSU2 structure, cancel out as well since all of these gates operate on a qubit from $\chi$, either as a single qubit gate, or as a gate controlled by a qubit from $C$. Therefore, a similar sequence of these gates and their inverse can be applied (analogically to \qc{commute_sides}) on both sides of the MCSU2 gate shown in \qc{mcrpi_base_extra_ancil} without changing the operator, and cancel out the gates which commute towards the edges.
The rest of the $[X_\Delta]$ gates in \qc{spaceship_struct} are decomposed in the same ways as \qc{rf_toffoli}, creating two $\Xi$ operators which are applied on $C_2' \cup C_1^2$.\\

The gate $\sqO{\Delta_1}{\{C,\chi\}}\MCO{Z}{C_1}{t}$ can be implemented using one $[iZ]$ gate, $2n_1'-4$ $[X_\Delta]$ gates, and $2n_\chi$ three-controlled relative phase Toffoli gates. Noting that due to our choice of $C_1,C_2,C_1',C_2'$, the same depth reductions or CNOT cancellations between pairs of $\Xi$ operators can be achieved $n_1'-2$ times per layer for gates applied on $C_2' \cup C_1^2$.  
The total costs and depths of the MCSU2 structure can be simply achieved from \thm{MCSU2_costs}, with $n$ replaced by $n-2n_\chi$, in addition to the costs and depths of $4n_\chi$ reduced three-controlled Toffoli gates, each contributing cost and depth of 4 CNOT, 4 T and 2 H gates.
We note that the steps used to produce \thm{MCMTSU2_costs} can be used in this case as well, as those are agnostic to the decomposition of the large relative phase gates. \thm{MCSU2_costs_ancil} simply follows.
\begin{theorem}\label{thm:MCSU2_costs_ancil}
    Any multi-controlled multi-target $SU(2)$ with $n\geq 6$ controls, and $m\geq 1$ targets can be implemented with $0\leq n_\chi \leq \floor{\frac{n-6}{2}}$ dirty ancilla qubits using $8m$ gates from $\{\rx,\rz\}$ in depth $8$, in addition to\\
CNOT cost $12n + 8m-8n_\chi-40$ and depth $8n+8\ceil{log_2(m)}-8$,\\
T ~~~~~ cost $16n-16n_\chi-48$ ~~~~~ and depth $8n-6$ ($8n-3$ for odd $n$),\\
H ~~~~~ cost $8n- 8n_\chi-32$ ~~~~~~~~ and depth $4n-11$.
\end{theorem}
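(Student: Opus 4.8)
The plan is to read off all six resource counts directly from the decorated structure already assembled in \qc{mcrpi_base_extra_ancil}, treating it as the no-ancilla macro structure of \lem{ccrv_macro} in which the $C_1$-controlled blocks have been re-expressed using paired controls. The two gates $\sqO{\Delta_2}{\{C_2,C_1^2\}}\MCO{Z}{C_2}{t}$ and its inverse are decomposed exactly as in the no-ancilla case via \lem{Vchain_fullstruct_ivi}, so they contribute nothing beyond what \thm{MCSU2_costs} already accounts for. The two gates $\sqO{\Delta_1}{\{C,\chi\}}\MCO{Z}{C_1}{t}$ and its inverse are instead decomposed through the paired structure \qc{spaceship_struct}, in which the $2n_\chi$ qubits of $C_1^\chi$ are grouped into $n_\chi$ pairs and each pair is promoted to a single control by a three-controlled relative-phase Toffoli. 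The whole theorem then reduces to a bookkeeping claim: \thm{MCSU2_costs} evaluated at $n-2n_\chi$ (the effective number of single controls once each pair is collapsed) plus the cost of the surviving $4n_\chi$ reduced three-controlled Toffolis yields every stated number.

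First I would fix the subset sizes, with $n_1'=\floor{\frac{n-2n_\chi}{2}}$ and $n_2=\ceil{\frac{n-2n_\chi}{2}}$, and note that collapsing each of the $n_\chi$ pairs turns $\sqO{\Delta_1}{\{C,\chi\}}\MCO{Z}{C_1}{t}$ into a $\{Z\}^{n_1}_{C_1,\{C_2',t\}}$-type chain on $n_\chi+n_1'$ effective controls, matching the $C_1$-side of a no-ancilla MCSU2 on $n-2n_\chi$ controls. Substituting $n\mapsto n-2n_\chi$ into \thm{MCSU2_costs} then supplies the base figures $12(n-2n_\chi)-32$, $16(n-2n_\chi)-48$ and $8(n-2n_\chi)-32$ for the CNOT, T and H counts. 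To these I add the $4n_\chi$ reduced three-controlled Toffolis ($2n_\chi$ arising from $\sqO{\Delta_1}{\{C,\chi\}}\MCO{Z}{C_1}{t}$ and $2n_\chi$ from its inverse), each contributing $4$ CNOT, $4$ T and $2$ H gates once the gates to the left of the barrier in \qc{cccxd_cnot} are commuted away and cancelled. Summing gives $12n-8n_\chi-32$, $16n-16n_\chi-48$ and $8n-8n_\chi-32$, which are precisely the $m=1$ specializations of the stated counts.

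The depth is the more delicate part, and the key observation is that each reduced three-controlled Toffoli sits on the staircase critical path and contributes depth $4$ CNOT, $4$ T and $2$ H. Consequently the depth lost by replacing $n$ with $n-2n_\chi$ in \thm{MCSU2_costs} — namely $16n_\chi$ in CNOT depth, $16n_\chi$ in T depth and $8n_\chi$ in H depth — is recovered exactly by the $4n_\chi$ three-controlled gates, returning the depths to the full-$n$ no-ancilla values $8n-8$, $8n-6$ (resp.\ $8n-3$ for odd $n$) and $4n-11$. For this I must still verify two cancellation facts: that the chosen $C_1',C_2'$ let the paired $\Xi$ operators acting on $C_2'\cup C_1^2$ cancel $n_1'-2$ times per layer exactly as in \apx{deph_reduce}, and that every gate which commutes outward to the two edges of the MCSU2 structure — being either a single-qubit gate on a qubit of $\chi$ or a gate controlled by a qubit of $\chi$ — genuinely cancels there, so that only the reduced $4n_\chi$ Toffolis survive. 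These follow from \lem{basic_struct} together with \qc{three_con_rf_toff_first} and \qc{add_mcz_two}, since each qubit of $\chi$ is untouched outside the paired blocks.

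Finally, the extension to $m\ge 1$ targets is handled as in \thm{MCMTSU2_costs}: the multi-target fan-out of \lem{mcmtz_struct_log} is agnostic to how the large relative-phase gates are decomposed, so it applies verbatim to the ancilla-assisted structure, adding $8(m-1)$ to both the $\{\rx,\rz\}$ and CNOT counts and $8\ceil{\log_2 m}$ to the CNOT depth, while the single-qubit $A^j$ gates stay parallel across targets so their depth remains $8$. Combining these increments yields the stated $8m$ rotations in depth $8$, CNOT count $12n+8m-8n_\chi-40$ and CNOT depth $8n+8\ceil{\log_2 m}-8$, with the T and H lines unchanged. I expect the main obstacle to be the depth-balance argument of the third paragraph: one must confirm that the reduced three-controlled Toffoli genuinely costs and occupies depth $4$ CNOT, $4$ T, $2$ H after the edge cancellations, and that these gates truly lie end-to-end on the critical path so that their depth adds to, rather than overlaps with, the surrounding staircase.
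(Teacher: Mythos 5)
Your proposal is correct and follows essentially the same route as the paper's own argument in \apx{extra_dirty}: the same pairing of the $2n_\chi$ qubits of $C_1^\chi$ into three-controlled relative-phase Toffolis, the same substitution $n\mapsto n-2n_\chi$ into \thm{MCSU2_costs}, the same accounting of $4n_\chi$ reduced three-controlled Toffolis at $4$ CNOT, $4$ T, $2$ H each after the left-of-barrier gates of \qc{cccxd_cnot} commute out and cancel on the $\chi$ qubits, and the same target-agnostic application of \lem{mcmtz_struct_log} for $m\geq 1$. The arithmetic for all six counts checks out against the stated theorem.
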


For a minimized T depth decomposition, the same approach is used, with $[X_\Delta]$ and $[iZ]$ gates implemented as \qc{toff_deco_clifft_t_depth} and \qc{ccz_deco_clifft_t_depth}, and the three-controlled relative phase Toffoli implemented as the following circuit.

\[
\scalebox{1.0}{
\Qcircuit @C=1.0em @R=0.5em @!R { 
	 	\lstick{} & \ctrlzr{1}  & \qw\\
	 	\lstick{} & \ctrlt{1}  & \qw\\
	 	\lstick{} & \targ & \qw\\
	 	\lstick{} & \ctrl{-1}  & \qw\\
 }
 \hspace{5mm}\raisebox{-8mm}{=}\hspace{5mm}
 }
 \Qcircuit @C=0.5em @R=0.em @!R { 
	 	\lstick{} & \qw & \qw & \ctrl{2} & \qw & \ctrl{2} & \qw & \qw & \qw & \qw & \qw \barrier[0em]{3} & \qw & \qw & \qw & \qw & \qw & \qw & \qw & \targ & \gate{\mathrm{T^\dagger}} & \targ & \qw & \qw & \qw\\
	 	\lstick{} & \qw & \qw & \qw & \qw & \qw & \qw & \qw & \ctrl{1} & \qw & \ctrl{1} & \qw & \qw & \targ & \gate{\mathrm{T^\dagger}} & \targ & \qw & \qw & \qw & \qw & \qw & \qw & \qw & \qw\\
	 	\lstick{} & \gate{\mathrm{H}} & \gate{\mathrm{T^\dagger}} & \targ & \gate{\mathrm{T}} & \targ & \gate{\mathrm{H}} & \gate{\mathrm{T^\dagger}} & \targ & \gate{\mathrm{T}} & \targ & \qw & \targ & \ctrl{-1} & \gate{\mathrm{T}} & \ctrl{-1} & \targ & \gate{\mathrm{H}} & \ctrl{-2} & \gate{\mathrm{T}} & \ctrl{-2} & \gate{\mathrm{H}} & \qw & \qw\\
	 	\lstick{} & \qw & \qw & \qw & \qw & \qw & \qw & \qw & \qw & \qw & \qw & \qw & \ctrl{-1} & \qw & \qw & \qw & \ctrl{-1} & \qw & \qw & \qw & \qw & \qw & \qw & \qw\\
 }
 \qcref{cccxd_tdepth}
\]

The gates on the left of the barrier are removed in the same manner as described above, and the CNOT cancellations which provide \thm{MCSU2_costs_tdepth} can be applied for gates operating on $C_2' \cup C_1^2$. 
The total costs and depths of the MCSU2 structure can be simply achieved from \thm{MCSU2_costs_tdepth}, with $n$ replaced by $n-2n_\chi$, and in addition to the costs and depths of $4n_\chi$ reduced three-controlled Toffoli gates, each contributing 4 T gates in depth 2, in addition to cost and depth of 6 CNOT and 2 H gates.

\begin{theorem}\label{thm:MCSU2_costs_tdepth_ancil}
    Any multi-controlled multi-target $SU(2)$ with $n\geq 6$ controls, and $m\geq 1$ targets can be implemented with $0\leq n_\chi \leq \floor{\frac{n-6}{2}}$ dirty ancilla qubits using $8m$ gates from $\{\rx,\rz\}$ in depth $8$, in addition to\\
CNOT cost $12.5n+8m-n_\chi-38$ ($+3.5$ for odd $n$) and depth $12n+8\ceil{log_2(m)}-27$ ($+3$ for odd $n$),\\
T ~~~~~ cost $16n-16n_\chi-48$ ~~~~~~~~~~~~~~~~~~~~~~~~~~~ and depth $4n$,\\
H ~~~~~ cost $9n-10n_\chi-36$ ($-1$ for odd $n$) ~~~~~~~~~ and depth $4n-10$,\\
S ~~~~~ cost $0.5n-n_\chi-2$ ($-0.5$ for odd $n$) ~~~~~~~~~ and depth $1$.
\end{theorem}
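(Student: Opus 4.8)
The plan is to follow the same template that establishes \thm{MCSU2_costs_ancil}, but with every three-qubit building block replaced by its T-depth-optimized counterpart. Concretely, I would realize the MCSU2 gate through the dirty-ancilla macro structure of \qc{mcrpi_base_extra_ancil}, which implements \lem{ccrv_macro} with the control partition $C_1=\{C_1^\chi,C_1^2\}$, $C_2$, and the ancilla set $\chi$ paired with $C_1^\chi$. The eight $\{\rx,\rz\}$ target gates and the relative-phase bookkeeping are unchanged from the ancilla-free case; only the decomposition of the four large relative-phase MCZ gates differs, so the entire novelty lies in those four decompositions and in the cancellations they permit.

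Next I would decompose the two $C_2$-controlled relative-phase gates exactly as in the reduced-T-depth ancilla-free construction, i.e. via \lem{Vchain_fullstruct_ivi} using the low-T-depth primitives \qc{toff_deco_clifft_t_depth} and \qc{ccz_deco_clifft_t_depth}. The two $C_1$-controlled relative-phase gates instead use the spaceship structure of \qc{spaceship_struct}, in which the $2n_\chi$ qubits of $C_1^\chi$ are grouped into $n_\chi$ pairs and each pair enters through a three-controlled relative-phase Toffoli; for the T-depth variant these are decomposed using \qc{cccxd_tdepth} rather than \qc{cccxd_cnot}. The crucial structural fact I would then invoke is the commutation argument already used in \apx{T_depth} and \apx{extra_dirty}: the gates to the left of the barriers in \qc{toff_deco_clifft_t_depth} and \qc{cccxd_tdepth} commute outward through the whole circuit, and the commuted portions of the three-controlled Toffolis act only on $\chi$ qubits (either as single-qubit gates or as gates controlled by qubits of $C$). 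Hence the corresponding sequence and its inverse can be appended at both ends of \qc{mcrpi_base_extra_ancil} without altering the operator and then cancel, while the surviving $[X_\Delta]$ gates form the $\Xi$ layers on $C_2'\cup C_1^2$, to which the same CNOT cancellations that yield \thm{MCSU2_costs_tdepth} apply.

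With the cancellations in place the count is purely additive. Starting from \thm{MCSU2_costs_tdepth} with $n$ replaced by $n-2n_\chi$ (since $2n_\chi$ controls are absorbed into the paired structure) and adding the contribution of the $4n_\chi$ reduced three-controlled Toffolis — each contributing $4$ T gates in depth $2$, together with $6$ CNOT and $2$ H gates — reproduces every cost line: for example the T cost $16(n-2n_\chi)-48+16n_\chi=16n-16n_\chi-48$ and the single-target even-$n$ CNOT cost $12.5(n-2n_\chi)-30+24n_\chi=12.5n-n_\chi-30$. The multi-target factors $8(m-1)$ on the CNOT count and $8\ceil{\log_2(m)}$ on the CNOT depth come from \thm{MCMTSU2_costs}, whose argument is agnostic to how the large relative-phase gates are decomposed, and the odd-$n$ corrections are inherited verbatim from \thm{MCSU2_costs_tdepth}.

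The main obstacle I anticipate is not the arithmetic but verifying that the cancellation of the commuted left-of-barrier gates genuinely closes. One must check that every gate commuting to the right of the first $C_1$-controlled structure either annihilates against its counterpart entering the second $C_1$-controlled structure or propagates all the way to a circuit edge, and that the $\chi$-supported remnants of the three-controlled Toffolis pair off correctly between the forward and inverse copies; this is precisely where the specific choice of $C_1^\chi,C_1^2,C_2',C_1'$ and the pairing convention matter. A secondary subtlety is the depth accounting, where I would confirm that each processed pair reinstates exactly the depth that its two individual controls would otherwise have contributed (four reduced Toffolis of depth six per pair, against the $12\cdot 2$ depth of two unpaired V-chain columns), so that the $n\to n-2n_\chi$ substitution is exactly compensated and the depths remain independent of $n_\chi$.
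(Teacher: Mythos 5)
Your proposal is correct and follows essentially the same route as the paper: the dirty-ancilla macro structure of \qc{mcrpi_base_extra_ancil} with the low-T-depth primitives \qc{toff_deco_clifft_t_depth}, \qc{ccz_deco_clifft_t_depth} and \qc{cccxd_tdepth}, the outward commutation and cancellation of the left-of-barrier gates, the $\Xi$-layer CNOT cancellations on $C_2'\cup C_1^2$, and the final count obtained from \thm{MCSU2_costs_tdepth} with $n\to n-2n_\chi$ plus $4n_\chi$ reduced three-controlled Toffolis (each $6$ CNOT, $4$ T in depth $2$, $2$ H). Your arithmetic checks and your depth-compensation observation ($4\times 6=24$ CNOT depth per pair against $12\cdot 2$) match what the paper's accounting implicitly relies on.
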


We note that the number of dirty ancilla qubits can be increased up to $\ceil{\frac{n-2}{2}}$. This is achieved by allowing the implementation of the second and fourth MCZ and $[\Delta]$ gates in \qc{mcrpi_base_extra_ancil} to be implemented as $[iZ]$ or CZ gates when $n_2$ is reduced to 2 or 1. Finally, if $n_2=0$, these controlled gates are replaced by Z gates, which can commute and cancel out. This final stage might increase the Clifford+T count slightly, however it reduces the number of single qubit arbitrary rotations, and halves the cost/depth of CNOT gates which are added for for $m>1$ targets. Note that if this is used, and $n$ is odd, the last control qubit will be added using a $[X_\Delta]$ gate with reduced cost, since the gates which can commute outwards cancel out similarly to the three-controlled Toffoli gates.

The results mentioned in \thm{MCSU2_costs_ancil} and \thm{MCSU2_costs_tdepth_ancil} can be used, together with \thm{MCX_costs_short}-\thm{MCU2_costs} in order to obtain the implementations of the MCX, MCMTX, MCU2 and MCMTU2 gates with additional dirty ancilla qubits, noting that for MCMTX, $m-2$ targets can be used as dirty ancilla if for MCMTX with $m > 2$.

We present our CNOT and T costs of MCSU2, MCX and MCU2 as a function of the number of available dirty ancilla qubits in \fig{fig_ancil_CNOT_cost}. For comparison, we use the known results from \tab{costs_table_ATA}, noting that additional ancilla qubits do not improve the costs in those methods.
When around $n/2$ dirty ancilla qubits are available, our results for MCSU2, MCX an MCU2 are close to those provided by \cite{maslov_advantages_2016} for MCX with a similar number of ancilla. When the number of ancilla is lower, our methods still benefit from this resource, while other methods use at most one of the available ancilla qubits.

\begin{figure}[H]
\centering

\includegraphics[scale=0.8]{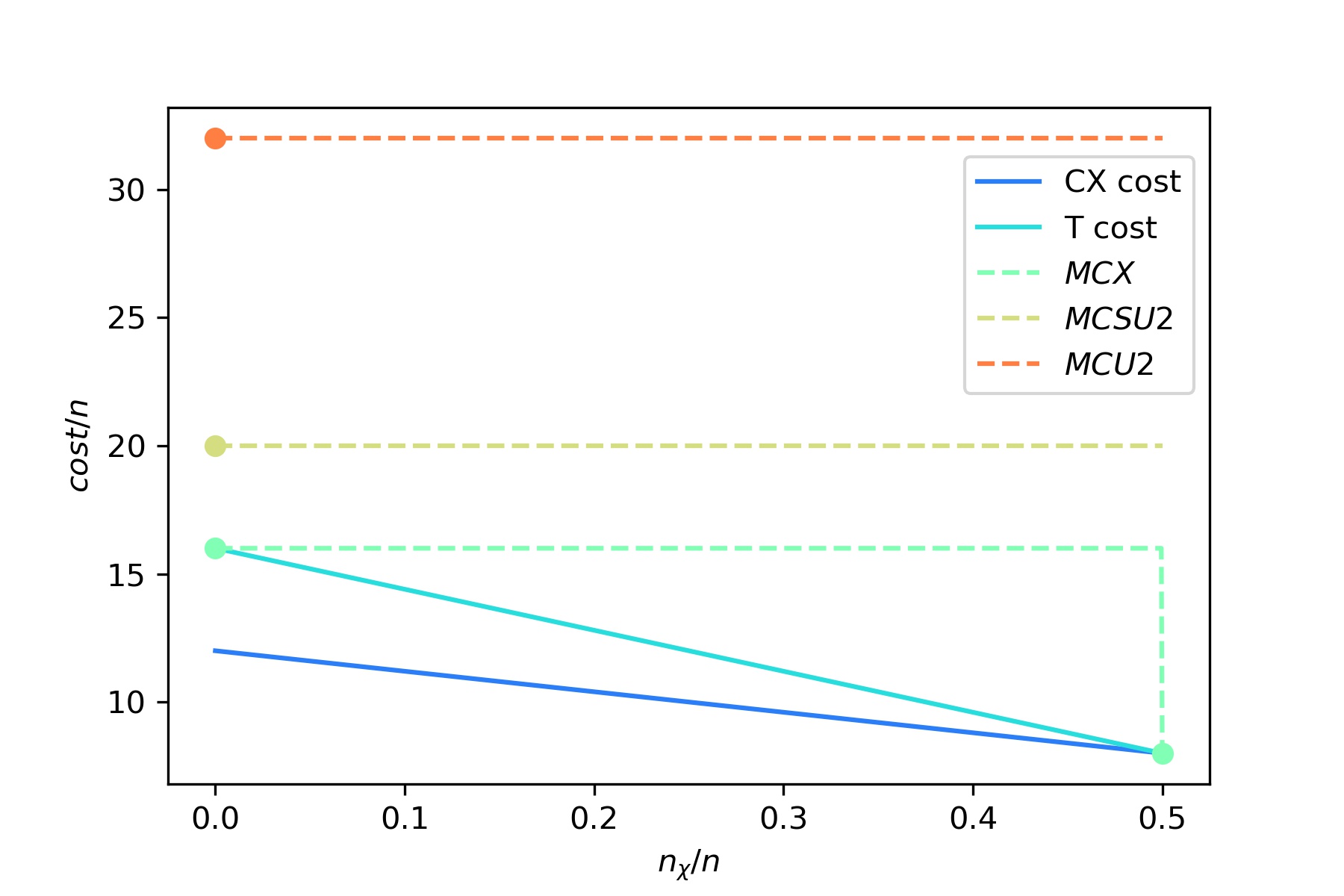}
\caption{Presenting the leading coefficient of CNOT and T costs for a single target, as a function of available dirty ancilla qubit. In our implementations, the leading terms are the same for MCX, MCU2, and MCSU2, and are marked as CX or T cost. Each of the best known methods that we use for comparison (\tab{costs_table_ATA}) provides the same leading term for CNOT and T cost. The results provided for these methods are marked as dots, which are continued as a dashed line if there is no improvement due to additional ancilla. For MCX, the results from \cite{maslov_advantages_2016} are shown for $\frac{n}{2}+O(1)$ ancilla.}  

\label{fig:fig_ancil_CNOT_cost}
\end{figure}

\renewcommand{\controlz}{*!<-0.28em,.0em>-=-<.2em>{\blacktriangleright}}

\renewcommand{\controlzl}{*!<0.28em,.0em>-=-<.2em>{\blacktriangleleft}}

\section{LNN communication overhead}\label{apx:communication_overhead}
In \sec{MCSU2_decomposition}, SWAP chains are used in order to relocate the target qubit $t$ of an LNN MCSU2.
We present the method used to reduce the CNOT cost required to implement the SWAP chains in \thm{lnn_su2_worst}.
Each SWAP gate can be replaced with a partial SWAP that can be implemented using two CNOT gates instead of three. We use the following gate notations.

\[
 \Qcircuit @C=1.0em @R=1.7em @!R { 	 	 \lstick{{}} & \qswap & \qw\\
	 	 \lstick{{}} & \ctrlzr{-1}  & \qw \\
 }
 \hspace{3mm}\raisebox{-3mm}{=}\hspace{3mm}
  \Qcircuit @C=1.0em @R=1.2em @!R { 
	 	 \lstick{{}} & \ctrl{1} & \targ & \qw\\
	 	 \lstick{{}} & \targ & \ctrl{-1}  & \qw \\
 }
 \hspace{3mm}\raisebox{-5mm}{,}\hspace{3mm}
  \Qcircuit @C=1.0em @R=1.7em @!R { 
	 	 \lstick{{}} & \qswap & \qw\\
	 	 \lstick{{}} & \ctrlzl{-1}  & \qw \\
 }
 \hspace{3mm}\raisebox{-3mm}{=}\hspace{3mm}
  \Qcircuit @C=1.0em @R=1.2em @!R { 
	 	 \lstick{{}} & \targ & \ctrl{1}  & \qw\\
	 	 \lstick{{}} & \ctrl{-1} & \targ   & \qw \\
 }
 \qcref{p_swapsies}
\]
such that 
\[
  \Qcircuit @C=1.0em @R=2em @!R { 
	 	 \lstick{{}} & \qswap & \qw\\
	 	 \lstick{{}} & \qswap\qwx{-1}  & \qw \\
 }
 \hspace{3mm}\raisebox{-3mm}{=}\hspace{3mm}
\Qcircuit @C=1.0em @R=1.2em @!R { 
	 	 \lstick{{}} & \qswap &\ctrl{1} & \qw\\
	 	 \lstick{{}} & \ctrlzr{-1} & \targ  & \qw \\
 }
  \hspace{3mm}\raisebox{-3mm}{=}\hspace{3mm}
    \Qcircuit @C=1.0em @R=1.2em @!R { 
	 	 \lstick{{}} & \ctrl{1} & \qswap & \qw\\
	 	 \lstick{{}} &\targ & \ctrlzl{-1}  & \qw \\
 }
 \qcref{p_swapsieses}
\]

Recalling from \lem{ccrv_macro} that $\MCO{\rv{\lambda}{\hat{v}}}{C}{t} = \sqO{A_1}{t}\sqO{A_3^\dagger}{t}\MCO{\rv{\lambda}{\hat{x}}}{C}{t}\sqO{A_4}{t}$, the SWAP chains can be applied as follows.

\[
\scalebox{0.7}{
\Qcircuit @C=1.0em @R=0.2em @!R { 
	 	\nghost{} & \lstick{} & \rvgate{\lambda}{\hat{v}} & \qw & \qw\\
	 	\nghost{} & \lstick{} & \ctrl{-1} & \qw & \qw\\
	 	\nghost{} & \lstick{} & \qw & \qw & \qw\\
	 	\nghost{} & \lstick{} & \ctrl{-2} & \qw & \qw\\
	 	\nghost{} & \lstick{} & \ctrl{-1} & \qw & \qw\\
	 	\nghost{} & \lstick{} & \qw & \qw & \qw\\
	 	\nghost{} & \lstick{} & \ctrl{-2} & \qw & \qw\\
 }
  \hspace{5mm}\raisebox{-20mm}{=}\hspace{0mm}
  \Qcircuit @C=0.6em @R=0.em @!R { 
	 	\nghost{} & \lstick{} & \gate{\mathrm{A_4}} & \qswap & \qw & \qw & \qw & \qw & \qw & \ctrl{2} & \qw & \qw & \qw & \qw & \qw & \qw & \qswap & \gate{\mathrm{A_1}} & \qw & \qw\\
	 	\nghost{} & \lstick{} & \qw & \qswap \qwx[-1] & \qswap & \qw & \qw & \qw & \qw & \qw & \qw & \qw & \qw & \qw & \qw & \qswap & \qswap \qwx[-1] & \qw & \qw & \qw\\
	 	\nghost{} & \lstick{} & \qw & \qw & \qswap \qwx[-1] & \qswap & \qw & \qw & \qw & \ctrl{1} & \qw & \qw & \qw & \qw & \qswap & \qswap \qwx[-1] & \qw & \qw & \qw & \qw\\
	 	\nghost{} & \lstick{} & \qw & \qw & \qw & \qswap \qwx[-1] & \qswap & \qw & \qw & \ctrl{2} & \qw & \qw & \qw & \qswap & \qswap \qwx[-1] & \qw & \qw & \qw & \qw & \qw\\
	 	\nghost{} & \lstick{} & \qw & \qw & \qw & \qw & \qswap \qwx[-1] & \qswap & \qw & \qw & \qw & \qw & \qswap & \qswap \qwx[-1] & \qw & \qw & \qw & \qw & \qw & \qw\\
	 	\nghost{} & \lstick{} & \qw & \qw & \qw & \qw & \qw & \qswap \qwx[-1] & \qswap & \ctrl{1} & \qw & \qswap & \qswap \qwx[-1] & \qw & \qw & \qw & \qw & \qw & \qw & \qw\\
	 	\nghost{} & \lstick{} & \qw & \qw & \qw & \qw & \qw & \qw & \qswap \qwx[-1] & \rvgate{\lambda}{\hat{x}} & \gate{\mathrm{A_3^\dagger}} & \qswap \qwx[-1] & \qw & \qw & \qw & \qw & \qw & \qw & \qw & \qw\\
 }
 \hspace{5mm}\raisebox{-20mm}{=}\hspace{0mm}
  \Qcircuit @C=0.6em @R=0.em @!R { 
	 	\nghost{} & \lstick{} & \gate{\mathrm{A_4}} & \qswap & \qw & \qw & \qw & \qw & \qw & \ctrl{2} & \qw & \qw & \qw & \qw & \qw & \qw & \qswap & \gate{\mathrm{A_1}} & \qw & \qw\\
	 	\nghost{} & \lstick{} & \qw & \ctrlzr{-1} & \qswap & \qw & \qw & \qw & \qw & \qw & \qw & \qw & \qw & \qw & \qw & \qswap & \ctrlzl{-1} & \qw & \qw & \qw\\
	 	\nghost{} & \lstick{} & \qw & \qw & \ctrlzr{-1} & \qswap & \qw & \qw & \qw & \ctrl{1} & \qw & \qw & \qw & \qw & \qswap & \ctrlzl{-1} & \qw & \qw & \qw & \qw\\
	 	\nghost{} & \lstick{} & \qw & \qw & \qw & \ctrlzr{-1} & \qswap & \qw & \qw & \ctrl{2} & \qw & \qw & \qw & \qswap & \ctrlzl{-1} & \qw & \qw & \qw & \qw & \qw\\
	 	\nghost{} & \lstick{} & \qw & \qw & \qw & \qw & \ctrlzr{-1} & \qswap & \qw & \qw & \qw & \qw & \qswap & \ctrlzl{-1} & \qw & \qw & \qw & \qw & \qw & \qw\\
	 	\nghost{} & \lstick{} & \qw & \qw & \qw & \qw & \qw & \ctrlzr{-1} & \qswap & \ctrl{1} & \qw & \qswap & \ctrlzl{-1} & \qw & \qw & \qw & \qw & \qw & \qw & \qw\\
	 	\nghost{} & \lstick{} & \qw & \qw & \qw & \qw & \qw & \qw & \ctrlzr{-1} & \rvgate{\lambda}{\hat{x}} & \gate{\mathrm{A_3^\dagger}} & \ctrlzl{-1} & \qw & \qw & \qw & \qw & \qw & \qw & \qw & \qw\\
 }
  }
  \qcref{swap_chain_t}
\]
with the final circuit achieved by applying CNOT cancellations between the SWAP chains. CNOT gates are first "extracted" from each side as follows.
\[
\scalebox{0.78}{
\Qcircuit @C=0.6em @R=0.4em @!R { 
	 	\nghost{} & \lstick{} & \qswap & \qw & \qw & \qw & \qw & \qw & \qw & \qw\\
	 	\nghost{} & \lstick{}  & \qswap \qwx[-1] & \qswap & \qw & \qw & \qw & \qw & \qw & \qw\\
	 	\nghost{} & \lstick{}  & \qw & \qswap \qwx[-1] & \qswap & \qw & \qw & \qw & \qw & \qw\\
	 	\nghost{} & \lstick{}  & \qw & \qw & \qswap \qwx[-1] & \qswap & \qw & \qw & \qw & \qw\\
	 	\nghost{} & \lstick{}  & \qw & \qw & \qw & \qswap \qwx[-1] & \qswap & \qw & \qw & \qw\\
	 	\nghost{} & \lstick{}  & \qw & \qw & \qw & \qw & \qswap \qwx[-1] & \qswap & \qw & \qw\\
	 	\nghost{} & \lstick{}  & \qw & \qw & \qw & \qw & \qw & \qswap \qwx[-1] & \qw & \qw\\
 }
  \hspace{5mm}\raisebox{-14mm}{=}\hspace{0mm}
  \Qcircuit @C=0.6em @R=0.4em @!R { 
	 	\nghost{} & \lstick{}  & \qswap & \qw & \qw & \qw & \qw & \qw & \ctrl{6} & \qw & \qw\\
	 	\nghost{} & \lstick{}  & \ctrlzr{-1} & \qswap & \qw & \qw & \qw & \qw & \qw & \qw & \qw\\
	 	\nghost{} & \lstick{}  & \qw & \qswap \qwx[-1] & \qswap & \qw & \qw & \qw & \qw & \qw & \qw\\
	 	\nghost{} & \lstick{}  & \qw & \qw & \qswap \qwx[-1] & \qswap & \qw & \qw & \qw & \qw & \qw\\
	 	\nghost{} & \lstick{}  & \qw & \qw & \qw & \qswap \qwx[-1] & \qswap & \qw & \qw & \qw & \qw\\
	 	\nghost{} & \lstick{}  & \qw & \qw & \qw & \qw & \qswap \qwx[-1] & \qswap & \qw & \qw & \qw\\
	 	\nghost{} & \lstick{}  & \qw & \qw & \qw & \qw & \qw & \qswap \qwx[-1] & \targ & \qw & \qw\\
 }
   \hspace{5mm}\raisebox{-14mm}{=}\hspace{0mm}
   \Qcircuit @C=0.5em @R=0.4em @!R { 
	 	\nghost{} & \lstick{}  & \qswap & \qw & \qw & \qw & \qw & \qw & \qw & \qw & \qw & \qw & \qw & \ctrl{6} & \qw & \qw\\
	 	\nghost{} & \lstick{}  & \ctrlzr{-1} & \qswap & \qw & \qw & \qw & \qw & \qw & \qw & \qw & \qw & \ctrl{5} & \qw & \qw & \qw\\
	 	\nghost{} & \lstick{}  & \qw & \ctrlzr{-1} & \qswap & \qw & \qw & \qw & \qw & \qw & \qw & \ctrl{4} & \qw & \qw & \qw & \qw\\
	 	\nghost{} & \lstick{}  & \qw & \qw & \ctrlzr{-1} & \qswap & \qw & \qw & \qw & \qw & \ctrl{3} & \qw & \qw & \qw & \qw & \qw\\
	 	\nghost{} & \lstick{}  & \qw & \qw & \qw & \ctrlzr{-1} & \qswap & \qw & \qw & \ctrl{2} & \qw & \qw & \qw & \qw & \qw & \qw\\
	 	\nghost{} & \lstick{}  & \qw & \qw & \qw & \qw & \ctrlzr{-1} & \qswap & \ctrl{1} & \qw & \qw & \qw & \qw & \qw & \qw & \qw\\
	 	\nghost{} & \lstick{}  & \qw & \qw & \qw & \qw & \qw & \ctrlzr{-1} & \targ & \targ & \targ & \targ & \targ & \targ & \qw & \qw\\
 }}
 \qcref{cnot_less_1}
\]
 The extracted CNOT gates can be removed since each such gate commutes with $\sqO{A_3^\dagger}{t}\MCO{\rv{\lambda}{\hat{x}}}{C}{t}$, as both operators apply a rotation about the $\hat{x}$ axis on the bottom qubit. In addition, the CNOT gates are controlled by qubits which are either controls of the MCSU2 gate, or are not affected by it.

Since each partial SWAP is decomposed as two CNOT gates, the maximal number of CNOT gates required for this reordering is $4d$, with $d$ as the "distance" of the swap, i.e. the number of qubits below the location of $t$. The CNOT depth is no larger than $2(d+2)$, as can be realised from the following. 
\[
\scalebox{0.8}{
 \Qcircuit @C=0.6em @R=0.4em @!R { 
	 	\nghost{} & \lstick{} & \qswap & \qw & \qw & \qw & \qw & \qw & \qw\\
	 	\nghost{} & \lstick{}  & \ctrlzr{-1} & \qswap & \qw & \qw & \qw & \qw & \qw \\
	 	\nghost{} & \lstick{}  & \qw & \ctrlzr{-1} & \qswap & \qw & \qw & \qw  & \qw\\
	 	\nghost{} & \lstick{}  & \qw & \qw & \ctrlzr{-1} & \qswap & \qw & \qw  & \qw\\
	 	\nghost{} & \lstick{}  & \qw & \qw & \qw & \ctrlzr{-1} & \qswap & \qw & \qw \\
	 	\nghost{} & \lstick{}  & \qw & \qw & \qw & \qw & \ctrlzr{-1} & \qswap & \qw\\
	 	\nghost{} & \lstick{}  & \qw & \qw & \qw & \qw & \qw & \ctrlzr{-1}  & \qw\\
 }
   \hspace{5mm}\raisebox{-14mm}{=}\hspace{0mm}
 \Qcircuit @C=0.2em @R=0.4em @!R { 
	 	\nghost{} & \lstick{} & \ctrl{1} & \targ & \qw & \qw & \qw & \qw & \qw & \qw & \qw & \qw & \qw & \qw & \qw & \qw\\
	 	\nghost{} & \lstick{} & \targ & \ctrl{-1} & \ctrl{1} & \targ & \qw & \qw & \qw & \qw & \qw & \qw & \qw & \qw & \qw & \qw\\
	 	\nghost{} & \lstick{} & \qw & \qw & \targ & \ctrl{-1} & \ctrl{1} & \targ & \qw & \qw & \qw & \qw & \qw & \qw & \qw & \qw\\
	 	\nghost{} & \lstick{} & \qw & \qw & \qw & \qw & \targ & \ctrl{-1} & \ctrl{1} & \targ & \qw & \qw & \qw & \qw & \qw & \qw\\
	 	\nghost{} & \lstick{} & \qw & \qw & \qw & \qw & \qw & \qw & \targ & \ctrl{-1} & \ctrl{1} & \targ & \qw & \qw & \qw & \qw\\
	 	\nghost{} & \lstick{} & \qw & \qw & \qw & \qw & \qw & \qw & \qw & \qw & \targ & \ctrl{-1} & \ctrl{1} & \targ & \qw & \qw\\
	 	\nghost{} & \lstick{} & \qw & \qw & \qw & \qw & \qw & \qw & \qw & \qw & \qw & \qw & \targ & \ctrl{-1} & \qw & \qw\\
 }
  \hspace{5mm}\raisebox{-14mm}{=}\hspace{0mm}
 \Qcircuit @C=0.2em @R=0.4em @!R { 
	 	\nghost{} & \lstick{} & \ctrl{1} & \qw & \targ & \qw & \qw & \qw & \qw & \qw & \qw\\
	 	\nghost{} & \lstick{}  & \targ & \ctrl{1} & \ctrl{-1} & \targ & \qw & \qw & \qw & \qw & \qw\\
	 	\nghost{} & \lstick{}  & \qw & \targ & \ctrl{1} & \ctrl{-1} & \targ  & \qw  & \qw & \qw & \qw\\
	 	\nghost{} & \lstick{}  & \qw & \qw & \targ & \ctrl{1} & \ctrl{-1} & \targ  & \qw & \qw & \qw\\
	 	\nghost{} & \lstick{}  & \qw & \qw & \qw & \targ & \ctrl{1} & \ctrl{-1} & \targ & \qw & \qw\\
	 	\nghost{} & \lstick{}  & \qw & \qw & \qw & \qw & \targ & \ctrl{1} & \ctrl{-1} & \targ & \qw \\
	 	\nghost{} & \lstick{}  & \qw & \qw & \qw & \qw & \qw & \targ  & \qw & \ctrl{-1} & \qw\\
 }
 }
 \qcref{cnot_less_2}
\]
In the MCSU2 structure, two additional CNOT gates can be cancelled out, noting that the structure between the partial SWAP chains in \qc{swap_chain_t} must either begin or end with a $\MCO{X}{q_{k}}{q_{k-1}}$ gate. This is since $q_{k-1}$ cannot be in both $C_1$ and $C_2$.\\

A similar analysis can be used for multi-controlled multi-target gates in order to place all of the target qubits at the bottom, however, the method described in \apx{MCMTSU2_lnn} can be used instead if the number of targets is large. 
The MCU2 is implemented as an MCMTSU2 with two target qubits. In this case it is beneficial to place the targets below all control qubits, as can be realised from the cost analysis in
\sec{lnn_mcmtu2}.
 The SWAP gates can be replaced by partial SWAP gates in this case as well, by first applying the swap chains for the target qubit which is nearest to the bottom as in \qc{swap_chain_t}, and then repeat a similar process for the second qubit. This can be realised from the following example.

\[
\scalebox{0.6}{
\Qcircuit @C=1.0em @R=0.2em @!R { 
	 	\nghost{} & \lstick{} & \rvgate{-2\psi}{\hat{z}} & \qw & \qw\\
	 	\nghost{} & \lstick{} & \ctrl{-1} & \qw & \qw\\
	 	\nghost{} & \lstick{} & \rvgate{\lambda}{\hat{v}} \ar @{-} [-1,0] & \qw & \qw\\
	 	\nghost{} & \lstick{} & \ctrl{-1} & \qw & \qw\\
	 	\nghost{} & \lstick{} & \qw & \qw & \qw\\
	 	\nghost{} & \lstick{} & \ctrl{-2} & \qw & \qw\\
	 	\nghost{} & \lstick{} & \ctrl{-1} & \qw & \qw\\
	 	\nghost{} & \lstick{} & \qw & \qw & \qw\\
	 	\nghost{} & \lstick{} & \ctrl{-2} & \qw & \qw\\
 }
 \hspace{5mm}\raisebox{-23mm}{=}\hspace{0mm}
 \Qcircuit @C=0.6em @R=0em @!R { 
	 	\nghost{} & \lstick{} & \gate{\mathrm{B_{2}H}} & \qswap & \qw & \qw & \qw & \qw & \qw & \qw & \qw & \ctrl{1} & \qw & \qw & \qw & \qw & \qw & \qw & \qswap & \gate{\mathrm{H}} & \qw & \qw\\
	 	\nghost{} & \lstick{} & \qw & \qswap \qwx[-1] & \qswap & \qw & \qw & \qw & \qw & \qw & \qw & \ctrl{2} & \qw & \qw & \qw & \qw & \qw & \qswap & \qswap \qwx[-1] & \qw & \qw & \qw\\
	 	\nghost{} & \lstick{} & \gate{\mathrm{A_4}} & \qswap & \qswap \qwx[-1] & \qswap & \qw & \qw & \qw & \qw & \qw & \qw & \qw & \qw & \qw & \qw & \qswap & \qswap \qwx[-1] & \qswap & \gate{\mathrm{A_1}} & \qw & \qw\\
	 	\nghost{} & \lstick{} & \qw & \qswap \qwx[-1] & \qswap & \qswap \qwx[-1] & \qswap & \qw & \qw & \qw & \qw & \ctrl{1} & \qw & \qw & \qw & \qswap & \qswap \qwx[-1] & \qswap & \qswap \qwx[-1] & \qw & \qw & \qw\\
	 	\nghost{} & \lstick{} & \qw & \qw & \qswap \qwx[-1] & \qswap & \qswap \qwx[-1] & \qswap & \qw & \qw & \qw & \ctrl{2} & \qw & \qw & \qswap & \qswap \qwx[-1] & \qswap & \qswap \qwx[-1] & \qw & \qw & \qw & \qw\\
	 	\nghost{} & \lstick{} & \qw & \qw & \qw & \qswap \qwx[-1] & \qswap & \qswap \qwx[-1] & \qswap & \qw & \qw & \qw & \qw & \qswap & \qswap \qwx[-1] & \qswap & \qswap \qwx[-1] & \qw & \qw & \qw & \qw & \qw\\
	 	\nghost{} & \lstick{} & \qw & \qw & \qw & \qw & \qswap \qwx[-1] & \qswap & \qswap \qwx[-1] & \qswap & \qw & \ctrl{1} & \qswap & \qswap \qwx[-1] & \qswap & \qswap \qwx[-1] & \qw & \qw & \qw & \qw & \qw & \qw\\
	 	\nghost{} & \lstick{} & \qw & \qw & \qw & \qw & \qw & \qswap \qwx[-1] & \qswap & \qswap \qwx[-1] & \gate{\mathrm{B_2^\dagger}} & \rvgate{-2\psi}{\hat{x}}\ar @{-} [1,0] & \qswap \qwx[-1] & \qswap & \qswap \qwx[-1] & \qw & \qw & \qw & \qw & \qw & \qw & \qw\\
	 	\nghost{} & \lstick{} & \qw & \qw & \qw & \qw & \qw & \qw & \qswap \qwx[-1] & \qw & \qw & \rvgate{\lambda}{\hat{x}} & \gate{\mathrm{A_3^\dagger}} & \qswap \qwx[-1] & \qw & \qw & \qw & \qw & \qw & \qw & \qw & \qw\\
 }
  \hspace{5mm}\raisebox{-23mm}{=}\hspace{0mm}
 \Qcircuit @C=0.6em @R=0em @!R { 
	 	\nghost{} & \lstick{} & \gate{\mathrm{B_{2}H}} & \qswap & \qw & \qw & \qw & \qw & \qw & \qw & \qw & \ctrl{1} & \qw & \qw & \qw & \qw & \qw & \qw & \qswap & \gate{\mathrm{H}} & \qw & \qw\\
	 	\nghost{} & \lstick{} & \qw & \ctrlzr{-1} & \qswap & \qw & \qw & \qw & \qw & \qw & \qw & \ctrl{2} & \qw & \qw & \qw & \qw & \qw & \qswap & \ctrlzl{-1} & \qw & \qw & \qw\\
	 	\nghost{} & \lstick{} & \gate{\mathrm{A_4}} & \qswap & \ctrlzr{-1} & \qswap & \qw & \qw & \qw & \qw & \qw & \qw & \qw & \qw & \qw & \qw & \qswap & \ctrlzl{-1} & \qswap & \gate{\mathrm{A_1}} & \qw & \qw\\
	 	\nghost{} & \lstick{} & \qw & \ctrlzr{-1} & \qswap & \ctrlzr{-1} & \qswap & \qw & \qw & \qw & \qw & \ctrl{1} & \qw & \qw & \qw & \qswap & \ctrlzl{-1} & \qswap & \ctrlzl{-1} & \qw & \qw & \qw\\
	 	\nghost{} & \lstick{} & \qw & \qw & \ctrlzr{-1} & \qswap & \ctrlzr{-1} & \qswap & \qw & \qw & \qw & \ctrl{2} & \qw & \qw & \qswap & \ctrlzl{-1} & \qswap & \ctrlzl{-1} & \qw & \qw & \qw & \qw\\
	 	\nghost{} & \lstick{} & \qw & \qw & \qw & \ctrlzr{-1} & \qswap & \ctrlzr{-1} & \qswap & \qw & \qw & \qw & \qw & \qswap & \ctrlzl{-1} & \qswap & \ctrlzl{-1} & \qw & \qw & \qw & \qw & \qw\\
	 	\nghost{} & \lstick{} & \qw & \qw & \qw & \qw & \ctrlzr{-1} & \qswap & \ctrlzr{-1} & \qswap & \qw & \ctrl{1} & \qswap & \ctrlzl{-1} & \qswap & \ctrlzl{-1} & \qw & \qw & \qw & \qw & \qw & \qw\\
	 	\nghost{} & \lstick{} & \qw & \qw & \qw & \qw & \qw & \ctrlzr{-1} & \qswap & \ctrlzr{-1} & \gate{\mathrm{B_2^\dagger}} & \rvgate{-2\psi}{\hat{x}}\ar @{-} [1,0] & \ctrlzl{-1} & \qswap & \ctrlzl{-1}& \qw & \qw & \qw & \qw & \qw & \qw & \qw\\
	 	\nghost{} & \lstick{} & \qw & \qw & \qw & \qw & \qw & \qw & \ctrlzr{-1} & \qw & \qw & \rvgate{\lambda}{\hat{x}} & \gate{\mathrm{A_3^\dagger}} & \ctrlzl{-1} & \qw & \qw & \qw & \qw & \qw & \qw & \qw & \qw\\
 }
}
\qcref{cnot_less_3}
\]

The depth of the swap chains vary slightly, depending on the distance between the target qubits. From \qc{cnot_less_4}, it can be noted that the CNOT depth is no larger than $2(k+4)$.

\[
\scalebox{0.8}{
\Qcircuit @C=0.6em @R=0.4em @!R { 
	 	\nghost{} & \lstick{} & \qw & \qswap & \qw & \qw & \qw & \qw & \qw & \qw & \qw\\
	 	\nghost{} & \lstick{} & \qswap & \ctrlzr{-1} & \qswap & \qw & \qw & \qw & \qw & \qw & \qw\\
	 	\nghost{} & \lstick{} & \ctrlzr{-1} & \qswap & \ctrlzr{-1} & \qswap & \qw & \qw & \qw & \qw & \qw\\
	 	\nghost{} & \lstick{} & \qw & \ctrlzr{-1} & \qswap & \ctrlzr{-1} & \qswap & \qw & \qw & \qw & \qw\\
	 	\nghost{} & \lstick{} & \qw & \qw & \ctrlzr{-1} & \qswap & \ctrlzr{-1} & \qswap & \qw & \qw & \qw\\
	 	\nghost{} & \lstick{} & \qw & \qw & \qw & \ctrlzr{-1} & \qswap & \ctrlzr{-1} & \qswap & \qw & \qw\\
	 	\nghost{} & \lstick{} & \qw & \qw & \qw & \qw & \ctrlzr{-1} & \qswap & \ctrlzr{-1} & \qw & \qw\\
	 	\nghost{} & \lstick{} & \qw & \qw & \qw & \qw & \qw & \ctrlzr{-1} & \qw & \qw & \qw\\
 }
 \hspace{5mm}\raisebox{-18mm}{=}\hspace{0mm}
 \Qcircuit @C=0.2em @R=0.4em @!R { 
	 	\nghost{} & \lstick{} & \qw & \qw & \ctrl{1} & \targ & \qw & \qw & \qw & \qw & \qw & \qw & \qw & \qw & \qw & \qw & \qw & \qw\\
	 	\nghost{} & \lstick{} & \ctrl{1} & \targ & \targ & \ctrl{-1} & \ctrl{1} & \targ & \qw & \qw & \qw & \qw & \qw & \qw & \qw & \qw & \qw & \qw\\
	 	\nghost{} & \lstick{} & \targ & \ctrl{-1} & \ctrl{1} & \targ & \targ & \ctrl{-1} & \ctrl{1} & \targ & \qw & \qw & \qw & \qw & \qw & \qw & \qw & \qw\\
	 	\nghost{} & \lstick{} & \qw & \qw & \targ & \ctrl{-1} & \ctrl{1} & \targ & \targ & \ctrl{-1} & \ctrl{1} & \targ & \qw & \qw & \qw & \qw & \qw & \qw\\
	 	\nghost{} & \lstick{} & \qw & \qw & \qw & \qw & \targ & \ctrl{-1} & \ctrl{1} & \targ & \targ & \ctrl{-1} & \ctrl{1} & \targ & \qw & \qw & \qw & \qw\\
	 	\nghost{} & \lstick{} & \qw & \qw & \qw & \qw & \qw & \qw & \targ & \ctrl{-1} & \ctrl{1} & \targ & \targ & \ctrl{-1} & \ctrl{1} & \targ & \qw & \qw\\
	 	\nghost{} & \lstick{} & \qw & \qw & \qw & \qw & \qw & \qw & \qw & \qw & \targ & \ctrl{-1} & \ctrl{1} & \targ & \targ & \ctrl{-1} & \qw & \qw\\
	 	\nghost{} & \lstick{} & \qw & \qw & \qw & \qw & \qw & \qw & \qw & \qw & \qw & \qw & \targ & \ctrl{-1} & \qw & \qw & \qw & \qw\\
 }
\hspace{5mm}\raisebox{-18mm}{=}\hspace{0mm}
 \Qcircuit @C=0.2em @R=0.4em @!R { 
	 	\nghost{} & \lstick{}  & \qw & \qw & \qw & \ctrl{1} & \qw & \targ & \qw & \qw & \qw & \qw & \qw & \qw & \qw\\
	 	\nghost{} & \lstick{} & \ctrl{1} & \qw & \targ & \targ & \ctrl{1} & \ctrl{-1} & \targ & \qw & \qw & \qw & \qw & \qw & \qw\\
	 	\nghost{} & \lstick{} & \targ & \ctrl{1} & \ctrl{-1} & \targ & \targ & \ctrl{1} & \ctrl{-1} & \targ & \qw & \qw & \qw & \qw & \qw\\
	 	\nghost{} & \lstick{} & \qw & \targ & \ctrl{1} & \ctrl{-1} & \targ & \targ & \ctrl{1} & \ctrl{-1} & \targ & \qw & \qw & \qw & \qw\\
	 	\nghost{} & \lstick{} & \qw & \qw & \targ & \ctrl{1} & \ctrl{-1} & \targ & \targ & \ctrl{1} & \ctrl{-1} & \targ & \qw & \qw & \qw\\
	 	\nghost{} & \lstick{} & \qw & \qw & \qw & \targ & \ctrl{1} & \ctrl{-1} & \targ & \targ & \ctrl{1} & \ctrl{-1} & \targ & \qw & \qw\\
	 	\nghost{} & \lstick{} & \qw & \qw & \qw & \qw & \targ & \ctrl{1} & \ctrl{-1} & \targ & \targ & \qw & \ctrl{-1} & \qw & \qw\\
	 	\nghost{} & \lstick{} & \qw & \qw & \qw & \qw & \qw & \targ & \qw & \ctrl{-1} & \qw & \qw & \qw & \qw & \qw\\
 }
}
\qcref{cnot_less_4}
\]

\section{LNN specific cases}\label{apx:specific_cases}

We recall that in the ATA case, additional ancilla qubits are a useful resource, which may assist in reducing gate counts. In the LNN case, ancilla qubits may be seen as an obstacle, as these increase the value of $k$ and therefore the number of CNOT gates as shown in \sec{MCSU2_decomposition}.
However, in certain cases, the ancilla qubits can be used in order to reduce the gate count in LNN connectivity. 

As can be seen in our example in \qc{mcsu2_struct_lnn}, some of the control qubits of $C_1$ are located above all $C_2$ qubits. In this case, the gates which commute out of the $C_1$ structures to form $\Xi$ layers (the gates on the left of the barrier in \qc{rf_toffoli_LNN}) may be cancelled out, if those commute with the structure between them, which corresponds to $C_2$. The identities described in \qc{remove_qubit_1} and \qc{remove_qubit_2} may be used in some cases to increase the number of gates which can be commuted and cancelled out in this way. In addition, similarly to the ATA case, some of the gates commuting towards the edges of the full MCSU2 structure can be removed by applying gates which commute with the MCSU2 on both sides as follows.
\[
\scalebox{0.7}{
\Qcircuit @C=1.0em @R=0.2em @!R { 
	 	\nghost{} & \lstick{} & \ctrl{2} & \qw \\
	 	\nghost{} & \lstick{} & \qw & \qw & \\
	 	\nghost{} & \lstick{} & \ctrl{3} & \qw \\
	 	\nghost{} & \lstick{} & \qw & \qw \\
	 	\nghost{} & \lstick{} & \qw & \qw \\
	 	\nghost{} & \lstick{} & \ctrl{2} & \qw \\
	 	\nghost{} & \lstick{} & \qw & \qw \\
	 	\nghost{} & \lstick{} & \ctrl{1} & \qw \\
	 	\nghost{{}   } & \lstick{{}   } & \ar @{.} [1,0] & \\
	 	\nghost{{}  } & \lstick{{}  } &  &  \\
	 	\nghost{} & \lstick{} & \rvgate{\lambda}{\hat{v}}\ar @{-} [-1,0] & \qw \\
 }
 \hspace{5mm}\raisebox{-30mm}{=}\hspace{0mm}
\Qcircuit @C=1.0em @R=0.2em @!R { 
	 	\nghost{} & \lstick{} & \qw & \qw & \qw & \qw & \qw \barrier[0em]{10} & \qw & \ctrl{2} \barrier[0em]{10} & \qw & \qw & \qw & \qw & \qw & \qw & \qw \\
	 	\nghost{} & \lstick{} & \targ & \gate{\mathrm{T^\dagger}} & \targ & \gate{\mathrm{T}} & \gate{\mathrm{H}} & \qw & \qw & \qw & \gate{\mathrm{H}} & \gate{\mathrm{T^\dagger}} & \targ & \gate{\mathrm{T}} & \targ & \qw \\
	 	\nghost{} & \lstick{} & \ctrl{-1} & \qw & \ctrl{-1} & \qw & \qw & \qw & \ctrl{3} & \qw & \qw & \qw & \ctrl{-1} & \qw & \ctrl{-1} & \qw \\
	 	\nghost{} & \lstick{} & \qw & \qw & \qw & \qw & \qw & \qw & \qw & \qw & \qw & \qw & \qw & \qw & \qw & \qw \\
	 	\nghost{} & \lstick{} & \targ & \gate{\mathrm{T^\dagger}} & \targ & \gate{\mathrm{T}} & \gate{\mathrm{H}} & \qw & \qw & \qw & \gate{\mathrm{H}} & \gate{\mathrm{T^\dagger}} & \targ & \gate{\mathrm{T}} & \targ & \qw \\
	 	\nghost{} & \lstick{} & \ctrl{-1} & \qw & \ctrl{-1} & \qw & \qw & \qw & \ctrl{2} & \qw & \qw & \qw & \ctrl{-1} & \qw & \ctrl{-1} & \qw \\
	 	\nghost{} & \lstick{} & \targ & \gate{\mathrm{T^\dagger}} & \targ & \gate{\mathrm{T}} & \gate{\mathrm{H}} & \qw & \qw & \qw & \gate{\mathrm{H}} & \gate{\mathrm{T^\dagger}} & \targ & \gate{\mathrm{T}} & \targ & \qw \\
	 	\nghost{} & \lstick{} & \ctrl{-1} & \qw & \ctrl{-1} & \qw & \qw & \qw & \ctrl{1} & \qw & \qw & \qw & \ctrl{-1} & \qw & \ctrl{-1} & \qw \\
	 	\nghost{} & \lstick{} &  & & &  &  &  & \ar @{.} [1,0] &  &  &  &  &  &  & \\
	 	\nghost{} & \lstick{} &  &  &  &  &  & &  &  &  &  &  &  &  &  \\
	 	\nghost{} & \lstick{} & \qw & \qw & \qw & \qw & \qw & \qw & \rvgate{\lambda}{\hat{v}}\ar @{-} [-1,0] & \qw & \qw & \qw & \qw & \qw & \qw & \qw \\
 }
 }
 \qcref{commute_sides}
\]

This allows for the following cancellations between the added gates, and the ones commuted to the left from \qc{rf_toffoli_LNN} (similarly for the right side of the MCSU2).

\[
\scalebox{0.8}{
  \Qcircuit @C=0.5em @R=0.2em @!R { 
	 	\nghost{} & \lstick{} & \targ & \gate{\mathrm{T^\dagger}} & \targ & \gate{\mathrm{T}} & \gate{\mathrm{H}} \barrier[0em]{1} & \qw & \gate{\mathrm{H}} & \ctrl{1} & \gate{\mathrm{T^\dagger}} & \targ & \qw \\
	 	\nghost{} & \lstick{} & \ctrl{-1} & \qw & \ctrl{-1} & \qw & \qw & \qw & \qw & \targ & \gate{\mathrm{T}} & \ctrl{-1} & \qw \\
 }
   \hspace{5mm}\raisebox{-3mm}{=}\hspace{0mm}
 \Qcircuit @C=0.5em @R=0.2em @!R { 
	 	\nghost{} & \lstick{} & \targ & \gate{\mathrm{T^\dagger}} & \qswap & \qw & \qw \\
	 	\nghost{} & \lstick{} & \ctrl{-1} & \qw & \qswap \qwx[-1] & \gate{\mathrm{T}} & \qw \\
 }
   \hspace{5mm}\raisebox{-3mm}{=}\hspace{0mm}
   \Qcircuit @C=0.5em @R=0.8em @!R { 
	 	\nghost{} & \lstick{} & \ctrl{1} & \targ & \qw \\
	 	\nghost{} & \lstick{} & \targ & \ctrl{-1} & \qw \\
 }
    \hspace{5mm}\raisebox{-3mm}{=}\hspace{0mm}
   \Qcircuit @C=0.5em @R=0.8em @!R { 
	 	\nghost{} & \lstick{} & \ctrlzl{1} &  \qw \\
	 	\nghost{} & \lstick{} & \qswap &  \qw \\
 }
}
\qcref{lnn_gate_reductions}
\]
Therefore, each "useful" ancilla allows to remove $8$ T gates, $4$ CNOT and $4$ H gates. These cancellations apply when the ancilla qubit is located directly above the control qubit. However, if the ancilla was originally located directly below the control, it can be swapped as follows.
\[
\Qcircuit @C=1.0em @R=0.2em @!R { 
	 	\nghost{} & \lstick{} & \ctrl{1} & \qw & \qw\\
	 	\nghost{} & \lstick{} & \ctrl{2} & \qw & \qw\\
	 	\nghost{} & \lstick{} & \qw & \qw & \qw\\
	 	\nghost{} & \lstick{} & \ctrl{2} & \qw & \qw\\
	 	\nghost{} & \lstick{} & \qw & \qw & \qw\\
	 	\nghost{} & \lstick{} & \ctrl{1} & \qw & \qw\\
	 	\nghost{} & \lstick{} & \ctrl{3} & \qw & \qw\\
	 	\nghost{} & \lstick{} & \qw & \qw & \qw\\
	 	\nghost{} & \lstick{} & \qw & \qw & \qw\\
	 	\nghost{} & \lstick{} & \ctrl{1} & \qw & \qw\\
	 	\nghost{} & \lstick{} & \ar @{.} [1,0] &  & \\
	 	\nghost{} & \lstick{} &  & & \\
 }
  \hspace{5mm}\raisebox{-20mm}{=}\hspace{0mm}
\Qcircuit @C=1.0em @R=0.2em @!R { 
	 	\nghost{} & \lstick{} & \qw & \ctrl{1} & \qw & \qw & \qw\\
	 	\nghost{} & \lstick{} & \qw & \ctrl{2} & \qw & \qw & \qw\\
	 	\nghost{} & \lstick{} & \qw & \qw & \qw & \qw & \qw\\
	 	\nghost{} & \lstick{} & \qw & \ctrl{2} & \qw & \qw & \qw\\
	 	\nghost{} & \lstick{} & \qw & \qw & \qw & \qw & \qw\\
	 	\nghost{} & \lstick{} & \qw & \ctrl{2} & \qw & \qw & \qw\\
	 	\nghost{} & \lstick{} & \ctrlzr{0} & \qw & \ctrlzl{0} & \qw & \qw\\
	 	\nghost{} & \lstick{} & \qswap \qwx[-1] & \ctrl{2} & \qswap \qwx[-1] & \qw & \qw\\
	 	\nghost{} & \lstick{} & \qw & \qw & \qw & \qw & \qw\\
	 	\nghost{} & \lstick{} & \qw & \ctrl{1} & \qw & \qw & \qw\\
	 	\nghost{} & \lstick{} &  & \ar @{.} [1,0] &  &  & \\
	 	\nghost{} & \lstick{} &  &  & & & \\
 }
 \qcref{that_one_cir}
\]
This has two benefits. First, as can be seen, the two remaining CNOT gates in \qc{lnn_gate_reductions} can be cancelled out, and thus reducing the CNOT cost by additional $4$. Moreover, this swap may increase the number of $C_1$ qubits located above all $C_2$ ones, thus allowing for more such cancellations by making more ancilla qubits "useful", and at the same time reduce $k_2$, thus reducing the number of CNOT gate even further. 

From this arises an important question - in which cases is it beneficial to apply swap gates? As can be seen, if one wishes to minimize the number of T gates, and is willing to "pay" with a quadratic increase of CNOT gates, then many SWAP chains can be applied in order to choose the a qubit ordering which maximizes the number of useful ancilla.
Otherwise, one might compare the costs of each possible qubit reordering and select the most efficient result, considering all Clifford+T gates. As this might be expensive in terms of classical computation, we identify some cases in which it is always beneficial to reorder the qubits. 

Initially, both $k_1$ and $k_2$ may be reduced by relocating the bottom $\{C,t\}$ qubit upward towards the nearest $\{C,t\}$ one above it, using two partial SWAP chains. Similarly, $k_1$ can be reduced by relocating the top qubit downwards, unless this results in an increase of $k_2$. 

Finally, one can easily identify the first pair of neighboring control qubits which do not include $C[1]$, and the nearest ancilla qubit below this pair. As mentioned above, if the ancilla is directly below this pair, it is always beneficial to apply a swap, and iteratively continue to identify the next neighboring pair of controls. In case the ancilla is not directly below, it is beneficial to relocate it upward if the number of added CNOT gates applied for this relocation is no larger than the number of CNOT gates which this relocation allows to reduce.

We consider the specific control permutation $\Omega :=[1,1,1,0,1,0,1,..,0,1,0,0]$ such that the target qubit $t$ is at the end. After one iteration of the mentioned procedure, a SWAP($q_3,q_4$) will be applied, and the new order will be $\Omega'=[1,1,0,1,1,0,1,..,0,1,0,0]$, then SWAP($q_5,q_6$) will be applied and so on. Finally, the permutation will be $\Omega^{(n-2)}=[1,1,0,1,0,1,0,..,1,0,1,0]$.  This case assumes $k=2n-1$, i.e. $n-2$ available dirty ancilla qubits, and allows for all $[X_\Delta]$ gates to be implemented as the gates to the right of the barrier in \qc{rf_toffoli_LNN}, or their inverse. 

Furthermore, all applied SWAP gates are automatically cancelled during the process. In this case, $k_2 = n_2= 0$, and $l'_{0,1}=1$. The complete structure will require two SWAP-$[iZ]$ gates and $4(n-2)$ reduced $[X_\Delta]$ gates in addition to five $\{\rx,\rz\}$ gates. We note that since $n_2=0$, in this case the following applies for $n\geq 3$.

\begin{theorem}\label{thm:lnn_su2_row}
    Any multi-controlled $SU(2)$ gate with $n\geq 3$ controls can be implemented over $k=2n-1$ qubits, ordered as $\Omega$ or its reversed string, in LNN connectivity such that the target qubit $t$ is at the top or bottom, using five gates from $\{\rx,\rz\}$, in addition to\\
CNOT cost $12n-12$ and depth $12n-12$,\\
T ~~~~~ cost $8n-8$ ~~ and depth $4n-4$,\\
H ~~~~~ cost $4n-8$ ~~ and depth $4n-8$.
\end{theorem}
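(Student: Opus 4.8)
The plan is to specialize the general LNN construction of \sec{MCSU2_decomposition} to the ordering $\Omega$ and to verify that, in this case, every structural simplification promised in the surrounding discussion actually fires, after which the stated numbers follow by a direct tally. First I would record the combinatorics of $\Omega=[1,1,1,0,1,0,\dots,0,1,0,0]$: with $k=2n-1$ qubits, $n$ of them controls and the target at the bottom, exactly $n-2$ qubits are free and serve as dirty ancillae. Because the whole construction rests on \lem{ccrv_macro}, I would push all controls into one set by taking $C_2=\emptyset$; then the two $\MCO{Z}{C_2}{t}$ factors collapse to bare $Z$ gates on $t$ that absorb into neighbouring $\{\rx,\rz\}$ rotations, leaving exactly five arbitrary rotations and two copies of the $C_1$-structure $\{\overline{Z}\}^{k_1}_{C_1,Q_1}$ run in series, separated by an $\rx$ rotation on $t$.

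Next I would run the iteration described just before the statement: apply $\mathrm{SWAP}(q_3,q_4)$, $\mathrm{SWAP}(q_5,q_6),\dots$, carrying $\Omega$ through $\Omega'=[1,1,0,1,1,0,\dots]$ to the terminal pattern $\Omega^{(n-2)}=[1,1,0,1,0,1,0,\dots,1,0,1,0]$. I would then check two things. (i) In $\Omega^{(n-2)}$ only the first two controls are adjacent, so the hypotheses of \lem{Vchain_fullstruct_lnn} hold, the innermost gate is a CCZ ($l'_{0,1}=1$) realised as the SWAP-$[iZ]$ of \qc{rz_toffoli_LNN}, and every remaining ``add a control'' gate is a neighbouring Toffoli, i.e.\ a reduced $[X_\Delta]$. (ii) Each SWAP produced by the procedure is absorbed for free: using the identities \qc{lnn_gate_reductions} and \qc{that_one_cir}, a SWAP on two controls merges into an adjacent reduced $[X_\Delta]$ or cancels against the mirror copy at the other end of the series, so no SWAP survives. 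I would also invoke \qc{commute_sides}: since all controls below the first adjacent pair now sit above the empty $C_2$ block, the left-of-barrier half of each $[X_\Delta]$ in \qc{rf_toffoli_LNN} commutes outward and cancels against its partner, so every Toffoli is implemented by its right-of-barrier half alone.

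With the circuit reduced to $4(n-2)$ right-of-barrier $[X_\Delta]$ gates, two SWAP-$[iZ]$ gates, and five rotations, the counts become a summation. Reading \qc{rf_toffoli_LNN} to the right of the barrier gives $3$ CNOT, $2$ T (applicable in a single layer) and $1$ H per reduced $[X_\Delta]$; reading \qc{rz_toffoli_LNN} gives $6$ CNOT, $4$ T (in two layers) and $0$ H per SWAP-$[iZ]$. Summing, the CNOT count is $3\cdot 4(n-2)+6\cdot 2=12n-12$, the T count is $2\cdot 4(n-2)+4\cdot 2=8n-8$, and the H count is $1\cdot 4(n-2)=4n-8$. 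For the depths I would note that in the alternating pattern the reduced $[X_\Delta]$ gates and the central $[iZ]$ share qubits all the way down the line, so the two series $C_1$-blocks admit no CNOT or H parallelism and their CNOT/H depths equal their counts; the T gates, by contrast, pack to depth $1$ inside each reduced $[X_\Delta]$ and depth $2$ inside each SWAP-$[iZ]$, giving T depth $4(n-2)+2\cdot 2=4n-4$.

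The main obstacle I expect is the cancellation bookkeeping of step (ii): one must show that across the full iteration every SWAP produced is consumed and that the left-of-barrier halves of all $4(n-2)$ Toffolis cancel simultaneously, which requires tracking the commutations through both series blocks at once rather than gate by gate. The counting is then routine, and the only remaining subtlety on the depth side is justifying that the intra-gate T layers do not serialise further with their neighbours in the LNN cascade.
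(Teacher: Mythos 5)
Your proposal follows the same route as the paper: specialize the LNN construction to $\Omega$, apply the iterative swap procedure to reach $\Omega^{(n-2)}$ so that $C_2=\emptyset$ (five rotations), every ``add a control'' gate becomes a reduced right-of-barrier $[X_\Delta]$ with the SWAPs and left-of-barrier halves cancelling via \qc{lnn_gate_reductions}, \qc{that_one_cir} and \qc{commute_sides}, and then tally $4(n-2)$ reduced $[X_\Delta]$ plus two SWAP-$[iZ]$ gates to get exactly the stated counts and depths. The only slip is cosmetic: the absorbed SWAPs act on a control and an adjacent ancilla, not on two controls.
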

This can be considered as the best-case scenario; however, slightly different permutations may result in a reduction of some costs/depths while increasing others.\\
We cover an additional specific case in which all qubits in the circuit are used, i.e. $k=n+1$, and the target is located at the top or bottom. For this case, $C=[1,1,1,..,1,1,0]$. It is clear that $l'_{0,1}=l'_{0,2}=1$, $k_1=k$, and $k_2=k-2$. The following result can be achieved by following the steps presented in \sec{MCSU2_decomposition}.

\begin{theorem}\label{thm:lnn_su2_rowg}
    Any multi-controlled $SU(2)$ gate with $n\geq 6$ controls can be implemented over $k=n+1$ qubits in LNN connectivity such that the target qubit $t$ is at the top or bottom, using $8$ gates from $\{\rx,\rz\}$, in addition to\\
CNOT cost $20n-48$ and depth $12n$,\\
T ~~~~~ cost $16n-48$ and depth $4n$,\\
H ~~~~~ cost $8n-32$~ and depth $4n-11$.
\end{theorem}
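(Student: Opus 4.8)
The plan is to specialise the general LNN machinery of \sec{MCSU2_decomposition} to the single control configuration $C=[1,1,\ldots,1,0]$ over $k=n+1$ qubits, in which every qubit except the target is a control and $t=q_{k}$ already sits at the bottom edge. Because $t$ is at an edge, no relocation SWAP chains are needed, so I would apply the derivation behind \lem{mcsu2_lnn_cost_basic} directly rather than \thm{lnn_su2_worst}: the implementation is exactly the LNN macro structure of \lem{ccrv_macro}, decomposed through \lem{Vchain_fullstruct_lnn}. Since the claim concerns a concrete string, the task reduces to fixing the four structural parameters $k_1,k_2,l'_{0,1},l'_{0,2}$ and substituting them into the per-structure costs of \lem{vchainz_cost_depth_LNN}.

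First I would run the subset-splitting procedure of \sec{MCSU2_macro_lnn} on $C$. It assigns $q_1,q_2$ to $C_1$ and $q_3,q_4$ to $C_2$, after which the remaining controls alternate, giving $C_1=\{q_1,q_2,q_5,q_7,\ldots\}$ and $C_2=\{q_3,q_4,q_6,q_8,\ldots\}$; in each subset only the first two controls are neighbouring, as required. I would then identify $Q_1=Q$, so $k_1=n+1=k$, and $Q_2=Q[3:k]$, so $k_2=n-1=k-2$, while $n_1+n_2=n$. Reading off $l'_0$ from the first non-control qubit of each subchain, the pair $q_1,q_2$ (resp.\ $q_3,q_4$) lies directly above the first skipped qubit, so $|C^{l_0}|=2$ in both cases and hence $l'_{0,1}=l'_{0,2}=1$, i.e.\ each subchain terminates in a CCZ. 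Since $C_2\neq\emptyset$, all eight $\{\rx,\rz\}$ gates of \lem{ccrv_macro} are retained.

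With these parameters fixed I would decompose the two $C_1$ and two $C_2$ subchains via \lem{Vchain_fullstruct_lnn} and \lem{vchainz_cost_depth_LNN}, arguing that the attached SWAP and CS gates cancel exactly as in the general case (the $C_1$ SWAPs act on qubits untouched by the $C_2$ structures, and the $C_2$ SWAPs commute through $[\Delta_1^\dagger]$ because swapping neighbouring non-control qubits of $\{\overline{Z}\}^{k_1}_{C_1,Q_1}$ has no effect). Writing each cost in \lem{vchainz_cost_depth_LNN} as $a_k k+a_n n+b_l l'_0+b$, the total becomes $2a_k(k_1+k_2)+2a_n n+2b_l(l'_{0,1}+l'_{0,2})+4b$; here $k_1+k_2=2n$ and $l'_{0,1}+l'_{0,2}=2$, so substituting the coefficients for CNOT $(2,6,-3,-9)$, T $(0,8,-4,-8)$ and H $(0,4,-6,-2)$ yields costs $20n-48$, $16n-48$ and $8n-32$, while the depth coefficients CNOT $(2,2,1,-1)$ and T $(0,2,0,0)$ give depths $12n$ and $4n$. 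Finally I would apply the H-depth reduction of $3$ coming from paralleled gates in the full structure, giving $4n-11$, noting that the extra two-H cancellation does not trigger because $l'_{0,1}=1$.

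The main obstacle is the structural bookkeeping of step two rather than the arithmetic: one must verify that the deterministic splitting procedure really produces the alternating pattern and that this forces $l'_{0,1}=l'_{0,2}=1$ together with the tight relation $k_2=k-2$, since it is precisely these favourable $l'_0=1$ values (rather than the worst-case $l'_0=0$ used in \lem{mcsu2_lnn_cost_basic}) that sharpen $8k+12n-48$ to $20n-48$. The secondary care-point is confirming that every SWAP and CS gate introduced by the CCZ-to-$[iZ]$ conversion indeed cancels for this specific ordering, so that no residual terms survive.
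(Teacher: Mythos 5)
Your proposal is correct and follows essentially the same route as the paper: the paper's own justification is simply the observation that for $C=[1,1,\ldots,1,0]$ one has $k_1=k$, $k_2=k-2$ and $l'_{0,1}=l'_{0,2}=1$, followed by substitution into the general cost formula of \sec{MCSU2_decomposition} (your traced run of the splitting procedure, the resulting alternating pattern, the CCZ terminations, the SWAP/CS cancellations, and the final arithmetic including the H-depth reduction by three all check out). You in fact supply more explicit verification of these structural parameters than the paper does.
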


The costs and depths of MCX and MCU2 with one dirty/clean ancilla can be obtained in a similar way. As the structures used for these gates are based on the MCSU2 implementation, it is clear that in these cases the leading terms are the same as mentioned above.

{\section{LNN Multi-controlled multi-target SU(2)}\label{apx:MCMTSU2_lnn}

We provide an implementation for any LNN restricted MCMTSU2 gate with a set $C=\{c_1,c_2,..,c_n\}$ of $n$ controls and $\tau=\{t_1,t_2,..,t_m\}$ of $m$ targets implemented over a set $Q'=\{q_1,q_2,..,q_{k'}\}$ of LNN qubits. The set $Q\in Q'$ of size $k$ is defined in this case as the smallest set of LNN qubits which satisfies $C\cup \tau \in Q$ and $q_k\not\in C$.

At first we assume that each target qubit in $\tau$ is located below all qubits in $C$, i.e. $C^j=C$ for any $q_j \in \tau$.
In this case, recalling the ATA MCMTSU2 case in \lem{mcmtsu2_struct}, the structure provided for a single target LNN MCSU2 can be directly used, along with $8(m-1)$ additional single qubit rotations from $\{\rx,\rz\}$ in order to implement the LNN MCMTSU2.
This is achieved since $\{ \overline{ Z }\}^{k_1}_{C_1,Q_1}$ and $\{ \overline{ Z }\}^{k_2}_{C_2,Q_2}$
include $\prod_{t\in\tau}\MCO{Z}{C_1}{t}$ and $\prod_{t\in\tau}\MCO{Z}{C_2}{t}$ respectively, and no other MCZ gate is applied on qubits from $\tau$.  

However, in order to guarantee that all target qubits are at the bottom of the circuit, many SWAP chains must be applied and in a random qubit permutation, this may result in a quadratic overhead of $O(mk)$ CNOT gates.
In order to maintain a linear ($O(k)$) cost  for each basic gate type, we provide a method which can be applied for any permutation in which $q_k\not \in C$. In case $q_k\in C$, two partial SWAP chains can be used in order to relocate one qubit as before. Since in this case, any non-control qubit can be chosen, the maximal distance from the nearest edge is $\floor{\frac{n}{2}}$ qubits. The CNOT cost overhead in the worst case is $4\floor{\frac{n}{2}}$ in depth $2\floor{\frac{n}{2}}+4$, as shown in \apx{communication_overhead}.

The macro structure in this case is similar to \qc{mcmtsu2_base}, with $[\Delta]$ gates applied on subsets of $Q\setminus \tau$, and the subsets $C_1$ and $C_2$ are defined similarly to the LNN MCSU2 case with one additional condition -
for each of the subsets $C_1,C_2$ it must hold that any consecutive subset of target qubits from $\tau$ may not have two neighboring control qubits.

A similar procedure to the one defined for MCSU2 can be used while treating the qubit above and below consecutive targets as nearest neighbors. In addition, in order to simplify this case, we are not allowing for any nearest neighboring controls in either subset, which in turn removes the $[iZ]$ gate from this structure.

Starting with $C_1=\emptyset$, $C_2=\emptyset$, the iteration index $l=1$, and an additional index $l'=1$. At each iteration, $q_l$ is added to $C_1$ if
\[
q_l\in C\land q_{l'}\not\in C_1.
\]
Such that the qubit $q_{l'}$ holds the previous non-target qubit if $l>1$.
If this condition is not met and $q_l\in C$, the qubit $q_l$ is added to $C_2$. If $q_l\not\in \tau$, the index $l'$ is then set to the value of $l$. Finally, $l$ is increased by 1 for the next iteration, stopping when $C_1 \cup C_2$ holds all of the control qubits in $C$, i.e. $n_1+n_2=n$. 

 The following circuit provides an example of the macro structure achieved for an arbitrary choice of $C,\tau$ and $Q$.

\[
\scalebox{0.65}{
\Qcircuit @C=1.0em @R=0.4em @!R { 
	 	\nghost{{q}_{1} :  } & \lstick{{q}_{1} :  } & \ctrl{2} & \qw & \qw\\
	 	\nghost{{q}_{2} :  } & \lstick{{q}_{2} :  } & \qw & \qw & \qw\\
	 	\nghost{{q}_{3} :  } & \lstick{{q}_{3} :  } & \ctrl{1} & \qw & \qw\\
	 	\nghost{{q}_{4} :  } & \lstick{{q}_{4} :  } & \gate{\mathrm{W_1}}\ar @{-} [2,0] & \qw & \qw\\
	 	\nghost{{q}_{5} :  } & \lstick{{q}_{5} :  } & \qw & \qw & \qw\\
	 	\nghost{{q}_{6} :  } & \lstick{{q}_{6} :  } & \ctrl{2} & \qw & \qw\\
	 	\nghost{{q}_{7} :  } & \lstick{{q}_{7} :  } & \qw & \qw & \qw\\
	 	\nghost{{q}_{8} :  } & \lstick{{q}_{8} :  } & \ctrl{1} & \qw & \qw\\
	 	\nghost{{q}_{9} :  } & \lstick{{q}_{9} :  } & \ctrl{1} & \qw & \qw\\
	 	\nghost{{q}_{10} :  } & \lstick{{q}_{10} :  } & \gate{\mathrm{W_2}}\ar @{-} [1,0] & \qw & \qw\\
	 	\nghost{{q}_{11} :  } & \lstick{{q}_{11} :  } & \ctrl{2} & \qw & \qw\\
	 	\nghost{{q}_{12} :  } & \lstick{{q}_{12} :  } & \qw & \qw & \qw\\
	 	\nghost{{q}_{13} :  } & \lstick{{q}_{13} :  } & \ctrl{1} & \qw & \qw\\
	 	\nghost{{q}_{14} :  } & \lstick{{q}_{14} :  } & \ctrl{1} & \qw & \qw\\
	 	\nghost{{q}_{15} :  } & \lstick{{q}_{15} :  } & \ctrl{2} & \qw & \qw\\
	 	\nghost{{q}_{16} :  } & \lstick{{q}_{16} :  } & \qw & \qw & \qw\\
	 	\nghost{{q}_{17} :  } & \lstick{{q}_{17} :  } & \ctrl{1} & \qw & \qw\\
	 	\nghost{{q}_{18} :  } & \lstick{{q}_{18} :  } & \ctrl{1} & \qw & \qw\\
	 	\nghost{{q}_{19} :  } & \lstick{{q}_{19} :  } & \gate{\mathrm{W_3}}\ar @{-} [2,0] & \qw & \qw\\
	 	\nghost{{q}_{20} :  } & \lstick{{q}_{20} :  } & \qw & \qw & \qw\\
	 	\nghost{{q}_{21} :  } & \lstick{{q}_{21} :  } & \gate{\mathrm{W_4}} & \qw & \qw\\
 }
 \hspace{5mm}\raisebox{-70mm}{=}\hspace{0mm}
\Qcircuit @C=1.0em @R=0.em @!R { 
	 	\nghost{{q}_{1} :  } & \lstick{{q}_{1} :  } & \qw & \ctrl{2} & \multigate{2}{\mathrm{\Delta_1}} & \qw & \qw & \qw & \qw & \ctrl{2} & \multigate{2}{\mathrm{\Delta_1^\dagger}} & \qw & \qw & \qw & \qw & \qw & \qw\\
	 	\nghost{{q}_{2} :  } & \lstick{{q}_{2} :  } & \qw & \qw & \ghost{\mathrm{\Delta_1}} & \qw & \qw & \qw & \qw & \qw & \ghost{\mathrm{\Delta_1}} & \qw & \qw & \qw & \qw & \qw & \qw\\
	 	\nghost{{q}_{3} :  } & \lstick{{q}_{3} :  } & \qw & \ctrl{1} & \ghost{\mathrm{\Delta_1}}\ar @{-} [2,0] & \qw & \qw & \qw & \qw & \ctrl{1} & \ghost{\mathrm{\Delta_1}}\ar @{-} [2,0] & \qw & \qw & \qw & \qw & \qw & \qw\\
	 	\nghost{{q}_{4} :  } & \lstick{{q}_{4} :  } & \gate{\mathrm{A_4^1}} & \gate{\mathrm{Z}}\ar @{-} [2,0] & \qw & \gate{\mathrm{A_2^1}} & \gate{\mathrm{Z}}\ar @{-} [5,0] & \qw & \gate{\mathrm{A_3^1}} & \gate{\mathrm{Z}}\ar @{-} [2,0] & \qw & \gate{\mathrm{A_2^1}} & \gate{\mathrm{Z}}\ar @{-} [5,0] & \qw & \gate{\mathrm{A_1^1}} & \qw & \qw\\
	 	\nghost{{q}_{5} :  } & \lstick{{q}_{5} :  } & \qw & \qw & \multigate{4}{\mathrm{\Delta_1}} & \qw & \qw & \multigate{4}{\mathrm{\Delta_2}} & \qw & \qw & \multigate{4}{\mathrm{\Delta_1^\dagger}} & \qw & \qw & \multigate{4}{\mathrm{\Delta_2^\dagger}} & \qw & \qw & \qw\\
	 	\nghost{{q}_{6} :  } & \lstick{{q}_{6} :  } & \qw & \ctrl{2} & \ghost{\mathrm{\Delta_1}} & \qw & \qw & \ghost{\mathrm{\Delta_1}} & \qw & \ctrl{2} & \ghost{\mathrm{\Delta_1}}& \qw & \qw & \ghost{\mathrm{\Delta_1}} & \qw & \qw & \qw\\
	 	\nghost{{q}_{7} :  } & \lstick{{q}_{7} :  } & \qw & \qw & \ghost{\mathrm{\Delta_1}} & \qw & \qw & \ghost{\mathrm{\Delta_1}} & \qw & \qw & \ghost{\mathrm{\Delta_1}} & \qw & \qw & \ghost{\mathrm{\Delta_1}} & \qw & \qw & \qw\\
	 	\nghost{{q}_{8} :  } & \lstick{{q}_{8} :  } & \qw & \ctrl{2} & \ghost{\mathrm{\Delta_1}} & \qw & \qw & \ghost{\mathrm{\Delta_1}} & \qw & \ctrl{2} & \ghost{\mathrm{\Delta_1}} & \qw & \qw & \ghost{\mathrm{\Delta_1}} & \qw & \qw & \qw\\
	 	\nghost{{q}_{9} :  } & \lstick{{q}_{9} :  } & \qw & \qw & \ghost{\mathrm{\Delta_1}}\ar @{-} [2,0] & \qw & \ctrl{1} & \ghost{\mathrm{\Delta_1}}\ar @{-} [2,0] & \qw & \qw & \ghost{\mathrm{\Delta_1}}\ar @{-} [2,0] & \qw & \ctrl{1} & \ghost{\mathrm{\Delta_1}}\ar @{-} [2,0] & \qw & \qw & \qw\\
	 	\nghost{{q}_{10} :  } & \lstick{{q}_{10} :  } & \gate{\mathrm{A_4^2}} & \gate{\mathrm{Z}}\ar @{-} [1,0] & \qw & \gate{\mathrm{A_2^2}} & \gate{\mathrm{Z}}\ar @{-} [4,0] & \qw & \gate{\mathrm{A_3^2}} & \gate{\mathrm{Z}}\ar @{-} [1,0] & \qw & \gate{\mathrm{A_2^2}} & \gate{\mathrm{Z}}\ar @{-} [4,0] & \qw & \gate{\mathrm{A_1^2}} & \qw & \qw\\
	 	\nghost{{q}_{11} :  } & \lstick{{q}_{11} :  } & \qw & \ctrl{2} & \multigate{7}{\mathrm{\Delta_1}} & \qw & \qw & \multigate{7}{\mathrm{\Delta_2}} & \qw & \ctrl{2} & \multigate{7}{\mathrm{\Delta_1^\dagger}} & \qw & \qw & \multigate{7}{\mathrm{\Delta_2^\dagger}} & \qw & \qw & \qw\\
	 	\nghost{{q}_{12} :  } & \lstick{{q}_{12} :  } & \qw & \qw & \ghost{\mathrm{\Delta_1}} & \qw & \qw & \ghost{\mathrm{\Delta_1}} & \qw & \qw & \ghost{\mathrm{\Delta_1}} & \qw & \qw & \ghost{\mathrm{\Delta_1}} & \qw & \qw & \qw\\
	 	\nghost{{q}_{13} :  } & \lstick{{q}_{13} :  } & \qw & \ctrl{2} & \ghost{\mathrm{\Delta_1}} & \qw & \qw & \ghost{\mathrm{\Delta_1}} & \qw & \ctrl{2} & \ghost{\mathrm{\Delta_1}} & \qw & \qw & \ghost{\mathrm{\Delta_1}} & \qw & \qw & \qw\\
	 	\nghost{{q}_{14} :  } & \lstick{{q}_{14} :  } & \qw & \qw & \ghost{\mathrm{\Delta_1}} & \qw & \ctrl{4} & \ghost{\mathrm{\Delta_1}} & \qw & \qw & \ghost{\mathrm{\Delta_1}} & \qw & \ctrl{4} & \ghost{\mathrm{\Delta_1}} & \qw & \qw & \qw\\
	 	\nghost{{q}_{15} :  } & \lstick{{q}_{15} :  } & \qw & \ctrl{2} & \ghost{\mathrm{\Delta_1}} & \qw & \qw & \ghost{\mathrm{\Delta_1}} & \qw & \ctrl{2} & \ghost{\mathrm{\Delta_1}} & \qw & \qw & \ghost{\mathrm{\Delta_1}} & \qw & \qw & \qw\\
	 	\nghost{{q}_{16} :  } & \lstick{{q}_{16} :  } & \qw & \qw & \ghost{\mathrm{\Delta_1}} & \qw & \qw & \ghost{\mathrm{\Delta_1}} & \qw & \qw & \ghost{\mathrm{\Delta_1}} & \qw & \qw & \ghost{\mathrm{\Delta_1}} & \qw & \qw & \qw\\
	 	\nghost{{q}_{17} :  } & \lstick{{q}_{17} :  } & \qw & \ctrl{2} & \ghost{\mathrm{\Delta_1}} & \qw & \qw & \ghost{\mathrm{\Delta_1}} & \qw & \ctrl{2} & \ghost{\mathrm{\Delta_1}} & \qw & \qw & \ghost{\mathrm{\Delta_1}} & \qw & \qw & \qw\\
	 	\nghost{{q}_{18} :  } & \lstick{{q}_{18} :  } & \qw & \qw & \ghost{\mathrm{\Delta_1}}\ar @{-} [2,0] & \qw & \ctrl{1} & \ghost{\mathrm{\Delta_1}}\ar @{-} [2,0] & \qw & \qw & \ghost{\mathrm{\Delta_1}}\ar @{-} [2,0] & \qw & \ctrl{1} & \ghost{\mathrm{\Delta_1}}\ar @{-} [2,0] & \qw & \qw & \qw\\
	 	\nghost{{q}_{19} :  } & \lstick{{q}_{19} :  } & \gate{\mathrm{A_4^3}} & \gate{\mathrm{Z}}\ar @{-} [2,0] & \qw & \gate{\mathrm{A_2^3}} & \gate{\mathrm{Z}}\ar @{-} [2,0] & \qw & \gate{\mathrm{A_3^3}} & \gate{\mathrm{Z}}\ar @{-} [2,0] & \qw & \gate{\mathrm{A_2^3}} & \gate{\mathrm{Z}}\ar @{-} [2,0] & \qw & \gate{\mathrm{A_1^3}} & \qw & \qw\\
	 	\nghost{{q}_{20} :  } & \lstick{{q}_{20} :  } & \qw & \qw & \gate{\mathrm{\Delta_1}} & \qw & \qw & \gate{\mathrm{\Delta_2}} & \qw & \qw & \gate{\mathrm{\Delta_1^\dagger}} & \qw & \qw & \gate{\mathrm{\Delta_2^\dagger}} & \qw & \qw & \qw\\
	 	\nghost{{q}_{21} :  } & \lstick{{q}_{21} :  } & \gate{\mathrm{A_4^4}} & \gate{\mathrm{Z}} & \qw & \gate{\mathrm{A_2^4}} & \gate{\mathrm{Z}} & \qw & \gate{\mathrm{A_3^4}} & \gate{\mathrm{Z}} & \qw & \gate{\mathrm{A_2^4}} & \gate{\mathrm{Z}} & \qw & \gate{\mathrm{A_1^4}} & \qw & \qw\\
 }
 }
 \qcref{LNN_MCMTSU2}
\]

We now wish to decompose the $\sqO{\Delta_1}{Q_1 \setminus \tau}\prod_{j=1}^{m}\MCO{Z}{C_1}{t_j}$ and $\sqO{\Delta_2}{Q_2 \setminus \tau}\prod_{j=1}^{m}\MCO{Z}{C_2}{t_j}$ gates. We use the $\{\overline{ Z }\}^{k_1}_{C_1,Q_1}$ and $\{\overline{ Z }\}^{k_2}_{C_2,Q_2}$ decompositions as in \lem{Vchain_fullstruct_lnn}, and apply CNOT transformations in two steps. 

First we remove any $\MCO{Z}{C_1^j}{q_j}$ and $\MCO{Z}{C_2^j}{q_j}$ for $q_j\in\tau$ if $C_1^j\not =C_1$ or $C_2^j\not =C_2$ respectively. This can be achieved due to our choice of $C_1$ and $C_2$ such that sequential target qubits cannot have more than one neighboring control qubit in either control subset.

In case the qubit below the target sequence {\em is not} a control, we can use the following.
\[
\scalebox{1.0}{
\Qcircuit @C=1.0em @R=0.2em @!R { 
	 	\nghost{{C} :  } & \lstick{{C} :  } & \qw{/^n} & \qw \barrier[0em]{5} & \qw & \ctrl{2} & \ctrl{2} & \ctrl{2} & \ctrl{1} \barrier[0em]{5} & \qw & \qw & \qw & \qw & \qw\\
	 	\nghost{{a} :  } & \lstick{{a} :  } & \qw & \qw & \qw & \qw & \qw & \qw & \control\qw & \qw & \qw & \qw & \qw & \qw\\
	 	\nghost{{c} :  } & \lstick{{c} :  } & \qw & \qw & \qw & \ctrl{3} & \ctrl{2} & \ctrl{1} & \qw & \qw & \qw & \qw & \qw & \qw\\
	 	\nghost{{t} :  } & \lstick{{t} :  } & \qw & \ctrl{1} & \qw & \qw & \qw & \control\qw & \qw & \qw & \ctrl{1} & \qw & \qw & \qw\\
	 	\nghost{{t} :  } & \lstick{{t} :  } & \ctrl{1} & \targ & \qw & \qw & \control\qw & \qw & \qw & \qw & \targ & \ctrl{1} & \qw & \qw\\
	 	\nghost{{a} :  } & \lstick{{a} :  } & \targ & \qw & \qw & \control\qw & \qw & \qw & \qw & \qw & \qw & \targ & \qw & \qw\\
 }
 \hspace{5mm}\raisebox{-10mm}{=}\hspace{0mm}
 \Qcircuit @C=1.0em @R=0.2em @!R { 
	 	\nghost{{C} :  } & \lstick{{C} :  } & \qw{/^n} & \ctrl{2} & \ctrl{1} & \qw & \qw\\
	 	\nghost{{a} :  } & \lstick{{a} :  } & \qw & \qw & \control\qw & \qw & \qw\\
	 	\nghost{{c} :  } & \lstick{{c} :  } & \qw & \ctrl{3} & \qw & \qw & \qw\\
	 	\nghost{{t} :  } & \lstick{{t} :  } & \qw & \qw & \qw & \qw & \qw\\
	 	\nghost{{t} :  } & \lstick{{t} :  } & \qw & \qw & \qw & \qw & \qw\\
	 	\nghost{{a} :  } & \lstick{{a} :  } & \qw & \control\qw & \qw & \qw & \qw\\
 }
 }
 \qcref{remove_qubit_1}
\]

And if the qubit below {\em is} a control (and the qubit above is not a control), we use the following.

\[
\scalebox{1.0}{
\Qcircuit @C=1.0em @R=0.2em @!R { 
	 	\nghost{{C} :  } & \lstick{{C} :  } & \qw{/^n} & \qw \barrier[0em]{5} & \qw & \ctrl{4} & \ctrl{3} & \ctrl{2} & \ctrl{1} \barrier[0em]{5} & \qw & \qw & \qw & \qw & \qw\\
	 	\nghost{{a} :  } & \lstick{{a} :  } & \targ & \qw & \qw & \qw & \qw & \qw & \control\qw & \qw & \qw & \targ & \qw & \qw\\
	 	\nghost{{t} :  } & \lstick{{t} :  } & \ctrl{-1} & \targ & \qw & \qw & \qw & \control\qw & \qw & \qw & \targ & \ctrl{-1} & \qw & \qw\\
	 	\nghost{{t} :  } & \lstick{{t} :  } & \qw & \ctrl{-1} & \qw & \qw & \control\qw & \qw & \qw & \qw & \ctrl{-1} & \qw & \qw & \qw\\
	 	\nghost{{c} :  } & \lstick{{c} :  } & \qw & \qw & \qw & \ctrl{1} & \qw & \qw & \qw & \qw & \qw & \qw & \qw & \qw\\
	 	\nghost{{a} :  } & \lstick{{a} :  } & \qw & \qw & \qw & \control\qw & \qw & \qw & \qw & \qw & \qw & \qw & \qw & \qw\\
 }
 \hspace{5mm}\raisebox{-10mm}{=}\hspace{0mm}
 \Qcircuit @C=1.0em @R=0.2em @!R { 
	 	\nghost{{C} :  } & \lstick{{C} :  } & \qw{/^n} & \ctrl{4} & \ctrl{1} & \qw & \qw\\
	 	\nghost{{a} :  } & \lstick{{a} :  } & \qw & \qw & \control\qw & \qw & \qw\\
	 	\nghost{{t} :  } & \lstick{{t} :  } & \qw & \qw & \qw & \qw & \qw\\
	 	\nghost{{t} :  } & \lstick{{t} :  } & \qw & \qw & \qw & \qw & \qw\\
	 	\nghost{{c} :  } & \lstick{{c} :  } & \qw & \ctrl{1} & \qw & \qw & \qw\\
	 	\nghost{{a} :  } & \lstick{{a} :  } & \qw & \control\qw & \qw & \qw & \qw\\
 }
 }
 \qcref{remove_qubit_2}
\]

In our example, $\{\overline{ Z }\}^{k_1}_{C_1,Q_1} = \sqO{\Delta'_1}{Q_1\setminus t_m}\MCO{Z}{C_1}{t_m}$ is converted to $\sqO{\Delta_1}{Q_1 \setminus \tau}\MCO{Z}{C_1}{t_m}\MCO{Z}{C_1}{t_{m-1}}$ as follows.

\[
\scalebox{0.65}{
\Qcircuit @C=1.0em @R=0.em @!R { 
	 	\nghost{{q}_{1} :  } & \lstick{{q}_{1} :  } & \ctrl{2} & \multigate{2}{\mathrm{\Delta_1}} & \qw\\
	 	\nghost{{q}_{2} :  } & \lstick{{q}_{2} :  } & \qw & \ghost{\mathrm{\Delta_1}} & \qw\\
	 	\nghost{{q}_{3} :  } & \lstick{{q}_{3} :  } & \ctrl{3} & \ghost{\mathrm{\Delta_1}}\ar @{-} [2,0]& \qw\\
	 	\nghost{{q}_{4} :  } & \lstick{{q}_{4} :  } & \qw & \qw & \qw\\
	 	\nghost{{q}_{5} :  } & \lstick{{q}_{5} :  } & \qw & \multigate{4}{\mathrm{\Delta_1}} & \qw\\
	 	\nghost{{q}_{6} :  } & \lstick{{q}_{6} :  } & \ctrl{2} & \ghost{\mathrm{\Delta_1}} & \qw\\
	 	\nghost{{q}_{7} :  } & \lstick{{q}_{7} :  } & \qw & \ghost{\mathrm{\Delta_1}} & \qw\\
	 	\nghost{{q}_{8} :  } & \lstick{{q}_{8} :  } & \ctrl{3} & \ghost{\mathrm{\Delta_1}} & \qw\\
	 	\nghost{{q}_{9} :  } & \lstick{{q}_{9} :  } & \qw & \ghost{\mathrm{\Delta_1}}\ar @{-} [2,0] & \qw\\
	 	\nghost{{q}_{10} :  } & \lstick{{q}_{10} :  } & \qw & \qw & \qw\\
	 	\nghost{{q}_{11} :  } & \lstick{{q}_{11} :  } & \ctrl{2} & \multigate{7}{\mathrm{\Delta_1}} & \qw\\
	 	\nghost{{q}_{12} :  } & \lstick{{q}_{12} :  } & \qw & \ghost{\mathrm{\Delta_1}} & \qw\\
	 	\nghost{{q}_{13} :  } & \lstick{{q}_{13} :  } & \ctrl{2} & \ghost{\mathrm{\Delta_1}} & \qw\\
	 	\nghost{{q}_{14} :  } & \lstick{{q}_{14} :  } & \qw & \ghost{\mathrm{\Delta_1}}& \qw\\
	 	\nghost{{q}_{15} :  } & \lstick{{q}_{15} :  } & \ctrl{2} & \ghost{\mathrm{\Delta_1}} & \qw\\
	 	\nghost{{q}_{16} :  } & \lstick{{q}_{16} :  } & \qw & \ghost{\mathrm{\Delta_1}} & \qw\\
	 	\nghost{{q}_{17} :  } & \lstick{{q}_{17} :  } & \ctrl{2} & \ghost{\mathrm{\Delta_1}} & \qw\\
	 	\nghost{{q}_{18} :  } & \lstick{{q}_{18} :  } & \qw & \ghost{\mathrm{\Delta_1}}\ar @{-} [2,0] & \qw\\
	 	\nghost{{q}_{19} :  } & \lstick{{q}_{19} :  } & \gate{\mathrm{Z}}\ar @{-} [2,0] & \qw & \qw\\
	 	\nghost{{q}_{20} :  } & \lstick{{q}_{20} :  } & \qw & \gate{\mathrm{\mathrm{\Delta_1}}} & \qw\\
	 	\nghost{{q}_{21} :  } & \lstick{{q}_{21} :  } & \gate{\mathrm{Z}} & \qw & \qw\\
 }
  \hspace{5mm}\raisebox{-50mm}{=}\hspace{0mm}
 \Qcircuit @C=0.5em @R=0.15em @!R { 
	 	\nghost{{q}_{1} :  } & \lstick{{q}_{1} :  }  & \ctrl{2} \barrier[0em]{20} & \qw & \ctrl{2} & \ctrl{2} & \ctrl{2} & \ctrl{2} & \ctrl{2} & \ctrl{2} & \ctrl{2} & \ctrl{2} & \ctrl{1}  & \qw\\
	 	\nghost{{q}_{2} :  } & \lstick{{q}_{2} :  }  & \qw & \qw & \qw & \qw & \qw & \qw & \qw & \qw & \qw & \qw & \control\qw  & \qw\\
	 	\nghost{{q}_{3} :  } & \lstick{{q}_{3} :  }  & \ctrl{3} & \qw & \ctrl{3} & \ctrl{3} & \ctrl{3} & \ctrl{3} & \ctrl{3} & \ctrl{3} & \ctrl{3} & \ctrl{2} & \qw & \qw\\
	 	\nghost{{q}_{4} :  } & \lstick{{q}_{4} :  } & \qw & \qw & \qw & \qw & \qw & \qw & \qw & \qw & \qw & \qw & \qw  & \qw\\
	 	\nghost{{q}_{5} :  } & \lstick{{q}_{5} :  }  & \qw & \qw & \qw & \qw & \qw & \qw & \qw & \qw & \qw & \control\qw & \qw & \qw\\
	 	\nghost{{q}_{6} :  } & \lstick{{q}_{6} :  }  & \ctrl{2} & \qw & \ctrl{2} & \ctrl{2} & \ctrl{2} & \ctrl{2} & \ctrl{2} & \ctrl{2} & \ctrl{1} & \qw & \qw  & \qw\\
	 	\nghost{{q}_{7} :  } & \lstick{{q}_{7} :  }  & \qw & \qw & \qw & \qw & \qw & \qw & \qw & \qw & \control\qw & \qw & \qw  & \qw\\
	 	\nghost{{q}_{8} :  } & \lstick{{q}_{8} :  }  & \ctrl{3} & \qw & \ctrl{3} & \ctrl{3} & \ctrl{3} & \ctrl{3} & \ctrl{3} & \ctrl{1} & \qw & \qw & \qw  & \qw\\
	 	\nghost{{q}_{9} :  } & \lstick{{q}_{9} :  }  & \qw & \qw & \qw & \qw & \qw & \qw & \qw & \control\qw & \qw & \qw & \qw  & \qw\\
	 	\nghost{{q}_{10} :  } & \lstick{{q}_{10} :  }  & \qw & \qw & \qw & \qw & \qw & \qw & \qw & \qw & \qw & \qw & \qw & \qw\\
	 	\nghost{{q}_{11} :  } & \lstick{{q}_{11} :  }  & \ctrl{2} & \qw & \ctrl{2} & \ctrl{2} & \ctrl{2} & \ctrl{2} & \ctrl{1} & \qw & \qw & \qw & \qw  & \qw\\
	 	\nghost{{q}_{12} :  } & \lstick{{q}_{12} :  }  & \qw & \qw & \qw & \qw & \qw & \qw & \control\qw & \qw & \qw & \qw & \qw  & \qw\\
	 	\nghost{{q}_{13} :  } & \lstick{{q}_{13} :  }  & \ctrl{2} & \qw & \ctrl{2} & \ctrl{2} & \ctrl{2} & \ctrl{1} & \qw & \qw & \qw & \qw & \qw  & \qw\\
	 	\nghost{{q}_{14} :  } & \lstick{{q}_{14} :  }  & \qw & \qw & \qw & \qw & \qw & \control\qw & \qw & \qw & \qw & \qw & \qw  & \qw\\
	 	\nghost{{q}_{15} :  } & \lstick{{q}_{15} :  }  & \ctrl{2} & \qw & \ctrl{2} & \ctrl{2} & \ctrl{1} & \qw & \qw & \qw & \qw & \qw & \qw  & \qw\\
	 	\nghost{{q}_{16} :  } & \lstick{{q}_{16} :  }  & \qw & \qw & \qw & \qw & \control\qw & \qw & \qw & \qw & \qw & \qw & \qw & \qw & \qw\\
	 	\nghost{{q}_{17} :  } & \lstick{{q}_{17} :  }  & \ctrl{2} & \qw & \ctrl{3} & \ctrl{1} & \qw & \qw & \qw & \qw & \qw & \qw & \qw  & \qw\\
	 	\nghost{{q}_{18} :  } & \lstick{{q}_{18} :  }  & \qw & \qw & \qw & \control\qw & \qw & \qw & \qw & \qw & \qw & \qw & \qw  & \qw\\
	 	\nghost{{q}_{19} :  } & \lstick{{q}_{19} :  }  & \gate{\mathrm{Z}}\ar @{-} [2,0] & \qw & \qw & \qw & \qw & \qw & \qw & \qw & \qw & \qw & \qw & \qw\\
	 	\nghost{{q}_{20} :  } & \lstick{{q}_{20} :  }  & \qw & \qw & \control\qw & \qw & \qw & \qw & \qw & \qw & \qw & \qw & \qw  & \qw\\
	 	\nghost{{q}_{21} :  } & \lstick{{q}_{21} :  }  & \gate{\mathrm{Z}} & \qw & \qw & \qw & \qw & \qw & \qw & \qw & \qw & \qw & \qw  & \qw\\
 }
 \hspace{5mm}\raisebox{-50mm}{=}\hspace{0mm}
\Qcircuit @C=1.0em @R=0.53em @!R { 
	 	\nghost{{q}_{1} :  } & \lstick{{q}_{1} :  } & \qw & \ctrl{2} & \multigate{19}{\mathrm{\Delta'_1}} & \qw & \qw \\
	 	\nghost{{q}_{2} :  } & \lstick{{q}_{2} :  } & \qw & \qw & \ghost{\mathrm{\Delta_1}} & \qw & \qw \\
	 	\nghost{{q}_{3} :  } & \lstick{{q}_{3} :  } & \qw & \ctrl{3} & \ghost{\mathrm{\Delta_1}} & \qw & \qw \\
	 	\nghost{{q}_{4} :  } & \lstick{{q}_{4} :  } & \ctrl{1} & \qw & \ghost{\mathrm{\Delta_1}} & \ctrl{1} & \qw \\
	 	\nghost{{q}_{5} :  } & \lstick{{q}_{5} :  } & \targ & \qw & \ghost{\mathrm{\Delta_1}} & \targ & \qw \\
	 	\nghost{{q}_{6} :  } & \lstick{{q}_{6} :  } & \qw & \ctrl{2} & \ghost{\mathrm{\Delta_1}} & \qw & \qw \\
	 	\nghost{{q}_{7} :  } & \lstick{{q}_{7} :  } & \qw & \qw & \ghost{\mathrm{\Delta_1}} & \qw & \qw \\
	 	\nghost{{q}_{8} :  } & \lstick{{q}_{8} :  } & \qw & \ctrl{3} & \ghost{\mathrm{\Delta_1}} & \qw & \qw \\
	 	\nghost{{q}_{9} :  } & \lstick{{q}_{9} :  } & \targ & \qw & \ghost{\mathrm{\Delta_1}} & \targ & \qw \\
	 	\nghost{{q}_{10} :  } & \lstick{{q}_{10} :  } & \ctrl{-1} & \qw & \ghost{\mathrm{\Delta_1}} & \ctrl{-1} & \qw \\
	 	\nghost{{q}_{11} :  } & \lstick{{q}_{11} :  } & \qw & \ctrl{2} & \ghost{\mathrm{\Delta_1}} & \qw & \qw \\
	 	\nghost{{q}_{12} :  } & \lstick{{q}_{12} :  } & \qw & \qw & \ghost{\mathrm{\Delta_1}} & \qw & \qw \\
	 	\nghost{{q}_{13} :  } & \lstick{{q}_{13} :  } & \qw & \ctrl{2} & \ghost{\mathrm{\Delta_1}} & \qw & \qw \\
	 	\nghost{{q}_{14} :  } & \lstick{{q}_{14} :  } & \qw & \qw & \ghost{\mathrm{\Delta_1}}& \qw & \qw \\
	 	\nghost{{q}_{15} :  } & \lstick{{q}_{15} :  } & \qw & \ctrl{2} & \ghost{\mathrm{\Delta_1}} & \qw & \qw \\
	 	\nghost{{q}_{16} :  } & \lstick{{q}_{16} :  } & \qw & \qw & \ghost{\mathrm{\Delta_1}} & \qw & \qw \\
	 	\nghost{{q}_{17} :  } & \lstick{{q}_{17} :  } & \qw & \ctrl{4} & \ghost{\mathrm{\Delta_1}} & \qw & \qw \\
	 	\nghost{{q}_{18} :  } & \lstick{{q}_{18} :  } & \qw & \qw & \ghost{\mathrm{\Delta_1}} & \qw & \qw \\
	 	\nghost{{q}_{19} :  } & \lstick{{q}_{19} :  } & \qw & \qw & \ghost{\mathrm{\Delta_1}} & \qw & \qw \\
	 	\nghost{{q}_{20} :  } & \lstick{{q}_{20} :  } & \qw & \qw & \ghost{\mathrm{\Delta_1}} & \qw & \qw \\
	 	\nghost{{q}_{21} :  } & \lstick{{q}_{21} :  } & \qw & \ctrl{0} & \qw & \qw & \qw \\
 }
 }
 \qcref{mcz_LNN_remove_t}
\]
This process adds two CNOT gates per target which we "disconnect", and therefore increases the CNOT cost and depth by no more then $2m$.

Once these target qubits are not affected by any gate, we wish to add $\MCO{Z}{C_1}{q_j}$ or $\MCO{Z}{C_2}{q_j}$ gates for each $q_j\in\tau$ qubit which is located above the bottom control qubit. 
Using the identity described in \qc{add_mcz_new_target}, a pair of CNOT gates can be applied on both sides of the structure in order to "add a target" to the MCMTZ gate. For each added target, we utilize the nearest qubit on which an MCZ gates controlled by the entire control subset is already applied. The first pair of CNOT gates will therefore be applied on the qubit neighboring below the bottom control, and the nearest target qubit above it. The second pair will be applied on this added target and the nearest one above it and so on. 
The following demonstrates this on the structure controlled by $C_1$ in our example.

\[
\scalebox{0.65}{
\Qcircuit @C=1.0em @R=0.em @!R { 
	 	\nghost{{q}_{1} :  } & \lstick{{q}_{1} :  } & \ctrl{2} & \multigate{2}{\mathrm{\Delta_1}} & \qw\\
	 	\nghost{{q}_{2} :  } & \lstick{{q}_{2} :  } & \qw & \ghost{\mathrm{\Delta_1}} & \qw\\
	 	\nghost{{q}_{3} :  } & \lstick{{q}_{3} :  } & \ctrl{1} & \ghost{\mathrm{\Delta_1}}\ar @{-} [2,0]& \qw\\
	 	\nghost{{q}_{4} :  } & \lstick{{q}_{4} :  } & \gate{\mathrm{Z}}\ar @{-} [2,0] & \qw & \qw\\
	 	\nghost{{q}_{5} :  } & \lstick{{q}_{5} :  } & \qw & \multigate{4}{\mathrm{\Delta_1}} & \qw\\
	 	\nghost{{q}_{6} :  } & \lstick{{q}_{6} :  } & \ctrl{2} & \ghost{\mathrm{\Delta_1}} & \qw\\
	 	\nghost{{q}_{7} :  } & \lstick{{q}_{7} :  } & \qw & \ghost{\mathrm{\Delta_1}} & \qw\\
	 	\nghost{{q}_{8} :  } & \lstick{{q}_{8} :  } & \ctrl{2} & \ghost{\mathrm{\Delta_1}} & \qw\\
	 	\nghost{{q}_{9} :  } & \lstick{{q}_{9} :  } & \qw & \ghost{\mathrm{\Delta_1}}\ar @{-} [2,0] & \qw\\
	 	\nghost{{q}_{10} :  } & \lstick{{q}_{10} :  } & \gate{\mathrm{Z}}\ar @{-} [1,0] & \qw & \qw\\
	 	\nghost{{q}_{11} :  } & \lstick{{q}_{11} :  } & \ctrl{2} & \multigate{7}{\mathrm{\Delta_1}} & \qw\\
	 	\nghost{{q}_{12} :  } & \lstick{{q}_{12} :  } & \qw & \ghost{\mathrm{\Delta_1}} & \qw\\
	 	\nghost{{q}_{13} :  } & \lstick{{q}_{13} :  } & \ctrl{2} & \ghost{\mathrm{\Delta_1}} & \qw\\
	 	\nghost{{q}_{14} :  } & \lstick{{q}_{14} :  } & \qw & \ghost{\mathrm{\Delta_1}}& \qw\\
	 	\nghost{{q}_{15} :  } & \lstick{{q}_{15} :  } & \ctrl{2} & \ghost{\mathrm{\Delta_1}} & \qw\\
	 	\nghost{{q}_{16} :  } & \lstick{{q}_{16} :  } & \qw & \ghost{\mathrm{\Delta_1}} & \qw\\
	 	\nghost{{q}_{17} :  } & \lstick{{q}_{17} :  } & \ctrl{2} & \ghost{\mathrm{\Delta_1}} & \qw\\
	 	\nghost{{q}_{18} :  } & \lstick{{q}_{18} :  } & \qw & \ghost{\mathrm{\Delta_1}}\ar @{-} [2,0] & \qw\\
	 	\nghost{{q}_{19} :  } & \lstick{{q}_{19} :  } & \gate{\mathrm{Z}}\ar @{-} [2,0] & \qw & \qw\\
	 	\nghost{{q}_{20} :  } & \lstick{{q}_{20} :  } & \qw & \gate{\mathrm{\mathrm{\Delta_1}}} & \qw\\
	 	\nghost{{q}_{21} :  } & \lstick{{q}_{21} :  } & \gate{\mathrm{Z}} & \qw & \qw\\
 }
 \hspace{5mm}\raisebox{-50mm}{=}\hspace{0mm}
 \Qcircuit @C=1.0em @R=0.em @!R { 
	 	\nghost{{q}_{1} :  } & \lstick{{q}_{1} :  } & \qw & \qw & \qw & \ctrl{2} & \multigate{2}{\mathrm{\Delta_1}} & \qw & \qw & \qw & \qw\\
	 	\nghost{{q}_{2} :  } & \lstick{{q}_{2} :  } & \qw & \qw & \qw & \qw & \ghost{\mathrm{\Delta_1}}  & \qw & \qw & \qw & \qw\\
	 	\nghost{{q}_{3} :  } & \lstick{{q}_{3} :  } & \qw & \qw & \qw & \ctrl{3} & \ghost{\mathrm{\Delta_1}}\ar @{-} [2,0] & \qw & \qw & \qw & \qw\\
	 	\nghost{{q}_{4} :  } & \lstick{{q}_{4} :  } & \ctrl{6} & \qw & \qw & \qw & \qw & \qw & \ctrl{6} & \qw & \qw\\
	 	\nghost{{q}_{5} :  } & \lstick{{q}_{5} :  } & \qw & \qw & \qw & \qw & \multigate{4}{\mathrm{\Delta_1}} & \qw & \qw & \qw & \qw\\
	 	\nghost{{q}_{6} :  } & \lstick{{q}_{6} :  } & \qw & \qw & \qw & \ctrl{2} & \ghost{\mathrm{\Delta_1}} & \qw & \qw & \qw & \qw\\
	 	\nghost{{q}_{7} :  } & \lstick{{q}_{7} :  } & \qw & \qw & \qw & \qw & \ghost{\mathrm{\Delta_1}} & \qw & \qw & \qw & \qw\\
	 	\nghost{{q}_{8} :  } & \lstick{{q}_{8} :  } & \qw & \qw & \qw & \ctrl{3} & \ghost{\mathrm{\Delta_1}} & \qw & \qw & \qw & \qw\\
	 	\nghost{{q}_{9} :  } & \lstick{{q}_{9} :  } & \qw & \qw & \qw & \qw & \ghost{\mathrm{\Delta_1}}\ar @{-} [2,0] & \qw & \qw & \qw & \qw\\
	 	\nghost{{q}_{10} :  } & \lstick{{q}_{10} :  } & \targ & \ctrl{8} & \qw & \qw & \qw & \ctrl{8} & \targ & \qw & \qw\\
	 	\nghost{{q}_{11} :  } & \lstick{{q}_{11} :  } & \qw & \qw & \qw & \ctrl{2} & \multigate{7}{\mathrm{\Delta_1}} & \qw & \qw & \qw & \qw\\
	 	\nghost{{q}_{12} :  } & \lstick{{q}_{12} :  } & \qw & \qw & \qw & \qw & \ghost{\mathrm{\Delta_1}} & \qw & \qw & \qw & \qw\\
	 	\nghost{{q}_{13} :  } & \lstick{{q}_{13} :  } & \qw & \qw & \qw & \ctrl{2} & \ghost{\mathrm{\Delta_1}} & \qw & \qw & \qw & \qw\\
	 	\nghost{{q}_{14} :  } & \lstick{{q}_{14} :  } & \qw & \qw & \qw & \qw & \ghost{\mathrm{\Delta_1}} & \qw & \qw & \qw & \qw\\
	 	\nghost{{q}_{15} :  } & \lstick{{q}_{15} :  } & \qw & \qw & \qw & \ctrl{2} & \ghost{\mathrm{\Delta_1}} & \qw & \qw & \qw & \qw\\
	 	\nghost{{q}_{16} :  } & \lstick{{q}_{16} :  } & \qw & \qw & \qw & \qw & \ghost{\mathrm{\Delta_1}} & \qw & \qw & \qw & \qw\\
	 	\nghost{{q}_{17} :  } & \lstick{{q}_{17} :  } & \qw & \qw & \qw & \ctrl{2} & \ghost{\mathrm{\Delta_1}} & \qw & \qw & \qw & \qw\\
	 	\nghost{{q}_{18} :  } & \lstick{{q}_{18} :  } & \qw & \targ & \qw & \qw & \ghost{\mathrm{\Delta_1}}\ar @{-} [2,0] & \targ & \qw & \qw & \qw\\
	 	\nghost{{q}_{19} :  } & \lstick{{q}_{19} :  } & \qw & \qw & \qw & \gate{\mathrm{Z}}\ar @{-} [2,0] & \qw & \qw & \qw & \qw & \qw\\
	 	\nghost{{q}_{20} :  } & \lstick{{q}_{20} :  } & \qw & \qw & \qw & \qw & \gate{\mathrm{\mathrm{\Delta_1}}} & \qw & \qw & \qw & \qw\\
	 	\nghost{{q}_{21} :  } & \lstick{{q}_{21} :  } & \qw & \qw & \qw & \gate{\mathrm{Z}} & \qw & \qw & \qw & \qw & \qw\\
 }
 }
 \qcref{MCZ_LNN_add_t}
\]

This requires long range CNOT gates which can be achieved using SWAP gates. However, in order to lower the cost, we can allow for additional CNOT gates to be implemented, if those only result in relative phase gates applied on $Q\setminus \tau$. 

If a pair of CNOT gates controlled by a qubit from $Q\setminus\{C_1,\tau\}$ (the same holds for $C_2$) is applied on a qubit from $Q\setminus C_1$ , this effectively adds an MCZ gate controlled by $C_1$, and applied on the $Q\setminus\{C_1,\tau\}$ qubit, and therefore it can be included in the $[\Delta]$ gate. Moreover, if a CNOT gate controlled by a qubit from $C_1$ is applied on a qubit from $Q\setminus C_1$, this effectively adds an MCZ gate which is only applied on $C_1$, and thus can be included in the $[\Delta]$ gate as well. The following can therefore be used to replace long range CNOT gates in this case. 

\[
\scalebox{1.0}{
\Qcircuit @C=1.0em @R=0.6em @!R { 
	 	\nghost{} & \lstick{} & \ctrl{6} & \qw & \qw\\
	 	\nghost{} & \lstick{} & \qw & \qw & \qw\\
	 	\nghost{} & \lstick{} & \qw & \qw & \qw\\
	 	\nghost{} & \lstick{} & \qw & \qw & \qw\\
	 	\nghost{} & \lstick{} & \qw & \qw & \qw\\
	 	\nghost{} & \lstick{} & \qw & \qw & \qw\\
	 	\nghost{} & \lstick{} & \targ & \qw & \qw\\
 }
 \hspace{5mm}\raisebox{-15mm}{$\Rightarrow$}\hspace{0mm}
\Qcircuit @C=0.2em @R=0.6em @!R { 
	 	\nghost{} & \lstick{} & \ctrl{6} & \qw & \qw & \qw & \qw & \qw & \qw & \qw\\
	 	\nghost{} & \lstick{} & \qw & \ctrl{5} & \qw & \qw & \qw & \qw & \qw & \qw\\
	 	\nghost{} & \lstick{} & \qw & \qw & \ctrl{4} & \qw & \qw & \qw & \qw & \qw\\
	 	\nghost{} & \lstick{} & \qw & \qw & \qw & \ctrl{3} & \qw & \qw & \qw & \qw\\
	 	\nghost{} & \lstick{} & \qw & \qw & \qw & \qw & \ctrl{2} & \qw & \qw & \qw\\
	 	\nghost{} & \lstick{} & \qw & \qw & \qw & \qw & \qw & \ctrl{1} & \qw & \qw\\
	 	\nghost{} & \lstick{} & \targ & \targ & \targ & \targ & \targ & \targ & \qw & \qw\\
 }
 \hspace{5mm}\raisebox{-15mm}{=}\hspace{0mm}
 \Qcircuit @C=0.2em @R=0.6em @!R { 
	 	\nghost{} & \lstick{} & \ctrl{1} & \qw & \qw & \qw & \qw & \qw & \qw & \qw & \qw & \qw & \ctrl{1} & \qw & \qw\\
	 	\nghost{} & \lstick{} & \targ & \ctrl{1} & \qw & \qw & \qw & \qw & \qw & \qw & \qw & \ctrl{1} & \targ & \qw & \qw\\
	 	\nghost{} & \lstick{} & \qw & \targ & \ctrl{1} & \qw & \qw & \qw & \qw & \qw & \ctrl{1} & \targ & \qw & \qw & \qw\\
	 	\nghost{} & \lstick{} & \qw & \qw & \targ & \ctrl{1} & \qw & \qw & \qw & \ctrl{1} & \targ & \qw & \qw & \qw & \qw\\
	 	\nghost{} & \lstick{} & \qw & \qw & \qw & \targ & \ctrl{1} & \qw & \ctrl{1} & \targ & \qw & \qw & \qw & \qw & \qw\\
	 	\nghost{} & \lstick{} & \qw & \qw & \qw & \qw & \targ & \ctrl{1} & \targ & \qw & \qw & \qw & \qw & \qw & \qw\\
	 	\nghost{} & \lstick{} & \qw & \qw & \qw & \qw & \qw & \targ & \qw & \qw & \qw & \qw & \qw & \qw & \qw\\
 }
 }
 \qcref{C_3_LNN_decompose}
\]

Defining $d$ as the distance from the control qubit to the target qubit of the long-range CNOT gate, the cost and depth of this structure is $2d-1$. In an arbitrary permutation, this structure can be applied up-to $2m$ times for the implementation for each MCMTZ gate, at a total cost and depth of $2\sum_{j=1}^{m}(2d_j-1)$, with $d_j$ as the distance of each long range CNOT pair, such that $\sum_{j=1}^{m}(d_j)\leq k$. The added cost and depth of CNOT gates for this step is therefore no larger than $4k-2m$.

Adding all of these steps together, the MCMTSU2 costs and depths are equivalent to the MCSU2 ones, with $l'_{0,1}=l'_{0,2}=0$, such that CNOT cost and depth is increased by no more than $16k$, and $8(m-1)$ gates from $\{\rx,\rz\}$ are added. Since neighboring controls are not allowed in $C_1$, we can only guarantee $q_1\not\in C_2$ , and we can write $k_2\leq k-1$.
Adding the CNOT overhead of $4\floor{\frac{n}{2}}$ in depth $2\floor{\frac{n}{2}}+4$, the following is achieved similarly to \thm{lnn_su2_worst}.
\begin{theorem}\label{thm:lnn_mcmtsu2_worst}
    Any multi-controlled mutli-target $SU(2)$ gate with $n\geq 6$ controls and $m\geq 1$ targets, can be implemented over $k\geq n+m$ qubits in any permutation in LNN connectivity using $8m$ gates from $\{\rx,\rz\}$, in addition to no more than\\
CNOT cost $24k+14n-40$ and depth $24k+5n-4$,\\
T ~~~~~ cost $16n-32$ ~~~~~~~ and depth $4n$,\\
H ~~~~~ cost $8n-8$ ~~~~~~~~~ and depth $4n$.
\end{theorem}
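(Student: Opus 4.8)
The plan is to obtain this theorem as a summation over the building blocks already assembled in \apx{MCMTSU2_lnn}, mirroring the single‑target argument that produced \thm{lnn_su2_worst}. The starting point is the multi‑target macro structure \qc{LNN_MCMTSU2}, which is the $m$‑target analogue of \qc{mcsu2_struct_lnn}: applying \lem{mcmtsu2_struct} to each $\MCO{\rv{\lambda_j}{\hat{v}_j}}{C}{t_j}$ separately and reordering commuting gates, every target $t_j$ acquires its own quadruple $A_1^j,\dots,A_4^j$ of rotations (accounting for the $8m$ gates from $\{\rx,\rz\}$), while the \emph{four} large relative‑phase multi‑target $Z$ blocks are shared. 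First I would reduce to the normalized case $q_k\notin C$: if $q_k\in C$, relocate one non‑control qubit (maximal distance $\floor{\frac{n}{2}}$ from an edge) to the bottom by two partial‑SWAP chains, which by \apx{communication_overhead} costs at most $4\floor{\frac{n}{2}}$ CNOT in depth $2\floor{\frac{n}{2}}+4$.

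Next I would pin down the base cost. The modified control‑assignment procedure forbids \emph{any} two neighboring controls in $C_1$ or $C_2$ and at most one neighboring control per subset between consecutive targets; this removes the $[iZ]$ seed, forcing $l'_{0,1}=l'_{0,2}=0$, so each block is governed by \lem{vchainz_cost_depth_LNN} at $l'_0=0$. Writing the per‑block CNOT cost/depth as $2k_i+6n_i-9$ and $2k_i+2n_i-1$, the four blocks of \qc{mcrpi_base} give $4(k_1+k_2)+12n-36$ and $4(k_1+k_2)+4n-4$; with $k_1=k$ and the weaker guarantee $k_2\le k-1$ (only $q_1\notin C_2$ is assured) these are $\le 8k+12n-40$ and $\le 8k+4n-8$. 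Because the target‑handling and communication steps inject only CNOT (SWAP) gates, the T and H tallies are inherited unchanged from the single‑target analysis, yielding T cost $16n-32$, depth $4n$, and H cost $8n-8$, depth $4n$.

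The crux is the two‑stage CNOT transformation that converts the single‑target $\{\overline{Z}\}^{k_i}_{C_i,Q_i}$ decomposition into a genuine multi‑target MCMTZ while keeping the overhead \emph{linear} in $k$. Stage one uses the \lem{basic_struct}‑based identities \qc{remove_qubit_1}, \qc{remove_qubit_2} (assembled as \qc{mcz_LNN_remove_t}) to ``disconnect'' every spurious partial‑control $\MCO{Z}{C_i^j}{q_j}$ with $C_i^j\ne C_i$ on the target qubits, at a cost of at most $2m$ CNOT. Stage two re‑attaches a full‑control $\MCO{Z}{C_i}{q_j}$ on each interior target via \qc{add_mcz_new_target} and \qc{MCZ_LNN_add_t}, replacing each required long‑range CNOT pair by the fan structure \qc{C_3_LNN_decompose} of cost $2d-1$. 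Summing everything, the MCMTSU2 cost equals the $l'_{0,1}=l'_{0,2}=0$ MCSU2 cost plus at most $16k$ CNOT in cost and depth, plus $8(m-1)$ rotations, plus the communication overhead; adding $16k$ and then $4\floor{\frac{n}{2}}\le 2n$ (respectively $2\floor{\frac{n}{2}}+4\le n+4$) to the base totals reproduces exactly $24k+14n-40$ and $24k+5n-4$.

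The main obstacle I anticipate is the linearity bound for stage two: I must show $\sum_j d_j\le k$ over all long‑range CNOT pairs, so that the fan structures sum to $\le 4k-2m$ rather than the naive $O(mk)$, \emph{and} verify that all residual gates produced by the fan (CNOTs controlled by $C_i$ or by qubits of $Q\setminus\{C_i,\tau\}$ acting on $Q\setminus C_i$) collapse into the free relative‑phase blocks $[\Delta]$ exactly as argued around \qc{C_3_LNN_decompose}. Establishing that this folding is valid simultaneously for every target placement consistent with the neighboring‑control constraint is what keeps the scaling linear and is the delicate point; once it is confirmed, the stated bounds follow by the same arithmetic bookkeeping as \thm{lnn_su2_worst}.
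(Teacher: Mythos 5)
Your proposal follows essentially the same route as the paper's own argument in \apx{MCMTSU2_lnn}: the same macro structure \qc{LNN_MCMTSU2} with the modified control-assignment forcing $l'_{0,1}=l'_{0,2}=0$, the same two-stage CNOT transformation (disconnecting spurious partial controls for $\le 2m$ CNOTs, then reattaching full-control MCZ gates on interior targets via fan structures summing to $\le 4k-2m$), the same $k_2\le k-1$ bound and communication overhead, and the arithmetic checks out to the stated totals. The delicate point you flag — that $\sum_j d_j\le k$ and that the residual fan gates fold into the $[\Delta]$ blocks — is exactly the point the paper relies on (and asserts at a comparable level of detail), so there is no gap relative to the paper's proof.
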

We note that some depth reductions are achieved in many cases, considering the orientation of sequential "V"-shaped structures such as \qc{C_3_LNN_decompose} and \qc{C_1_LNN_decompose}. The structure in \qc{CX_d_LNN_decompose} can be used instead of \qc{C_3_LNN_decompose} in some cases in order to reduce the CNOT depth, while increasing the count of H gates. 
}

{\section{LNN MCMTX}\label{apx:lnn_mcmtx}

Similarly to the ATA case, if the number of targets is larger than one, an ancilla qubit is not required. 
We assume that a non-control qubit is located at the bottom and treat it as an ancilla qubit in order to implement an MCZ gate applied on the set $\{C,q_j\}$ such that $q_j\in\tau$ is the lowest target qubit which satisfies $j<k$. The MCZ gate is implemented using \thm{lnn_MCX_worst}. Each additional target qubit is added similarly to the ATA case by applying a CNOT gate on each side. 
In the LNN case we will add a new target using the closest non-control qubit to it, on which an MCZ controlled by $C$ is already implemented. It is clear that in this case, the long range CNOT gates can be implemented up-to a relative phase as follows.
\[
\scalebox{1.0}{
\Qcircuit @C=1.0em @R=0.6em @!R { 
	 	\nghost{} & \lstick{} & \ctrl{6} & \qw & \qw\\
	 	\nghost{} & \lstick{} & \qw & \qw & \qw\\
	 	\nghost{} & \lstick{} & \qw & \qw & \qw\\
	 	\nghost{} & \lstick{} & \qw & \qw & \qw\\
	 	\nghost{} & \lstick{} & \qw & \qw & \qw\\
	 	\nghost{} & \lstick{} & \qw & \qw & \qw\\
	 	\nghost{} & \lstick{} & \targ & \qw & \qw\\
 }
 \hspace{5mm}\raisebox{-15mm}{$\Rightarrow$}\hspace{0mm}
\Qcircuit @C=1.0em @R=0.2em @!R { 
	 	\nghost{} & \lstick{} & \ctrl{6} & \ctrl{1} & \qw \\
	 	\nghost{} & \lstick{} & \qw & \gate{\mathrm{Z}}\ar @{-} [1,0] & \qw \\
	 	\nghost{} & \lstick{} & \qw & \gate{\mathrm{Z}}\ar @{-} [1,0] & \qw \\
	 	\nghost{} & \lstick{} & \qw & \gate{\mathrm{Z}}\ar @{-} [1,0] & \qw \\
	 	\nghost{} & \lstick{} & \qw & \gate{\mathrm{Z}}\ar @{-} [1,0] & \qw \\
	 	\nghost{} & \lstick{} & \qw & \gate{\mathrm{Z}} & \qw \\
	 	\nghost{} & \lstick{} & \targ & \qw & \qw \\
 }
 \hspace{5mm}\raisebox{-15mm}{=}\hspace{0mm}
 \Qcircuit @C=0.2em @R=0.2em @!R { 
	 	\nghost{} & \lstick{} & \gate{\mathrm{H}} & \qw & \qw & \qw & \qw & \qw & \targ & \qw & \qw & \qw & \qw & \qw & \gate{\mathrm{H}} & \qw & \qw\\
	 	\nghost{} & \lstick{} & \qw & \qw & \qw & \qw & \qw & \targ & \ctrl{-1} & \targ & \qw & \qw & \qw & \qw & \qw & \qw & \qw\\
	 	\nghost{} & \lstick{} & \qw & \qw & \qw & \qw & \targ & \ctrl{-1} & \qw & \ctrl{-1} & \targ & \qw & \qw & \qw & \qw & \qw & \qw\\
	 	\nghost{} & \lstick{} & \qw & \qw & \qw & \targ & \ctrl{-1} & \qw & \qw & \qw & \ctrl{-1} & \targ & \qw & \qw & \qw & \qw & \qw\\
	 	\nghost{} & \lstick{} & \qw & \qw & \targ & \ctrl{-1} & \qw & \qw & \qw & \qw & \qw & \ctrl{-1} & \targ & \qw & \qw & \qw & \qw\\
	 	\nghost{} & \lstick{} & \qw & \targ & \ctrl{-1} & \qw & \qw & \qw & \qw & \qw & \qw & \qw & \ctrl{-1} & \targ & \qw & \qw & \qw\\
	 	\nghost{} & \lstick{} & \gate{\mathrm{H}} & \ctrl{-1} & \qw & \qw & \qw & \qw & \qw & \qw & \qw & \qw & \qw & \ctrl{-1} & \gate{\mathrm{H}} & \qw & \qw\\
 }
 }
 \qcref{CX_d_LNN_decompose}
\]
Finally, a pair of H gates is applied on each target qubit in order to transform the MCMTZ to MCMTX. Each long distance CNOT gate is implemented using four H gates, however, many of these gates cancel out, resulting in an addition of no more than $2m-2$ H gates. 
Noticing the similarities to the process described in \apx{MCMTSU2_lnn}, it can be realised that in addition to the costs described in \thm{lnn_MCX_worst}, this process requires no more than $4k-2m$ CNOT gates.
\begin{theorem}\label{thm:lnn_MCMTX_worst}
    Any multi-controlled multi-target $X$ gate with $n\geq 5$ controls and $m\geq 2$, can be implemented over $k\geq n+m$ qubits in any permutation in LNN connectivity using no more than\\
    CNOT cost $12k+14n-2m-34$ and depth $12k+5n-2m+1$,\\
T ~~~~~ cost $16n-16$ ~~~~~~~~~~~~~~ and depth $4n+4$,\\
H ~~~~~ cost $8n+2m+2$ ~~~~~~~~~ and depth $4n+2m+1$.
\end{theorem}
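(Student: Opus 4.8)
The plan is to assemble the gate as a multi-controlled multi-target $Z$ (MCMTZ) and then conjugate every target by Hadamards, mirroring the ATA construction of \apx{ata_MCMTX} but forcing all two-qubit gates onto nearest neighbors.

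First, I would use the hypothesis $m \geq 2$ to dispense with a dedicated ancilla: since the MCX implementation of \thm{lnn_MCX_worst} consumes one dirty ancilla, and $\tau$ already contains at least two qubits, I can treat the lowest target $q_j$ (with $j < k$) as the ``target'' of an $\MCO{Z}{\{C,q_j\}}{}$ gate and implement it via \thm{lnn_MCX_worst} with $n$ controls over $k$ qubits, using another non-control qubit as the borrowed dirty ancilla. This directly imports the CNOT, T, and H cost and depth bounds of that theorem.

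Next, I would promote the remaining $m-1$ targets one at a time, using the identity \qc{add_targ_mcx} (a consequence of \lem{basic_struct}): a CNOT from an already-active target into a fresh one, placed symmetrically on both sides of the central MCZ, adds the new qubit as a target of the MCMTZ. In LNN these CNOTs are generally long range, but---as recorded in \qc{CX_d_LNN_decompose}---because one endpoint already carries the relevant $\MCO{Z}{C}{}$ structure, the intermediate relative phases produced by the ladder decomposition fall on qubits in $Q\setminus\tau$ and are therefore harmless. Consequently each long-range CNOT of distance $d$ costs $2d-1$ in both count and depth, avoiding full SWAP chains.

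Finally, I would wrap each target in a pair of Hadamards to turn the MCMTZ into the desired MCMTX; most Hadamards generated in \qc{CX_d_LNN_decompose} cancel against these and against one another. For the tally, I would add to the \thm{lnn_MCX_worst} bounds the target-addition overhead: with at most $2m$ long-range insertions and the packing bound $\sum_j d_j \le k$, the extra CNOT cost and depth are each at most $4k-2m$, the surviving H overhead is at most $2m-2$ in both count and depth, and the T count and depth are unchanged; summing yields the stated bounds. The main obstacle will be the worst-case bookkeeping of these long-range insertions---verifying the $\sum_j d_j \le k$ packing, tracking exactly which relative phases are absorbable into the $[\Delta]$ gates, and confirming the Hadamard cancellations---precisely the points where a careless analysis would regress to the quadratic $O(mk)$ scaling that the linear bound must rule out.
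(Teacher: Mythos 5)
Your proposal follows the paper's own proof essentially step for step: building the gate as an MCMTZ via \thm{lnn_MCX_worst} with the lowest target absorbed into the control set and a bottom non-control qubit serving as the dirty ancilla, adding the remaining $m-1$ targets with symmetric CNOT pairs realized as relative-phase ladders of cost $2d-1$, and closing with Hadamard conjugation and the same $4k-2m$ CNOT and $2m-2$ H overhead tally. No meaningful divergence from the paper's argument.
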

}


\begin{thebibliography}{0}%
\makeatletter
\providecommand \@ifxundefined [1]{%
 \@ifx{#1\undefined}
}%
\providecommand \@ifnum [1]{%
 \ifnum #1\expandafter \@firstoftwo
 \else \expandafter \@secondoftwo
 \fi
}%
\providecommand \@ifx [1]{%
 \ifx #1\expandafter \@firstoftwo
 \else \expandafter \@secondoftwo
 \fi
}%
\providecommand \natexlab [1]{#1}%
\providecommand \enquote  [1]{``#1''}%
\providecommand \bibnamefont  [1]{#1}%
\providecommand \bibfnamefont [1]{#1}%
\providecommand \citenamefont [1]{#1}%
\providecommand \href@noop [0]{\@secondoftwo}%
\providecommand \href [0]{\begingroup \@sanitize@url \@href}%
\providecommand \@href[1]{\@@startlink{#1}\@@href}%
\providecommand \@@href[1]{\endgroup#1\@@endlink}%
\providecommand \@sanitize@url [0]{\catcode `\\12\catcode `\$12\catcode `\&12\catcode `\#12\catcode `\^12\catcode `\_12\catcode `\%12\relax}%
\providecommand \@@startlink[1]{}%
\providecommand \@@endlink[0]{}%
\providecommand \url  [0]{\begingroup\@sanitize@url \@url }%
\providecommand \@url [1]{\endgroup\@href {#1}{\urlprefix }}%
\providecommand \urlprefix  [0]{URL }%
\providecommand \Eprint [0]{\href }%
\providecommand \doibase [0]{https://doi.org/}%
\providecommand \selectlanguage [0]{\@gobble}%
\providecommand \bibinfo  [0]{\@secondoftwo}%
\providecommand \bibfield  [0]{\@secondoftwo}%
\providecommand \translation [1]{[#1]}%
\providecommand \BibitemOpen [0]{}%
\providecommand \bibitemStop [0]{}%
\providecommand \bibitemNoStop [0]{.\EOS\space}%
\providecommand \EOS [0]{\spacefactor3000\relax}%
\providecommand \BibitemShut  [1]{\csname bibitem#1\endcsname}%
\let\auto@bib@innerbib\@empty
\end{thebibliography}%


\begin{thebibliography}{10}
\providecommand{\url}[1]{#1}
\csname url@samestyle\endcsname
\providecommand{\newblock}{\relax}
\providecommand{\bibinfo}[2]{#2}
\providecommand{\BIBentrySTDinterwordspacing}{\spaceskip=0pt\relax}
\providecommand{\BIBentryALTinterwordstretchfactor}{4}
\providecommand{\BIBentryALTinterwordspacing}{\spaceskip=\fontdimen2\font plus
\BIBentryALTinterwordstretchfactor\fontdimen3\font minus \fontdimen4\font\relax}
\providecommand{\BIBforeignlanguage}[2]{{%
\expandafter\ifx\csname l@#1\endcsname\relax
\typeout{** WARNING: IEEEtran.bst: No hyphenation pattern has been}%
\typeout{** loaded for the language `#1'. Using the pattern for}%
\typeout{** the default language instead.}%
\else
\language=\csname l@#1\endcsname
\fi
#2}}
\providecommand{\BIBdecl}{\relax}
\BIBdecl

\bibitem{nielsen_quantum_2010}
\BIBentryALTinterwordspacing
M.~A. Nielsen and I.~L. Chuang, ``\BIBforeignlanguage{en}{Quantum {Computation} and {Quantum} {Information}: 10th {Anniversary} {Edition}},'' Dec. 2010, iSBN: 9780511976667 Publisher: Cambridge University Press. [Online]. Available: \url{https://www.cambridge.org/highereducation/books/quantum-computation-and-quantum-information/01E10196D0A682A6AEFFEA52D53BE9AE}
\BIBentrySTDinterwordspacing

\bibitem{shende_synthesis_2005}
\BIBentryALTinterwordspacing
V.~V. Shende, S.~S. Bullock, and I.~L. Markov, ``Synthesis of quantum logic circuits,'' in \emph{Proceedings of the 2005 {Asia} and {South} {Pacific} {Design} {Automation} {Conference}}, ser. {ASP}-{DAC} '05.\hskip 1em plus 0.5em minus 0.4em\relax New York, NY, USA: Association for Computing Machinery, Jan. 2005, pp. 272--275. [Online]. Available: \url{https://dl.acm.org/doi/10.1145/1120725.1120847}
\BIBentrySTDinterwordspacing

\bibitem{arrazola_universal_2022}
\BIBentryALTinterwordspacing
J.~M. Arrazola, O.~D. Matteo, N.~Quesada, S.~Jahangiri, A.~Delgado, and N.~Killoran, ``\BIBforeignlanguage{en-GB}{Universal quantum circuits for quantum chemistry},'' \emph{\BIBforeignlanguage{en-GB}{Quantum}}, vol.~6, p. 742, Jun. 2022, publisher: Verein zur Förderung des Open Access Publizierens in den Quantenwissenschaften. [Online]. Available: \url{https://quantum-journal.org/papers/q-2022-06-20-742/}
\BIBentrySTDinterwordspacing

\bibitem{de_carvalho_parametrized_2024}
\BIBentryALTinterwordspacing
J.~H.~A. De~Carvalho and F.~M. D.~P. Neto, ``\BIBforeignlanguage{en}{Parametrized {Constant}-{Depth} {Quantum} {Neuron}},'' \emph{\BIBforeignlanguage{en}{IEEE Transactions on Neural Networks and Learning Systems}}, pp. 1--12, 2024. [Online]. Available: \url{https://ieeexplore.ieee.org/document/10180218/}
\BIBentrySTDinterwordspacing

\bibitem{ni_progressive_2024}
\BIBentryALTinterwordspacing
X.-H. Ni, L.-X. Li, Y.-Q. Song, Z.-P. Jin, S.-J. Qin, and F.~Gao, ``Progressive {Quantum} {Algorithm} for {Maximum} {Independent} {Set} with {Quantum} {Alternating} {Operator} {Ansatz},'' Sep. 2024, arXiv:2405.04303 [quant-ph]. [Online]. Available: \url{http://arxiv.org/abs/2405.04303}
\BIBentrySTDinterwordspacing

\bibitem{iten_quantum_2016}
\BIBentryALTinterwordspacing
R.~Iten, R.~Colbeck, I.~Kukuljan, J.~Home, and M.~Christandl, ``\BIBforeignlanguage{en}{Quantum circuits for isometries},'' \emph{\BIBforeignlanguage{en}{Physical Review A}}, vol.~93, no.~3, p. 032318, Mar. 2016. [Online]. Available: \url{https://link.aps.org/doi/10.1103/PhysRevA.93.032318}
\BIBentrySTDinterwordspacing

\bibitem{malvetti_quantum_2021}
\BIBentryALTinterwordspacing
E.~Malvetti, R.~Iten, and R.~Colbeck, ``\BIBforeignlanguage{en-GB}{Quantum {Circuits} for {Sparse} {Isometries}},'' \emph{\BIBforeignlanguage{en-GB}{Quantum}}, vol.~5, p. 412, Mar. 2021, publisher: Verein zur Förderung des Open Access Publizierens in den Quantenwissenschaften. [Online]. Available: \url{https://quantum-journal.org/papers/q-2021-03-15-412/}
\BIBentrySTDinterwordspacing

\bibitem{grinko_efficient_2023}
\BIBentryALTinterwordspacing
D.~Grinko, A.~Burchardt, and M.~Ozols, ``\BIBforeignlanguage{en}{Efficient quantum circuits for port-based teleportation},'' 2023, version Number: 2. [Online]. Available: \url{https://arxiv.org/abs/2312.03188}
\BIBentrySTDinterwordspacing

\bibitem{tanasescu_distribution_2022}
\BIBentryALTinterwordspacing
A.~Tănăsescu, D.~Constantinescu, and P.~G. Popescu, ``\BIBforeignlanguage{en}{Distribution of controlled unitary quantum gates towards factoring large numbers on today’s small-register devices},'' \emph{\BIBforeignlanguage{en}{Scientific Reports}}, vol.~12, no.~1, p. 21310, Dec. 2022, publisher: Nature Publishing Group. [Online]. Available: \url{https://www.nature.com/articles/s41598-022-25812-z}
\BIBentrySTDinterwordspacing

\bibitem{park_circuit-based_2019}
\BIBentryALTinterwordspacing
D.~K. Park, F.~Petruccione, and J.-K.~K. Rhee, ``\BIBforeignlanguage{en}{Circuit-{Based} {Quantum} {Random} {Access} {Memory} for {Classical} {Data}},'' \emph{\BIBforeignlanguage{en}{Scientific Reports}}, vol.~9, no.~1, p. 3949, Mar. 2019. [Online]. Available: \url{https://www.nature.com/articles/s41598-019-40439-3}
\BIBentrySTDinterwordspacing

\bibitem{grover_quantum_1997}
\BIBentryALTinterwordspacing
L.~K. Grover, ``Quantum {Mechanics} {Helps} in {Searching} for a {Needle} in a {Haystack},'' \emph{Physical Review Letters}, vol.~79, no.~2, pp. 325--328, Jul. 1997, publisher: American Physical Society. [Online]. Available: \url{https://link.aps.org/doi/10.1103/PhysRevLett.79.325}
\BIBentrySTDinterwordspacing

\bibitem{ali_function_2018}
\BIBentryALTinterwordspacing
M.~B. Ali, T.~Hirayama, K.~Yamanaka, and Y.~Nishitani, ``\BIBforeignlanguage{en}{Function {Design} for {Minimum} {Multiple}-{Control} {Toffoli} {Circuits} of {Reversible} {Adder}/{Subtractor} {Blocks} and {Arithmetic} {Logic} {Units}},'' \emph{\BIBforeignlanguage{en}{IEICE Transactions on Fundamentals of Electronics, Communications and Computer Sciences}}, vol. E101.A, no.~12, pp. 2231--2243, Dec. 2018. [Online]. Available: \url{https://www.jstage.jst.go.jp/article/transfun/E101.A/12/E101.A_2231/_article}
\BIBentrySTDinterwordspacing

\bibitem{shafaei_cofactor_2014}
\BIBentryALTinterwordspacing
A.~Shafaei, M.~Saeedi, and M.~Pedram, ``\BIBforeignlanguage{en}{Cofactor {Sharing} for {Reversible} {Logic} {Synthesis}},'' \emph{\BIBforeignlanguage{en}{ACM Journal on Emerging Technologies in Computing Systems}}, vol.~11, no.~2, pp. 1--21, Nov. 2014. [Online]. Available: \url{https://dl.acm.org/doi/10.1145/2629524}
\BIBentrySTDinterwordspacing

\bibitem{lewis_matrix_2022}
\BIBentryALTinterwordspacing
M.~Lewis, S.~Soudjani, and P.~Zuliani, ``Matrix {Representation} of {Arbitrarily} {Controlled} {Quantum} {Gates},'' May 2022, arXiv:2205.02525 [quant-ph]. [Online]. Available: \url{http://arxiv.org/abs/2205.02525}
\BIBentrySTDinterwordspacing

\bibitem{barenco_elementary_1995}
\BIBentryALTinterwordspacing
A.~Barenco, C.~H. Bennett, R.~Cleve, D.~P. DiVincenzo, N.~Margolus, P.~Shor, T.~Sleator, J.~A. Smolin, and H.~Weinfurter, ``Elementary gates for quantum computation,'' \emph{Physical Review A}, vol.~52, no.~5, pp. 3457--3467, Nov. 1995, publisher: American Physical Society. [Online]. Available: \url{https://link.aps.org/doi/10.1103/PhysRevA.52.3457}
\BIBentrySTDinterwordspacing

\bibitem{isenhower_multibit_2011}
\BIBentryALTinterwordspacing
L.~Isenhower, M.~Saffman, and K.~Mølmer, ``\BIBforeignlanguage{en}{Multibit {CkNOT} quantum gates via {Rydberg} blockade},'' \emph{\BIBforeignlanguage{en}{Quantum Information Processing}}, vol.~10, no.~6, p. 755, Sep. 2011. [Online]. Available: \url{https://doi.org/10.1007/s11128-011-0292-4}
\BIBentrySTDinterwordspacing

\bibitem{petrosyan_fast_2024}
\BIBentryALTinterwordspacing
D.~Petrosyan, S.~Norrell, C.~Poole, and M.~Saffman, ``\BIBforeignlanguage{en}{Fast measurements and multiqubit gates in dual-species atomic arrays},'' \emph{\BIBforeignlanguage{en}{Physical Review A}}, vol. 110, no.~4, p. 042404, Oct. 2024. [Online]. Available: \url{https://link.aps.org/doi/10.1103/PhysRevA.110.042404}
\BIBentrySTDinterwordspacing

\bibitem{jandura_time-optimal_2022}
\BIBentryALTinterwordspacing
S.~Jandura and G.~Pupillo, ``Time-{Optimal} {Two}- and {Three}-{Qubit} {Gates} for {Rydberg} {Atoms},'' \emph{Quantum}, vol.~6, p. 712, May 2022, arXiv:2202.00903 [quant-ph]. [Online]. Available: \url{http://arxiv.org/abs/2202.00903}
\BIBentrySTDinterwordspacing

\bibitem{abughanem_ibm_2024}
\BIBentryALTinterwordspacing
M.~AbuGhanem, ``{IBM} {Quantum} {Computers}: {Evolution}, {Performance}, and {Future} {Directions},'' Sep. 2024, arXiv:2410.00916 [quant-ph]. [Online]. Available: \url{http://arxiv.org/abs/2410.00916}
\BIBentrySTDinterwordspacing

\bibitem{abughanem_toffoli_2025}
\BIBentryALTinterwordspacing
------, ``A {Toffoli} {Gate} {Decomposition} via {Echoed} {Cross}-{Resonance} {Gates},'' Jan. 2025, arXiv:2501.02222 [quant-ph]. [Online]. Available: \url{http://arxiv.org/abs/2501.02222}
\BIBentrySTDinterwordspacing

\bibitem{xue_quantum_2022}
\BIBentryALTinterwordspacing
X.~Xue, M.~Russ, N.~Samkharadze, B.~Undseth, A.~Sammak, G.~Scappucci, and L.~M.~K. Vandersypen, ``\BIBforeignlanguage{en}{Quantum logic with spin qubits crossing the surface code threshold},'' \emph{\BIBforeignlanguage{en}{Nature}}, vol. 601, no. 7893, pp. 343--347, Jan. 2022. [Online]. Available: \url{https://www.nature.com/articles/s41586-021-04273-w}
\BIBentrySTDinterwordspacing

\bibitem{bravyi_universal_2005}
\BIBentryALTinterwordspacing
S.~Bravyi and A.~Kitaev, ``\BIBforeignlanguage{en}{Universal quantum computation with ideal {Clifford} gates and noisy ancillas},'' \emph{\BIBforeignlanguage{en}{Physical Review A}}, vol.~71, no.~2, p. 022316, Feb. 2005. [Online]. Available: \url{https://link.aps.org/doi/10.1103/PhysRevA.71.022316}
\BIBentrySTDinterwordspacing

\bibitem{nielsen_quantum_2002}
\BIBentryALTinterwordspacing
M.~A. Nielsen and I.~Chuang, ``Quantum {Computation} and {Quantum} {Information},'' \emph{American Journal of Physics}, vol.~70, no.~5, pp. 558--559, May 2002, publisher: American Association of Physics Teachers. [Online]. Available: \url{https://aapt.scitation.org/doi/10.1119/1.1463744}
\BIBentrySTDinterwordspacing

\bibitem{jones_logic_2013}
\BIBentryALTinterwordspacing
N.~C. Jones, ``Logic {Synthesis} for {Fault}-{Tolerant} {Quantum} {Computers},'' Oct. 2013, arXiv:1310.7290 [quant-ph]. [Online]. Available: \url{http://arxiv.org/abs/1310.7290}
\BIBentrySTDinterwordspacing

\bibitem{vasmer_three-dimensional_2019}
\BIBentryALTinterwordspacing
M.~Vasmer and D.~E. Browne, ``\BIBforeignlanguage{en}{Three-dimensional surface codes: {Transversal} gates and fault-tolerant architectures},'' \emph{\BIBforeignlanguage{en}{Physical Review A}}, vol. 100, no.~1, p. 012312, Jul. 2019. [Online]. Available: \url{https://link.aps.org/doi/10.1103/PhysRevA.100.012312}
\BIBentrySTDinterwordspacing

\bibitem{butt_fault-tolerant_2024}
\BIBentryALTinterwordspacing
F.~Butt, S.~Heußen, M.~Rispler, and M.~Müller, ``Fault-{Tolerant} {Code} {Switching} {Protocols} for {Near}-{Term} {Quantum} {Processors},'' \emph{PRX Quantum}, vol.~5, no.~2, p. 020345, May 2024, arXiv:2306.17686 [quant-ph]. [Online]. Available: \url{http://arxiv.org/abs/2306.17686}
\BIBentrySTDinterwordspacing

\bibitem{cheng_mapping_2018}
\BIBentryALTinterwordspacing
X.~Cheng, Z.~Guan, and W.~Ding, ``\BIBforeignlanguage{en}{Mapping from multiple-control {Toffoli} circuits to linear nearest neighbor quantum circuits},'' \emph{\BIBforeignlanguage{en}{Quantum Information Processing}}, vol.~17, no.~7, p. 169, May 2018. [Online]. Available: \url{https://doi.org/10.1007/s11128-018-1908-8}
\BIBentrySTDinterwordspacing

\bibitem{schoenberger_shuttling_2024}
\BIBentryALTinterwordspacing
D.~Schoenberger, S.~Hillmich, M.~Brandl, and R.~Wille, ``Shuttling for {Scalable} {Trapped}-{Ion} {Quantum} {Computers},'' Oct. 2024, arXiv:2402.14065 [quant-ph]. [Online]. Available: \url{http://arxiv.org/abs/2402.14065}
\BIBentrySTDinterwordspacing

\bibitem{bluvstein_logical_2024}
\BIBentryALTinterwordspacing
D.~Bluvstein, S.~J. Evered, A.~A. Geim, S.~H. Li, H.~Zhou, T.~Manovitz, S.~Ebadi, M.~Cain, M.~Kalinowski, D.~Hangleiter, J.~P. Bonilla~Ataides, N.~Maskara, I.~Cong, X.~Gao, P.~Sales~Rodriguez, T.~Karolyshyn, G.~Semeghini, M.~J. Gullans, M.~Greiner, V.~Vuletić, and M.~D. Lukin, ``\BIBforeignlanguage{en}{Logical quantum processor based on reconfigurable atom arrays},'' \emph{\BIBforeignlanguage{en}{Nature}}, vol. 626, no. 7997, pp. 58--65, Feb. 2024, publisher: Nature Publishing Group. [Online]. Available: \url{https://www.nature.com/articles/s41586-023-06927-3}
\BIBentrySTDinterwordspacing

\bibitem{pedram_layout_2016}
\BIBentryALTinterwordspacing
M.~Pedram and A.~Shafaei, ``\BIBforeignlanguage{en}{Layout {Optimization} for {Quantum} {Circuits} with {Linear} {Nearest} {Neighbor} {Architectures}},'' \emph{\BIBforeignlanguage{en}{IEEE Circuits and Systems Magazine}}, vol.~16, no.~2, pp. 62--74, 2016. [Online]. Available: \url{http://ieeexplore.ieee.org/document/7476978/}
\BIBentrySTDinterwordspacing

\bibitem{saeedi_synthesis_2011}
\BIBentryALTinterwordspacing
M.~Saeedi, R.~Wille, and R.~Drechsler, ``\BIBforeignlanguage{en}{Synthesis of quantum circuits for linear nearest neighbor architectures},'' \emph{\BIBforeignlanguage{en}{Quantum Information Processing}}, vol.~10, no.~3, pp. 355--377, Jun. 2011. [Online]. Available: \url{https://doi.org/10.1007/s11128-010-0201-2}
\BIBentrySTDinterwordspacing

\bibitem{he_mapping_2019}
\BIBentryALTinterwordspacing
X.~He, Z.~Guan, and F.~Ding, ``\BIBforeignlanguage{en}{The {Mapping} and {Optimization} {Method} of {Quantum} {Circuits} for {Clifford} + {T} {Gate}},'' \emph{\BIBforeignlanguage{en}{Journal of Applied Mathematics and Physics}}, vol.~07, no.~11, pp. 2796--2810, 2019. [Online]. Available: \url{http://www.scirp.org/journal/doi.aspx?DOI=10.4236/jamp.2019.711192}
\BIBentrySTDinterwordspacing

\bibitem{khan_cost_2008}
M.~H.~A. Khan, ``\BIBforeignlanguage{en}{Cost {Reduction} in {Nearest} {Neighbour} {Based} {Synthesis} of {Quantum} {Boolean} {Circuits}},'' \emph{\BIBforeignlanguage{en}{Engineering Letters}}, vol.~16, no.~1, 2008.

\bibitem{lukac_optimization_2021}
\BIBentryALTinterwordspacing
M.~Lukac, P.~Kerntopf, and M.~Kameyama, ``Optimization of {LNN} {Reversible} {Circuits} {Using} an {Analytic} {Sifting} {Method},'' \emph{Journal of Circuits, Systems and Computers}, vol.~30, no.~09, p. 2150166, Jul. 2021, publisher: World Scientific Publishing Co. [Online]. Available: \url{https://www.worldscientific.com/doi/abs/10.1142/S0218126621501668}
\BIBentrySTDinterwordspacing

\bibitem{ding_fast_2019}
\BIBentryALTinterwordspacing
F.~Ding, Z.~Guau, and F.~Ren, ``A {Fast} {Optimization} {Algorithm} for {Nearest} {Neighbor} {Architecture} {Based} on {Quantum} {Weight},'' in \emph{2019 {IEEE} 6th {International} {Conference} on {Cloud} {Computing} and {Intelligence} {Systems} ({CCIS})}, Dec. 2019, pp. 73--78. [Online]. Available: \url{https://ieeexplore.ieee.org/document/9073740/?arnumber=9073740}
\BIBentrySTDinterwordspacing

\bibitem{zhang_method_2022}
\BIBentryALTinterwordspacing
C.~Zhang, Z.~Guan, Y.~Qian, and S.~Feng, ``A {Method} of {Mapping} and {Nearest}-{Neighbor} for {IBM} {QX} {Architecture},'' in \emph{2022 {International} {Conference} on {Computing}, {Communication}, {Perception} and {Quantum} {Technology} ({CCPQT})}, Aug. 2022, pp. 402--409. [Online]. Available: \url{https://ieeexplore.ieee.org/document/9939331/?arnumber=9939331}
\BIBentrySTDinterwordspacing

\bibitem{vale_decomposition_2023}
\BIBentryALTinterwordspacing
R.~Vale, T.~M.~D. Azevedo, I.~C.~S. Araújo, I.~F. Araujo, and A.~J. da~Silva, ``Decomposition of {Multi}-controlled {Special} {Unitary} {Single}-{Qubit} {Gates},'' Feb. 2023, arXiv:2302.06377 [quant-ph]. [Online]. Available: \url{http://arxiv.org/abs/2302.06377}
\BIBentrySTDinterwordspacing

\bibitem{he_decompositions_2017}
\BIBentryALTinterwordspacing
Y.~He, M.-X. Luo, E.~Zhang, H.-K. Wang, and X.-F. Wang, ``\BIBforeignlanguage{en}{Decompositions of n-qubit {Toffoli} {Gates} with {Linear} {Circuit} {Complexity}},'' \emph{\BIBforeignlanguage{en}{International Journal of Theoretical Physics}}, vol.~56, no.~7, pp. 2350--2361, Jul. 2017. [Online]. Available: \url{https://doi.org/10.1007/s10773-017-3389-4}
\BIBentrySTDinterwordspacing

\bibitem{balauca_efficient_2022}
S.~Balauca and A.~Arusoaie, ``\BIBforeignlanguage{en}{Efficient {Constructions} for {Simulating} {Multi} {Controlled} {Quantum} {Gates}},'' in \emph{\BIBforeignlanguage{en}{Computational {Science} – {ICCS} 2022}}, ser. Lecture {Notes} in {Computer} {Science}, D.~Groen, C.~de~Mulatier, M.~Paszynski, V.~V. Krzhizhanovskaya, J.~J. Dongarra, and P.~M.~A. Sloot, Eds.\hskip 1em plus 0.5em minus 0.4em\relax Cham: Springer International Publishing, 2022, pp. 179--194.

\bibitem{maslov_reversible_2011}
\BIBentryALTinterwordspacing
D.~Maslov and M.~Saeedi, ``Reversible {Circuit} {Optimization} via {Leaving} the {Boolean} {Domain},'' \emph{IEEE Transactions on Computer-Aided Design of Integrated Circuits and Systems}, vol.~30, no.~6, pp. 806--816, Jun. 2011, arXiv:1103.0215 [quant-ph]. [Online]. Available: \url{http://arxiv.org/abs/1103.0215}
\BIBentrySTDinterwordspacing

\bibitem{kole_improved_2017}
\BIBentryALTinterwordspacing
A.~Kole and K.~Datta, ``Improved {NCV} {Gate} {Realization} of {Arbitrary} {Size} {Toffoli} {Gates},'' in \emph{2017 30th {International} {Conference} on {VLSI} {Design} and 2017 16th {International} {Conference} on {Embedded} {Systems} ({VLSID})}, Jan. 2017, pp. 289--294, iSSN: 2380-6923. [Online]. Available: \url{https://ieeexplore.ieee.org/document/7884793/?arnumber=7884793}
\BIBentrySTDinterwordspacing

\bibitem{sasanian_ncv_2011}
\BIBentryALTinterwordspacing
Z.~Sasanian and D.~M. Miller, ``{NCV} realization of {MCT} gates with mixed controls,'' in \emph{Proceedings of 2011 {IEEE} {Pacific} {Rim} {Conference} on {Communications}, {Computers} and {Signal} {Processing}}, Aug. 2011, pp. 567--571, iSSN: 2154-5952. [Online]. Available: \url{https://ieeexplore.ieee.org/document/6032956/?arnumber=6032956}
\BIBentrySTDinterwordspacing

\bibitem{biswal_improving_2016}
\BIBentryALTinterwordspacing
L.~Biswal, C.~Bandyopadhyay, R.~Wille, R.~Drechsler, and H.~Rahaman, ``Improving the {Realization} of {Multiple}-{Control} {Toffoli} {Gates} {Using} the {NCVW} {Quantum} {Gate} {Library},'' in \emph{2016 29th {International} {Conference} on {VLSI} {Design} and 2016 15th {International} {Conference} on {Embedded} {Systems} ({VLSID})}, Jan. 2016, pp. 573--574, iSSN: 2380-6923. [Online]. Available: \url{https://ieeexplore.ieee.org/document/7435022/?arnumber=7435022}
\BIBentrySTDinterwordspacing

\bibitem{ali_quantum_2015}
\BIBentryALTinterwordspacing
M.~B. Ali, T.~Hirayama, K.~Yamanaka, and Y.~Nishitani, ``Quantum {Cost} {Reduction} of {Reversible} {Circuits} {Using} {New} {Toffoli} {Decomposition} {Techniques},'' in \emph{2015 {International} {Conference} on {Computational} {Science} and {Computational} {Intelligence} ({CSCI})}, Dec. 2015, pp. 59--64. [Online]. Available: \url{https://ieeexplore.ieee.org/document/7424064/?arnumber=7424064}
\BIBentrySTDinterwordspacing

\bibitem{niemann_t-depth_2019}
\BIBentryALTinterwordspacing
P.~Niemann, A.~Gupta, and R.~Drechsler, ``T-depth {Optimization} for {Fault}-{Tolerant} {Quantum} {Circuits},'' in \emph{2019 {IEEE} 49th {International} {Symposium} on {Multiple}-{Valued} {Logic} ({ISMVL})}, May 2019, pp. 108--113, iSSN: 2378-2226. [Online]. Available: \url{https://ieeexplore.ieee.org/document/8758745/?arnumber=8758745}
\BIBentrySTDinterwordspacing

\bibitem{leng_decomposing_2024}
\BIBentryALTinterwordspacing
J.~Leng, F.~Yang, and X.-B. Wang, ``\BIBforeignlanguage{en}{Decomposing -{Qubit} {Toffoli} {Gate} with {Shallow} {Circuit} {Depth} and {No} {Ancilla}},'' \emph{\BIBforeignlanguage{en}{Advanced Quantum Technologies}}, vol.~7, no.~2, p. 2300370, 2024. [Online]. Available: \url{https://onlinelibrary.wiley.com/doi/abs/10.1002/qute.202300370}
\BIBentrySTDinterwordspacing

\bibitem{abdessaied_technology_2016}
\BIBentryALTinterwordspacing
N.~Abdessaied, M.~Amy, M.~Soeken, and R.~Drechsler, ``Technology {Mapping} of {Reversible} {Circuits} to {Clifford}+{T} {Quantum} {Circuits},'' in \emph{2016 {IEEE} 46th {International} {Symposium} on {Multiple}-{Valued} {Logic} ({ISMVL})}, May 2016, pp. 150--155, iSSN: 2378-2226. [Online]. Available: \url{https://ieeexplore.ieee.org/document/7515539/?arnumber=7515539}
\BIBentrySTDinterwordspacing

\bibitem{arsoski_implementing_2024}
\BIBentryALTinterwordspacing
V.~V. Arsoski, ``\BIBforeignlanguage{en}{Implementing multi-controlled {X} gates using the quantum {Fourier} transform},'' \emph{\BIBforeignlanguage{en}{Quantum Information Processing}}, vol.~23, no.~9, p. 305, Aug. 2024. [Online]. Available: \url{https://doi.org/10.1007/s11128-024-04511-w}
\BIBentrySTDinterwordspacing

\bibitem{arsoski_multi-controlled_2025}
\BIBentryALTinterwordspacing
------, ``Multi-controlled single-qubit unitary gates based on the quantum {Fourier} transform and deep decomposition,'' Feb. 2025, arXiv:2408.00935 [quant-ph]. [Online]. Available: \url{http://arxiv.org/abs/2408.00935}
\BIBentrySTDinterwordspacing

\bibitem{chakrabarti_nearest_2007}
\BIBentryALTinterwordspacing
A.~Chakrabarti and S.~Sur, ``\BIBforeignlanguage{en}{Nearest {Neighbour} based {Synthesis} of {Quantum} {Boolean} {Circuits}},'' \emph{\BIBforeignlanguage{en}{Engineering Letters}}, 2007. [Online]. Available: \url{https://www.engineeringletters.com/issues_v15/issue_2/EL_15_2_26.pdf}
\BIBentrySTDinterwordspacing

\bibitem{miller_elementary_2011}
\BIBentryALTinterwordspacing
D.~M. Miller, R.~Wille, and Z.~Sasanian, ``Elementary {Quantum} {Gate} {Realizations} for {Multiple}-{Control} {Toffoli} {Gates},'' in \emph{2011 41st {IEEE} {International} {Symposium} on {Multiple}-{Valued} {Logic}}, May 2011, pp. 288--293, iSSN: 2378-2226. [Online]. Available: \url{https://ieeexplore.ieee.org/document/5954249/?arnumber=5954249&tag=1}
\BIBentrySTDinterwordspacing

\bibitem{li_quantum_2023}
\BIBentryALTinterwordspacing
Y.~Li, W.~Liu, M.~Li, and Y.~Li, ``\BIBforeignlanguage{en}{Quantum circuit compilation for nearest-neighbor architecture based on reinforcement learning},'' \emph{\BIBforeignlanguage{en}{Quantum Information Processing}}, vol.~22, no.~8, p. 295, Jul. 2023. [Online]. Available: \url{https://doi.org/10.1007/s11128-023-04050-w}
\BIBentrySTDinterwordspacing

\bibitem{tan_multi-strategy_2018}
\BIBentryALTinterwordspacing
Y.-y. Tan, X.-y. Cheng, Z.-j. Guan, Y.~Liu, and H.~Ma, ``\BIBforeignlanguage{en}{Multi-strategy based quantum cost reduction of linear nearest-neighbor quantum circuit},'' \emph{\BIBforeignlanguage{en}{Quantum Information Processing}}, vol.~17, no.~3, p.~61, Jan. 2018. [Online]. Available: \url{https://doi.org/10.1007/s11128-018-1832-y}
\BIBentrySTDinterwordspacing

\bibitem{silva_linear_2023}
\BIBentryALTinterwordspacing
J.~D.~S. Silva, T.~M.~D. Azevedo, I.~F. Araujo, and A.~J. da~Silva, ``Linear decomposition of approximate multi-controlled single qubit gates,'' Oct. 2023, arXiv:2310.14974 [quant-ph]. [Online]. Available: \url{http://arxiv.org/abs/2310.14974}
\BIBentrySTDinterwordspacing

\bibitem{da_silva_linear-depth_2022}
\BIBentryALTinterwordspacing
A.~J. da~Silva and D.~K. Park, ``Linear-depth quantum circuits for multiqubit controlled gates,'' \emph{Physical Review A}, vol. 106, no.~4, p. 042602, Oct. 2022, publisher: American Physical Society. [Online]. Available: \url{https://link.aps.org/doi/10.1103/PhysRevA.106.042602}
\BIBentrySTDinterwordspacing

\bibitem{saeedi_linear-depth_2013}
\BIBentryALTinterwordspacing
M.~Saeedi and M.~Pedram, ``Linear-depth quantum circuits for \$n\$-qubit {Toffoli} gates with no ancilla,'' \emph{Physical Review A}, vol.~87, no.~6, p. 062318, Jun. 2013, publisher: American Physical Society. [Online]. Available: \url{https://link.aps.org/doi/10.1103/PhysRevA.87.062318}
\BIBentrySTDinterwordspacing

\bibitem{gidney_constructing_nodate}
\BIBentryALTinterwordspacing
C.~Gidney, ``Constructing {Large} {Controlled} {Nots}.'' [Online]. Available: \url{https://algassert.com/circuits/2015/06/05/Constructing-Large-Controlled-Nots.html}
\BIBentrySTDinterwordspacing

\bibitem{cruz_shallow_2023}
\BIBentryALTinterwordspacing
P.~M.~Q. Cruz and B.~Murta, ``Shallow unitary decompositions of quantum {Fredkin} and {Toffoli} gates for connectivity-aware equivalent circuit averaging,'' May 2023, arXiv:2305.18128 [quant-ph]. [Online]. Available: \url{http://arxiv.org/abs/2305.18128}
\BIBentrySTDinterwordspacing

\bibitem{gwinner_benchmarking_2021}
\BIBentryALTinterwordspacing
J.~Gwinner, M.~Briański, W.~Burkot, L.~Czerwiński, and V.~Hlembotskyi, ``Benchmarking 16-element quantum search algorithms on superconducting quantum processors,'' Jan. 2021, arXiv:2007.06539 [quant-ph]. [Online]. Available: \url{http://arxiv.org/abs/2007.06539}
\BIBentrySTDinterwordspacing

\bibitem{nakanishi_decompositions_2024}
\BIBentryALTinterwordspacing
K.~M. Nakanishi, T.~Satoh, and S.~Todo, ``\BIBforeignlanguage{en}{Decompositions of multiple controlled- {Z} gates on various qubit-coupling graphs},'' \emph{\BIBforeignlanguage{en}{Physical Review A}}, vol. 110, no.~1, p. 012604, Jul. 2024. [Online]. Available: \url{https://link.aps.org/doi/10.1103/PhysRevA.110.012604}
\BIBentrySTDinterwordspacing

\bibitem{nakanishi_quantum-gate_2021}
\BIBentryALTinterwordspacing
------, ``Quantum-gate decomposer,'' Sep. 2021, arXiv:2109.13223 [quant-ph]. [Online]. Available: \url{http://arxiv.org/abs/2109.13223}
\BIBentrySTDinterwordspacing

\bibitem{nemkov_efficient_2023}
\BIBentryALTinterwordspacing
N.~A. Nemkov, E.~O. Kiktenko, I.~A. Luchnikov, and A.~K. Fedorov, ``\BIBforeignlanguage{en}{Efficient variational synthesis of quantum circuits with coherent multi-start optimization},'' \emph{\BIBforeignlanguage{en}{Quantum}}, vol.~7, p. 993, May 2023. [Online]. Available: \url{https://quantum-journal.org/papers/q-2023-05-04-993/}
\BIBentrySTDinterwordspacing

\bibitem{song_optimal_2002}
\BIBentryALTinterwordspacing
G.~Song and A.~Klappenecker, ``Optimal {Realizations} of {Controlled} {Unitary} {Gates},'' Jul. 2002, arXiv:quant-ph/0207157. [Online]. Available: \url{http://arxiv.org/abs/quant-ph/0207157}
\BIBentrySTDinterwordspacing

\bibitem{zindorf_quantum_2024}
\BIBentryALTinterwordspacing
B.~Zindorf and S.~Bose, ``Quantum {Computing} with {Hermitian} {Gates},'' Feb. 2024, arXiv:2402.12356 [quant-ph]. [Online]. Available: \url{http://arxiv.org/abs/2402.12356}
\BIBentrySTDinterwordspacing

\bibitem{heusen_measurement-free_2024}
\BIBentryALTinterwordspacing
S.~Heußen, D.~F. Locher, and M.~Müller, ``\BIBforeignlanguage{en}{Measurement-{Free} {Fault}-{Tolerant} {Quantum} {Error} {Correction} in {Near}-{Term} {Devices}},'' \emph{\BIBforeignlanguage{en}{PRX Quantum}}, vol.~5, no.~1, p. 010333, Feb. 2024. [Online]. Available: \url{https://link.aps.org/doi/10.1103/PRXQuantum.5.010333}
\BIBentrySTDinterwordspacing

\bibitem{claudon_polylogarithmic-depth_2024}
\BIBentryALTinterwordspacing
B.~Claudon, J.~Zylberman, C.~Feniou, F.~Debbasch, A.~Peruzzo, and J.-P. Piquemal, ``\BIBforeignlanguage{en}{Polylogarithmic-depth controlled-{NOT} gates without ancilla qubits},'' \emph{\BIBforeignlanguage{en}{Nature Communications}}, vol.~15, no.~1, p. 5886, Jul. 2024, publisher: Nature Publishing Group. [Online]. Available: \url{https://www.nature.com/articles/s41467-024-50065-x}
\BIBentrySTDinterwordspacing

\bibitem{nie_quantum_2024}
\BIBentryALTinterwordspacing
J.~Nie, W.~Zi, and X.~Sun, ``Quantum circuit for multi-qubit {Toffoli} gate with optimal resource,'' Feb. 2024, arXiv:2402.05053 [quant-ph]. [Online]. Available: \url{http://arxiv.org/abs/2402.05053}
\BIBentrySTDinterwordspacing

\bibitem{figgatt_parallel_2019}
\BIBentryALTinterwordspacing
C.~Figgatt, A.~Ostrander, N.~M. Linke, K.~A. Landsman, D.~Zhu, D.~Maslov, and C.~Monroe, ``\BIBforeignlanguage{en}{Parallel entangling operations on a universal ion-trap quantum computer},'' \emph{\BIBforeignlanguage{en}{Nature}}, vol. 572, no. 7769, pp. 368--372, Aug. 2019. [Online]. Available: \url{https://www.nature.com/articles/s41586-019-1427-5}
\BIBentrySTDinterwordspacing

\bibitem{liu_qcontext_2023}
\BIBentryALTinterwordspacing
J.~Liu, M.~Bowman, P.~Gokhale, S.~Dangwal, J.~Larson, F.~T. Chong, and P.~D. Hovland, ``{QContext}: {Context}-{Aware} {Decomposition} for {Quantum} {Gates},'' in \emph{2023 {IEEE} {International} {Symposium} on {Circuits} and {Systems} ({ISCAS})}, May 2023, pp. 1--5, iSSN: 2158-1525. [Online]. Available: \url{https://ieeexplore.ieee.org/document/10181370/?arnumber=10181370}
\BIBentrySTDinterwordspacing

\bibitem{meijer_exploiting_2024}
\BIBentryALTinterwordspacing
F.~d. Meijer, D.~Gijswijt, and R.~Sotirov, ``Exploiting {Symmetries} in {Optimal} {Quantum} {Circuit} {Design},'' Jan. 2024, arXiv:2401.08262 [math]. [Online]. Available: \url{http://arxiv.org/abs/2401.08262}
\BIBentrySTDinterwordspacing

\bibitem{giles_exact_2013}
\BIBentryALTinterwordspacing
B.~Giles and P.~Selinger, ``Exact synthesis of multiqubit {Clifford}+{T} circuits,'' \emph{Physical Review A}, vol.~87, no.~3, p. 032332, Mar. 2013, arXiv:1212.0506 [quant-ph]. [Online]. Available: \url{http://arxiv.org/abs/1212.0506}
\BIBentrySTDinterwordspacing

\bibitem{maslov_advantages_2016}
\BIBentryALTinterwordspacing
D.~Maslov, ``Advantages of using relative-phase {Toffoli} gates with an application to multiple control {Toffoli} optimization,'' \emph{Physical Review A}, vol.~93, no.~2, p. 022311, Feb. 2016, publisher: American Physical Society. [Online]. Available: \url{https://link.aps.org/doi/10.1103/PhysRevA.93.022311}
\BIBentrySTDinterwordspacing

\bibitem{maslov_toffoli_2005}
\BIBentryALTinterwordspacing
D.~Maslov, G.~Dueck, and D.~Miller, ``Toffoli network synthesis with templates,'' \emph{IEEE Transactions on Computer-Aided Design of Integrated Circuits and Systems}, vol.~24, no.~6, pp. 807--817, Jun. 2005, conference Name: IEEE Transactions on Computer-Aided Design of Integrated Circuits and Systems. [Online]. Available: \url{https://ieeexplore.ieee.org/abstract/document/1432873}
\BIBentrySTDinterwordspacing

\bibitem{maslov_depth_2022}
\BIBentryALTinterwordspacing
D.~Maslov and B.~Zindorf, ``Depth {Optimization} of {CZ}, {CNOT}, and {Clifford} {Circuits},'' \emph{IEEE Transactions on Quantum Engineering}, vol.~3, pp. 1--8, 2022, conference Name: IEEE Transactions on Quantum Engineering. [Online]. Available: \url{https://ieeexplore.ieee.org/abstract/document/9792395}
\BIBentrySTDinterwordspacing

\end{thebibliography}
\end{document}